\documentclass[11pt]{article}
\usepackage[margin=.7in,left=.7in]{geometry}
\usepackage{amsmath}
\usepackage{amsfonts}
\usepackage{amssymb}
\usepackage{amsthm}
\usepackage{graphicx}
\usepackage{color}
\usepackage{xcolor}
\usepackage{hyperref}
\usepackage[all]{xy}

\usepackage{tikz}

%
%
%

\definecolor{aqua}{rgb}{0, 1.0, 1.0}
\definecolor{fuschia}{rgb}{1.0, 0, 1.0}
\definecolor{gray}{rgb}{0.502, 0.502, 0.502}
\definecolor{lime}{rgb}{0, 1.0, 0}
\definecolor{maroon}{rgb}{0.502, 0, 0}
\definecolor{navy}{rgb}{0, 0, 0.502}
\definecolor{olive}{rgb}{0.502, 0.502, 0}
\definecolor{purple}{rgb}{0.502, 0, 0.502}
\definecolor{silver}{rgb}{0.753, 0.753, 0.753}
\definecolor{teal}{rgb}{0, 0.502, 0.502}


%
\makeatletter
\newdimen\itex@wd%
\newdimen\itex@dp%
\newdimen\itex@thd%
\def\itexspace#1#2#3{\itex@wd=#3em%
\itex@wd=0.1\itex@wd%
\itex@dp=#2ex%
\itex@dp=0.1\itex@dp%
\itex@thd=#1ex%
\itex@thd=0.1\itex@thd%
\advance\itex@thd\the\itex@dp%
\makebox[\the\itex@wd]{\rule[-\the\itex@dp]{0cm}{\the\itex@thd}}}
\makeatother

\makeatletter
\newif\if@sup
\newtoks\@sups
\def\append@sup#1{\edef\act{\noexpand\@sups={\the\@sups #1}}\act}%
\def\reset@sup{\@supfalse\@sups={}}%
\def\mk@scripts#1#2{\if #2/ \if@sup ^{\the\@sups}\fi \else%
  \ifx #1_ \if@sup ^{\the\@sups}\reset@sup \fi {}_{#2}%
  \else \append@sup#2 \@suptrue \fi%
  \expandafter\mk@scripts\fi}
\def\tensor#1#2{\reset@sup#1\mk@scripts#2_/}
\def\multiscripts#1#2#3{\reset@sup{}\mk@scripts#1_/#2%
  \reset@sup\mk@scripts#3_/}
\makeatother

\makeatletter
\newbox\slashbox \setbox\slashbox=\hbox{$/$}
\def\itex@pslash#1{\setbox\@tempboxa=\hbox{$#1$}
  \@tempdima=0.5\wd\slashbox \advance\@tempdima 0.5\wd\@tempboxa
  \copy\slashbox \kern-\@tempdima \box\@tempboxa}
\def\slash{\protect\itex@pslash}
\makeatother

\def\clap#1{\hbox to 0pt{\hss#1\hss}}
\def\mathllap{\mathpalette\mathllapinternal}
\def\mathrlap{\mathpalette\mathrlapinternal}

\def\mathllapinternal#1#2{\llap{$\mathsurround=0pt#1{#2}$}}
\def\mathrlapinternal#1#2{\rlap{$\mathsurround=0pt#1{#2}$}}

\let\oldroot\root
\def\root#1#2{\oldroot #1 \of{#2}}
\renewcommand{\sqrt}[2][]{\oldroot #1 \of{#2}}

\DeclareSymbolFont{symbolsC}{U}{txsyc}{m}{n}
\SetSymbolFont{symbolsC}{bold}{U}{txsyc}{bx}{n}
\DeclareFontSubstitution{U}{txsyc}{m}{n}

\DeclareSymbolFont{stmry}{U}{stmry}{m}{n}
\SetSymbolFont{stmry}{bold}{U}{stmry}{b}{n}

\DeclareFontFamily{OMX}{MnSymbolE}{}
\DeclareSymbolFont{mnomx}{OMX}{MnSymbolE}{m}{n}
\SetSymbolFont{mnomx}{bold}{OMX}{MnSymbolE}{b}{n}
\DeclareFontShape{OMX}{MnSymbolE}{m}{n}{
    <-6>  MnSymbolE5
   <6-7>  MnSymbolE6
   <7-8>  MnSymbolE7
   <8-9>  MnSymbolE8
   <9-10> MnSymbolE9
  <10-12> MnSymbolE10
  <12->   MnSymbolE12}{}

\makeatletter
\def\re@DeclareMathSymbol#1#2#3#4{%
    \let#1=\undefined
    \DeclareMathSymbol{#1}{#2}{#3}{#4}}
\re@DeclareMathSymbol{\neArrow}{\mathrel}{symbolsC}{116}
\re@DeclareMathSymbol{\neArr}{\mathrel}{symbolsC}{116}
\re@DeclareMathSymbol{\seArrow}{\mathrel}{symbolsC}{117}
\re@DeclareMathSymbol{\seArr}{\mathrel}{symbolsC}{117}
\re@DeclareMathSymbol{\nwArrow}{\mathrel}{symbolsC}{118}
\re@DeclareMathSymbol{\nwArr}{\mathrel}{symbolsC}{118}
\re@DeclareMathSymbol{\swArrow}{\mathrel}{symbolsC}{119}
\re@DeclareMathSymbol{\swArr}{\mathrel}{symbolsC}{119}
\re@DeclareMathSymbol{\nequiv}{\mathrel}{symbolsC}{46}
\re@DeclareMathSymbol{\Perp}{\mathrel}{symbolsC}{121}
\re@DeclareMathSymbol{\Vbar}{\mathrel}{symbolsC}{121}
\re@DeclareMathSymbol{\sslash}{\mathrel}{stmry}{12}
\re@DeclareMathSymbol{\bigsqcap}{\mathop}{stmry}{"64}
\re@DeclareMathSymbol{\biginterleave}{\mathop}{stmry}{"6}
\re@DeclareMathSymbol{\invamp}{\mathrel}{symbolsC}{77}
\re@DeclareMathSymbol{\parr}{\mathrel}{symbolsC}{77}
\makeatother

\makeatletter
\def\Decl@Mn@Delim#1#2#3#4{%
  \if\relax\noexpand#1%
    \let#1\undefined
  \fi
  \DeclareMathDelimiter{#1}{#2}{#3}{#4}{#3}{#4}}
\def\Decl@Mn@Open#1#2#3{\Decl@Mn@Delim{#1}{\mathopen}{#2}{#3}}
\def\Decl@Mn@Close#1#2#3{\Decl@Mn@Delim{#1}{\mathclose}{#2}{#3}}
\Decl@Mn@Open{\llangle}{mnomx}{'164}
\Decl@Mn@Close{\rrangle}{mnomx}{'171}
\Decl@Mn@Open{\lmoustache}{mnomx}{'245}
\Decl@Mn@Close{\rmoustache}{mnomx}{'244}
\makeatother

\makeatletter
\DeclareRobustCommand\widecheck[1]{{\mathpalette\@widecheck{#1}}}
\def\@widecheck#1#2{%
    \setbox\z@\hbox{\m@th$#1#2$}%
    \setbox\tw@\hbox{\m@th$#1%
       \widehat{%
          \vrule\@width\z@\@height\ht\z@
          \vrule\@height\z@\@width\wd\z@}$}%
    \dp\tw@-\ht\z@
    \@tempdima\ht\z@ \advance\@tempdima2\ht\tw@ \divide\@tempdima\thr@@
    \setbox\tw@\hbox{%
       \raise\@tempdima\hbox{\scalebox{1}[-1]{\lower\@tempdima\box
\tw@}}}%
    {\ooalign{\box\tw@ \cr \box\z@}}}
\makeatother


\makeatletter
\def\udots{\mathinner{\mkern2mu\raise\p@\hbox{.}
\mkern2mu\raise4\p@\hbox{.}\mkern1mu
\raise7\p@\vbox{\kern7\p@\hbox{.}}\mkern1mu}}
\makeatother



\newcommand{\underoverset}[3]{\underset{#1}{\overset{#2}{#3}}}
\newcommand{\widevec}{\overrightarrow}

\newcommand{\lt}{<}

\newcommand{\R}{\ensuremath{\mathbb R}}
\newcommand{\Z}{\ensuremath{\mathbb Z}}
\newcommand{\Q}{\ensuremath{\mathbb Q}}

\renewcommand{\(}{\begin{equation*}}
\renewcommand{\)}{\end{equation*}}
\newcommand{\bea}{\begin{eqnarray*}}
\newcommand{\eea}{\end{eqnarray*}}

\theoremstyle{italics}
\newtheorem{theorem}{Theorem}[section]
\newtheorem{lemma}[theorem]{Lemma}
\newtheorem{prop}[theorem]{Proposition}
\newtheorem{cor}[theorem]{Corollary}

\theoremstyle{definition}
\newtheorem{defn}[theorem]{Definition}

\newtheorem{example}[theorem]{Example}

\theoremstyle{remark}
\newtheorem{remark}[theorem]{Remark}
\newtheorem{note[theorem]}{Note}

\usepackage{amsfonts}

\newcommand{\Exterior}{\scalebox{.8}{\ensuremath \bigwedge}}



\begin{document}

\title{
Higher T-duality of super M-branes}

 \author{Domenico Fiorenza\thanks{Dipartimento di Matematica, La Sapienza Universit\`a di Roma, Piazzale Aldo Moro 2, 00185 Rome, Italy.}\and
     Hisham Sati\thanks{Division of Science and Mathematics, New York University, Abu Dhabi, UAE.}\and
          Urs Schreiber\thanks{Mathematics Institute, Czech Academy of Science, {\v Z}itna 25, 115 67 Praha 1, Czech Republic.}
     }

\maketitle
\abstract{
  We establish a higher generalization of super $L_\infty$-algebraic T-duality of
  super WZW-terms for super $p$-branes.  In particular, we demonstrate spherical T-duality
  of super M5-branes propagating on exceptional-geometric 11d super spacetime.
}

\tableofcontents

\newpage

\section{Introduction and summary}
\label{Introduction}

The torsion constraints of supergravity and of super $p$-brane models
have remarkably far-reaching consequences.
This is well-known and goes back to \cite{GWZ}, but since it is
key to our study of M-brane phenomena, we briefly recall it, putting it into a geometric
perspective that will be useful for presenting our results.
The following picture illustrates how super-Cartan geometry models a super-manifold
by moving local super-Minkowski frames around it,
and that super-torsion freedom of the super-vielbein $(E^a) := (e^a, \psi^\alpha)$ means
that, moreover, each first-order infinitesimal neighborhood is
supersymmetrically identified with the local model. This is explained in \cite{Lott90, EE}, following
the seminal result of \cite{Guillemin65}.

\vspace{-.7cm}

\begin{center}
 \includegraphics[width=.9\textwidth]{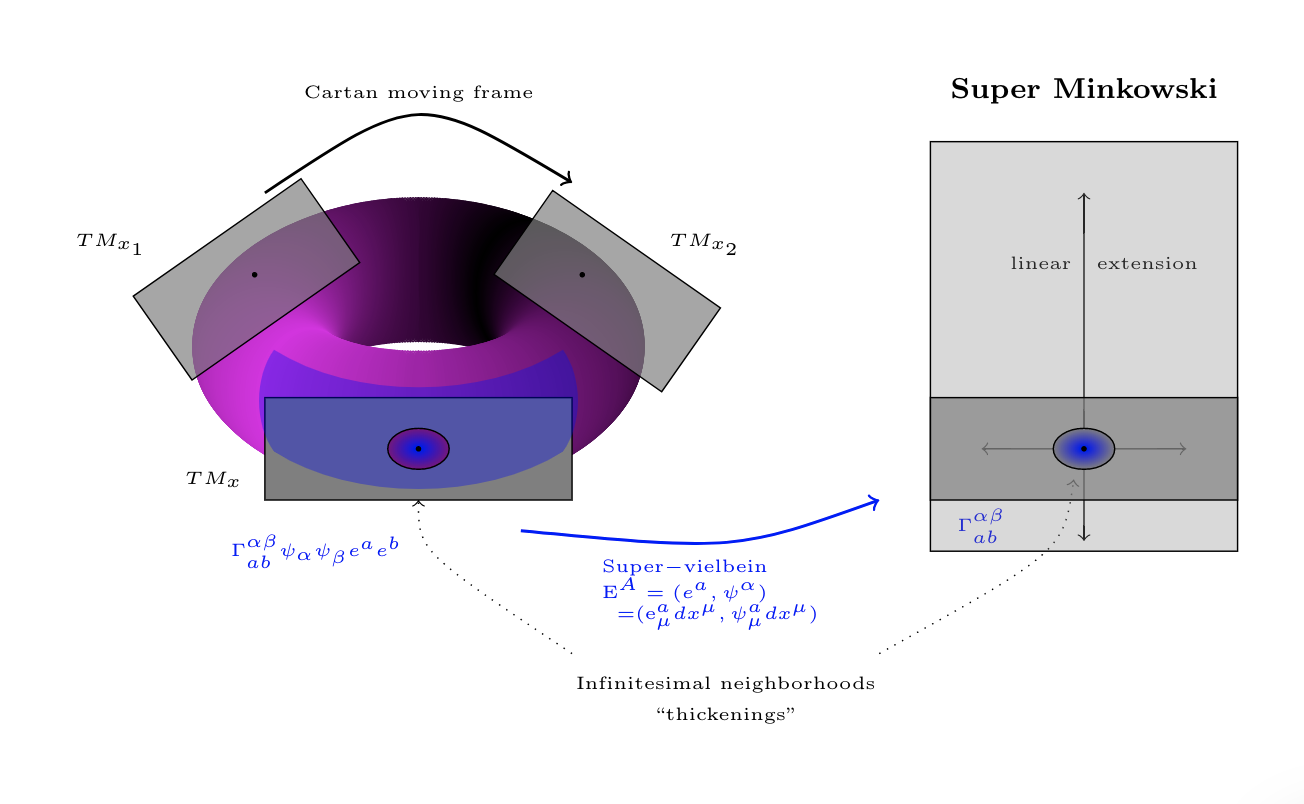}
\end{center}

\vspace{-1cm}
Explicitly, vanishing of the super-torsion in eleven-dimensional supergravity
says, in particular, that the bifermionic component
of the torsion tensor is constrained to be in
each super-tangent space given by (see \cite[(14)]{BST2})
$$
  T^a_{\alpha \beta} = \Gamma^a_{\alpha \beta}
  \,.
$$
In terms of Cartan calculus on super-Minkowski spacetime, this means that (see Section \ref{SuperLInfinityCohomology} below)
\begin{equation}
  \label{VielbeinDifferential}
  d e^a = \overline{\psi} \Gamma^a \psi
  \,,
\end{equation}
which makes it manifest that this identifies each super-tangent space of a supergravity background with
super-Minkowski spacetime not just as a super-vector space, but as a super Lie algebra: the translational
part of the supersymmetry algebra. This is the basis for the powerful super-Cartan-geometric perspective
on supergravity advocated in \cite{DF,CDF}; and in disguise the innocent-looking differential (\ref{VielbeinDifferential})
governs much of the structure of supergravity (also known as \emph{$\tau$-cohomology} \cite{BBLPT90}).
Remarkably, in 11d the constraint of vanishing super-torsion alone is already equivalent to the full supergravity equations of motion \cite{CL, Howe97, FOS},
a first indication that the implications of these super tangent space-wise constraints are far-reaching.

\medskip
Moreover, the bifermionic component
$
  H_{1,2}
  :=
  H_{a \alpha \beta} e^a \wedge \psi^\alpha \wedge \psi^\beta$
of the background field strength $H$, to which the type I superstring \cite{GS}
in dimensions 3, 4, 6 and 10 couples, is constrained in each super tangent-space
to be of the form \cite[(2.11), (2.15)]{BST1} (see expression \eqref{SuperstringCocycleTypeII} below))
\begin{equation}
  \label{BSTString}
  H_{a \alpha \beta} = \Gamma_{a \alpha \beta}
  \,.
\end{equation}
Similarly, the bifermionic component
$
  G_{2,2}
  :=
  G_{a b \alpha \beta} e^a \wedge e^b \wedge \psi^\alpha \wedge \psi^\beta
$
of the background field strength, to which the supermembrane in dimension 4, 5, 7 and 11
couples, is constrained in each super-tangent space
to be given by \cite[(15)]{BST2} (see \eqref{M2M5CocyclesOn11dSuperMinkowski} below)
\begin{equation}
  \label{BSTMembrane}
  H_{a b \alpha \beta} = \Gamma_{a b \alpha \beta}
  \,.
\end{equation}
From the point of view of the kappa-symmetric super-brane sigma-model, this is due to the fact that precisely in these dimensions these forms are
super Lie algebra cocycles for the translational supersymmetry algebra \cite{AETW87, AzTo89, BaezHuerta10, BaezHuerta11},
a fact known as the ``old brane scan''. Strikingly, in 11d the same forms are implied by the super torsion-free super Cartan geometry
that also implies the 11d supergravity equations of motion \cite{CL, Howe97, FOS}.

\medskip
Directly analogous statements, with more complicated local expressions, hold for all the super
D-branes \cite[(3.9)]{CGNSW97} and for the M5-brane \cite[(5)]{LT} \cite[(6)]{BLNPST97}
(recalled below in Example \ref{HigherTDualCorrespondenceForM2Brane}).
A key difference here is that understanding these as super-cocycles requires passage from
ordinary to extended super Minkowski spacetime \cite{CAIB00,IIBAlgebra}. In \cite{FSS13} we
pointed out, following \cite[p. 54]{SSS1} as expanded on in \cite{Huerta12, Huerta14}, that this means
to pass to \emph{homotopy} super Lie algebras, called super \emph{$L_\infty$-algebras} (see Section \ref{SuperLInfinityHomotopyTheory})
and we showed that this perspective completes the ``old brane scan'' to a ``brane bouquet'' of
iterative higher central extensions of super $L_\infty$-algebras:

$$
  \hspace{-1.2cm}
  \xymatrix@=1em{
    &
    &&&& \mathfrak{m}5\mathfrak{brane}
     \ar[d]
    &&&&
    \mbox{ \cite{FSS15} }
    \\
    &
    &&
     && \mathfrak{m}2\mathfrak{brane}
    \ar[dd]
    &&
    \\
    &
    &
    \mathfrak{d}5\mathfrak{brane}
    \ar[ddr]
    &
    \mathfrak{d}3\mathfrak{brane}
    \ar[dd]
    &
    \mathfrak{d}1\mathfrak{brane}
    \ar[ddl]
    &
    & \mathfrak{d}0\mathfrak{brane}
    \ar@{}[ddd]|{\mbox{\tiny (pb)}}
    \ar[ddr]
    \ar@{..>}[dl]
    &
    \mathfrak{d}2\mathfrak{brane}
    \ar[dd]
    &
    \mathfrak{d}4\mathfrak{brane}
    \ar[ddl]
    \\
    &
    &
    \mathfrak{d}7\mathfrak{brane}
    \ar[dr]
    &
    &
    & \mathbb{R}^{10,1\vert \mathbf{32}}
      \ar[ddr]
    &&&
    \mathfrak{d}6\mathfrak{brane}
    \ar[dl]
    &
    \mbox{
      \cite{FSS13, FSS16a}
    }
    \\
    &
    &
    \mathfrak{d}9\mathfrak{brane}
    \ar[r]
    &
    \mathfrak{string}_{\mathrm{IIB}}
    \ar[dr]
    &
    & \mathfrak{string}_{\mathrm{het}}
      \ar[d]
    &&
    \mathfrak{string}_{\mathrm{IIA}}
    \ar[dl]
    &
    \mathfrak{d}8\mathfrak{brane}
    \ar[l]
    \\
    &
    &
    &
    &
    \mathbb{R}^{9,1 \vert \mathbf{16} + {\mathbf{16}}}
    \ar@{<-}@<-3pt>[r]
    \ar@{<-}@<+3pt>[r]
    & \mathbb{R}^{9,1\vert \mathbf{16}} \ar[dr]
    \ar@<-3pt>[r]
    \ar@<+3pt>[r]
    &
    \mathbb{R}^{9,1\vert \mathbf{16} + \overline{\mathbf{16}}}
    \\
    &
    &
    &
    &
    &
    \mathbb{R}^{5,1\vert \mathbf{8}}
    \ar[dl]
    &
    \mathbb{R}^{5,1 \vert \mathbf{8} + \overline{\mathbf{8}}}
    \ar@{<-}@<-3pt>[l]
    \ar@{<-}@<+3pt>[l]
    \\
    &
    &
    &
    &
    \mathbb{R}^{3,1\vert \mathbf{4}+ \mathbf{4}}
    \ar@{<-}@<-3pt>[r]
    \ar@{<-}@<+3pt>[r]
    &
    \mathbb{R}^{3,1\vert \mathbf{4}}
    \ar[dl]
    &&&&
    \mbox{ \cite{HuertaSchreiber} }
    \\
    &
    &
    &
    &
    \mathbb{R}^{2,1 \vert \mathbf{2} + \mathbf{2} }
    \ar@{<-}@<-3pt>[r]
    \ar@{<-}@<+3pt>[r]
    &
    \mathbb{R}^{2,1 \vert \mathbf{2}}
    \ar[dl]
    \\
    &
    &
    &
    &
    \mathbb{R}^{0 \vert \mathbf{1}+ \mathbf{1}}
    \ar@{<-}@<-3pt>[r]
    \ar@{<-}@<+3pt>[r]
    &
    \mathbb{R}^{0\vert \mathbf{1}}
    \\
    \\
    &&& \fbox{Type IIB} \ar@{<->}@/_2pc/[rrrr]_{\mbox{\color{blue}T-Duality}} && \fbox{Type I} && \fbox{Type IIA} && \mbox{ \cite{FSS16} }
  }
$$
\label{TheBraneBouquet}

The full implication of these super tangent space-wise constraints
for super WZW-terms for super $p$-branes has perhaps not been fully appreciated yet.
Notice that any duality in string/M-theory, when fully taking into account all the fermionic degrees
of freedom, will have to respect all these constraints from super tangent space to super tangent space.
This is a strong condition on any duality.

\medskip
Indeed, in \cite{FSS16} we had shown that the super tangent space-wise torsion constraints/super-WZW
terms of the super F1/D$p$ branes in type II super-spacetime already completely reveal the structure of
T-duality on brane charges; we recall this below in Section \ref{OrdinaryTypeIITDuality}.
This structure had previously been proposed under the name ``topological T-duality''
 (\cite{BouwknegtEvslinMathai04, BHM03, BunkeSchick05})
and had been conjectured to underlie the actual T-duality of string theory
(see also Remark \ref{TDualityAxiom} below).

\medskip
The following picture illustrates how a global duality, such as topological T-duality $\mathcal{T}_{global}$, restricts
to a duality on super-tangent spaces, such as the super $L_\infty$-algebraic T-duality $\mathcal{T}_{\mathrm{loc}}$ of \cite{FSS16} (Section \ref{OrdinaryTypeIITDuality}
below):

\vspace{-.7cm}

\begin{center}
\includegraphics[width=.7\textwidth]{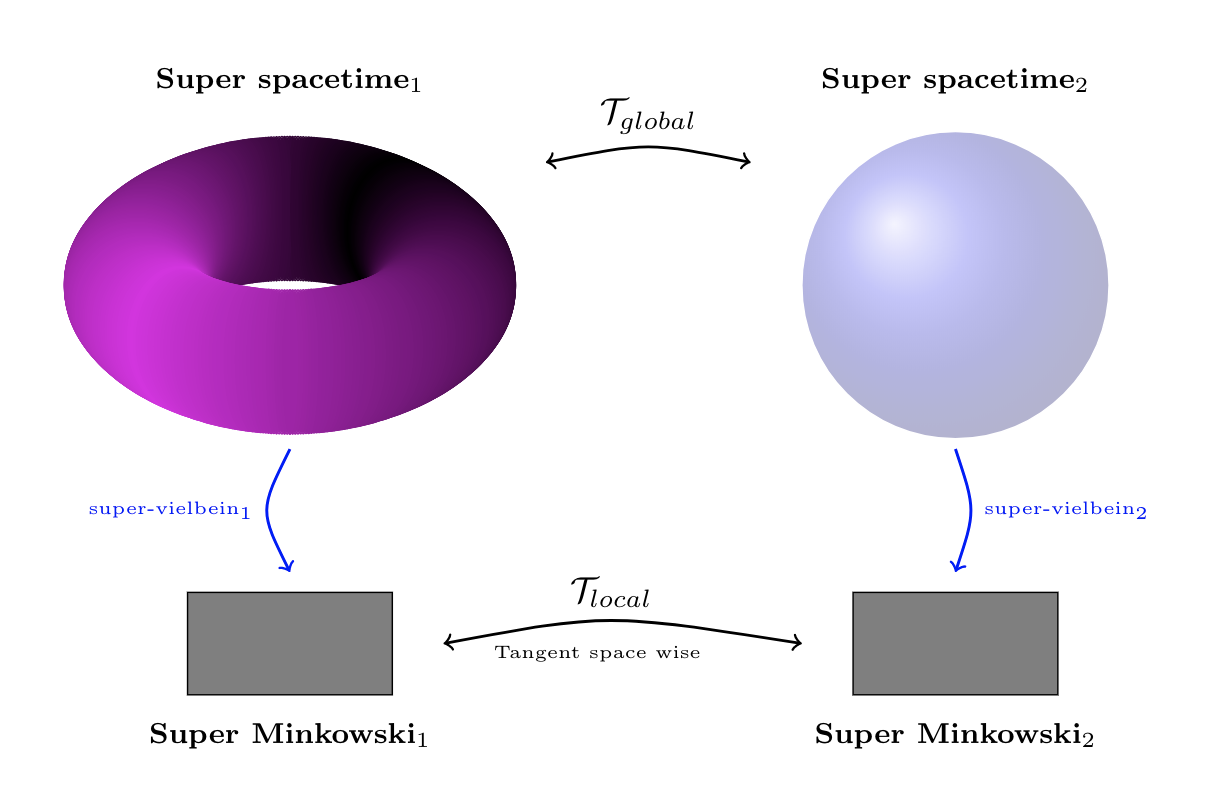}
\end{center}

\vspace{-.7cm}

\noindent Note that the double-headed arrows here indicate not direct maps, but rather ``spans'' of maps,
called {\it correspondences}; see Def. \ref{Correspondences} below.

\medskip
Turning this around, it means that analysis of the super-tangent space-wise super-WZW terms of the super
$p$-branes may be used to systematically discover and analyze previously unknown facts about super
M-brane physics, and hence about the elusive theory of which they are a part. This is what we consider here.
In this paper we establish further extensions of the above brane bouquet which may be organized into the following
diagrammatic table of contents. In the remainder of this introduction we will explain informally what
(some of) the boxed items in this diagram mean.
$$
  \hspace{-.5cm}
  \xymatrix@=5pt{
    &
    &
    \\
    \fbox
    {$
    \begin{array}{cc}
      \mbox{\bf \footnotesize Spherical}
      \\
      \mbox{\bf \footnotesize T-duality}
      \\
      \mbox{\footnotesize  (Prop. \ref{SphericalTDualityOnExtendedSuperspacetime})}
    \end{array}
    $}
    &
    \mathbb{R}^{10,1\vert\mathbf{32}}_{\mathrm{exc},s}
      \ar[rr]^{\mathrm{comp}}
      \ar[dd]_-{ \mathrm{hofib}( \mu_{\mathrm{exc},s} ) }
    &
    & \mathfrak{m}2\mathfrak{brane}
    \ar[ddd]^-{ \mathrm{hofib}(\mu_{{}_{M2}}) }
    &
    &
    \fbox
    {$
    \begin{array}{cc}
      \mbox{\bf \footnotesize  Spherical}
      \\
      \mbox{\bf \footnotesize  T-duality}
      \\
      \mbox{\footnotesize  (Prop. \ref{M5CocycleIsSPhericalTDualToItself})}
    \end{array}
    $}
    \\
    &
    \\
    \fbox
    {$
    \begin{array}{cc}
      \mbox{\bf \footnotesize  528-toroidal}
      \\
      \mbox{\bf \footnotesize  T-duality}
      \\
      \mbox{\footnotesize  (Prop. \ref{ExceptionalToroidalTDuality} (ii))}
    \end{array}
    $}
    &
    \mathbb{R}^{10,1\vert\mathbf{32}}_{\mathrm{exc}}
    \ar[drr]^{ \mathrm{hofib}( \mu_{\mathrm{exc}} ) }
    \ar[dddddd]_-{
      \mathrm{hofib}( \psi \wedge \overline{\psi} )
    }
    &
    &
    &
    &
    \fbox
    {$
    \begin{array}{cc}
      \mbox{\bf \footnotesize  517-toroidal}
      \\
      \mbox{\bf \footnotesize  T-duality}
      \\
      \mbox{\footnotesize  (Prop. \ref{ExceptionalToroidalTDuality} (iii))}
    \end{array}
    $}
    \\
    &&
    &
    \mathbb{R}^{10,1\vert\mathbf{32}}
    \ar[dr]
    \\
    &
    &
    \mathbb{R}^{9,1\vert \mathbf{16} + {\mathbf{16}}}
    &
    \mathbb{R}^{9,1\vert\mathbf{16}}
    \ar@<-3pt>[l]
    \ar@<+3pt>[l]
    \ar@<-3pt>[r]
    \ar@<+3pt>[r]
    \ar[dr]
    &
    \mathbb{R}^{9,1\vert \mathbf{16} + \overline{\mathbf{16}}}
    &
    \fbox
    {$
    \begin{array}{cc}
      \mbox{\bf \footnotesize  Cyclic}
      \\
      \mbox{\bf \footnotesize  T-duality}
      \\
      \mbox{\footnotesize  (Sect. \ref{OrdinaryTypeIITDuality})}
    \end{array}
    $}
    \\
    &
    &
    \mathbb{R}^{5,1\vert \mathbf{8} + {\mathbf{8}}}
    &
    \mathbb{R}^{5,1\vert\mathbf{8}}
    \ar@<-3pt>[l]
    \ar@<+3pt>[l]
    \ar@<-3pt>[r]
    \ar@<+3pt>[r]
    \ar[dl]
    &
    \mathbb{R}^{5,1\vert \mathbf{8} + \overline{\mathbf{8}}}
    &
    \fbox
    {$
    \begin{array}{cc}
      \mbox{\bf \footnotesize  Cyclic}
      \\
      \mbox{\bf \footnotesize  T-duality}
      \\
      \mbox{\footnotesize  (Prop. \ref{TDualityCorrespondenceOver5dSuperMinkowskiSpacetime})}
    \end{array}
    $}
    \\
    &
    &
    \mathbb{R}^{3,1\vert \mathbf{4} + \mathbf{4}}
    &
    \mathbb{R}^{3,1\vert\mathbf{4}}
    \ar@<-3pt>[l]
    \ar@<+3pt>[l]
    \ar[dl]
    &
    \\
    &
    &
    \mathbb{R}^{2,1\vert \mathbf{2} + \mathbf{2}}
    &
    \mathbb{R}^{2,1\vert\mathbf{2}}
    \ar@<-3pt>[l]
    \ar@<+3pt>[l]
    \ar[dl]
    &
    \\
    &
    \mathbb{R}^{0\vert \mathbf{32}}
    \ar@{<-}@<-3pt>[r]^{\ }="t"
    \ar@{<-}@<+3pt>[r]_-{\ }="s"
    &
    \mathbb{R}^{0\vert \mathbf{2}}
    &
    \mathbb{R}^{0 \vert \mathbf{1}}
    \ar@<-3pt>[l]
    \ar@<+3pt>[l]
    &
    \\
    \\
    &
    \fbox{
      {$
      \begin{array}{c}
        \mbox{Exceptional}
      \end{array}
      $}
    }
    &
    \fbox{
      {$
      \begin{array}{c}
        \mbox{Type IIB}
      \end{array}
      $}
    }
    &
    \fbox{
      {$
      \begin{array}{c}
        \mbox{Type I}
      \end{array}
      $}
    }
    &
    \fbox{
      {$
      \begin{array}{c}
        \mbox{Type IIA}
      \end{array}
      $}
    }
    \ar@{..} "s"; "t"
  }
$$

It is noteworthy that this is a derivation of M-theoretic structure from first principles, not involving
any extrapolation from perturbative string theory nor any conjectures or informal analogies from other sources.
Instead, this derivation is the systematic rigorous analysis of the progression of higher extensions in super homotopy theory that is emanating from the
superpoint (see \cite{Schreiber17b} for exposition of our perspective).

\medskip



We now explain this conceptually.  We first observe that the WZW-term of the M5-brane
 sigma-model \cite{BLNPST97, FSS15} exhibits 3-spherical topological T-duality, which is
 interestingly a self-duality; this is Prop. \ref{M5CocycleIsSPhericalTDualToItself} below.
 It may be understood by observing that the relation between the joint M2/M5-cocycle (recalled as Example \ref{HigherTDualCorrespondenceForM2Brane} below)
 $$
   d \mu_{{}_{M5}} + \tfrac{1}{2} \mu_{{}_{M2}} \wedge \mu_{{}_{M2}} = 0
   \,,
 $$
 which is the super $L_\infty$-algebraic avatar of the equations of motion  on the 11d supergravity flux forms
 $$
   d G_7 + \tfrac{1}{2} G_4 \wedge G_4 = 0\;,
 $$
 is a higher analog of the relation
 $$
   d \mu_{{}_{F1}}\vert_{8+1} + c_2^{{}^{\mathrm{IIA}}} \wedge c_2^{{}^{\mathrm{IIA}}} = 0
 $$
 that encodes super-topological T-duality for type II superstrings (recalled in Section \ref{OrdinaryTypeIITDuality} below).
 Equivalently (by Prop. \ref{ClassifyingDataHigherTDuality} below) this is the fact that the genuine M5-brane supercocycle (see \eqref{M5cocycleInTDualityPair} below)
 \begin{equation}
   \label{M5CocycleRecalled}
   \tilde \mu_{{}_{M5}} := 2 \mu_{{}_{M5}} + c_3 \wedge \mu_{{}_{M2}}
   \,,
 \end{equation}
 which is the super $L_\infty$-algebraic avatar of the higher WZW term of the M5-brane sigma-model
 $$
   d \mathbf{L}_{M5,\mathrm{WZW}} \;=\; G_7 + \tfrac{1}{2} C_3 \wedge G_4\;,
 $$
 has an algebraic structure which is a higher degree analog of that of a ``T-dualizable H-flux''
 whose super $L_{\infty}$-algebraic avatar is (see (\ref{SuperstringCocycleTypeII}) below)
 \begin{equation}
   \label{TDuality3CocycleRecalled}
    \mu_{{}_{F1}}^{{}^{\mathrm{IIA/B}}}
    \;=\;
    \mu_{{}_{F1}}\vert_{8+1} \,+\, e^9_{{}_{A/B}} \wedge c_2^{{}^{\mathrm{IIB/A}}}
    \,.
 \end{equation}
 In comparing  expression \eqref{M5CocycleRecalled} with expression \eqref{TDuality3CocycleRecalled},
  one sees that the role of the
 fiberwise Maurer-Cartan 1-form $e^9$ in cyclic type II string T-duality is now taken by the C-field $c_3$.
 This shows that when passing from T-duality for type II string theory (recalled in Section \ref{OrdinaryTypeIITDuality} below)
 to spherical T-duality of M5-branes (Section \ref{SphericalTDualityOfM5BranesOnM2ExtendedSpacetime} below),
 the analog of the role of the 10d super-spacetime fibered over the 9d super spacetime is now the
 2-gerbe (or, rationally, the 3-sphere fibration) that is classified by the M2-brane charge, which is fibered \emph{over 11d spacetime}.
 This phenomenon is also indicated in the table at the beginning of section \ref{Sec-HigherTDualityCorrespondences} below.

$$
  \xymatrix@C=-20pt{
   {
    \fbox
    {
    $
    \begin{array}{cc}
      \mbox{  M2-extended}
      \\
      \mbox{  super-spacetime}
    \end{array}
    $
    }}
    \ar[dr]|{\;\;\; \;\; \mbox{\it \tiny Fibration classified by}    }
    \ar@{<->}@/^2pc/[rr]|{ \mbox{\it \tiny Spherical} \atop \mbox{\it \tiny T-self-duality} }
    &&
     {
    \fbox
    {
    $
    \begin{array}{cc}
      \mbox{  M2-extended}
      \\
      \mbox{  super-spacetime}
    \end{array}
    $
    }}
    \ar[dl]|{\hspace{-.5cm} \mbox{ \it \tiny  M2-brane WZW-term }   }
    \\
    &
     { \fbox{11d super-spacetime} } &
  }
  \;\;\;\;\;\;\;
  \xymatrix{
         \mathfrak{m}2\mathfrak{brane}
    \ar[dr]|{\mathrm{hofib}(\mu_{{}_{M2}})}
    \ar@{<->}@/^2pc/[rr]|{ \mbox{\it \tiny Homotopy } \atop \mbox{\it \tiny fiber-product} }
    &&
     \mathfrak{m}2\mathfrak{brane}
    \ar[dl]|{\mathrm{hofib}(\mu_{{}_{M2}})}
    \\
    & \mathbb{R}^{10,1\vert\mathbf{32}} &
  }
$$
This indeed makes sense, as highlighted before in \cite[Remark 3.11, Sec. 4.4]{FSS13}: the
component of a plain sigma-model field in this 2-gerbe fiber is equivalently the higher gauge
field on the M5-brane's worldvolume.
$$
  \xymatrix{
    &
    &
        &&
    \mathfrak{m}2\mathfrak{brane}
    \ar[d]|{\;\;\; \mathrm{hofib}(\mu_{{}_{M2}})}
    &
    {
    \fbox
    {
    $
    \begin{array}{cc}
      \mbox{\footnotesize  Super 2-gerbe}
      \\
      \mbox{\footnotesize  over super-spacetime}
    \end{array}
    $
    }}
             \\
        {
    \fbox
    {
    $
    \begin{array}{cc}
      \mbox{\footnotesize  M5-brane}
      \\
      \mbox{\footnotesize worldvolume}
    \end{array}
    $
    }}
    &
    \Sigma_{M5}
    \ar@/_1pc/[rrr]_{\mbox{\it \tiny Plain sigma-model field}}
    \ar@{..>}[urrr]|{ \mbox{\it \tiny Sigma-model-}\atop \mbox{\it \tiny \& gauge-field } }
    & && \mathbb{R}^{10,1\vert\mathbf{32}} &
    {
    \fbox
    {
    $
    \begin{array}{cc}
      \mbox{\footnotesize  Target}
      \\
      \mbox{\footnotesize  Super-spacetime}
    \end{array}
    $
    }}
  }
$$
This way higher geometry provides a unification of sigma-model fields with worldvolume gauge fields, and our Prop. \ref{M5CocycleIsSPhericalTDualToItself}
says that this unified perspective reveals spherical T-duality in M-brane theory.

\medskip
 In order to shed more light on this subtle point, we next observe that the decomposed C-field on
 the exceptional super-tangent space of 11d super-spacetime serves as a transgression element for the M2-brane
 WZW terms. This is Prop. \ref{TransgressionElementForM2Cocycle} below, which  provides a re-interpretation
 of the ``hidden'' D'Auria-Fr{\'e} algebra from \cite{DF, BAIPV04} (see also \cite{BdAPV, ADR16, ADR17})
 over which the supergravity C-field decomposes super-equivariantly.
 We expand on that in Section \ref{ExceptionalGenralisedGeometry} below, where we explain how the D'Auria-Fr{\'e} algebra may be
 regarded as providing the supersymmetric refinement of the exceptional generalized geometry for the C-field proposed in \cite{Hull07, PachecoWaldram08}\footnote{
   Finding a suitable supergeometric refinement of exceptional/generalized geometry is stated as an open problem in
   \cite[p. 18]{CederwallEdlundKarlsson13}\cite[p. 4,7]{Cederwall14}. In \cite[p. 10,11]{Bandos17}
   it was proposed that $\mathbb{R}^{10,1\vert \mathbf{32}}_{\mathrm{exc}}$ (in our notation)
    is the answer. But for our argument in
   Section \ref{ExceptionalGenralisedGeometry} the further fermionic extension to the DF-algebra $\mathbb{R}^{10,1\vert \mathbf{32}}_{\mathrm{exc},s}$
   is crucial.
 }
 and how spherical T-duality acts by duality transformation on the resulting super-moduli spaces.
 $$
  \hspace{-8.3cm}
  \xymatrix@R=7pt{
    \fbox{\footnotesize
      Def. \ref{FermionicExtensionOfExceptionalTangentSuperspacetime}
    }
    &
    \mathbb{R}^{10,1\vert \mathbf{32}}_{\mathrm{exc},s}
    \ar[d]|{ \;\;\;\;\mathrm{hofib}( \mu_{\mathrm{exc},s} ) }
    & \simeq_{{}_{\mathbb{R}}} &
    \mathrlap{
      \underset{
        \mbox{
          \tiny
          Exceptional generalized super-geometry
        }
      }{
      \underbrace{
        \mathbb{R}^{10,1}
        \oplus \Exterior^2\big( \mathbb{R}^{10,1\vert \mathbf{32}}  \big)^\ast
        \oplus \Exterior^5\big( \mathbb{R}^{10,1\vert \mathbf{32}}  \big)^\ast
        \oplus \mathbf{32}_{\mathrm{odd}} \oplus \mathbf{32}_{\mathrm{odd}}
      }
      }
    }
    \\
    \fbox{ \footnotesize
        Def. \ref{MaximalCentralExtensionOfNIs32Superpoint},
        Prop. \ref{SpinActionOnExceptionalTangentSuperspacetime}
    }
    &
    \mathbb{R}^{10,1\vert \mathbf{32}}_{\mathrm{exc}}
    \ar[d]|{ \; \;\mathrm{hofib}(\mu_{\mathrm{exc}}) }
    & \simeq_{{}_{\mathbb{R}}} &
    \mathrlap{
      \underset{
        \mbox{\tiny
          Exceptional generalized geometry
        }
      }{
      \underbrace{
        \mathbb{R}^{10,1}
        \oplus \Exterior^2\big( \mathbb{R}^{10,1\vert \mathbf{32}}  \big)^\ast
        \oplus \Exterior^5\big( \mathbb{R}^{10,1\vert \mathbf{32}}  \big)^\ast
      }
      }
      \oplus \mathbf{32}_{\mathrm{odd}}
    }
    \\
    \fbox{\footnotesize
      Ex. \ref{superMinkowskiSuperLieAlgebra}
    }
    &
    \mathbb{R}^{10,1\vert \mathbf{32}}
    &\simeq_{{}_{\mathbb{R}}}&
       \mathrlap{
        \underset{
          \mbox{\tiny Super spacetime}
        }{
        \underbrace{
          \underset{
            \mbox{\tiny Spacetime}
          }{
          \underbrace{
            \mathbb{R}^{10,1}
          }
          }
          \oplus \mathbf{32}_{\mathrm{odd}}
        }
      }
      }
  }
$$
 In particular, using the relation to the Ho{\v r}ava-Witten boundary of the decomposed C-field as in \cite{EvslinSati02}, this explains
 the nature of the extra fermion generator in the D'Auria-Fr{\'e} algebra, which was a concern in \cite{ADR16, ADR17}
 (see Example \ref{InterpretingTheExtraFermionicGenerator} below).
 \footnote{Note that decomposability of the corresponding field strength $G_4$ also arises naturally in
 the classification of backgrounds for 11-dimensional supergravity \cite{FOP}.}

\medskip
This means that sigma-model fields with values in the super 2-gerbe, hence the pairs of ordinary sigma-model fields
and worldvolume higher gauge fields,  may be obtained from plain sigma-model fields into the exceptional
tangent space of 11d super-spacetime. Here, again, the sigma-model field components into the fibers over
spacetime transmute into gauge fields on the brane's worldvolume:
$$
\hspace{-3mm}
  \xymatrix@C=1.6em{
   {
    \fbox
    {
    $
    \begin{array}{cc}
      \mbox{\footnotesize  Exceptional tangent space}
      \\
      \mbox{\footnotesize  over super-spacetime}
    \end{array}
    $
    }}
    &
    \mathbb{R}^{10,1\vert\mathbf{32}}_{\mathrm{exc},s}
    \ar[rrrr]|{ { \mbox{\it \tiny Decomposition} \atop  \mbox{\it \tiny of C-field} } }
    &&&&
    \mathfrak{m}2\mathfrak{brane}
    &
       {
    \fbox
    {
    $
    \begin{array}{cc}
      \mbox{\footnotesize  Super 2-gerbe}
      \\
      \mbox{\footnotesize  over super-spacetime}
    \end{array}
    $
    }}
    \\
    \\
    \Sigma_{M5}
    \ar@{-->}@/_2pc/[uurrrrr]|{ \mbox{\it \tiny Sigma-model-}\atop \mbox{\it \tiny\& gauge-field } }
    \ar@{..>}@/^1pc/[uur]|{ \mbox{\it \tiny U-duality equivariant} \atop
    \mbox{\it \tiny sigma-model- \& gauge-field }  }
    & &&
    &
  }
$$
That the M5-brane indeed has a formulation as a plain sigma-model on the exceptional tangent space over 11d-spacetime this way was recently observed in \cite{Sakatani16}. Related observations
are due to \cite{Yale}.

It may be noteworthy that passing from $\mathfrak{m}2\mathfrak{brane}$ to $\mathbb{R}^{10,1\vert\mathbf{32}}_{\mathrm{exc},s}$ this way, means to trade a (rational) 3-sphere
fibration over 11d superspacetime for a high-dimensional (rational) torus fibration.
This is reminiscent of a recent interest in relating toroidal and spherical backgrounds \cite{CCGS, CSk}.

\medskip
 Putting all the above together we arrive at the following global picture:

 \medskip
$$
\hspace{-3mm}
\boxed{
  \xymatrix@=1.55em{
    &
    &
    {
    \fbox
    {
    $
    \begin{array}{cc}
      \mbox{\it \footnotesize Exceptional tangent space}
      \\
      \mbox{\it \footnotesize over super-spacetime}
    \end{array}
    $
    }}
    &&
    {
    \fbox
    {$\begin{array}{cc}
      \mbox{\it \footnotesize Super 2-gerbe}
      \\
      \mbox{\it \footnotesize over super-spacetime}
    \end{array}$}
    }
    \\
    &
    &
    \mathbb{R}^{10,1\vert\mathbf{32}}_{\mathrm{exc},s}
    \ar[rr]_{\mathrm{comp}_s}^-{ { \mbox{\it \tiny Decomposition} \atop  \mbox{\it \tiny of C-field} } }
    &&
    \mathfrak{m}2\mathfrak{brane}
    \ar[dd]|{\;\;\mathrm{hofib}(\mu_{{}_{M2}})}
    \ar@{<->}@/^2pc/[dr]|>>>>>>>>>>>{\phantom{AA} \atop \phantom{AA}}^>>>>>>>>>>>{
    \mbox{\it \tiny Spherical} \atop \mbox{\it \tiny T-self-duality} }
    &
    \\
    &&&&& \mathfrak{m}2\mathfrak{brane}
    \ar[dl]|{\mathrm{hofib}(\mu_{{}_{M2}})}
    \\
    {
    \fbox
    {
    $
    \begin{array}{cc}
      \mbox{\it \footnotesize M5-brane}
      \\
      \mbox{\it \footnotesize worldvolume}
    \end{array}
    $
    }}
    &
    \Sigma_{M5}
    \ar@/_1pc/[rrr]_{\mbox{\it \tiny Plain Sigma-model-field}}
    \ar@{-->}[uurrr]|{ \mbox{ \it \tiny Sigma-model- \& gauge-field } }
    \ar@{..>}@/^1pc/[uur]^<<<<<<<<<<<<<<{ \mbox{\it \tiny U-duality equivariant}
    \atop \mbox{\it \tiny sigma-model- \& gauge-field }  }
    & && \mathbb{R}^{10,1\vert\mathbf{32}} &
    \\
    &&&&
   {
    \fbox
    {
    $
    \begin{array}{cc}
      \mbox{\it \footnotesize Target}
      \\
      \mbox{\it \footnotesize super-spacetime}
    \end{array}
    $
    }}  }
  }
$$

\vspace{6mm}

Finally we show that the spherical T-duality on $\mathfrak{m}2\mathfrak{brane}$ passes along the decomposition
map to spherical T-duality on the exceptional super spacetime
$\mathbb{R}^{10,1\vert\mathbf{32}}_{\mathrm{exc},s}$; this is Prop. \ref{SphericalTDualityOnExtendedSuperspacetime} below. The key step in establishing this
(via Theorem \ref{HigherTDualityForDecomposedFields} below)
is to show that the decomposed C-field on the exceptional super-spacetime $\mathbb{R}^{10,1\vert \mathbf{32}}_{\mathrm{exc},s}$
still allows to distinguish the 3-spherical wrapping modes of the M5-brane that get exchanged with the non-wrapping modes under
spherical T-duality. Prop. \ref{HCohomologyOfDecomposedCFieldVanishesInSubalgebra} below shows that this is the case, except possibly for
summands in the M5-brane charge twisted supercocycles which are multiples in the gravitino field $\psi$ of the 528-volume form
on the exceptional superspacetime; see Remark \ref{InvolvingTheExceptionalVolumeElement} below.

\medskip

Note that a higher spherical version of T-duality had been established
in vast mathematical generality in \cite{LSW}, and in a special case in \cite{BEM}.
In \cite{BEM} the spherical fiber bundle was suggested to be identified with actual spacetime itself,
not a fibration over spacetime; however, the direct relation to string theory or M-theory had remained unclear.
Notice that the spherical T-duality which we discover takes place entirely in M-theory;
notably it is distinct from the strong coupling lift of ordinary T-duality to M-theory
(see \cite{Russo97}\cite{Schwarz}) which is the non-perturbative version of 10-dimensional T-duality,
with one side inevitably involving a string theory. The super $L_\infty$-algebraic formulation
of this strong coupling T-duality, interpreted as F-theory, we had discussed already in \cite[Section 8]{FSS16}.

\medskip

\medskip

\noindent In {\bf summary}, in this article we offer the following insights:
\begin{enumerate}
\item A new duality in M-theory: spherical T-duality for M5-branes (Section \ref{SphericalTDualityOfM5BranesOnM2ExtendedSpacetime}).

\vspace{-1mm}
\item Toroidal T-duality on exceptional super-spacetime (Section \ref{ToroidalTDualityOnExceptionalMTheorySpacetime});

\vspace{-1mm}
\item Clarification of the supersymmetric exceptional generalized geometry of M-theory (Section \ref{ExceptionalGenralisedGeometry});

\vspace{-1mm}
\item Identification of spherical T-duality as a duality on the
exceptional-generalized super-geometry
(Section \ref{SphericalTDualityOnExceptionalMTheorySpacetime});

\vspace{-1mm}
\item Parity symmetry as an isomorphism in M5-brane charge twisted cohomology akin
to spherical T-duality (Section \ref{ParitySymmetry});

\end{enumerate}

\medskip
\noindent We provide the above physical insights within the proper mathematical setting, developed in Section \ref{sphericaltduality}. We highlight the power of cohomological techniques in the supergeometric setting,
including C-cohomology to study T-duality for decomposed
C-fields in Section \ref{HigherTDualityForDecomposedFormFields}.
Similar techniques in other contexts have allowed, for instance, for eleven-dimensional supergravity to be recovered from
the Spencer cohomology of the Poincar\'e superalgebra \cite{FOS}.

\newpage

The {\bf outline} of the article is as follows: Section 2 is mainly to fix concepts and notations; the reader roughly familiar
with super $L_\infty$-algebras and their cohomology may want to skip this section and start with Section 3, coming back to Section 2 when the
need arises. Section 3 contains the main mathematical results. They are presented in a general form, with all examples from M-brane theory postponed to Section 4. The reader principally interested into these examples is invited to start directly from Section 4, going back to Section 3 for the proofs of the general statements used in Section 4.

\section{Super $L_\infty$-Homotopy theory}
\label{SuperLInfinityHomotopyTheory}

We work in the homotopy theory of super $L_\infty$-algebras as in \cite{FSS13, FSS15, FSS16a, FSS16, T, Schreiber17b}
(see Remark \ref{RemarkOnTerminology} below on differing terminology).
Here we set up the basics that we need in the following sections.

\medskip
The bosonic sector of super $L_\infty$-homotopy theory is
a model for \emph{rational homotopy theory} (for review see e.g. \cite{Hess06} or \cite[section A]{FSS16a}),
where topological spaces as well as spectra parameterized over them \cite{Bra18} are studied in the coarse-grained
perspective that regards two of them as essentially the same as soon as there is a map between them
that induces an isomorphism on all rationalized homotopy groups.
This amounts to disregarding (for the time being) all information contained in torsion-subgroups of homotopy group and to retaining only the information
that may be represented by differential form data.
Notably the Chevalley-Eilenberg algebras of nilpotent $L_\infty$-algebras are Sullivan's models for rational homotopy types \cite{Sullivan77}.

\medskip
For example, in rational homotopy theory the spheres of odd dimension $2n+1$ are equivalent to Eilenberg-MacLane spaces concentrated in this odd degree:
\(
\label{Q-EM}
  \xymatrix{
    K(\Z, 2n+1) \ar[rr]^-{\rm rationalize}
    \ar@{<-}@/_1.8pc/[rrrr]_{\rm rational\;equivalence}
    && K(\Q, 2n+1) && S^{2n+1} \ar[ll]_-{\rm generator}
   }
\)
and both are algebraically represented by the simplest possible Sullivan model, namely by the differential-graded commutative algebra
(dgc-algebra) that has a single generator $c$ in degree $2 +1$, and whose differential vanishes: $d c = 0$.

\medskip
The observation that rational homotopy theory sits inside the homotopy theory of $L_\infty$-algebras
is implicit already in \cite{Quillen69}, but was made fully explicit only in \cite{Hinich},
on which the modern model \cite{pridham} is based.
A review of rational homotopy theory from the perspective of $L_\infty$-algebras is given
 in \cite[section 2]{BuijsFelixMurillo12}.
Therefore, the homotopy theory of super $L_\infty$-algebras may be regarded as a
model for rational supergeometric homotopy theory.

\medskip
In the supergravity literature which goes back to \cite{DF, CDF}, the Chevalley-Eilenberg algebras
of super $L_\infty$-algebras are known
as ``FDA''s, following \cite{Nie82}. In supergravity these serve to neatly unify supersymmetry super Lie algebras
(such as super Poincar{\'e} algebras) with the higher degree form fields that are crucial ingredients of
higher dimensional supergravity theories.

\medskip
From the point of view of super homotopy theory this phenomenon is interpreted \cite{FSS13} as saying that super-Minkowki super Lie algebras
carry a finite number of exceptional $\mathrm{Spin}(p,1)$-invariant cohomology classes (Section \ref{SuperLInfinityCohomology} and Section \ref{EquivariantSuperLInfinityCohomology} below) that iteratively classify a Whitehead-like tower  of higher central extensions (Section \ref{TwistedSuperLInfinityCohomology} below), analogous to
the case of connected covers of Lie groups studied in \cite{SSS2}\cite{SSS3}\cite{9brane}, which interestingly
admit extensions to the Lorentzian case \cite{SS} and to the rational case \cite{SW} as we consider here.
The tower we construct has higher equivariant connections which are precisely the higher WZW terms
of the super-$p$-branes appearing in string/M-theory; see the diagram on page \pageref{TheBraneBouquet}.

\medskip
In \cite{FSS16} we had shown that not only the brane content and brane intersection laws may be ``read off'' from super homotopy this
way, but the ``super-topological'' T-duality between type IIA and type IIB super F1/D$p$-branes may be discovered
(in fact together with the very axioms of topological T-duality themselves)
establishing a kind of reflection symmetry in the above \emph{brane bouquet} diagram.
It is curious that all vertical-going arrows in this diagram are given from first principles:
they are the maximal $R$-symmetry invariant higher extension
in each case \cite{HuertaSchreiber}. We discuss this phenomenon further below in Section \ref{ToroidalTDualityOnExceptionalMTheorySpacetime}.

\newpage

\paragraph{Basic concepts and notation used.}
	\begin{itemize}

\item $\mathfrak{g}$: A (super-) $L_\infty$-algebra, i.e. an algebraic structure akin to a super Lie algebra but
with brackets and higher Jacobi relations of any higher degree.
A (super-) $L_\infty$-algebra  encodes the structure of infinitesimal (super-)symmetries
and of ever higher order infinitesimal (super-)symmetries of infinitesimal (super-)symmetries.
A key example for us is $\mathfrak{g}=$ extended Minkowski super spacetime.
For background on (super-)$L_\infty$-algebras in the context that we use see \cite{SSS1, BaezHuerta11, FSS16}.

\item $b^n \mathfrak{u}_1 \simeq {b^n \R}$:
The super $L_\infty$-algebras which are the higher versions (``deloopings'') of the abelian Lie algebras
 $\mathbb{R} \simeq \mathfrak{u}_1$. The
underlying chain complex has a copy of $\mathbb{R}$ in degree $n$, and all brackets vanish.
These are the rational models for Eilenberg-MacLane spaces $K(\mathbb{Z}, n+1)$. Just as the latter
 classify ordinary integral cohomology, these $L_\infty$-algebra serve as classifying objects
 for super $L_\infty$-cohomology; see Def. \ref{SuperLInfinityCocycles} below.

\item ${\rm CE}(\mathfrak{g})$: The Chevalley-Eilenberg algebra of the super $L_\infty$-algebra
 $\mathfrak{g}$, i.e., a differential graded-commutative (DGC) superalgebra of elements dual to
 $\frak{g}$ whose differential encodes the brackets and higher brackets on $\mathfrak{g}$.
 The CE-algebra may be presented as
 $$
 {\rm CE}(\mathfrak{g})=[{\rm generators}]/(\text{\rm differentials of generators})\;.
 $$
 These are also known as ``FDA''s  in some of the supergravity
literature; see Remark \ref{RemarkOnTerminology} below.

\item  $\mu \in \mathrm{CE}(\mathfrak{g})$ a super $L_\infty$-cocycle, hence a closed
 elements in the Chevalley-Eilenberg algebra.

\item $\hat{\mathfrak{g}}$: A higher central extension of super-$L_\infty$-algebra $\mathfrak{g}$.
A key example is $\widehat{\mathfrak{g}} =\frak{m}2\frak{brane}$, which is the extension of 11d
super Minkowski spacetime by the M2-brane supercocycle $\mu_{{}_{M2}}$.

\item ${\rm hofib}(\mu_n)$: Homotopy fiber of a super $L_\infty$-cocycle $\mu$ of degree $n$.
The latter may be viewed as a map  to the classifying algebra $b^n \R$.
$$
\xymatrix@R=1.5em{
b^{n-1} \mathfrak{u}_1 \ar[rr]\ar[d] && \hat{\frak{g}} \ar[d]|{ \mathrm{hofib}(\mu) } \ar[rr]
&& \ast \ar[d]
\\
\ast\ar[rr]&& \mathfrak{g}  \ar[rr]^-{\mu} && b^{n}\mathfrak{u}_1
}
$$
See Prop. \ref{hofibofSuperLInfinityAlgebras} below.
This is the super $L_\infty$-algebra counterpart of principal bundles obtained as pullback of universal bundles
$$
\xymatrix@R=1.5em{
G \ar[rr]\ar[d] && P\ar[rr] \ar[d] && \,\,EG \ar[d]
\\
\ast\ar[rr] && X \ar[rr] && BG
}
$$
for $G$ a (higher) abelian topological group \cite{NSS12}.

\item $\mu_{{}_M}$: Cocycle corresponding to an M-brane; of degree 4 for the M2-brane and degree 7
for the M5-brane; see Example \ref{HigherTDualCorrespondenceForM2Brane} below.
Jointly these are valued in the rational 4-sphere \cite{top}\cite{FSS15}\cite{FSS16a}.

\item $\mathrm{dd}_n$: Cocycle regarded as the rational version of a higher Dixmier-Douady class (see \cite{DD} for a very readable
account) generalized from to higher degrees, as described in \cite{extended, 3stack, stacky}.
The original class ${\rm dd}_3$ is in turn a generalization of the Chern class of a line bundle.

\item {\it Higher torus}: A product of shifted circles, i.e. of $b^n \mathfrak{u}_1$'s ; see Def. \ref{HigherCupProducts} below.

\item $\widevec{\mathrm{dd}}$: A tuple of cocycles, in our case a $k$-vector of $(2n+1)$-cocycles
classifying an extension by a higher $k$-torus of degree $2n+1$ ; see Def. \ref{HigherCupProducts} below.


\item $H^{i + \mu}$: Cohomology in degree $i$ twisted by the cocycle $\mu$, see Def. \ref{TwistedCohomology} below. Other notations for
twisted cohomology include $H^i_\mu(-)$ and $H^i(-;\mu)$ but the first might be confused with equivariant
cohomology (which we use) and the second would lead to cumbersome notation when we introduce coefficients to
our cohomology groups.

\item $s \mathrm{LieAlg}^\mathrm{fin}_{\mathbb{R}}$: The category of finite dimensional super Lie
algebras, i.e., the collection of all super Lie algebras with appropriate homomorphisms between then.

\item $sL_\infty \mathrm{Alg}^{\mathrm{fin}}_\mathbb{R}$: The category of super $L_\infty$-algebras
of finite type, i.e., the collection of all super $L_\infty$-algebras with appropriate homomorphisms between them.

\item  $\mathrm{sdgcAlg}_{\mathbb{R}}^{\mathrm{op}}$:    The category of super differential graded-commutative
(sCDG) algebras, i.e., the collection of all super CDG-algebras with appropriate homomorphisms between them.

\end{itemize}

\newpage

\subsection{Super $L_\infty$-Cohomology}
\label{SuperLInfinityCohomology}

The cocycles that we encounter are built out of bosonic and fermionic fields
and are closed under an appropriate differential. We now provide the proper setting for describing such fields or cocycles,
namely \emph{super $L_\infty$-algebras}. For details we refer the reader to \cite[section 2]{FSS16}
and references therein; see also Remark \ref{RemarkOnTerminology} below on differing terminology.

\medskip
To every finite-dimensional super Lie algebra $(\mathfrak{g},[-,-])$ one associates its \emph{Chevalley-Eilenberg algebra}
$\mathrm{CE}(\mathfrak{g})$, which is the free $(\mathbb{Z}, \mathbb{Z}/2)$-bigraded-commutative algebra
$\Exterior^\bullet(\mathfrak{g})^\ast$ equipped with the differential $d_{\mathfrak{g}} := [-,-]^\ast$,
which on generators is the linear dual of the super Lie bracket, and from there uniquely extended as graded derivation
of bidegree $(1,\mathrm{even})$.

\begin{example}[Translational supersymmetry super Lie algebra]
\label{superMinkowskiSuperLieAlgebra}
A key class of examples is the Lorentzian supersymmetry super Lie algebras which are specified by a spacetime
dimension $d =p+1$ and a choice of real representation $\mathbf{N}$ (of real dimension $N \in \mathbb{N}$)
of the corresponding Spin group $\mathrm{Spin}(p,1)$. Their translational part may be thought of as the
corresponding $(p+1)$-dimensional and ``$N$-supersymmetric'' super-Minkowski spacetime $\mathbb{R}^{p,1\vert \mathbf{N}}$ equipped with its super-translation
super Lie action on itself. From this point of view the corresponding Chevalley-Eilenberg algebra is generated from
the standard super-left invariant super-vielbein
$$
  (e^a = d x^a + \overline{\theta} \Gamma^a d \theta,\; \psi^\alpha = d \theta^\alpha)
$$
and the CE-differential is given on generators by the torsion
constraint equation (\ref{VielbeinDifferential}):
\begin{equation}
   \label{SupermMinkowskiCE}
   \mathrm{CE}\big(
     \mathbb{R}^{p,1\vert \mathbf{N}}
   \big)
   \;=\;
   \mathbb{R}
   \Big[ \;\;
     \hspace{-5mm}\underset{ \mathrm{deg} = (1,\mathrm{even}) }{\underbrace{(e^a)}}\hspace{-4mm}_{a \in \{0,\cdots, p\}} \; , \;
     \hspace{-4mm}\underset{ \mathrm{deg} = (1,\mathrm{odd}) }{\underbrace{(\psi^\alpha)}}\hspace{-3mm}_{\alpha \in \{1, \cdots, N\}}
   \Big]
   /
   \left(
     {\begin{array}{l} d e^a = \overline{\psi} \Gamma^a \psi \\ d \psi^\alpha = 0\end{array}}
   \right)
   \,.
\end{equation}
Here
$$
  \overline{(-)}\Gamma (-)
  \;:\;
  \mathbf{N} \otimes \mathbf{N}
    \xymatrix{\ar[r]&}
  \mathbb{R}^{p,1}
$$
denotes the bilinear spinor-to-vector pairing that is canonically associated with ever real spin representation.
\end{example}

The operation that takes a finite-dimensional super Lie algebra $\mathfrak{g}$ to its Chevalley-Eilenberg
algebra $\mathrm{CE}(\mathfrak{g})$ turns out to be a fully-faithful embedding into the opposite of
differential $(\mathbb{Z},\mathbb{Z}/2)$-bigraded commutative algebras, hence super DGC-algebras.
\footnote{This may be indicated as
$
  \mathrm{CE}
  \;:\;
  s \mathrm{LieAlg}^\mathrm{fin}_{\mathbb{R}}
  \hookrightarrow
  \mathrm{sdgcAlg}_{\mathbb{R}}^{\mathrm{op}}
  \,.
$
}
This means that a homomorphism of super Lie algebras is equivalently a
super DGC (differential graded commutative) algebra homomorphism of their
CE-algebras in the other direction
$$
  \frac{
    \xymatrix{
      {\phantom{CE(}}
      \mathfrak{g}_1
      {\phantom{)}}
     \ar[rr]
      &&
      {\phantom{CE(}}
      \mathfrak{g}_2
      {\phantom{)}}
    }
  }{
    \xymatrix{
      \mathrm{CE}(\mathfrak{g}_1)
     \ar@{<-}[rr]
      &&
      \mathrm{CE}(\mathfrak{g}_2)
    }
  }
$$
This makes it evident that there is a generalization of finite dimensional super Lie algebras to
\emph{super $L_\infty$-algebras} $\mathfrak{g}$ of finite type which may be \emph{defined}
to be the formal duals of super dgc-algebras $\mathrm{CE}(\mathfrak{g})$ whose underlying
graded-commutative algebra is \emph{free}, i.e., is a super-graded Grassmann-algebra
\cite[Def. 13]{SSS1}.
\footnote{We write this as
$
  \mathrm{CE}
  \;:\;
  s L_\infty \mathrm{Alg}^\mathrm{fin}_{\mathbb{R}}
  \hookrightarrow
  \mathrm{sdgcAlg}_{\mathbb{R}}^{\mathrm{op}}
  \,.
$
}

\begin{remark}[Differing terminology for super $L_\infty$-algebras]
\label{RemarkOnTerminology}
The history of the concept of (super-)$L_\infty$-algebras is a bit interwined, which
tends to hide the great unity of the subject behind the different terminology of disjoint schools.
Traditionally the concept of $L_\infty$-algebras is attributed to Stasheff (see \cite{LS}),
who had introduced $A_\infty$-algebras much earlier. But, in fact, Stasheff indicates that he
got the concept from Zwiebach (see \cite[slide 17]{Stasheff16}), who had discovered infinite-dimensional
bosonic $L_\infty$-algebra in closed string field theory in 1989. However, the \emph{evident}
linear dualization \cite[Def. 13]{SSS1} allows us to interpret it as arising a decade earlier in the
supergravity literature with \cite{Nie82,DF}, where the CE-algebras of finite-type super
$L_\infty$-algebras are referred  to as ``FDA''s. This somewhat non-standard terminology may be
one cause that the  ubiquity of super $L_\infty$-algebra theory in supergravity and superstring
theory remains under-appreciated, even with the recent renewed
interest in $L_\infty$-algebras, for instance in \cite{HohmZwiebach17}.
\end{remark}

\medskip
A first curious fact about (super-)$L_\infty$-algebras is that even if one starts out being interested
just in (super-)Lie algebras, the concept of (super-)$L_\infty$-algebras serves to provide classifying
``spaces'' for (super)Lie algebra cohomology.
This simple but powerful change of perspective is paramount for much of our discussion.
In the following the notation ${\rm dd}_{n+1}$ is meant to indicate the generalization of the
Dixmier-Douady class from degree 3 to degree $n+1$.

\begin{defn}[Super $L_\infty$-cocycles]
  \label{SuperLInfinityCocycles}
  For $n \in \mathbb{N}$, write $b^n \mathfrak{u}_1 \in sL_\infty \mathrm{Alg}^{\mathrm{fin}}_\mathbb{R}$
  for the super $L_\infty$-algebra dually  given by
  $$
    \mathrm{CE}( b^n \mathfrak{u}_1 )
    :=
    \mathbb{R}[\hspace{-4mm}  \underset{ \mathrm{deg} = (n+1, \mathrm{even}) }{\underbrace{\mathrm{dd}_{n+1}}} \hspace{-4mm} ]/( d (\mathrm{dd}_{n+1}) = 0 )
    \,.
  $$
  Hence for $\mathfrak{g} \in sL_\infty \mathrm{Alg}^{\mathrm{fin}}_{\mathbb{R}}$ any super $L_\infty$-algebra, we have that morphisms
  from $\mathfrak{g}$ to $b^n \mathfrak{u_1}$ are in natural bijection to closed elements of degree $n+1$ in the Chevalley-Eilenberg algebra of $\mathfrak{g}$
  $$
   \left\{
      \mathfrak{g}
        \overset{\mu}{\longrightarrow}
      b^n \mathfrak{u}_1
   \right\}
   \;\simeq\;
   \left\{
    \mu \in \mathrm{CE}^{n+1}(\mathfrak{g})
    \,\vert\,
    d_{\mathfrak{g}} \mu = 0
   \right\}
    \,.
  $$
  Such a $\mu$ is a \emph{super $L_\infty$-cocycle} of degree $n+1$ on $\mathfrak{g}$.
\end{defn}

We would like to account for the occurrence of the fields in particular degrees
with a specific spacing. We interpret this as having fields in a certain rational
{\it periodic} cohomology theory, that we collectively call $K(t)$, where
$t$ is the periodicity parameter. This includes and generalizes the set-up of
rational K-theory in \cite{T}, where $t$ is the usual Bott periodicity parameter.
One can account for degrees by taking the suspension, i.e. $K^n=\Sigma^n K$.

\begin{defn}[Periodic super $L_\infty$-cohomology]
\label{PeriodicLInfinityCohomology}
For $t, n \in \mathbb{N}$ with $t \geq 1$,
let $\mathfrak{l}(\Sigma^n K(t))$ be the super $L_\infty$-algebra defined by
$$
  \mathrm{CE}(\mathfrak{l}(\Sigma^n K(t)))
  :=
  \mathbb{R}\big[
    \big(\hspace{-5mm}
      \underset{\mathrm{deg} = (2 k t + n, \mathrm{even}) }{\underbrace{ \omega_{2 k t + n} }}
      \hspace{-5mm}
      , k \in \mathbb{Z}
    \big)
  \big]
  /
  \left(
    d \omega_{(2 k t + n)} = 0
  \right)\;.
$$
\end{defn}

\medskip
Hence for $\mathfrak{g} \in \mathrm{sL}_\infty\mathrm{Alg}$ any super $L_\infty$-algebra, a morphism
$$
  \xymatrix{
    \mathfrak{g}
    \ar[rr]^-{\omega_{2 t \bullet + n }}
    &&
    \mathfrak{l}( \Sigma^n K(t) )
  }
$$
is equivalently a sequence of super $L_\infty$-cocycles on $\mathfrak{g}$, according to Def. \ref{SuperLInfinityCocycles} of degrees $n$ mod $2t$.
We write
\begin{equation}
  \label{PeriodictSuperLInfinityCohomology}
  H^{\bullet \;\mathrm{mod}\; 2t}(\mathfrak{g}/K)
\end{equation}
for the corresponding periodic cohomology groups.

\subsection{Equivariant super $L_\infty$-cohomology}
\label{EquivariantSuperLInfinityCohomology}

The setting we have will involve an action of a group. We now describe
the proper way  to account for that in our framework.

\begin{defn}[Quotient of super $L_\infty$-algebra by group action]
  \label{SuperLInfinityAlgebraWithGroupAction}
  For $\mathfrak{g}$ a super $L_\infty$-algebra, an \emph{action} of a group $K$ on $\mathfrak{g}$
  is a linear group action on the underlying graded vector space which preserves the bi-grading
  $$
    \rho
    \;:\;
    K \times \mathfrak{g}_\bullet \longrightarrow \mathfrak{g}_\bullet
    \,,
  $$
  such that its induced dual action
  $$
    (\rho(-))^\ast
    \;:\;
    K
      \times
    \Exterior^\bullet( \mathfrak{g}^\ast )
      \longrightarrow
    \Exterior^\bullet( \mathfrak{g}^\ast )
  $$
  is compatible
  with the Chevalley-Eilenberg differential $d_{\mathrm{CE}}$, in that for all $k \in K$ we have
  $$
    d_{\mathrm{CE}} \circ \rho(k)^\ast
    =
    \rho(k)^\ast \circ d_{\mathrm{CE}}
    \,.
  $$
  This means that the subspace of $K$-invariant elements in the CE-algebra is a sub-dgc-algebra,
  to be denoted
  \begin{equation}
    \label{KInvariantCEComplex}
    \xymatrix{
    \mathrm{CE}(\mathfrak{g})^K
    \; \ar@{^{(}->}[r] &
    \mathrm{CE}(\mathfrak{g})\;.
    }
  \end{equation}

  \medskip
  We may think of this as the CE-algebra of the quotient $\mathfrak{g}/K$, which is thereby defined.
  Accordingly, given two super $L_\infty$-algebras $\mathfrak{g}_1$, $\mathfrak{g}_2$
  equipped with actions by groups $K_1$ and $K_2$, respectively, then we say that a homomorphism
  $$
    \mathfrak{g}_1/K_1
    \xrightarrow{\;\;\;\phi\;\;\;}
    \mathfrak{g}_2/K_2
  $$
  between them is equivalently a dgc-algebra homomorphism the other way around, between
  their invariant CE-algebras (\ref{KInvariantCEComplex}):
  $$
    \mathrm{CE}\left(\mathfrak{g}_1\right)^{K_1}
    \xleftarrow{\;\;\;\phi^\ast\;\;\;}
    \mathrm{CE}\left( \mathfrak{g}_2\right)^{K_2}
    \,.
  $$
\end{defn}

\medskip
\begin{example}[Invariant super $L_\infty$-cocycle]
  \label{InvariantCocycle}
  If $K = 1$ is the trivial group, then every super $L_\infty$-algebra $\mathfrak{g}$ carries a unique action by that group
  and is canonically identified with the quotient $\mathfrak{g}/1$.
Under this identification, if $\mathfrak{g}$ now is equipped with a general group action, then a homomorphism of the form
  $$
    \mu/K
    \;:\;
    \mathfrak{g}/K
      \longrightarrow
    b^{n}\mathfrak{u}_1
  $$
  is equivalently a super $L_\infty$-cocycle according to Def. \ref{SuperLInfinityCocycles} which in addition is $K$-invariant, namely a plain homomorphism $\mu$
  such that
  $$
    \rho(k)^\ast(\mu) = \mu
    \phantom{AAA}
    \xymatrix@R=1.6em{
      \mathfrak{g}
      \ar[rr]^{\rho(k)}
      \ar[dr]_\mu
      &&
      \mathfrak{g}
      \ar[dl]^\mu
      \\
      & b^n \mathfrak{u}_1
    }
  $$
  for all  $k \in K$.
  Notice that, more generally, one may consider super $L_\infty$-cocycles which are not necessarily $K$-invariant, but which
  are \emph{$K$-equivariant}. This means first of all that the cocycle is $K$ invariant only up to specified homotopies
  $$
    \eta_k \;:\;\rho(k)^\ast(\mu) \Rightarrow \mu
    \phantom{AAA}
    \xymatrix{
      \mathfrak{g}
      \ar[rr]^{\rho(k)}_{\ }="s"
      \ar[dr]_\mu^{\ }="t"
      &&
      \mathfrak{g}
      \ar[dl]^\mu
      \\
      & b^n \mathfrak{u}_1
      \ar@{=>}^{\eta_k} "s"; "t"
    }
  $$
  such that, moreover, these homotopies are compatible up to specified higher homotopies
  $$
    \eta_{k_1} \cdot  \rho(k_1)^\ast \eta_{k_2}
      \xymatrix{\ar@{=>}[r] &}
    \eta_{k_1 k_2}
  $$
  and so on.
  In a broader context of higher supergeometry one may sum this up by saying all this data is equivalently a homomorphism
  out of the \emph{homotopy quotient} of $\mathfrak{g}$ by $K$, denoted
  \begin{equation}
    \label{EquivariantCocycle}
    (\mu, \eta, \cdots)
    \;:\;
    \mathfrak{g}//K
      \longrightarrow
    b^{n} \mathfrak{u}_1
    \,.
  \end{equation}
\end{example}

\begin{defn}[$K$-invariant super $L_\infty$-cohomology]
  \label{KInvariantLInfinityCohomology}
  Given a super $L_\infty$-algebra $\mathfrak{g}$ equipped with an action by a group $K$ (Def. \ref{SuperLInfinityAlgebraWithGroupAction}), then
  its \emph{$K$-invariant super $L_\infty$-cohomology}
  $$
    H^\bullet(\mathfrak{g}/K)
    \;:=\;
    H^\bullet\left(
      \mathrm{CE}(\mathfrak{g})^K
    \right)
  $$
  is the $(\mathbb{Z} \times (\mathbb{Z}/2))$-bigraded cochain cohomology groups of the $K$-invariant
   subcomplex (\ref{KInvariantCEComplex}) of its Chevalley-Eilenberg algebra.
\end{defn}
\begin{remark}[Different notions of equivariant cohomology]
  {\bf (i)} One may also consider the group $H^\bullet(\mathfrak{g}//K)$ of equivalence classes of equivariant cocycles (\ref{EquivariantCocycle})
  as well as the subgroup $\left(H^\bullet(\mathfrak{g})\right)^K \hookrightarrow H^\bullet(\mathfrak{g}) := H^\bullet( \mathrm{CE}(\mathfrak{g}) )$ of
  those cohomology classes in the full Chevalley-Eilenberg complex which are invariant under $K$.
  There are canonical comparison maps to these from the group of Def. \ref{KInvariantLInfinityCohomology}
  \begin{equation}
    \label{ComparisonMapsBetweenEquivariantCohomologyGroups}
    H^\bullet(\mathfrak{g}/K)
      \longrightarrow
    H^\bullet(\mathfrak{g}//K)
      \longrightarrow
    H^\bullet( \mathfrak{g} )^K
    \,,
  \end{equation}
  where the first one regards an invariant cocycle as an equivariant cocycle with trivial equivariance data, and the
  second forgets the choice of equivariance data.

 \item {\bf (ii)} For (super-)Lie algebras (i.e., (super-)Lie 1-algebras) the study of the composite comparison map in (\ref{ComparisonMapsBetweenEquivariantCohomologyGroups})
  is the topic of the Hochschild-Serre spectral sequence, which may be used to extract sufficient conditions for the total
  comparison map to be an isomorphism. One such sufficient condition is that $K$ is a compact topological group (which is however not the case for the
  Lorentzian spin groups $K = \mathrm{Spin}(p,1)$ of interest below.)

 \item {\bf (iii)} But notice that from the point of view of equivariant homotopy theory the group $H^\bullet( \mathfrak{g} )^{K}$ has no intrinsic meaning in itself,
  since its elements are just ``in-coherently equivariant''  cocycles.

\item {\bf (iv)}  In contrast, the group $H^\bullet(\mathfrak{g}/K)$ does have intrinsic meaning in equivariant homotopy theory, despite superficial appearance,
  namely in the context of what is called \emph{Bredon equivariant homotopy theory} \cite[section 5.1]{Rezk14}.
  This is the group we will be considering here.
\end{remark}

\subsection{Twisted super $L_\infty$-cohomology}
\label{TwistedSuperLInfinityCohomology}

Our setting will also involve twists, so twisted versions of the above constructions are needed.
The main statements below in Theorem \ref{HigherTDuality}, Theorem
\ref{HigherTDualityForDecomposedFields} and Cor.
 \ref{HigherTopologicalTDualityInTwistedCohomology}
(with various examples in section \ref{Examples})
establish isomorphisms in \emph{twisted} invariant super $L_\infty$-cohomology.
On Chevalley-Eilenberg algebras this concept of twisted cohomology is straightforward,
made explicit by Def. \ref{TwistedCohomology} below.

\medskip
A key example of twisted super $L_\infty$-cocycles are the super-WZW-terms for the
F1/D$p$-branes on type II super-Minkowski spacetime \cite{FSS16a, FSS16}, recalled
below in Section \ref{OrdinaryTypeIITDuality}. These may be extracted, via Prop. \ref{TwistedCohomologyInjects} below, from untwisted cocycles
on higher central extensions (Prop. \ref{hofibofSuperLInfinityAlgebras} below) of
super-Minkowski spacetimes, as found originally in \cite{CAIB00, IIBAlgebra}.
This transformation of Prop. \ref{TwistedCohomologyInjects} is an example of a general equivalence \cite[Theorem 4.39]{NSS12}
between twisted cohomology and non-twisted but higher equivariant cohomology on the
extension classified by the twist, we make this homotopy-theoretic perspective explicit in Prop. \ref{FromTwistedCocyclesToUntwistedCocyclesOnTheExtensionClassifiedByTheTwist} below.

\begin{defn}[Twisted super $L_{\infty}$-algebra cohomology]
  \label{TwistedCohomology}
  Let $\mathfrak{g}$ be a super $L_\infty$-algebra equipped with the action of a group $K$ (Def. \ref{SuperLInfinityAlgebraWithGroupAction}), and let
  $\mathfrak{g}/K
      \xrightarrow{\;\;\mu\;\;}
      b^{2t}\mathfrak{u}_1
 $,
   i.e.,
   $
   \mu \in \mathrm{CE}(\mathfrak{g})^K
  $
  be a $K$-invariant cocycle (Example \ref{InvariantCocycle}).
  Then
  the Chevalley-Eilenberg differential $d_{\mathfrak{g}}$ plus the wedge product with $\mu$
  defines a differential of degree 1 mod $2t$
   $$
    d_{\mathfrak{g}}
    +
    \mu\wedge
    \;\colon\;
    \underset{k \in \mathbb{Z}}{\bigoplus} \mathrm{CE}^{2 k t +\bullet}(\mathfrak{g})^K
    \xymatrix{\ar[r] &}
    \underset{k \in \mathbb{Z}}{\bigoplus}
    \mathrm{CE}^{2 k t + \bullet + 1}(\mathfrak{g})^K
    \,.
  $$
  The cochain cohomology of this differential is the \emph{$K$-invariant $\mu$-twisted super $L_\infty$-cohomology} of $\mathfrak{g}$
  $$
    H^{n + (\mu)}(\mathfrak{g}/K)
    \;:=\;
    \frac{
      \mathrm{ker}
      \Big(
        \underset{k \in \mathbb{Z}}{\bigoplus} \mathrm{CE}^{2 k t + n }(\mathfrak{g})^K
          \xrightarrow{\;d_{\mathfrak{g}} + \mu\wedge\; }
        \underset{k \in \mathbb{Z}}{\bigoplus} \mathrm{CE}^{2 k t + n + 1 }(\mathfrak{g})^K
      \Big)
    }{
      \mathrm{im}
      \Big(
        \underset{k \in \mathbb{Z}}{\bigoplus} \mathrm{CE}^{2 k t + n-1 }(\mathfrak{g})^K
         \xrightarrow{\;d_{\mathfrak{g}} + \mu\wedge\; }
        \underset{k \in \mathbb{Z}}{\bigoplus} \mathrm{CE}^{2 k t + n }(\mathfrak{g})^K
      \Big)
    }\,.
  $$
\end{defn}
The concept of twisted super $L_\infty$-cohomology (Def. \ref{TwistedCohomology}) is closely related to the
non-twisted cohomology of higher central extensions:
\begin{prop}[Homotopy fiber functor {\cite[Theorem 3.8]{FSS13}}, based on {\cite[Theorem 3.1.13]{FRS13}}]
  \label{hofibofSuperLInfinityAlgebras}
  Let $\mathfrak{g} \in sL_\infty\mathrm{Alg}$ be a super $L_\infty$-algebra and let
  $
    \mu \;:\; \mathfrak{g} \longrightarrow b^{n} \mathfrak{u}_1
  $
  be an $n+1$-cocycle on it. Then

  \item {\bf (i)} A model for its \emph{homotopy fiber}
  \begin{equation}
    \label{HomotopyFiberProjection}
    \xymatrix@R=1.2em{
      \hat{\mathfrak{g}}
      \ar[d]^{\mathrm{hofib}(\mu)}
      \\
      \mathfrak{g}
    }
  \end{equation}
  is the super $L_\infty$-algebra dually given by adjoining to the CE-algebra of $\mathfrak{g}$ a generator $b$
  of degree $n$ which trivializes the cocycle:
  $$
    \mathrm{CE}(
      \hat{\mathfrak{g}}
    )
    \;:=\;
    \mathrm{CE}(\mathfrak{g})[b]/(d b = \mu)
    \,.
  $$
\item {\bf (ii)}   This construction clearly extends to a functor
  $$
    \mathrm{hofib}\;:
    \xymatrix{
    sL_\infty\mathrm{Alg}/b^n \mathfrak{u}_1
    \ar[r] &}
    sL_\infty\mathrm{Alg}
  $$
  from super $L_\infty$-algebras over $b^n \mathfrak{u}_1$ to plain super $L_\infty$-algebras.
  \item {\bf (iii)} If $\mathfrak{g}$ is equipped with an action by a group $K$ (Def. \ref{KInvariantCEComplex}) and if the cocycle is
$K$-invariant, $\mu \in \mathrm{CE}(\mathfrak{g})^K$ (\ref{KInvariantCEComplex}), then $\widehat{\mathfrak{g}}$
inherits a $K$-action, such that the projection (\ref{HomotopyFiberProjection}) respects the $K$-actions.

We also say that $\hat{\mathfrak{g}}$ in Def. \ref{hofibofSuperLInfinityAlgebras} is the \emph{higher central extension of $\mathfrak{g}$ classified by $\mu$}.
\end{prop}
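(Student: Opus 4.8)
The plan is to identify $\mathrm{hofib}(\mu)$ with the homotopy pullback $\ast \times^{h}_{b^{n}\mathfrak{u}_1} \mathfrak{g}$ in super $L_\infty$-algebras and to compute it, via the homotopical embedding $\mathrm{CE}\colon sL_\infty\mathrm{Alg}^{\mathrm{fin}}_{\mathbb{R}} \hookrightarrow \mathrm{sdgcAlg}_{\mathbb{R}}^{\mathrm{op}}$ recalled above, as a homotopy pushout of Chevalley--Eilenberg algebras
$$
\mathrm{CE}(\mathfrak{g}) \;\longleftarrow\; \mathrm{CE}(b^{n}\mathfrak{u}_1) \;\longrightarrow\; \mathbb{R}\,,
$$
where the left leg is dual to $\mu$ (hence sends the generator $\mathrm{dd}_{n+1}\mapsto \mu$) and the right leg is the augmentation, dual to $\ast \to b^{n}\mathfrak{u}_1$. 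I would work throughout in the model structure on $\mathrm{sdgcAlg}_{\mathbb{R}}$ used in \cite{FRS13, FSS13} --- weak equivalences the quasi-isomorphisms, fibrations the degreewise surjections, generating cofibrations the free adjunctions of generators --- in which every quasi-free (``Sullivan'') algebra, in particular all of $\mathrm{CE}(\mathfrak{g})$, $\mathrm{CE}(b^{n}\mathfrak{u}_1)$, is cofibrant.

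First I would resolve the augmentation. Set
$$
W \;:=\; \mathrm{CE}\big(b^{n}\mathfrak{u}_1\big)[\,b\,]\big/\big(d b = \mathrm{dd}_{n+1}\big)\,, \qquad \deg b = n\,,
$$
the free extension adjoining one generator $b$ that trivializes $\mathrm{dd}_{n+1}$. Then $\mathrm{CE}(b^{n}\mathfrak{u}_1)\hookrightarrow W$ is a cofibration, and a one-line computation (treating $n$ even and odd separately, according to the parity of the generators) shows that $W$ has cohomology $\mathbb{R}$ concentrated in degree $0$, so $W \xrightarrow{\ \sim\ } \mathbb{R}$. All objects in the cospan being cofibrant and one leg now a cofibration, the homotopy pushout is the ordinary pushout, i.e. the relative tensor product, which on underlying graded algebras is just the adjunction of $b$:
$$
\mathrm{CE}(\mathfrak{g}) \underset{\mathrm{CE}(b^{n}\mathfrak{u}_1)}{\otimes} W
\;\cong\;
\mathrm{CE}(\mathfrak{g})[\,b\,]\big/\big(d b = \mu\big)
\;=\;
\mathrm{CE}(\hat{\mathfrak{g}})\,,
$$
the differential on $b$ being identified using precisely $\mathrm{dd}_{n+1}\mapsto\mu$. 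Since $\Exterior^\bullet(\mathfrak{g})^\ast \otimes \mathbb{R}[b]$ is still free graded-commutative, $\hat{\mathfrak{g}}$ is a super $L_\infty$-algebra of finite type; this proves (i), and the square in the statement is exactly this homotopy-pushout square read in $\mathrm{sdgcAlg}_{\mathbb{R}}^{\mathrm{op}}$.

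For (ii), functoriality is forced by the universal property of the pushout: a morphism in $sL_\infty\mathrm{Alg}/b^{n}\mathfrak{u}_1$ is dually a triangle of CE-algebras under $\mathrm{CE}(b^{n}\mathfrak{u}_1)$, which the pushout carries to a canonical morphism of the extended algebras --- explicitly fixing $b\mapsto b$ --- that visibly preserves identities and composition; one need only note it is a chain map, which is immediate from $d b = \mu$ on both sides and the $\mu$-compatibility of the given morphism. For (iii), the $K$-action extends from $\mathrm{CE}(\mathfrak{g})$ to $\mathrm{CE}(\hat{\mathfrak{g}}) = \mathrm{CE}(\mathfrak{g})[b]$ by declaring $b$ to be $K$-fixed, $\rho(k)^\ast b := b$; the only compatibility left to check is on $b$, where
$$
d\big(\rho(k)^\ast b\big) \;=\; d b \;=\; \mu \;=\; \rho(k)^\ast \mu \;=\; \rho(k)^\ast(d b)\,,
$$
valid exactly because $\mu$ is $K$-invariant. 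Hence $K$ acts on $\hat{\mathfrak{g}}$, and the inclusion $\mathrm{CE}(\mathfrak{g})\hookrightarrow\mathrm{CE}(\hat{\mathfrak{g}})$ --- that is, the projection $\mathrm{hofib}(\mu)$ --- is $K$-equivariant.

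The step I expect to demand the most care is the first one: justifying that this ordinary pushout genuinely models the homotopy fiber --- that $W$ is a legitimate cofibrant resolution over $\mathrm{CE}(b^{n}\mathfrak{u}_1)$, that pushout along the resulting cofibration is a homotopy pushout in the chosen model structure, and that $\mathrm{CE}$ intertwines the relevant homotopy (co)limits, so that the computation in $\mathrm{sdgcAlg}_{\mathbb{R}}$ really does produce $\mathrm{hofib}(\mu)$ in $sL_\infty\mathrm{Alg}$. These three points are precisely the content of \cite[Theorem 3.1.13]{FRS13} and \cite[Theorem 3.8]{FSS13}, which I would simply invoke; the remaining ingredients --- acyclicity of $W$, the identification of the relative tensor product, and the chain-map and equivariance checks --- are routine.
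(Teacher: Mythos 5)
Your proposal is correct and is essentially the paper's own route: the paper gives no independent argument but delegates precisely the homotopical content you isolate---that the acyclic extension $W=\mathrm{CE}(b^n\mathfrak{u}_1)[b]/(d b=\mathrm{dd}_{n+1})$ (dual to a based-path-space fibration over $b^n\mathfrak{u}_1$) resolves the point, and that the resulting ordinary (co)limit $\mathrm{CE}(\mathfrak{g})[b]/(d b=\mu)$ computes the homotopy fiber---to \cite[Theorem 3.1.13]{FRS13} and \cite[Theorem 3.8]{FSS13}, exactly as you do, while your computation of the relative tensor product and your checks for (ii) and (iii) are the routine ones intended (compare the analogous resolution argument in Lemma \ref{FibrationResolutionForUnTwistedPeriodicCohomology}). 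One caveat: your blanket claim that every quasi-free sdgc-algebra, in particular $\mathrm{CE}(\mathfrak{g})$ for arbitrary $\mathfrak{g}$, is cofibrant fails without a Sullivan-type filtration condition, but nothing is lost, since \cite{FRS13} works in a category of fibrant objects on $L_\infty$-algebras in which the dual of your inclusion $\mathrm{CE}(b^n\mathfrak{u}_1)\hookrightarrow W$ is a fibration and the ordinary pullback along it already models the homotopy fiber, with no cofibrancy hypothesis on $\mathrm{CE}(\mathfrak{g})$ needed.
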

\begin{remark}
  The consideration of higher central extensions of supersymmetry super Lie algebras
  may be identified \cite{FSS13}, under the translation provided by Prop. \ref{hofibofSuperLInfinityAlgebras}, as the core tool
  for supergravity and superstring theory that was established in \cite{DF, CDF}, there referred to as the
  ``Free Differential Algebra" or ``FDA'' approach.
\end{remark}

The following proposition compares the concepts of twisted super $L_\infty$-cohomology (Prop. \ref{TwistedCohomology})
with the non-twisted cohomology of higher central extensions of super $L_\infty$-algebras (Prop. \ref{hofibofSuperLInfinityAlgebras})
at the purely algebraic level.

\begin{prop}[Twisted cohomology maps into the periodic cohomology of the higher central extension]
  \label{TwistedCohomologyInjects}
  Let $\mathfrak{g}$ be a super $L_\infty$-algebra equipped with an action by a group $K$ (Def. \ref{SuperLInfinityAlgebraWithGroupAction}) and
  let $\mu \in \mathrm{CE}(\mathfrak{g})^K$, $d_{\mathfrak{g}} \mu = 0$ be a $K$-invariant cocycle
  of degree $2t +1$ (Example \ref{InvariantCocycle}) for $t \geq 1$. Then there is an injection
  $$
    H^{\bullet + \mu}( \mathfrak{g}/K )
    \xymatrix{\ar@{->}[r]&}
    H^\bullet( \widehat{\mathfrak{g}}/K )
  $$
  from the $K$-invariant $\mu$-twisted cohomology of $\mathfrak{g}$ (Def. \ref{TwistedCohomology}) to the non-twisted
  periodic $K$-invariant cohomology (Def. \ref{PeriodicLInfinityCohomology})
  of the higher central extension $\widehat{\mathfrak{g}}$ classified by $\mu$ according to Prop. \ref{hofibofSuperLInfinityAlgebras}.
\end{prop}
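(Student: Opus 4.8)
The plan is to write down an explicit cochain map that realizes the asserted injection, and then to verify injectivity on cohomology by comparing the spectral sequences of a common filtration.

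By Prop.~\ref{hofibofSuperLInfinityAlgebras}(i) and (iii) we may take $\mathrm{CE}(\widehat{\mathfrak g})^K=\mathrm{CE}(\mathfrak g)^K[b]$ with $\deg b=2t$ (a polynomial generator, since $2t$ is even) and $d_{\widehat{\mathfrak g}}b=\mu$, the $K$-action being the inherited one. By Def.~\ref{PeriodicLInfinityCohomology}, $H^\bullet(\widehat{\mathfrak g}/K)$ is the cohomology of $\bigl(\prod_{k\in\mathbb Z}\mathrm{CE}^{2kt+\bullet}(\widehat{\mathfrak g})^K,\,d_{\widehat{\mathfrak g}}\bigr)$, whereas by Def.~\ref{TwistedCohomology} the source group is the cohomology of $\bigl(\bigoplus_{k\in\mathbb Z}\mathrm{CE}^{2kt+\bullet}(\mathfrak g)^K,\,d_{\mathfrak g}+\mu\wedge\bigr)$. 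Now set $e^{b}:=\sum_{j\ge 0}\tfrac{1}{j!}\,b^{j}$; in each fixed total degree this is a single monomial, and a direct Leibniz computation gives $d_{\widehat{\mathfrak g}}e^{b}=\mu\wedge e^{b}$. Define
$$
  \Phi(\alpha)\;:=\;e^{b}\wedge\alpha\,.
$$
Because $\mu$ has odd total degree $2t+1$ while $e^{b}$ is central and of even degree, the Koszul signs conspire to give $d_{\widehat{\mathfrak g}}\Phi(\alpha)=\Phi\bigl((d_{\mathfrak g}+\mu\wedge)\alpha\bigr)$, so $\Phi$ is a $K$-equivariant cochain map from the $\mu$-twisted complex of $\mathfrak g$ to the periodic complex of $\widehat{\mathfrak g}$; it is injective already on cochains, the coefficient of $b^{0}$ in $\Phi(\alpha)$ being $\alpha$. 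Equivalently: multiplication by $e^{-b}$ conjugates $d_{\widehat{\mathfrak g}}$ into $d_{\widehat{\mathfrak g}}+\mu\wedge$, under which $\Phi$ becomes the inclusion of the ``polynomial-degree-zero-in-$b$'' subcomplex, which is preserved by $d_{\mathfrak g}$ and $\mu\wedge$ and is exactly the twisted complex.

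The remaining point, and the only nontrivial one, is injectivity of $\Phi$ on cohomology. I would filter both complexes by the Chevalley--Eilenberg degree of the $\mathrm{CE}(\mathfrak g)^K$-part of a monomial --- for the twisted complex this is the summand index $k$; for the $\widehat{\mathfrak g}$-complex the term coming from $d_{\widehat{\mathfrak g}}b=\mu$ raises it by one while $d_{\mathfrak g}$ preserves it. These filtrations are decreasing, differential-compatible, and (because $\mathfrak g$ is of finite type, so every graded piece is finite dimensional and the filtration is finite in each total degree) exhaustive and complete, and $\Phi$ preserves them by the displayed formula. On the associated graded the differential reduces to $d_{\mathfrak g}$ on both sides, so on $E_{1}$ the twisted complex gives $H^{\bullet}(\mathfrak g)^K$ in each filtration degree, the $\widehat{\mathfrak g}$-complex gives $H^{\bullet}(\mathfrak g)^K[b]$, and $\Phi$ induces $[\omega]\mapsto e^{b}\wedge[\omega]$, which is split injective. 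One then propagates injectivity to the later pages, using that $d_{1}$ on both sides is cup product with $[\mu]$ (composed, on the $\widehat{\mathfrak g}$-side, with the degree-lowering shift $b^{j}\mapsto b^{j-1}$); the key simplifying observation is that for a twisted cocycle $\xi=(\xi_{k})_{k}$ the two pieces $d_{\mathfrak g}\xi_{k}$ and $\mu\wedge\xi_{k-1}$ of the twisted differential land in different summands, so a twisted cocycle supported in a single summand is automatically $d_{\mathfrak g}$-closed and killed by $\mu$, and the general case is reduced to this along the filtration. Since both spectral sequences converge, injectivity of $\Phi$ on every page yields injectivity on the abutment. Equivalently this is the vanishing of the connecting homomorphism of the short exact sequence $0\to(\text{twisted complex})\xrightarrow{\Phi}(C^{\bullet}_{\mathrm{per}}(\widehat{\mathfrak g}),\,d_{\widehat{\mathfrak g}}+\mu\wedge)\to(\mathrm{cokernel})\to 0$, and I expect this --- the control of the connecting map, i.e.\ the propagation past $E_{1}$ --- to be where the real work lies. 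Conceptually it is the super-$L_\infty$ shadow of the equivalence \cite[Theorem~4.39]{NSS12} between $\mu$-twisted cohomology and higher-equivariant cohomology on the extension classified by $\mu$, followed by the forgetful map to plain periodic cohomology of $\widehat{\mathfrak g}$.
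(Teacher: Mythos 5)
Your comparison map is exactly the one the paper uses: $\Phi(\alpha)=e^{b}\wedge\alpha$ with $d_{\widehat{\mathfrak g}}e^{b}=\mu\wedge e^{b}$, and your Leibniz verification that $\Phi$ intertwines $d_{\mathfrak g}+\mu\wedge$ with $d_{\widehat{\mathfrak g}}$, together with injectivity on cochains read off from the $b^{0}$-coefficient, reproduces the displayed computation in the paper's proof. Up to that point the two arguments coincide.

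The gap is in the step you yourself flag: injectivity on cohomology, which is the actual assertion of the proposition. The paper does not argue this via spectral sequences; it concludes by observing that every element of $\mathrm{CE}(\widehat{\mathfrak g})^{K}$ has a unique expansion in powers of the fiber generator $b$ with coefficients in $\mathrm{CE}(\mathfrak g)^{K}$, so that the image of $\Phi$ is a subcomplex closed under taking $d_{\widehat{\mathfrak g}}$-preimages: if $\Phi(\omega)=d_{\widehat{\mathfrak g}}\eta$, the primitive can be traded for one of the form $\Phi(\beta)$, and then cochain-level injectivity gives $\omega=(d_{\mathfrak g}+\mu\wedge)\beta$, i.e.\ $[\omega]=0$. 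Your substitute for this step does not go through as written, for two reasons. First, the finiteness you invoke is false: finite type of $\mathfrak g$ gives finite-dimensional graded pieces, but a fixed periodic degree $n$ mod $2t$ of $\mathrm{CE}(\widehat{\mathfrak g})^{K}$ contains all powers of $b$ (and $\mathrm{CE}(\mathfrak g)$ is nonzero in arbitrarily high degree as soon as there are odd generators), so your filtration is not finite in each total degree and convergence and completeness have to be argued, not assumed. Second, and more fundamentally, injectivity of a morphism of spectral sequences on $E_{1}$ does not propagate to $E_{2}$ and beyond: $E_{r+1}$ is a subquotient of $E_{r}$, and a class in the source can become a $d_{r}$-boundary in the target without bounding in the source; your ``key simplifying observation'' about twisted cocycles supported in a single summand does not supply the missing induction. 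You acknowledge that the propagation past $E_{1}$ (equivalently, the vanishing of the connecting map) is ``where the real work lies'' --- but that work is precisely the content of the proposition, so the proposal as it stands is incomplete. To close it, either carry out the primitive-replacement argument along the paper's lines, using the unique $b$-expansion $\eta=\sum_{i}b^{i}\eta_{i}$ of a hypothetical primitive of $\Phi(\omega)$ to produce a primitive inside the image of $\Phi$, or else construct a genuinely filtration-compatible retraction; exhibiting the injective chain map alone does not suffice.
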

\begin{proof}
  For any degree $n \in \mathbb{Z}$ consider the following linear map on cochains:
  \begin{equation}
    \label{CochainMapForInclusionOfTwistedCohomologyIntoNonTwistedPeriodicCohomologyOfHigherCentralExtension}
    \hspace{-1.8cm}
    \xymatrix{
      \underset{k \in \mathbb{Z}}{\bigoplus} \mathrm{CE}^{2 k t + n}(\mathfrak{g})
      \ar[rrrrrrr]^-\phi
      &&&&&&&
      \underset{k \in \mathbb{Z}}{\bigoplus} \mathrm{CE}^{2 k t + n}(\widehat{\mathfrak{g}})
      }
      \end{equation}
        \vspace{-4mm}
     $$
     \xymatrix{\left\{
        \omega_{2kt + n}
      \right\}_{ k \in \mathbb{Z} }
      \ar@{|->}[rrr]^-{\phi}
      \ar[dd]_{ d_{\mathfrak{g} + \mu\wedge} }
      &&&
      \Big\{
        \Big(
          e^b \wedge \underset{j \in \mathbb{Z}}{\sum} \omega_{2 j t + n}
        \Big)_{ 2k t + n }
      \Big\}_{k \in \mathbb{Z} }
      \ar[d]^{  d_{\widehat{\mathfrak{g}}} }
     \\
     &&&
      \Big\{
               \Big(
          {
          \underset{
            \mu \wedge e^b
          }{
          \underbrace{
            d_{\widehat{\mathfrak{g}}}(e^b)
          }}
          \wedge \underset{j \in \mathbb{Z}}{\sum} \omega_{2 j t + n}
          +
          e^b \wedge \underset{j \in \mathbb{Z}}{\sum} d_{\mathfrak{g}}\omega_{2 j t + n}
          }
        \Big)_{ 2k t + n }
      \Big\}_{k \in \mathbb{Z} }
      \\
      \left\{
        {
          d_{\mathfrak{g}}\omega_{2kt + n}
          + \mu \wedge \omega_{2kt + n-1}
        }
      \right\}_{ k \in \mathbb{Z} }
      \ar@{|->}[rrr]_-{\phi}
      &&&
      \Big\{
        \Big(
          e^b \wedge \underset{j \in \mathbb{Z}}{\sum}
          \big(
            {
              d_{\mathfrak{g}}\omega_{2 j t + n}
              + \mu \wedge \omega_{2j t + n-1}
            }
          \big)
        \Big)_{ 2k t + n }
      \Big\}_{k \in \mathbb{Z} }
      \ar@{=}[u]
    }
  $$
  which intertwines the twisted CE-differential $d_{\mathfrak{g}} + \mu\wedge $ of $\mathfrak{g}$ with the plain CE-differential $d_{\widehat{\mathfrak{g}}}$ of $\widehat{\mathfrak{g}}$, as shown.
  It is clear that this is an injective linear map, hence an injective chain map from the $(d_{\mathfrak{g}} = +\mu\wedge)$-complex to the
  $d_{\widehat{\mathfrak{g}}}$-complex, whose image is closed unde the pre-image of $d_{\mathfrak{g}}$. This implies the claim.
\end{proof}

The following example show that the map of Prop. \ref{TwistedCohomologyInjects} is in general not surjective,
hence that there are in general cohomology classes not in its image.

\begin{example}[Non-twisted periodic cohomology of higher central extension is strictly larger than twisted cohomology]
  Under the assumptions of Prop. \ref{TwistedCohomologyInjects},
  consider two elements $\omega_{2t+ n}, \omega_{4t + n} \in \mathrm{CE}(\mathfrak{g})$
  and consider the following three equations:
 $$
 d_{\mathfrak{g}} \omega_{4t}  = - \mu \wedge \omega_{2t}
 \;,\qquad
d_{\mathfrak{g}} \omega_{2 t} = 0
 \;,\qquad
    \mu \wedge \omega_{4 t} = 0\;.
 $$
  The combination of the first two of these conditions is equivalent to the statement that the element
  $$
    \omega_{4t + n} + b \wedge \omega_{2 t + n} \in \mathrm{CE}(\widehat{\mathfrak{g}})
  $$
  is a $d_{\widehat{\mathfrak{g}}}$-cocycle. On the other hand, the combination of all three conditions
 is equivalent to the stronger statement that also the element
  $$
    b \wedge \omega_{4t + n} + \tfrac{1}{2} b \wedge b \wedge \omega_{2 t + n} \in \mathrm{CE}(\widehat{\mathfrak{g}})
  $$
  is a $d_{\widehat{\mathfrak{g}}}$-cocycle, which in turn is equivalent to the statement that the tuple
  $$
    \left( \omega_{2t + n}, \omega_{4t + n } \right)
  $$
  is a $(d_{\mathfrak{g}} + \mu\wedge)$-cocycle.
\end{example}

We now give the  homotopy-theoretic explanation of the phenomenon seen in Prop. \ref{TwistedCohomologyInjects},
exhibiting it as a special case of a general statement \cite[Theorem 4.39]{NSS12} about twisted cohomology.
\begin{defn}
[Coefficients for twisted $L_\infty$-cocycles]
For
$n\geq 0$ and
$t \geq 1$,
write $\mathfrak{l}(\Sigma^n K(t)/b^{t}\mathfrak{u}_1 )
\in sL_\infty\mathrm{Alg}^{\mathrm{fin}}_{\mathbb{R}}$ for
$$
  \mathrm{CE}\big(
    \mathfrak{l}(\Sigma^n K(t)/b^{2 t-1}\mathfrak{u}_1 )
  \big)
  :=
  \mathbb{R}
  \big[ \hspace{-2mm}
    \underset{
      \mathrm{deg} = (2t+1, \mathrm{even})
    }{
      \underbrace{h_{2t+1}}
    }
    \hspace{-3mm} , \;\;\;
    \big( \hspace{-1mm}
      \underset{ (2 k t + n, \mathrm{even}) }{\underbrace{ \omega_{2 k t + n} }}
      \hspace{-3mm}, k \in \mathbb{Z}
    \big)
  \big]/
  \Big(
    \begin{array}{l}
      d h_{2 t + 1} = 0,
      \\
      d \omega_{2 (k+1) t + n} = - h_{2 t + 1} \wedge \omega_{2 k t + n}
    \end{array}
  \Big)
$$
and write
\begin{equation}
  \label{BundleMorphismFromTwistedKt}
  \xymatrix@R=.0em{
    \mathfrak{l} \Sigma^n K(t)
    \ar[rr]^{\rho}
    &&
    b^{2t } \mathfrak{u}_1
    \\
    h_{2t+1} && \mathrm{dd}_{2t +1} \ar@{|->}[ll]
  }
\end{equation}
for the canonical morphism.
Then for $\mathfrak{g} \in SL_\infty \mathrm{Alg}^{\mathrm{fin}}_{\mathbb{R}}$ any super $L_\infty$-algebra,
and for
$$
  \mu \;:\; \mathfrak{g} \longrightarrow b^2t \mathfrak{u}_1
$$
a super $L_\infty$-cocycle on $\mathfrak{g}$ according to Def. \ref{SuperLInfinityCocycles}, a
\emph{$\mu$-twisted cocycle} $\omega_\bullet$ in degree $n$ mod $2t$ is a homomorphism over $b^{2t}\mathfrak{u}_1$ (see \cite[Def. 4.21]{NSS12}):
\begin{equation}
\hspace{-3mm}
  \label{TwistedLInfinityCocycles}
  \left\{
    \raisebox{20pt}{
    \xymatrix@=1em{
      \mathfrak{g}
      \ar[dr]_-{\mu}
      \ar[rr]^-{\omega_\bullet}
        &&
      \mathfrak{l}(\Sigma^p K(n)/b^{n-1}\mathfrak{u}_1)
      \ar[dl]^\rho
      \\
      & b^{2t} \mathfrak{u}_1
    }}
  \right\}
  \;\simeq\;
  \left\{
    \omega_{2k t + n}
    \;\in\;
    \mathrm{CE}(\mathfrak{g})
    \;\vert\;
    d_{\mathfrak{g}} \omega_{2 (k+1) t + n} + h_{2t+1} \wedge \omega_{ 2 k t + n } = 0,
    \;\;
    k \in \mathbb{Z}
  \right\}\;.
\end{equation}
\end{defn}

The homotopy fiber of the canonical morphism (\ref{BundleMorphismFromTwistedKt}) is the coefficient $\mathfrak{l}\Sigma^n K(t)$ for untwisted periodic
cohomology from Def. \ref{PeriodicLInfinityCohomology}
$$
  \xymatrix{
    \mathfrak{l}\Sigma^n K(t)
    \ar[rr]^-{\mathrm{hofib}(\rho)}
    &&
    \mathfrak{l}\Sigma^n K(t)/b^{2t-1}
    \ar[d]^\rho
    \\
    && b^{2t}\mathfrak{u}_1\;.
  }
$$

\begin{defn}
[CE-algebra of degree $n$, rational, $2t$-periodic, $(2t+1)$-twisted (generalized) cohomology]
\label{defn-lSigmaKt}
For $n \geq 0$ and
$t \geq 1$,  the super $L_\infty$-algebra
$$
  \mathfrak{l}(\Sigma^n K(t))_{\mathrm{res}}
    \;\in\;
  sL_{\infty} \mathrm{Alg}^{\mathrm{fib}}_{\mathbb{R}}
$$
is defined dually by
$$
  \mathrm{CE}
  \left(
    \mathfrak{l}(\Sigma^n K(t))_{\mathrm{res}}
  \right)
  \;:=\;
  \mathbb{R}
  \Big[ \hspace{-2mm}
    \underset{
      \mathrm{deg} = (2t, \mathrm{even})
    }{
    \underbrace{
      b
    }
    }
    ,
    \underset{
       (2t+1, \mathrm{even})
    }{
      \underbrace{h_{2t+1}}
    },
    \big( \hspace{-2mm}
      \underset{ (2 k t + n, \mathrm{even}) }{\underbrace{ \omega_{2 k t + n} }}
      \hspace{-2.5mm}, k \in \mathbb{Z}
    \big)
  \Big]/
  \left(
    \begin{array}{l}
      d b_{2t} = h_{2t+1},\;\;\;
      d h_{2 t + 1} =0,
      \\
      d \omega_{2 (k+1) t + n} =  - h_{2t + 1} \wedge \omega_{ 2k t + n }
    \end{array}
  \right)\;.
$$
\end{defn}

For the proof of Prop. \ref{FromTwistedCocyclesToUntwistedCocyclesOnTheExtensionClassifiedByTheTwist}
below we need the following fibration resolution of this homotopy fiber:

\begin{lemma}[Fibration resolution of coefficients for untwisted periodic cohomology]
\label{FibrationResolutionForUnTwistedPeriodicCohomology}
The algebra
\newline
$\mathrm{CE}
  \left(
    \mathfrak{l}(\Sigma^n K(t))_{\mathrm{res}}
  \right)$  of Def. \ref{defn-lSigmaKt}
 provides a fibration resolution of the homotopy fiber inclusion
\begin{equation}
  \label{FibrationResolutionForHomotpyFiberOfTwistedCoefficients}
  \xymatrix@R=2pt{
    \mathfrak{l}( \Sigma^n K(t) )
    \ar@/^2pc/[rr]^{\mathrm{hofib}(\rho)}
   \; \ar@{^{(}->}[r]^{\simeq}
    &
    \mathfrak{l}( \Sigma^n K(t))_{\mathrm{res}}
    \ar@{->>}[r]^-{ \mathrm{hofib}(\rho)_{\mathrm{res}} }
    &
    \mathfrak{l}(\Sigma^n K(t)/b^{2t-1}\mathfrak{u}_1)
    \\
    \omega_{2 k t + n} & \omega_{2 kt + n} \ar@{|->}[l] & \omega_{2 k t + n} \ar@{|->}[l]
    \\
    0 & h \ar@{|->}[l] & h \ar@{|->}[l]
    \\
    0 & b \ar@{|->}[l]
    \\
    \\
    \mathfrak{l}( \Sigma^n K(n) )
    \ar@{<-}[r]_{\simeq}^\phi
    &
    \mathfrak{l}( \Sigma^n K(t))_{\mathrm{res}}
    \\
    \omega_{2 k t + n} \; \ar@{|->}[r] & \;
    \Big( e^b \wedge \underset{j \in \mathbb{Z}}{\sum} \omega_{2 j t + n}\Big)_{2 k t + n}\;.
  }
\end{equation}
\end{lemma}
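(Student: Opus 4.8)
The plan is to realize, dually on Chevalley--Eilenberg algebras, the factorization displayed in the top row of (\ref{FibrationResolutionForHomotpyFiberOfTwistedCoefficients}). The second map $\mathrm{hofib}(\rho)_{\mathrm{res}}$ corresponds to the inclusion $\iota\colon \mathrm{CE}(\mathfrak{l}(\Sigma^n K(t)/b^{2t-1}\mathfrak{u}_1))\hookrightarrow \mathrm{CE}(\mathfrak{l}(\Sigma^n K(t))_{\mathrm{res}})$ fixing $h_{2t+1}$ and all the $\omega_{2kt+n}$, and the first map corresponds to the projection $p\colon \mathrm{CE}(\mathfrak{l}(\Sigma^n K(t))_{\mathrm{res}})\to \mathrm{CE}(\mathfrak{l}(\Sigma^n K(t)))$ sending $b\mapsto 0$, $h_{2t+1}\mapsto 0$, $\omega_{2kt+n}\mapsto\omega_{2kt+n}$. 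It then suffices to verify three things: (i) $\iota$ exhibits a relative Sullivan (semifree) extension, so that $\mathrm{hofib}(\rho)_{\mathrm{res}}$ is a fibration; (ii) $p$ is a quasi-isomorphism, so that $\mathfrak{l}(\Sigma^n K(t))\hookrightarrow\mathfrak{l}(\Sigma^n K(t))_{\mathrm{res}}$ is a weak equivalence; and (iii) the composite $p\circ\iota$ is the canonical chain map $\mathrm{CE}(\mathfrak{l}(\Sigma^n K(t)/b^{2t-1}\mathfrak{u}_1))\to\mathrm{CE}(\mathfrak{l}(\Sigma^n K(t)))$ representing the homotopy fiber inclusion $\mathrm{hofib}(\rho)$ of Prop.~\ref{hofibofSuperLInfinityAlgebras}.

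Steps (i) and (iii) are immediate from the definitions. By Def.~\ref{defn-lSigmaKt}, $\mathrm{CE}(\mathfrak{l}(\Sigma^n K(t))_{\mathrm{res}})$ is obtained from $\mathrm{CE}(\mathfrak{l}(\Sigma^n K(t)/b^{2t-1}\mathfrak{u}_1))$ by adjoining the single generator $b$ of degree $2t$ with $db = h_{2t+1}$, where $h_{2t+1}$ already lies in the subalgebra; hence $\iota$ is a relative semifree cofibration of super dgc-algebras, giving (i), and this same description exhibits $\mathfrak{l}(\Sigma^n K(t))_{\mathrm{res}}$ as exactly the strict model $\mathrm{CE}(\mathfrak{l}(\Sigma^n K(t)/b^{2t-1}\mathfrak{u}_1))[b]/(db = h_{2t+1})$ of $\mathrm{hofib}(\rho)$ produced by Prop.~\ref{hofibofSuperLInfinityAlgebras}(i). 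Consequently $p\circ\iota$, which sends $h_{2t+1}\mapsto 0$ and $\omega_{2kt+n}\mapsto\omega_{2kt+n}$, is by construction the classifying chain map of $\mathrm{hofib}(\rho)$; this is (iii).

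The real content is (ii), which I would handle by an ``$e^{b}$-untwisting'' of the generators. Put $\Omega := \sum_{k\in\mathbb{Z}}\omega_{2kt+n}$, a formal sum of mixed degree; the relations of Def.~\ref{defn-lSigmaKt} give $d\Omega = -\,h_{2t+1}\wedge\Omega$ and $db = h_{2t+1}$. Since $b$ has even $\mathbb{Z}$-degree, the exponential $e^{b}$ is a well-defined central element with $d\,e^{b} = h_{2t+1}\wedge e^{b}$, whence $d(e^{b}\wedge\Omega)=h_{2t+1}\wedge e^{b}\wedge\Omega - e^{b}\wedge h_{2t+1}\wedge\Omega = 0$. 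Therefore the homogeneous components $\widetilde\omega_{2kt+n} := (e^{b}\wedge\Omega)_{2kt+n}$ are all closed; and since $\Omega = e^{-b}\wedge\widetilde\Omega$ with $\widetilde\Omega := \sum_k\widetilde\omega_{2kt+n}$, the two families $\{b, h_{2t+1}, \omega_{2kt+n}\}$ and $\{b, h_{2t+1}, \widetilde\omega_{2kt+n}\}$ generate the same underlying bigraded free algebra. In the generators $\{b, h_{2t+1}, \widetilde\omega_{2kt+n}\}$ the differential reads $db = h_{2t+1}$, $dh_{2t+1}=0$, $d\widetilde\omega_{2kt+n}=0$, so $\mathrm{CE}(\mathfrak{l}(\Sigma^n K(t))_{\mathrm{res}}) \cong \mathrm{CE}(\mathfrak{l}(\Sigma^n K(t)))\otimes_{\mathbb{R}}\big(\mathbb{R}[b,h_{2t+1}]/(db = h_{2t+1})\big)$ as super dgc-algebras. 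The right-hand tensor factor is acyclic with $H^{0}=\mathbb{R}$, so by K\"unneth the projection onto the first factor is a quasi-isomorphism; since $p(b)=0$ forces $p(\widetilde\omega_{2kt+n})=\omega_{2kt+n}$, that projection is precisely $p$, with quasi-inverse section $\phi^{*}\colon\omega_{2kt+n}\mapsto\widetilde\omega_{2kt+n}$ satisfying $p\circ\phi^{*}=\mathrm{id}$. This yields (ii) and, together with (i) and (iii), the lemma.

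\medskip
\noindent The step I expect to require the most care is the untwisting in (ii): one must check that multiplication by $e^{\pm b}$ genuinely defines an automorphism of the underlying bigraded Grassmann algebra — which holds because $b$ has positive degree $2t$ (as $t\geq 1$), so only finitely many terms of the series act on any fixed homogeneous component — that it is compatible with the $(\mathbb{Z},\mathbb{Z}/2)$-graded sign rules (all generators here being $\mathbb{Z}/2$-even, only the integer grading enters), and that it converts $d\Omega = -h_{2t+1}\wedge\Omega$ into $d\widetilde\Omega = 0$. Once that is in place, (ii) is a formal consequence of the acyclicity of $\mathbb{R}[b,h_{2t+1}]/(db = h_{2t+1})$ and the K\"unneth theorem.
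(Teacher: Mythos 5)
Your proposal is correct and takes essentially the same route as the paper: the paper's own proof just identifies $\mathrm{CE}\big(\mathfrak{l}(\Sigma^n K(t))_{\mathrm{res}}\big)$ as the strict model of $\mathrm{hofib}(\rho)$ supplied by Prop.~\ref{hofibofSuperLInfinityAlgebras} (via the cited theorem of FRS13) and leaves the remaining identifications ``to inspection'', which is precisely your steps (i) and (iii). Your $e^{b}$-untwisting plus K\"unneth argument for step (ii) and for the homotopy inverse $\phi$ simply makes that inspection explicit and is sound, with the one shared caveat that the elements $\big(e^{b}\wedge\sum_{j}\omega_{2jt+n}\big)_{2kt+n}$ are infinite sums over the generators, so the change of generators tacitly uses the same completion convention that the paper itself employs in its formula for $\phi$.
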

\begin{proof}
  That $\mathfrak{l}(\Sigma^n K(t))_{\mathrm{res}}$ provides a fibration resolution
  as claimed follows just as the proof of Prop. \ref{hofibofSuperLInfinityAlgebras} from \cite[Theorem 3.1.13]{FRS13}.
  From this the other statements follow by inspection.
\end{proof}

\begin{remark}
At the bottom of \eqref{FibrationResolutionForHomotpyFiberOfTwistedCoefficients}
we indicated a homotopy inverse $\phi$ of the resolution, which will be of use below.
In terms of this resolution, the long homotopy fiber sequence of $\rho$ starts out as
the following ordinary fibration sequence:
$$
  \xymatrix@R=2pt{
    b^{2t-1}\mathfrak{u}_1
    \ar[rr]^b
    &&
    \mathfrak{l}(\Sigma^n K(t))_{\mathrm{res}}
    \ar[rr]^-{ \mathrm{hofib}(\rho)_{\mathrm{res}} }
    &&
    \mathfrak{l}(\Sigma^n K(t)/b^{2t-1}\mathfrak{u}_1)
    \ar[rr]^-\rho
    &&
    b^{2t}\mathfrak{u}_1
    \\
    0
    &&
    h_{2t + 1} \ar@{|->}[ll]
    &&
    h_{2t + 1} \ar@{|->}[ll]
    &&
    \mathrm{dd}_{2t+1} \ar@{|->}[ll]
    \\
    0
    &&
    \omega_{2 kt + n} \ar@{|->}[ll]
    &&
    \omega_{2 kt + n} \ar@{|->}[ll]
    &&
    \\
    b
    &&
    b \ar@{|->}[ll]
    &&
  }
$$
\end{remark}

\begin{prop}[Map of twisted cohomology of $\mathfrak{g}$ into non-twisted periodic cohomology of $\widehat{\mathfrak{g}}$ is a homotopy pullback]
\label{FromTwistedCocyclesToUntwistedCocyclesOnTheExtensionClassifiedByTheTwist}
The inclusion from Prop. \ref{TwistedCohomologyInjects} of the twisted super $L_\infty$-cohomology on $\mathfrak{g}$
into the non-twisted periodic cohomology of $\widehat{\mathfrak{g}}$ is equivalently the image on cohomology classes of forming the
homotopy pullback along the homotopy fiber inclusion of the projection morphism $\rho$ (see the mapping \eqref{BundleMorphismFromTwistedKt}):
$$
  [\mathrm{hofib}(\rho)^\ast]
  \;:\;
  H^{\bullet + \mu}( \mathfrak{g}/K )
  \xymatrix{\ar@{->}[r]&}
  H^{\bullet \;\mathrm{mod}\; 2t }( \widehat{\mathfrak{g}}/K )
  \,.
$$
\end{prop}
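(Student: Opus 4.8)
The plan is to recognize the inclusion of Prop.~\ref{TwistedCohomologyInjects} as nothing but the homotopy fiber functor of Prop.~\ref{hofibofSuperLInfinityAlgebras} applied in the slice over $b^{2t}\mathfrak{u}_1$, and then to verify that this abstract description reproduces, on cochains, the explicit cochain map $\phi$ of \eqref{CochainMapForInclusionOfTwistedCohomologyIntoNonTwistedPeriodicCohomologyOfHigherCentralExtension}.

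First I would unwind the definitions. By the defining equivalence \eqref{TwistedLInfinityCocycles}, a $K$-invariant $\mu$-twisted cocycle of degree $n$ mod $2t$ on $\mathfrak{g}$ is precisely a $K$-equivariant morphism $\omega_\bullet$ over $b^{2t}\mathfrak{u}_1$ from $(\mathfrak{g},\mu)$ to $(\mathfrak{l}(\Sigma^n K(t)/b^{2t-1}\mathfrak{u}_1),\rho)$, and $\mathrm{hofib}(\rho)=\mathfrak{l}(\Sigma^n K(t))$ is the coefficient for untwisted periodic cohomology (the fibration sequence recalled just after \eqref{TwistedLInfinityCocycles}). I would then apply the functor $\mathrm{hofib}$ of Prop.~\ref{hofibofSuperLInfinityAlgebras}(ii), which by part (iii) respects $K$-actions: it carries $\omega_\bullet$ to a $K$-equivariant morphism $\widehat{\mathfrak{g}}=\mathrm{hofib}(\mu)\to\mathrm{hofib}(\rho)=\mathfrak{l}(\Sigma^n K(t))$, i.e. to an untwisted periodic cocycle on $\widehat{\mathfrak{g}}$. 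Equivalently, by the pasting law for homotopy pullbacks: composing $\omega_\bullet$ with $\rho$ identifies the outer rectangle with the square defining $\mathrm{hofib}(\mu)$ and the right-hand square with the one defining $\mathrm{hofib}(\rho)$, so the left-hand square
\[
\begin{array}{ccc}
\widehat{\mathfrak{g}} & \longrightarrow & \mathfrak{l}(\Sigma^n K(t))\\[2pt]
\downarrow & & \downarrow\\[2pt]
\mathfrak{g} & \xrightarrow{\ \omega_\bullet\ } & \mathfrak{l}(\Sigma^n K(t)/b^{2t-1}\mathfrak{u}_1)
\end{array}
\]
is a homotopy pullback, with upper edge the associated cocycle. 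This construction is functorial and sends $(d_{\mathfrak{g}}+\mu\wedge)$-cocycles to $d_{\widehat{\mathfrak{g}}}$-cocycles and coboundaries to coboundaries, hence descends to the map $[\mathrm{hofib}(\rho)^\ast]\colon H^{\bullet+\mu}(\mathfrak{g}/K)\to H^{\bullet\,\mathrm{mod}\,2t}(\widehat{\mathfrak{g}}/K)$ of the statement; abstractly this is the super $L_\infty$-incarnation of \cite[Theorem 4.39]{NSS12}.

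The substantive step — and the one I expect to be the main obstacle — is to check that this homotopy-pullback cocycle agrees on cochains with the map $\phi$ of \eqref{CochainMapForInclusionOfTwistedCohomologyIntoNonTwistedPeriodicCohomologyOfHigherCentralExtension}. For this I would compute the homotopy pullback above on Chevalley–Eilenberg algebras, i.e. as a homotopy pushout of sdgc-algebras, by replacing the leg $\mathrm{CE}(\mathrm{hofib}(\rho))$ with the fibration resolution $\mathrm{CE}(\mathfrak{l}(\Sigma^n K(t))_{\mathrm{res}})$ of Lemma~\ref{FibrationResolutionForUnTwistedPeriodicCohomology}: the corresponding CE-map is a relative semifree extension (it adjoins the degree-$2t$ generator $b$ with $d b=h_{2t+1}$), hence a cofibration, so the homotopy pushout is the ordinary pushout, i.e. the relative tensor product over $\mathrm{CE}(\mathfrak{l}(\Sigma^n K(t)/b^{2t-1}\mathfrak{u}_1))$. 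Since $\omega_\bullet$ lies over $b^{2t}\mathfrak{u}_1$ one has $\omega_\bullet^\ast h_{2t+1}=\mu$, so this pushout is $\mathrm{CE}(\mathfrak{g})[b]/(d b=\mu)=\mathrm{CE}(\widehat{\mathfrak{g}})$ by Prop.~\ref{hofibofSuperLInfinityAlgebras}, with the structure map from $\mathrm{CE}(\mathfrak{l}(\Sigma^n K(t))_{\mathrm{res}})$ sending $b\mapsto b$, $\omega_{2kt+n}\mapsto\omega_{2kt+n}$, $h_{2t+1}\mapsto\mu$. Precomposing with the homotopy inverse of the resolution displayed at the foot of \eqref{FibrationResolutionForHomotpyFiberOfTwistedCoefficients}, which sends $\omega_{2kt+n}\mapsto\big(e^b\wedge\sum_j\omega_{2jt+n}\big)_{2kt+n}$, then recovers exactly the cochain formula of \eqref{CochainMapForInclusionOfTwistedCohomologyIntoNonTwistedPeriodicCohomologyOfHigherCentralExtension}. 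Finally I would observe that taking $K$-invariants throughout is harmless, since every map in sight is $K$-equivariant and all coefficient objects carry the trivial $K$-action, which yields the assertion. The delicacy in this last part is purely model-category bookkeeping — knowing that homotopy pullbacks in $sL_\infty\mathrm{Alg}$ are computed by exactly this relative tensor product and that Lemma~\ref{FibrationResolutionForUnTwistedPeriodicCohomology} furnishes the right cofibrant replacement of $\mathrm{hofib}(\rho)$ (which follows, as for Prop.~\ref{hofibofSuperLInfinityAlgebras}, from \cite[Theorem 3.1.13]{FRS13}) — with the pasting law and functoriality of $\mathrm{hofib}$ doing the rest.
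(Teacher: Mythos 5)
Your proof is correct and follows essentially the same route as the paper's: regard the twisted cocycle as a morphism over $b^{2t}\mathfrak{u}_1$, form the homotopy pullback via the pasting law, compute it with the fibration resolution of Lemma \ref{FibrationResolutionForUnTwistedPeriodicCohomology}, and identify the result with the cochain formula \eqref{CochainMapForInclusionOfTwistedCohomologyIntoNonTwistedPeriodicCohomologyOfHigherCentralExtension} by composing with the weak equivalence $\phi$. Your explicit justification of the resolved pullback as the ordinary pushout (relative tensor product) of CE-algebras, using cofibrancy of the semifree extension, merely spells out what the paper's proof settles by inspection.
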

\begin{proof}
Consider a twisted cocycle in degree $n$ mod $t$
$$
  \omega_\bullet
  :=
  \left\{
    \omega_{2 k t + n}
  \right\}_{ k \in \mathbb{Z} }
  \,.
$$
Via the homomorphism \eqref{TwistedLInfinityCocycles} we may regard this equivalently as a
super $L_\infty$-homomorphism of the form
$$
  \xymatrix@R=1.5em{
    \mathfrak{g}
    \ar[rr]^-{ \omega_\bullet }
    \ar[dr]_-{\mu}
    &&
    \mathfrak{l}\left( \Sigma^n K(t)/b^{2t-1}\mathfrak{u}_1 \right)\;.
    \ar[dl]^-{\rho}
    \\
    &
    b^{2t}\mathfrak{u}_1
  }
$$
By forming homotopy pullbacks and using the pasting law for homotopy pullbacks this induces
a homotopy-commutative diagram of the form
$$
  \xymatrix@C=8em{
    b^{2t-1}\mathfrak{u}_1
    \ar@/^1.7pc/[rr]^b
    \ar[r]_>{\ }="s3"
    \ar[d]
    &
    \widehat{\mathfrak{g}}
    \ar[r]^{\hspace{-1.2cm} (\mathrm{hofib}(\rho))^\ast(\omega_{\bullet}) }_>{\ }="s2"
    \ar[d]
    &
    \mathfrak{l}(\Sigma^n K(t))
    \ar[r]_>{\ }="s1"
    \ar[d]^{\mathrm{hofib}(\rho)}
    &
    \ast \ar[d]
    \\
    \ast
    \ar[r]^<{\ }="t3"
    &
    \mathfrak{g}
    \ar@/_1.5pc/[rr]_{\mu}
    \ar[r]^-{\;\;\;\omega_\bullet}^<{\ }="t2"
    &
    \mathfrak{l}(\Sigma^n K(t)/b^{2t-1}\mathfrak{u}_1)
    \ar[r]^-{\;\;\rho}^<{\ }="t1"
    &
    b^{2t} \mathfrak{u}_1
    \ar@{=>} "s1"; "t1"
    \ar@{=>} "s2"; "t2"
    \ar@{=>} "s3"; "t3"
  }
$$
Hence this induces, in particular, a super $L_\infty$-homomorphism
$$
  (\mathrm{hofib}(\rho))^\ast(\omega_\bullet)
  \;:\;
  \widehat{\mathfrak{g}}
  \longrightarrow
  \mathfrak{l}( \Sigma^n K(t) )
$$
which, via Def. \ref{PeriodicLInfinityCohomology}, represents a periodic cohomology class on the higher
 central extension $\widehat{\mathfrak{g}}$:
$$
  [(\mathrm{hofib}(\rho))^\ast(\omega_\bullet)]
  \;\in\;
  H^{\bullet\;\mathrm{mod}\;t}(\widehat{\mathfrak{g}})
  \,.
$$
So it is now sufficient to show that this cohomology class is represented by $\phi(\omega_\bullet)$ according to
formula \eqref{CochainMapForInclusionOfTwistedCohomologyIntoNonTwistedPeriodicCohomologyOfHigherCentralExtension}.
In order to obtain such an explicit formula for $(\mathrm{hofib}(\rho))^\ast(\omega_\bullet)$ we may apply the
resolution of Lemma \ref{FibrationResolutionForUnTwistedPeriodicCohomology} to represent, up to weak equivalence,
the middle homotopy pullback in the above diagram by an ordinary pullback, as shown in the middle of the following diagram:
$$
  \xymatrix@C=6em{
    &&&
    \mathfrak{l}(\Sigma^{n} K(t))
    \ar@{^{(}->}[d]_\simeq
    \\
    b^{n-1}\mathfrak{u}_1
    \ar[r]
    \ar[d]
    &
    \widehat{\mathfrak{g}}
    \ar@/^1.5pc/[urr]^{\vspace{-1cm} \hspace{-1cm} \big( e^b \wedge
    \underset{ j \in \mathbb{Z} }{\sum} \omega_{2 j t + n}  \big)_{2t \bullet + n} }
    \ar[rr]^{ (\mathrm{hofib}(\rho)_{\mathrm{res}})^\ast(\omega_\bullet) }
    \ar[d]
    &&
    \mathfrak{l}(\Sigma^n K(t))_{\mathrm{res}}
    \ar[r]
    \ar@/^1pc/[u]^{\phi}
    \ar@{->>}[d]_{ \mathrm{hofib}(\rho)_{\mathrm{res}} }
    &
    \ast \ar[d]
    \\
    \ast
    \ar[r]
    &
    \mathfrak{g}
    \ar@/_1.5pc/[rrr]_{\mu}
    \ar[rr]^-{\omega_\bullet}
    &&
    \mathfrak{l}(\Sigma^n K(t)/b^{t-1}\mathfrak{u}_1)
    \ar[r]^-\rho
    &
    b^{2t} \mathfrak{u}_1\;.
  }
$$
Inspection reveals that the resolved homotopy pullback of $\omega_\bullet$ that is obtained thereby is simply given by
$$
  \xymatrix@R=1pt{
    \widehat{\mathfrak{g}}
    \ar[rrr]^-{ (\mathrm{hofib}(\rho)_{\mathrm{res}})^\ast(\omega_\bullet) }
    &&&
    \mathfrak{l}( \Sigma^n K(t) )_{\mathrm{res}}
    \\
    \omega_{2kt + n}
    &&&
    \omega_{2kt + n}
    \ar@{|->}[lll]
    \\
    \mu
    &&&
    h
    \ar@{|->}[lll]
    \\
    b
    &&&
    b
    \ar@{|->}[lll]
  }
$$
In order to manifestly identify this with a periodic cocycle we may postcompose with the weak equivalence $\phi$
from Lemma \ref{FibrationResolutionForUnTwistedPeriodicCohomology}, as shown in the above diagram. By the formula (\ref{FibrationResolutionForHomotpyFiberOfTwistedCoefficients}) for $\phi$ this implies the claim:
$$
  \begin{aligned}
    {[ \mathrm{hofib}(\rho)^\ast(\omega_\bullet) ]}
    & =
    [ \phi \circ (\mathrm{hofib}(\rho)_{\mathrm{res}})^\ast(\omega_\bullet) ]
    \\
    & =
    \Big[
      e^b
        \wedge
      \Big(
        \underset{j \in \mathbb{Z}}{\sum} \omega_{2jt + n}
      \Big)_{2 t \bullet + n}
    \Big]\;.
  \end{aligned}
$$

\vspace{-8mm}
\end{proof}

\subsection{Transgression elements}

The following simple but crucial structure in super $L_\infty$-homotopy theory will play a key role in Section \ref{SphericalTDualityOnExceptionalMTheorySpacetime}
and Section \ref{ExceptionalGenralisedGeometry} below:

\begin{defn}[Transgression of super $L_\infty$-cocycles, {\cite[Def. 21]{SSS1}} {\cite[Def. 4.1.20]{FSS12}}]
\label{TransgressionElements}
Consider the homotopy fiber sequence
$$
  \xymatrix{
    \mathfrak{f}\;
    \ar@{^{(}->}[r]^{\iota}
    & \mathfrak{g}
    \ar@{->>}[d]^{\pi}\
    \\
    & \mathfrak{b}
  }
$$
for $\pi$ a fibration
and let $\mu_{\mathfrak{b}} \in \mathrm{CE}(\mathfrak{b})$
be a cocycle. Then a \emph{transgression} of $\mu_{\mathfrak{b}}$ to a cocycle in the fiber
$\mu_{\mathfrak{f}} \in \mathrm{CE}(\mathfrak{f})$
is an element
$
  \mathrm{cs} \in \mathrm{CE}(\mathfrak{g})
$
such that
\begin{equation}
  \label{TransgressionElementCondition}
  \xymatrix{
    \mu_{\mathfrak{f}}
    &&&
    \mathrm{CE}(\mathfrak{f})
    \\
    \mathrm{cs}
      \ar@{|->}[r]^{d}
      \ar@{|->}[u]_{\iota^\ast}
    & d \mathrm{cs}
    &&
    \mathrm{CE}(\mathfrak{g})
    \ar[u]^{\iota^\ast}
    \\
    &
    \mu_{\mathfrak{b}}
    \ar@{|->}[u]^{\pi^\ast}
    &&
    \mathrm{CE}(\mathfrak{b})\;.
    \ar[u]^{\pi^\ast}
    \ar@/_2pc/[uu]_0
  }
\end{equation}
\end{defn}

We will be interested in  elements which arise in the following manner.

\begin{example}[Transgression elements in higher central extensions]
  \label{HigherCentralExtensionTransgressionElement}
  For $n \in \mathbb{N}$, $n \geq 1$, let
  $$
    \xymatrix{
      b^{n-1}\mathfrak{u}_1
      \; \ar@{^{(}->}[r]
      &
      \widehat{\mathfrak{b}}
      \ar[d]_{\mathrm{hofib}(\mu)}
      \\
      & \mathfrak{b} \ar[rr]^-{\mu} && b^n \mathfrak{u}_1
    }
  $$
  be a homotopy fiber sequence. Then the element $b \in \mathrm{CE}(\widehat{\mathfrak{g}})$ from
  Def. \ref{hofibofSuperLInfinityAlgebras} is a transgression element in the sense of Def. \ref{TransgressionElements}.
$$
  \xymatrix{
    b
    &&&
    \mathrm{CE}(b^{n-1} \mathfrak{u}_1)
    \\
    b
      \ar@{|->}[r]^{d}
      \ar@{|->}[u]_{\iota^\ast}
    & d b
    &&
    \mathrm{CE}(\widehat{\mathfrak{b}})
    \ar[u]_{\iota^\ast}
    \\
    &
    \mu
    \ar@{|->}[u]_{\pi^\ast}
    &&
    \mathrm{CE}(\mathfrak{b})
    \ar[u]_{\pi^\ast}
  }
$$
In fact this is the \emph{universal} transgression element for $\mu$: For every other transgression element $\mathrm{cs}$ of $\mu$
on any other fibration $\xymatrix{ \mathfrak{g} \ar@{->>}[r] & \mathfrak{b}}$ as in diagram \eqref{TransgressionElementCondition},
there is a unique morphisms of fibrations
\begin{equation}
  \label{ClassifyingMolrphismsForTransgressionElements}
  \xymatrix@R=1.5em{
    \mathfrak{g}
    \ar@{->>}[rr]^-{ \phi }
    \ar@{->>}[dr]_{\pi}
    &&
    \widehat{\mathfrak{b}}
    \ar@{->>}[dl]^{ \mathrm{hofib}(\mu) }
    \\
    &
    \mathfrak{b}
  }
\end{equation}
such that
$
  \mathrm{cs} = \phi^\ast(b)
  $.
\end{example}

Of particular interest are the following transgression elements for higher cocycles
existing on ordinary (i.e., not higher) super Lie algebras.
\begin{prop}
 [Left-invariant differential forms on a super Lie group represent CE-elements]
  \label{LeftInvariantDifferentialFormsOnLieGroupRepresentCEElements}
   Let $G$ be a super Lie group with super Lie algebra $\mathfrak{g}$. Then
  restriction to the super tangent space at the neutral element constitutes an isomorphism of
  differential $(\mathbb{Z} \times \mathbb{Z}/2)$-graded-commutative algebras
  $$
    \xymatrix{
      \Omega^\bullet(G)_{\mathrm{LI}}
      \ar[r]^-\simeq
      &
      \mathrm{CE}(\mathfrak{g})
    }
  $$
  between the sub-DGCA-algebra of left-invariant differential forms inside the super de Rham algebra
  of the super Lie group and the Chevalley-Eilenberg algebra of the super Lie algebra.
\end{prop}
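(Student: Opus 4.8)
The plan is to carry the classical Chevalley--Eilenberg identification of left-invariant forms on a Lie group with $(\Exterior^\bullet\mathfrak{g}^\ast, d_{\mathfrak{g}})$ over to the super setting. \emph{First}, I would show that the restriction map $\omega \mapsto \omega_e$ is already a bijection of $(\mathbb{Z},\mathbb{Z}/2)$-bigraded-commutative algebras $\Omega^\bullet(G)_{\mathrm{LI}} \to \Exterior^\bullet\mathfrak{g}^\ast$. Here one uses that $G$ is parallelized by the left-invariant vector fields extending a homogeneous basis of $\mathfrak{g} = T_e G$, with dual global coframe of left-invariant $1$-forms; hence $\Omega^\bullet(G) \cong C^\infty(G) \otimes \Exterior^\bullet\mathfrak{g}^\ast$, and a form is left-invariant iff its coefficients in this coframe are left-invariant functions, i.e. constant (left-invariant functions on $G$ are constant). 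So $\Omega^\bullet(G)_{\mathrm{LI}}$ is the subalgebra of constant-coefficient combinations of left-invariant $1$-forms, and reading off that constant is the same as restricting to $T_e G$, since the left trivialization is the identity at $e$. As pullback along a superdiffeomorphism and evaluation at a point both respect wedge products and the bigrading, this is an isomorphism of bigraded-commutative superalgebras onto the underlying algebra $\Exterior^\bullet\mathfrak{g}^\ast$ of $\mathrm{CE}(\mathfrak{g})$; in particular the global-coframe description also settles super-smoothness of the extension-by-left-translation that gives surjectivity.

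\emph{Second}, I would note that the super de Rham differential restricts to this subalgebra: $d$ commutes with pullback along any (super-)smooth map, so $(L_g)^\ast(d\omega) = d\big((L_g)^\ast\omega\big) = d\omega$ for left-invariant $\omega$; thus $d$ restricts to a graded derivation of bidegree $(1,\mathrm{even})$ on $\Omega^\bullet(G)_{\mathrm{LI}}$, which transports to a graded derivation $D$ of $\Exterior^\bullet\mathfrak{g}^\ast$ of the same bidegree under the isomorphism of the first step.

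\emph{Third} --- the decisive step --- I would identify $D$ with $d_{\mathfrak{g}} = [-,-]^\ast$. Since both are graded derivations of the free bigraded-commutative algebra $\Exterior^\bullet\mathfrak{g}^\ast$, it suffices to compare them on the $\mathbb{Z}$-degree-$1$ generators $\mathfrak{g}^\ast$. For $\alpha \in \mathfrak{g}^\ast$ and homogeneous left-invariant vector fields $X, Y$, the invariant (Koszul) formula for the exterior derivative, with the appropriate Koszul signs, gives $d\alpha(X,Y) = X\big(\alpha(Y)\big) - (-1)^{\vert X\vert\vert Y\vert} Y\big(\alpha(X)\big) - \alpha([X,Y])$; but $\alpha(Y)$ and $\alpha(X)$ are left-invariant, hence constant, functions on $G$, so the derivative terms vanish and $d\alpha(X,Y) = -\alpha([X,Y])$. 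This is precisely the value on generators prescribed by the Chevalley--Eilenberg differential $[-,-]^\ast$ of Section \ref{SuperLInfinityCohomology} (the sign convention in \eqref{SupermMinkowskiCE} being chosen to match). Hence $D = d_{\mathfrak{g}}$, and the isomorphism of the first step is an isomorphism of differential bigraded-commutative superalgebras, as claimed.

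The main obstacle is the sign bookkeeping in the third step: tracking the Koszul signs in the super exterior-derivative formula for odd left-invariant vector fields and reconciling the result with the sign convention built into $\mathrm{CE}(\mathfrak{g})$. One may sidestep the general formula by verifying the identity directly on a global left-invariant coframe --- this is exactly the content of \eqref{SupermMinkowskiCE} in the super-Minkowski case --- but obtaining it uniformly for an arbitrary super Lie group requires the standard super-Cartan-calculus identities, for which I would appeal to \cite{DF, CDF}.
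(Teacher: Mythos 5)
Your argument is correct in outline, but there is nothing in the paper to compare it against: Prop.~\ref{LeftInvariantDifferentialFormsOnLieGroupRepresentCEElements} is stated without proof (the text passes immediately to ``we will leave this isomorphism notationally implicit''), being the straightforward super-geometric extension of the classical Chevalley--Eilenberg identification, for which the paper implicitly defers to the supergravity literature \cite{DF, CDF}. What you supply is exactly that standard argument: left trivialization of the (co)tangent bundle, ``invariant coefficients are constant'', $d$ preserves invariants, and the super Koszul formula on degree-$(1,\bullet)$ generators. The one point you should make explicit, because as written your step one is literally false without it, is the meaning of ``left-invariant'' in the super setting: invariance under pullback along $L_g$ for the \emph{topological} points $g$ of $G$ is not enough, since it is vacuous in purely odd directions --- for $G = \mathbb{R}^{0\vert 1}$ the reduced group is a point, so every form would count as invariant, while $\Omega^\bullet(G)_{\mathrm{LI}}$ must come out as $\mathbb{R}[d\theta] \simeq \mathrm{CE}(\mathbb{R}^{0\vert1})$. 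One needs left-invariance in the functor-of-points sense (equivalently the Hopf-algebraic condition $\mu^\ast\omega = 1 \otimes \omega$ for the multiplication $\mu \colon G \times G \to G$); with that definition ``left-invariant functions are constant'' does hold, your parallelization argument identifies $\Omega^\bullet(G)_{\mathrm{LI}}$ with $\Exterior^\bullet\mathfrak{g}^\ast$ as bigraded-commutative algebras, and the rest goes through as you say. Your remaining caveat about Koszul signs is fine: the evaluation of a $2$-form on two odd invariant vector fields produces the symmetric (anticommutator) component of the bracket with exactly the sign convention encoded in \eqref{SupermMinkowskiCE}, so the transported differential is indeed $d_{\mathfrak{g}} = [-,-]^\ast$ in the paper's convention.
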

We will leave this isomorphism notationally implicit.
The application of this isomorphism in the following simple setup is what drives the
exceptional generalized geometry that we discover below in Section \ref{SphericalTDualityOnExceptionalMTheorySpacetime}.

\begin{example}[Primitives from sections via transgression elements]
  \label{PrimitivesFromSections}
 Consider a transgression of super Lie algebra cocycles (Def. \ref{TransgressionElements})
  $$
  \raisebox{20pt}{
  \xymatrix@R=1.5em{
    \mathfrak{g}
    \ar[rr]^-{ \phi }
    \ar@{->>}[dr]_{\pi}
    &&
    \widehat{\mathfrak{b}}
    \ar@{->>}[dl]^{ \mathrm{hofib}(\mu) }
    \\
    &
    \mathfrak{b}
  }
  }
  $$
  for $\mathfrak{g}$ and $\mathfrak{b}$ both super Lie 1-algebras.
 We are interested in studying the corresponding fibration at the level of
 the Lie groups. Consider a Lie integration to
  left-invariant super-differential forms on corresponding super Lie groups according to Prop. \ref{LeftInvariantDifferentialFormsOnLieGroupRepresentCEElements}:
  $$
  \xymatrix@R=1.5em{
    & F \ar@{^{(}->}[d]_-{\iota}
    &
    d \iota^\ast b = 0
    \\
    & G
    \ar[d]_{\pi}
    & b \in \Omega^\bullet(G)_{\mathrm{LI}}, \phantom{AA} d b = \pi^\ast(\mu)
    \\
    & B & \mu \in \Omega^\bullet(B)_{\mathrm{LI}}.
  }
$$
Then every smooth section $\sigma$ of $\pi$, regarded as a fibration of super-manifolds
$
  \xymatrix{
    G
    \ar[r]_\pi
 &
    B
    \ar@/_.5pc/[l]_\sigma
  }
$
induces a primitive
$
  \sigma^\ast(b) \in \Omega^\bullet(B)
$
of $\mu$:
$$
    d \sigma^\ast ( b )
     =
    \sigma^\ast ( d b )
  =
    \underset{\mathrm{id}}{\underbrace{\sigma^\ast \pi^\ast}} \mu
  = \mu
    \,.
$$
Of course $\sigma^\ast(b)$ here is necessarily \emph{not} left-invariant if $[\mu] \neq 0 \in H^\bullet(\mathfrak{g})$.
\end{example}


\section{Higher T-duality}
\label{sphericaltduality}

In this section we establish the higher and super-geometric generalization (Theorem \ref{HigherTDuality} and Cor.
 \ref{HigherTopologicalTDualityInTwistedCohomology} below) of toroidal ``topological T-duality''
as originally proposed in \cite{BHM03} and as derived from analysis of the super-WZW terms of type II F1/D$p$-branes in \cite{FSS16}.

\medskip
After considering some basics of rational higher torus fibrations in section \ref{HigherTopusFibrations},
we introduce the concept of \emph{higher T-duality correspondences} between twisting
cocycles on such higher torus fibrations in Section \ref{Sec-HigherTDualityCorrespondences}
and we prove, in Section \ref{HigherTDualityTransformations}, that these induce isomorphisms
in the corresponding twisted cohomology groups. These are the higher T-duality isomorphisms as such,
generalizing Hori's formula for the T-duality isomorphisms on Ramond-Ramond (RR) charges twisted
by F1-brane charges (recalled in Section  \ref{OrdinaryTypeIITDuality} below) to higher twisting degree,
notably by M5-brane charges (Section \ref{SphericalTDualityOfM5BranesOnM2ExtendedSpacetime} below).

\medskip
Finally in section \ref{HigherTDualityForDecomposedFormFields} we consider a phenomenon in higher
T-duality that has no analog in ordinary T-duality, namely its passage to ``decomposed'' higher form fields
(Theorem \ref{HigherTDualityForDecomposedFields} below); the realization of this effect in M-brane
theory is discussed in Section \ref{SphericalTDualityOnExceptionalMTheorySpacetime}.

\subsection{Higher torus fibrations}
\label{HigherTopusFibrations}

Where a torus is, topologically, a Cartesian product of circles $S^1 \simeq U(1)$, by a ``higher torus''
we shall mean here a Cartesian product of shifted circles $B^n U(1)$, i.e. an Eilenberg-MacLane space
$K(\Z, n+1)$. Rationally, we think of this via the corresponding $L_\infty$-algebras
$b^{n}\mathfrak{u}_1$. In every odd degree these are indistinguishable, up to rational weak equivalence, from the corresponding
odd-dimensional spheres (see the discussion at the beginning of Section \ref{SuperLInfinityHomotopyTheory}).
 As such the rational \emph{spherical} T-duality for M-branes discussed below
in Sections \ref{SphericalTDualityOfM5BranesOnM2ExtendedSpacetime}, \ref{SphericalTDualityOver7dSpacetime} and \ref{SphericalTDualityOnExceptionalMTheorySpacetime}
involves higher torus fibrations in the following sense.

\medskip
Beyond its plain homotopy type, the Riemannian structure on a flat torus is equivalently encoded in the
corresponding universal cup product on the tuples of universal first Chern classes that classify torus-principal bundles.
This cohomological incarnation of the Riemannian structure on a torus immediately generalizes to higher tori, in this sense,
in terms of cup products on tuples of universal higher Dixmier-Douady classes.  This is what the following definitions formalizes.

\begin{defn}[Higher tori and cup products]
  \label{HigherCupProducts}
  For $n, k \in \mathbb{N}$ two natural numbers, consider the
  $L_\infty$-algebra
  $$
    b^n (\mathfrak{u}_1)^k
    \;\in\;
    sL_\infty
  $$
  whose Chevalley-Eilenberg algebra has $k$ generators in degree $n+1$ and vanishing differential:
  $$
    \mathrm{CE}( b^n (\mathfrak{u}_1)^k )
    \;=\;
    \Big(
      \mathbb{R}\big[ \underset{ \mathrm{deg} = n+1 }{\underbrace{\mathrm{dd}_{n}^{(1)}}},  \cdots, \underset{ \mathrm{deg} = n+1 }{\underbrace{\mathrm{dd}_n^{(k)} }} \big],
      \, d = 0
    \Big)
    \,.
  $$
  For
  $$
   \left\langle -,-\right\rangle
   \;:\;
   \mathbb{R}^k \otimes \mathbb{R}^k
     \longrightarrow
   \mathbb{R}
  $$
  a non-degenerate bilinear pairing, symmetric if $n+1$ is even, skew-symmetric otherwise, on the vector space spanned by these generators,
  we say that the corresponding \emph{universal cup product} is the morphism
  \begin{equation}
    \label{UniversalCupProduct}
    \xymatrix@R=3pt{
      b^n (\mathfrak{u}_1)^k
      \times
      b^n (\mathfrak{u}_1)^k
      \ar[rr]^-{ \left\langle (-) \cup (-) \right\rangle }
      &&
      b^{2n+1} \mathfrak{u}_1
      \\
      \left\langle
        \mathrm{pr}_1^\ast \big( \widevec{\mathrm{dd}}_n \big)
          \wedge
        \mathrm{pr}_2^\ast \big( \widevec{\mathrm{dd}}_n \big)
      \right\rangle
      \ar@{<-|}[rr]
       &&
      \mathrm{dd}_{2n+1}
   }
  \end{equation}
\end{defn}

\begin{defn}[Higher torus fibrations]
\label{HigherTorisFibrationsAndFiberIntegration}
 Consider the  higher cup product pairing \eqref{UniversalCupProduct}
 (Def. \ref{HigherCupProducts}).
 Then for $\mathfrak{g} \in sL_\infty\mathrm{Alg}$ a super $L_\infty$-algebra
 and for
 $$
   \vec \mu
   \;:\;
   \mathfrak{g}
     \longrightarrow
   b^n(\mathfrak{u}_1)^k
 $$
 a $k$-tuple of $(n+1)$-cocycles, the corresponding \emph{higher torus fibration}
  is the homotopy fiber of $\mu$
 \begin{equation}
   \label{HigherTorusFibration}
   \xymatrix@R=1.2em{
     \hat{\mathfrak{g}}
     \ar[d]^{\pi := \mathrm{hofib}(\vec{\mu})}
     \\
     \mathfrak{g}
   }
 \end{equation}
 with Chevalley-Eilenberg algebra given via Prop. \ref{hofibofSuperLInfinityAlgebras} as
 $
   \mathrm{CE}(\hat {\mathfrak{g}})
   \;=\;
   \mathrm{CE}(\mathfrak{g})[ \vec{b} ]( d \vec b = \vec \mu )
   $.
\end{defn}

 This factors of course as a sequence of plain higher central extensions.
 In particular, we have the following.

\begin{defn}[Fiber integration]
 \label{FiberIntegration}
 Let
 $$
   \xymatrix@R=1.2em{
     \hat{\mathfrak{g}}
     \ar[d]^{\pi := \mathrm{hofib}(\mu)}
     \\
     \mathfrak{g}
   }
 $$
 be a higher central extension according to Prop. \ref{hofibofSuperLInfinityAlgebras},
 classified by a cocycle of even degree $\mu\colon \mathfrak{g}\to b^{2n-1}\mathbb{R}$. Let
 \[
 \int\colon \mathrm{CE}(b^{2n-2}\mathbb{R})=\left(\mathbb{R}[b],db=0\right)\longrightarrow \mathbb{R}[\deg b]
 \]
 be the morphism of cochain complexes defined by $\int b=1$, where $b$ has degree $2n-1$. Then the morphism
 ``$\int$" extends to a morphism of graded vector spaces, called the \emph{fiber integration morphism},
 $$
   \pi_\ast
   \;:\;
   \mathrm{CE}(\hat{\mathfrak{g}})
   \longrightarrow
   \mathrm{CE}(\mathfrak{g})[\deg b]
 $$
 as the composition
 \[
 \mathrm{CE}(\hat{\mathfrak{g}})\cong \mathrm{CE}(\mathfrak{g})\otimes \mathrm{CE}(b^{2n-2}\mathbb{R})\xrightarrow{\;\;\;\mathrm{id}\otimes \int\;\;\;} \mathrm{CE}(\mathfrak{g})[\deg b]\;.
 \]
 \end{defn}

\medskip
 \begin{remark}[Wrapping and non-wrapping modes]
 \label{wrap-unwrap}
 As $\mathrm{CE}(\hat{\mathfrak{g}}) = \mathrm{CE}(\mathfrak{g})[b]/(d b= \mu)$,
 any cochain $\omega$ on $\hat{\mathfrak{g}}$ can be uniquely written as
 \begin{equation}
   \label{CochainCoefficientsWrapping}
   \omega = \omega_{\mathrm{nw}} + b \wedge \omega_{\mathrm{w}}
 \end{equation}
for suitable ``non-wrapping'' and ``wrapping'' coefficients
 $\omega_{\mathrm{nw}}, \omega_{\mathrm{w}} \in \mathrm{CE}(\mathfrak{g})$. Under fiber integration
  \begin{equation}
   \label{FiberIntegrationMap}
   \pi_\ast
     \;:\;
   \omega \longmapsto (-1)^{\mathrm{deg}(b)} \omega_{\mathrm{w}}
   \,,
 \end{equation}
 so that $\pi_*$ picks the coefficient of the ``wrapping component'', up to a sign.
\end{remark}

\medskip
\begin{remark}
The fiber integration morphism does indeed respect the differentials and so it is a $\deg(b)$-graded morphism of chain complexes $\mathrm{CE}(\hat{\mathfrak{g}})
   \to
   \mathrm{CE}(\mathfrak{g})$. Namely, one has
\begin{equation}
  \label{FiberIntegrationRespectsDifferentials}
  \begin{aligned}
    \pi_\ast( d \omega )
    & =
    \pi_\ast\big(
       d \omega_{\mathrm{nw}}
       +
       \underset{\mu}{\underbrace{d b}} \wedge \omega_{\mathrm{nw}}
       +
       (-1)^{\mathrm{deg}(b)} b \wedge d \omega_{\mathrm{nw}}
    \big)
    \\
    & =
    (-1)^{\mathrm{deg}(b)} d \omega_{\mathrm{nw}}
    \\
    & =
    (-1)^{\mathrm{deg}(b)} d \pi_\ast( \omega )\;.
  \end{aligned}
\end{equation}
\end{remark}

\medskip
We may think of an ordinary $k$-torus $T^{k}$ as being an $S^1$-fibration over the $(k-1)$-torus in $k$ different ways
$$
  \xymatrix{
    S^1 \; \ar@{^{(}->}[r] & T^k
      \ar[d]^{\pi_j}
      \\
    & T^{k-1}
  }
$$
and of course the same remains true for higher tori, according to Def. \ref{HigherCupProducts}.
Accordingly from every higher $k$-torus there are $k$ \emph{partial fiber integration} maps that
fiber integrate, in the sense of Def. \ref{FiberIntegration}, over one of the factors, Def. \ref{FiberIntegration}
below. This is needed in the key condition (\ref{HigherTDualityPairingCondition}) on higher T-duality
correspondences below in Def. \ref{HigherTDualityCorrespondences}.
\begin{defn}
[Partial fiber integration]
\label{FiberIntegration}
For any choice
 $j \in \{1, \cdots, k\}$, $\pi$  factors as
 $$
   \xymatrix{
     \hat{\mathfrak{g}}
     \ar[d]^{ \pi^{(j)} := \mathrm{hofib}( \mu^{(j)} ) }
     \\
     \ar[d]^{ \pi^{(1,\cdots, j-1,j+1,\cdots,k)} :=  \mathrm{hofib}( \mu^{(1)}, \cdots, \mu^{(j-1)}, \mu^{(j+1)}, \cdots, \mu^{(k)} ) }
     \\
     \mathfrak{g}
   }
 $$
 so that there are $k$ different intermediate fiber integration maps (Def. \ref{FiberIntegration}), defined by
 (see Remark \ref{wrap-unwrap})
 \begin{equation}
   \label{IntegrationOverSingleFiberFactor}
   \pi^{(j)}_\ast
   \big(
     \omega_{\mathrm{nw}}
     +
     \big\langle
       \vec b \wedge \vec \omega_{\mathrm{w}}
     \big\rangle
     +
     \cdots
   \big)
   \;:=\;
   \omega^{(j)}_{\mathrm{w}}
 \end{equation}
 or
 $$
   \vec\pi_\ast(\omega)
   \;:=\;
   \vec\omega_{\mathrm{w}}
   \,,
 $$
 for short. This is well-defined by the assumption that the pairing $\langle -,-\rangle$ is non-degenerate.
\end{defn}

\subsection{Higher T-duality correspondences}
\label{Sec-HigherTDualityCorrespondences}

The higher topological T-duality itself, below in Section \ref{HigherTDualityTransformations}, is a non-trivial
 isomorphism between two different twisted cohomology groups. But, first of all, the twists on the two sides of this isomorphisms need to be T-dual themselves. This is encoded in the concept of a \emph{higher T-duality correspondence} which we discuss now (Def. \ref{HigherTDualityCorrespondences} below)
being a special case of a correspondence of twisting cocycles (Def. \ref{Correspondences} below.)

\medskip
In application to super $p$-brane physics in Section \ref{Examples} below, the twisting cocycles correspond to the
charges of a given brane species that may end of some other branes (e.g. the fundamental string in type II or the
M5-brane in 11d) and the twisted cohomology classes twisted thereby correspond to the charges of the branes it
may end on (e.g. the D-branes or (possibly) the M9-brane, respectively); see \cite[Section 3]{FSS13} for details on this
homotopy-theoretic incarnation of the brane intersection laws.

\vspace{5mm}
\hspace{-13mm}
{\small
\begin{tabular}{|c||c|c|c|}
  \hline
  {\bf Concepts}
  &
  \begin{tabular}{c}
    {\bf Higher T-duality}
    \\
    {\bf correspondence}
  \end{tabular}
  &
  \begin{tabular}{c}
    {\bf Pull-tensor-push}
    \\
   {\bf through}
   \\
    {\bf correspondence}
  \end{tabular}
  &
  {\bf Higher T-duality}
  \\
  \hline
  & Def. \ref{HigherTDualityCorrespondences} & Def. \ref{def-pullpush} & Theorem \ref{HigherTDuality}, Cor. \ref{HigherTopologicalTDualityInTwistedCohomology}
  \\
  \hline
  \hline
  {\bf Examples}
  \\
  \hline
  \hline
  String theory
    &
  \begin{tabular}{c}
    \emph{Type IIA/B string} (\ref{CorrespondenceTDualityTypeII}, \ref{SuperstringCocycleTypeII})
    \\
$
      \mu_{{}_{F1}}^{\mathrm{IIA/B}}
      =
      \underset{
        \mathrm{\footnotesize basic}
      }{
      \underbrace{
        \mu_{{}_{F1}}\vert_{8+1}
      }}
      +
      e^9_{{}_{A/B}}
      \wedge
      \underset{
        \!\!\!\!\!\!\!
        { \tiny \begin{array}{c}
           \mathrm{ dual}
          \\
          \mathrm{extension}
          \\
          \mathrm{class}
        \end{array} }
      \!\!\!\!\!\!\!\!}{
        \underbrace{ c_2^{{}^{\mathrm{IIB/A}}} }
      }
    $
  \end{tabular}
  & D-branes & \begin{tabular}{c} Hori's formula for \\ Buscher rules for RR-fields \\ (\ref{TypeIIT}) from \cite[Prop. 6.4]{FSS16} \end{tabular}
  \\
  \hline
  M-theory
  &
    \begin{tabular}{c}
      \emph{M5-brane} (\ref{CorrespondenceSphericalTDualityM5}, \ref{M5cocycleInTDualityPair} )
      \\
      $
      \widetilde \mu_{{}_{M5}}
      =
      \underset{
        \!\!\mathrm{basic}\!\!
      }{
      \underbrace{
        2\mu_{{}_{M5}}
      }}
      +
      c_3 \wedge
      \underset{
        \!\!\!\!\!\!\!\!\!\!\!
        { \tiny \begin{array}{c}
          \mathrm{dual}
          \\
          \mathrm{extension}
          \\
          \mathrm{class}
        \end{array} }
      \!\!\!\!\!\!\!\!\!\!}{
        \underbrace{ \mu_{{}_{M2}} }
      }
      $
    \end{tabular}
  & Remark \ref{M5TwistedCocyclesFromHeteroticGreenSchwarz} & Sections \ref{SphericalTDualityOfM5BranesOnM2ExtendedSpacetime}, \ref{SphericalTDualityOnExceptionalMTheorySpacetime}
  \\
  \hline
\end{tabular}
}

\medskip
\begin{defn}[Higher Poincar\'e form]
\label{higherPoincareForm}
Let $n, k \in \mathbb{N}$, and let
$
    b^n (\mathfrak{u}_1)^k
    \times
    b^n (\mathfrak{u}_1)^k
    \xrightarrow{{ \left\langle (-) \cup (-) \right\rangle }}
    b^{2n+1} \mathfrak{u}_1
$
be a choice of cup product (\ref{UniversalCupProduct}) as in Def. \ref{HigherCupProducts}.
Moreover, let  $\mathfrak{g}$ be a super-$L_\infty$-algebra and let
$$
\xymatrix{
  \widevec{\mathrm{dd}}^A, \widevec{\mathrm{dd}}^B
  \;:\;
  \mathfrak{g}
  \ar[r] &
  b^{2n+1}(\mathfrak{u}_1)^k
  }
$$
be two cocycles, classifying two higher extensions $\hat{\mathfrak{g}}^A$ and $\hat{\mathfrak{g}}^B$, respectively, via Prop. \ref{hofibofSuperLInfinityAlgebras}:
\footnote{We will use the notation $(-)^{A/B}$ to indicate the two cases $A$ and $B$ respectively.
}
\begin{equation}
  \label{CEAlgebraOfddABExtensions}
  \mathrm{CE}\big(\widehat{\mathfrak{g}}^{A/B}\big)
  \;=\;
  \mathrm{CE}(\mathfrak{g})\big[ \vec b^{A/B} \big]/\big( d \big( \vec b^{A/B} \big) =
   \widevec{\mathrm{dd}}^{A/B} \big)
  \,.
\end{equation}
Then on the fiber product
\begin{equation}
  \label{FiberProductForPoincareForm}
  \raisebox{20pt}{
  \xymatrix@R=1em{
    &
    \hat{\mathfrak{g}}^A \times_{\mathfrak{g}} \hat{\mathfrak{g}}^B
    \ar[dl]_{p_A}
    \ar[dr]^{p_B}
    \\
    \hat{\mathfrak{g}}^A
    \ar[dr]_{\pi^A}
    &&
    \hat{\mathfrak{g}}^B
    \ar[dl]^{\pi^B}
    \\
    & \mathfrak{g}
  }
  }
\end{equation}
whose CE-algebra is
$$
  \mathrm{CE}
  \left(
    \hat{\mathfrak{g}}^A \times_{\mathfrak{g}} \hat{\mathfrak{g}}^B
  \right)
  \;=\;
  \mathrm{CE}(\mathfrak{g}) \big[ \vec b^A, \vec b^B  \big]
  /\big(
    d (\vec b^{A}) = \widevec{\mathrm{dd}}^{A},
    d (\vec b^{B}) = \widevec{\mathrm{dd}}^{B}
  \big)
$$
there is the cochain
\begin{equation}
  \label{HigherPoincarForm}
  \mathcal{P}
  :=
  \big\langle \vec b^A \wedge \vec b^B \big\rangle
  \;\;\in
  \mathrm{CE}\big( \hat{\mathfrak{g}}^A \times_{\mathfrak{g}} \hat{\mathfrak{g}}^B \big)
  \,,
\end{equation}
which we call the \emph{ higher Poincar{\'e} form} on this fiber product.
\end{defn}
\begin{defn}[Correspondence of twisting cocycles]
\label{Correspondences}
In the context of Def. \ref{higherPoincareForm}, if
$$
  h^{A/B} \;:\; \hat{\mathfrak{g}}^{A/B} \longrightarrow b^{4n+2} \mathfrak{u}_1
$$
are two $(4k+3)$-cocycles on the two extensions, respectively, we say that they
are \emph{in correspondence}, or that the diagram
\begin{equation}
  \label{Correspondence}
  \raisebox{20pt}{
  \xymatrix@R=1.5em{
    &
    &
    \hat{\mathfrak{g}}^A \times_{\mathfrak{g}} \hat{\mathfrak{g}}^B
    \ar[dl]_{p_A}
    \ar[dr]^{p_B}
    \\
    b^{4n+2}\mathfrak{u}_1
    &
    \hat{\mathfrak{g}}^A
    \ar[l]_{\hspace{6mm}h^A}
    &&
    \hat{\mathfrak{g}}^B
    \ar[r]^-{h^B}
    &
    b^{4k+2}\mathfrak{u}_1
  }
  }
\end{equation}
is a \emph{correspondence between} the cocycles, if the Poincar{\'e} form (\ref{HigherPoincarForm})
trivializes the differences of their pullbacks to the fiber product:
\begin{equation}
  \label{PoincareFormTrivializesDifferenceBetweenhBAndhA}
  d \mathcal{P} = (p_B)^\ast(h^B) - (p_A)^\ast(h^A)
  \,,
\end{equation}
\end{defn}

\begin{remark}
The above definition, Def. \ref{Correspondences}, may naturally be stated more homotopy-theoretically:
The algebraic condition (\ref{PoincareFormTrivializesDifferenceBetweenhBAndhA})
  witnesses a 2-dimensional diagram in the $(\infty,1)$-category of super $L_\infty$-algebras
  of the following form:
  $$
    \xymatrix@R=1.5em{
      &
      \hat{\mathfrak{g}}^A \times_{\mathfrak{g}} \hat{\mathfrak{g}}^B
      \ar[dl]_{ p_A }
      \ar[dr]^{ p_B }_{\ }="s"
      \\
      \hat{\mathfrak{g}}^A
      \ar[dr]_{h^A}^{\ }="t"
      &&
      \hat{\mathfrak{g}}^B \;,
      \ar[dl]^{h^B}
      \\
      & b^{4n+2} \mathfrak{u}_1
      \ar@{=>}^{\mathcal{P}} "s"; "t"
    }
  $$
  where $\mathcal{P}$ is viewed as a homotopy interpolating between the two
 compositions.
\end{remark}

\medskip
\begin{defn}[Higher T-duality correspondence]
\label{HigherTDualityCorrespondences}
We call a correspondence of cocycles according to Def. \ref{Correspondences}
$$
  \xymatrix@R=1.5em{
    &&
    \hat{\mathfrak{g}}^A \times_{\mathfrak{g}} \hat{\mathfrak{g}}^B
    \ar[dl]_{p_A}
    \ar[dr]^{p_B}
    \\
    b^{4n+2}\mathfrak{u}_1
    &
    \hat{\mathfrak{g}}^A
    \ar[dr]_{\pi^A := \mathrm{hofib}(\widevec{\mathrm{dd}}^A) \phantom{AAA} }
    \ar[l]_-{h^A}
    &&
    {\hat{\mathfrak{g}}}^B
    \ar[dl]^{\phantom{AAA}\pi^B := \mathrm{hofib}( \widevec{\mathrm{dd}}^B) }
    \ar[r]^-{h^B}
    &
    b^{4n+2}\mathfrak{u}_1
    \\
    &
    & \mathfrak{g}
    \ar[dl]^-{\widevec{\mathrm{dd}}^A }
    \ar[dr]_-{ \widevec{\mathrm{dd}}^B }
    \\
    &
    b^{2n+1} (\mathfrak{u}_1)^k && b^{2n+1} (\mathfrak{u}_1)^k
  }
$$
a \emph{higher topological T-duality correspondence}
if the partial integration (Def. \ref{FiberIntegration}) of the
  A-cocycle along the A-fibration coincides with minus the
  B-cocycle on the base, and conversely:
\begin{equation}
  \label{HigherTDualityPairingCondition}
    \vec \pi^A_\ast (h^A) = - \widevec{\mathrm{dd}}^B
    \qquad \text{and} \qquad
    \vec \pi^B_\ast (h^B) = - \widevec{\mathrm{dd}}^A
  \,.
\end{equation}
\end{defn}
We denote this situation by either of the following two notations:
$$
\xymatrix{
  (\widevec{\mathrm{dd}}^A,h^A) \;\; \ar@{<->}[r]^{\cal{T}} &\;\; (\widevec{\mathrm{dd}}^B, h^B)
  }
\qquad \text{or}
\qquad
\xymatrix{
  (\widehat{\mathfrak{g}}^A, h^A) \;\; \ar@{<->}[r]^{\cal{T}} & \;\; (\widehat{\mathfrak{g}}^B, h^B)
  }
  \,.
$$

\begin{remark}[The T-duality axiom]
  \label{TDualityAxiom}
 {\bf (i)}  Notice that in relations (\ref{HigherTDualityPairingCondition}), by usual abuse of notation, we are notationally suppressing the pullback of $\vec{\mathrm{dd}}^{B/A}$
  to the codomain of $\vec \pi^{A/B}_\ast$ (Def. \ref{FiberIntegration}).

\item {\bf (ii)}  This condition (\ref{HigherTDualityPairingCondition}) is the evident generalization of the
  condition for toroidal topological T-duality that has been considered before:
The latter was conjecturally proposed in \cite[(2.1)]{BHM03} and
  argued for from actual T-duality in \cite[around (2.5)]{BHM}.
  Its evident lift to integral cohomology was considered in \cite[(2.3)]{BunkeRumpfSchick06}.
  We had \emph{derived} this condition from analysis of type II superstring super-WZW terms in
  \cite[(4) and (5)]{FSS16}.

\item {\bf (iii)}  In the language of ``dg-manifolds'', this condition for ordinary T-duality
  was considered in \cite[Def. 3.2]{LupercioRengifoUribe14}.
  Notice that dg-manifolds are
  just $L_\infty$-algebras that vary over a base manifold, in a suitable sense: namely dg-manifolds are \emph{$L_\infty$-algebroids} \cite[Section A.1]{SSS3}.
  This way the setup in \cite{LupercioRengifoUribe14} is closely related to our super $L_\infty$-algebraic discussion. The key difference is
  the higher generalization of T-duality which we consider and, crucially, the generalization to
  super-dg-geometry: It is the \emph{higher} and \emph{fermionic} cocycles on super-$L_\infty$-algebras
  which induce all the nontrivial
  structure in higher T-duality of super $p$-branes, both for ordinary T-duality of type II superstrings
  from \cite{FSS16} (recalled as Section \ref{OrdinaryTypeIITDuality} below),
   as well as in the spherical T-duality of M-branes discussed below in Section \ref{Examples}.
\end{remark}
It will be convenient to have notation for the set of all possible T-duality correspondences over a given base:
\begin{defn}
[Set of higher T-duality correspondences]
Given a super $L_\infty$-algebra $\mathfrak{g}$ we write
\begin{equation}
  \label{AssigningSetsOfHigherTDualityCorrespondences}
  \mathfrak{g} \longmapsto \mathrm{TCorr}_n(\mathfrak{g})
\end{equation}
for the set of all higher T-duality correspondences over $\mathfrak{g}$; by
Prop. \ref{hofibofSuperLInfinityAlgebras}
this is a contravariant functor in $\mathfrak{g}$.
\end{defn}

\medskip
We offer the following more succinct way to \emph{induce} the data of T-duality
correspondences purely from cocycle data on $\mathfrak{g}$.

\begin{prop}[Classifying data for higher topological T-duality correspondences]
  \label{ClassifyingDataHigherTDuality}
  Given a super $L_\infty$-algebra $\mathfrak{g}$, then the set $\mathrm{TCorr}_n(\mathfrak{g})$ (\ref{AssigningSetsOfHigherTDualityCorrespondences})
  of higher T-duality correspondences above it
  is in natural bijection with the set of choices of cocycles $\widevec{\mathrm{dd}}^{A/B}$ \emph{on $\mathfrak{g}$} equipped with a trivialization of (minus) their
  cup product (Def. \ref{HigherCupProducts}):
  $$
    \mathrm{TCorr}_n(\mathfrak{g})
    \;\simeq\;
    \Big\{
      \underset{ \mathrm{deg} = 2n }{\underbrace{\widevec{\mathrm{dd}}^{A/B}_{\mathfrak{g}}}}\;,
      \underset{ 4n-1}{\underbrace{h_{\mathfrak{g}}}}
      \in \mathrm{CE}(\mathfrak{g})
      \;  \vert \;
      d \big( \widevec{\mathrm{dd}}^{A/B} \big) = 0\,,\;
      d h_{\mathfrak{g}}
        = \
      -
      \big\langle
        \widevec{\mathrm{dd}}^A_{\mathfrak{g}} \wedge \widevec{\mathrm{dd}}^B_{\mathfrak{g}}
      \big\rangle
    \Big\}
    \,.
  $$
\end{prop}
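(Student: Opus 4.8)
The plan is to unwind Def.~\ref{HigherTDualityCorrespondences} (together with Def.~\ref{Correspondences} and Def.~\ref{higherPoincareForm}) into explicit Chevalley--Eilenberg data and to identify a correspondence with the triple $(\widevec{\mathrm{dd}}^A,\widevec{\mathrm{dd}}^B,h_{\mathfrak g})$ on $\mathfrak g$. First I would fix the standard models from Prop.~\ref{hofibofSuperLInfinityAlgebras}: $\mathrm{CE}(\widehat{\mathfrak g}^{A/B}) = \mathrm{CE}(\mathfrak g)[\vec b^{A/B}]/(d\vec b^{A/B} = \widevec{\mathrm{dd}}^{A/B})$, $\mathrm{CE}(\widehat{\mathfrak g}^A\times_{\mathfrak g}\widehat{\mathfrak g}^B) = \mathrm{CE}(\mathfrak g)[\vec b^A,\vec b^B]/(\cdots)$, with Poincar\'e form $\mathcal P = \langle \vec b^A\wedge\vec b^B\rangle$. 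Since $\widevec{\mathrm{dd}}^{A/B}$ is already part of the correspondence data, the content is to extract $h_{\mathfrak g}$ from $(h^A,h^B,\mathcal P)$, and conversely.

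For the direction from a correspondence to classifying data, I would expand all cocycles in the polynomial grading by the generators $\vec b^A,\vec b^B$. On the fiber product, $d\mathcal P$ is linear separately in $\vec b^A$ and in $\vec b^B$ --- explicitly $d\mathcal P = \langle \widevec{\mathrm{dd}}^A\wedge \vec b^B\rangle \pm \langle \vec b^A\wedge \widevec{\mathrm{dd}}^B\rangle$ --- whereas $p_A^\ast h^A$ involves only $\vec b^A$ and $p_B^\ast h^B$ only $\vec b^B$. Hence the defining identity $d\mathcal P = p_B^\ast h^B - p_A^\ast h^A$ of Def.~\ref{Correspondences} forces $h^A$ to be \emph{at most linear} in $\vec b^A$ and $h^B$ at most linear in $\vec b^B$; writing $h^{A/B} = h^{A/B}_{\mathrm{nw}} + \langle \vec b^{A/B}\wedge \vec h^{A/B}_{\mathrm{w}}\rangle$ with all pieces in $\mathrm{CE}(\mathfrak g)$, comparison of $\vec b$-degrees $0$ and $1$ gives $h^A_{\mathrm{nw}} = h^B_{\mathrm{nw}} =: h_{\mathfrak g}$ and (up to the fixed sign conventions) $\vec h^A_{\mathrm{w}} = -\widevec{\mathrm{dd}}^B$, $\vec h^B_{\mathrm{w}} = -\widevec{\mathrm{dd}}^A$, which is exactly the partial-integration condition (\ref{HigherTDualityPairingCondition}), so that condition is seen to be automatic given (\ref{PoincareFormTrivializesDifferenceBetweenhBAndhA}). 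Finally, closedness $d h^A = 0$ on $\widehat{\mathfrak g}^A$ read in $\vec b^A$-degree $0$ yields $d h_{\mathfrak g} = -\langle \widevec{\mathrm{dd}}^A\wedge \widevec{\mathrm{dd}}^B\rangle$, while $\vec b^A$-degree $1$ just reproduces $d\widevec{\mathrm{dd}}^B = 0$.

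For the converse, given closed $\widevec{\mathrm{dd}}^{A/B}\in\mathrm{CE}(\mathfrak g)$ and $h_{\mathfrak g}$ with $d h_{\mathfrak g} = -\langle\widevec{\mathrm{dd}}^A\wedge\widevec{\mathrm{dd}}^B\rangle$, I would set $\widehat{\mathfrak g}^{A/B} := \mathrm{hofib}(\widevec{\mathrm{dd}}^{A/B})$ (Prop.~\ref{hofibofSuperLInfinityAlgebras}), $h^A := h_{\mathfrak g} - \langle \vec b^A\wedge\widevec{\mathrm{dd}}^B\rangle$, $h^B := h_{\mathfrak g} - \langle \vec b^B\wedge\widevec{\mathrm{dd}}^A\rangle$ (signs chosen once and for all), and $\mathcal P := \langle \vec b^A\wedge\vec b^B\rangle$. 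Then a short computation checks: $d h^{A/B} = 0$ (the contribution of $h_{\mathfrak g}$ cancels the $\langle\widevec{\mathrm{dd}}^A\wedge\widevec{\mathrm{dd}}^B\rangle$ produced by differentiating the $\vec b$-term); the partial fiber integrations give $\vec\pi^A_\ast h^A = -\widevec{\mathrm{dd}}^B$ and $\vec\pi^B_\ast h^B = -\widevec{\mathrm{dd}}^A$ by reading off wrapping components (Def.~\ref{FiberIntegration}, Remark~\ref{wrap-unwrap}); and $d\mathcal P = p_B^\ast h^B - p_A^\ast h^A$ since both sides equal $\langle \widevec{\mathrm{dd}}^A\wedge\vec b^B\rangle \pm \langle \vec b^A\wedge\widevec{\mathrm{dd}}^B\rangle$. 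The two assignments are inverse to each other (the first reads off $h_{\mathfrak g}$ as the non-wrapping part, the second reconstructs $h^{A/B}$ from $h_{\mathfrak g}$ and $\widevec{\mathrm{dd}}^{A/B}$), and both are manifestly natural in $\mathfrak g$ under pullback of cocycles, matching the contravariant functoriality of $\mathrm{TCorr}_n$; this gives the asserted bijection.

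The step I expect to be the main nuisance is purely the sign bookkeeping: the generators $\vec b^{A/B}$, the classes $\widevec{\mathrm{dd}}^{A/B}$, $h_{\mathfrak g}$, and $\mathcal P$ all sit in different degrees, and the pairing $\langle-,-\rangle$ is symmetric or skew according to the parity in Def.~\ref{HigherCupProducts}, so the Koszul signs must be tracked consistently across all of the above identities. The one genuinely structural observation --- that $h^{A/B}$ carries no higher powers of $\vec b^{A/B}$, which is what allows the whole correspondence to be compressed into data on $\mathfrak g$ --- is, as noted, forced by matching $\vec b$-polynomial degrees in the Poincar\'e-form identity (\ref{PoincareFormTrivializesDifferenceBetweenhBAndhA}).
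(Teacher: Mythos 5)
Your argument is correct in substance and lands on exactly the decomposition \eqref{DecompositonOfHInTDualityCorrespondence} that drives the paper's proof, but it gets there by a slightly different route. The paper uses the two defining conditions in tandem: first the fiber-integration condition \eqref{HigherTDualityPairingCondition} to force $h^{A/B}=(\pi^{A/B})^\ast(q^{A/B})+\big\langle \vec b^{A/B}\wedge(\pi^{A/B})^\ast\big(\widevec{\mathrm{dd}}^{B/A}\big)\big\rangle$, and then the Poincar{\'e}-form identity \eqref{PoincareFormTrivializesDifferenceBetweenhBAndhA} merely to conclude $q^A=q^B=:h_{\mathfrak g}$ (using injectivity of pullback along the fibrations). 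You instead expand everything in the free generators $\vec b^A,\vec b^B$ of the fiber-product CE-algebra and read off \eqref{PoincareFormTrivializesDifferenceBetweenhBAndhA} bidegree by bidegree; since $d\mathcal P$ has only the components $\langle \widevec{\mathrm{dd}}^A\wedge\vec b^B\rangle$ and $\langle\vec b^A\wedge\widevec{\mathrm{dd}}^B\rangle$, this kills all terms of $\vec b$-degree $\geq 2$ in $h^{A/B}$, pins the linear terms via non-degeneracy of the pairing, and equates the degree-zero parts. This is a genuine (and mildly sharper) observation: with the signed fiber-integration convention $\pi_\ast(\omega)=(-1)^{\deg b}\omega_{\mathrm w}$ of \eqref{FiberIntegrationMap} -- the one the paper's proof actually invokes -- condition \eqref{HigherTDualityPairingCondition} comes out as a consequence of \eqref{PoincareFormTrivializesDifferenceBetweenhBAndhA} rather than as independent input; note, however, that with the unsigned formula \eqref{IntegrationOverSingleFiberFactor} of Def.~\ref{FiberIntegration} the implied sign would be the opposite one, so the ``redundancy'' depends on which of the paper's two conventions one fixes. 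The remaining steps (reading $d h^A=0$ in $\vec b^A$-degree $0$ to get $d h_{\mathfrak g}=-\langle\widevec{\mathrm{dd}}^A\wedge\widevec{\mathrm{dd}}^B\rangle$, and the converse construction) match the paper.

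One concrete sign to fix in your converse direction: with the paper's conventions ($\mathcal P=\langle\vec b^A\wedge\vec b^B\rangle$, $d\mathcal P=p_B^\ast h^B-p_A^\ast h^A$, $d h_{\mathfrak g}=-\langle\widevec{\mathrm{dd}}^A\wedge\widevec{\mathrm{dd}}^B\rangle$) you must take $h^{A/B}:=h_{\mathfrak g}+\big\langle\vec b^{A/B}\wedge\widevec{\mathrm{dd}}^{B/A}\big\rangle$, since $d\big\langle\vec b^{A}\wedge\widevec{\mathrm{dd}}^{B}\big\rangle=+\langle\widevec{\mathrm{dd}}^A\wedge\widevec{\mathrm{dd}}^B\rangle$ is what cancels $d h_{\mathfrak g}$; your choice with the minus sign gives $d h^A=-2\langle\widevec{\mathrm{dd}}^A\wedge\widevec{\mathrm{dd}}^B\rangle\neq 0$. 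This is exactly the bookkeeping you flagged, and it does not affect the structure of the argument, but the plus sign is the one compatible with the stated normalizations (the minus in \eqref{HigherTDualityPairingCondition} then arises from the $(-1)^{\deg b}$ in the fiber-integration map, not from the coefficient of $\vec b^{A/B}$ in $h^{A/B}$).
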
\begin{proof}
  First we observe that given the data $(\widevec{\mathrm{dd}}^{A/B}, h^{A/B})$ underlying a T-duality correspondence,
  then the two conditions saying that this does indeed constitute a T-duality correspondence (Def. \ref{HigherTDualityCorrespondences}),
  \begin{enumerate}
    \item $d \mathcal{P} = (p_B)^\ast(h^B) - (p_A)^\ast(h^A) $;
    \item $(\vec\pi^{A/B})_\ast( h^{A/B} ) = - \widevec{\mathrm{dd}}^{A/B}$,
  \end{enumerate}
  are equivalent to the statement that $h^{A/B}$ has the following form:
  \begin{equation}
    \label{DecompositonOfHInTDualityCorrespondence}
    h^{A/B}
    \;=\;
    (\pi^{A/B})^\ast(h_{\mathfrak{g}})
    +
    \Big\langle
      \vec b^{A/B} \wedge (\pi^{A/B})^\ast\big( \widevec{\mathrm{dd}}^{B/A} \big)
    \Big\rangle
    \phantom{AAAA}
    \text{for some}
    \; h_{\mathfrak{g}} \in \mathrm{CE}(\mathfrak{g})
    \,.
  \end{equation}
  To see this, observe
  by the definition of fiber integration, Def. \ref{FiberIntegration},
  and noticing that the prefactor in (\ref{FiberIntegrationMap}) is $-1$ in the present case,
  that the second condition
  above is equivalent to
  $$
    h^{A/B}
    \;=\;
    (\pi^{A/B})^\ast(q^{A/B})
    +
    \Big\langle
      \vec b^{A/B} \wedge (\pi^{A/B})^\ast \big(\widevec{\mathrm{dd}}^{B/A}\big)
    \Big\rangle
    \phantom{AAA}
    \text{for some}
    \; q^{A/B} \in \mathrm{CE}(\mathfrak{g})
    \,.
  $$
  With this the first condition is equivalent to
  $$
    \begin{aligned}
      0
      & =
      (p^B)^\ast (h^B) - (p^A)^\ast( h^A ) - d \mathcal{P}
      \\
      & =
      (p^B)^ \ast(\pi^{B})^\ast(q^{B}) - (p^A)^\ast(\pi^{A})^\ast(q^{A})
      +
      \underset{
       = 0
      }{
      \underbrace{
        \left(
          (p^B)^\ast
          \left\langle
            \vec b^{B} \wedge (\pi^B)^\ast(\widevec{\mathrm{dd}}^{A})
          \right\rangle
          -
          (p^A)^\ast
          \left\langle
            \vec b^{A} \wedge (\pi^A)^\ast( \widevec{\mathrm{dd}}^{B})
          \right\rangle
        \right)
        -
        d \mathcal{P}
      }
      }
    \end{aligned}
    \,,
  $$
  where the term over the brace vanishes by the definition of $\mathcal{P}$ (expression \eqref{HigherPoincarForm}).
  But since $(p^A)^ \ast(\pi^{A})^\ast = (p^B)^ \ast(\pi^{B})^\ast$, by the fiber product diagram (\ref{FiberProductForPoincareForm}), and since the pullback operation along
  a fibration is injective,
  this is equivalent to $q^A = q^B$. Hence $h_{\mathfrak{g}} := q^{A/B}$.
  This proves the claim.

  Now from this claim it is immediate that Def. \ref{HigherTDualityCorrespondences} implies
  $$
    \begin{aligned}
      0
      & =
      (p^{A/B})^\ast ( d h^{A/B}) )
      \\
      & =
      (p^{A/B})^\ast
      \left(
         d
         \Big(
           (\pi^{A/B})^\ast(h_{\mathfrak{g}})
           +
           \Big\langle
             \vec b^{A/B} \wedge (\pi^{A/B})^\ast \big( \widevec{\mathrm{dd}}^{B/A} \big)
           \Big\rangle
         \Big)
      \right)
      \\
      & =
      (p^{A/B})^\ast
      (\pi^{A/B})^\ast
      \left(
        d h_{\mathfrak{g}}
        +
        \big\langle
          \widevec{\mathrm{dd}}^A \wedge \widevec{\mathrm{dd}}^B
        \big\rangle
      \right)\;.
    \end{aligned}
  $$
  Consequently, we have the equation $d h_{\mathfrak{g}} = - \mathrm{dd}^A \wedge \mathrm{dd}^B$, because the pullback operation
  along higher central extensions is injective, by Prop. \ref{hofibofSuperLInfinityAlgebras}.
  Conversely, given this equation, then the same computation shows that setting
  $$
    h^{A/B} := (\pi^{A/B})^\ast( h_\mathfrak{g} ) + b^{A/B} \wedge (\pi^{A/B})^\ast( \mathrm{dd}^{B/A} )
  $$
  defines cocycles. Then, as in the proof of the claim above, it follows that these satisfy the two conditions above.
\end{proof}
This immediately implies that there is a universal $L_\infty$-algebra that serves as the classifying space for
higher topological T-duality correspondences, provided we give it the right context first.
\begin{defn}
[Super $L_\infty$ T-fold algebra]
For $n,k \in \mathbb{N}$, $n \geq 1$ write
$$
  b\mathfrak{tfold}_n
  \;\in\;
  sL_\infty\mathrm{Alg}
$$
for the super $L_\infty$-algebra
dually defined as having the following Chevalley-Eilenberg algebra:
\begin{equation}
 \label{btfoldnAlgebra}
 \mathrm{CE}(b\mathfrak{tfold}_n)
  \;
  :=
  \;
  \mathbb{R}\Big[
    \underset{\mathrm{deg} = 2n }{\underbrace{ \widevec{\mathrm{dd}}^A }}
    \hspace{-1mm}, \;
    \underset{ 2n}{\underbrace{  \widevec{\mathrm{dd}}^B}} \;,
    \underset{   4n -1 }{\underbrace{h}}
  \Big]/
  \Big(
       d  \big( \widevec{\mathrm{dd}}^{A/B} \big) = 0, \;
       d h
         =
       - \big\langle \widevec{\mathrm{dd}}^A \wedge \widevec{\mathrm{dd}}^B \big\rangle
  \Big)\;.
\end{equation}
\end{defn}

\begin{prop}[Higher T-duality $L_\infty$-algebra]
\label{HigherTDualityLInfinityAlgebra}
For $\mathfrak{g} \in s L_{\infty}\mathrm{Alg}$ any super $L_\infty$-algebra there is a natural bijection
$$
  \mathrm{Hom}( \mathfrak{g}, b \mathfrak{tfold}_n )
  \;\simeq\;
  \mathrm{TCorr}_n(\mathfrak{g})
$$
between the set of super $L_\infty$-homomorphisms of the form $\mathfrak{g} \to b\mathfrak{tfold}_n$
and the set of higher T-duality correspondences over $\mathfrak{g}$ (Def. \ref{HigherTDualityCorrespondences}).
\end{prop}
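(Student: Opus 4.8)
The plan is to reduce the statement to Prop.~\ref{ClassifyingDataHigherTDuality} by unwinding what a super $L_\infty$-homomorphism into $b\mathfrak{tfold}_n$ amounts to at the level of Chevalley-Eilenberg algebras. First I would use that $\mathrm{CE}$ is a fully faithful embedding of super $L_\infty$-algebras into the opposite of super dgc-algebras, so that
\[
  \mathrm{Hom}( \mathfrak{g}, b\mathfrak{tfold}_n )
  \;\cong\;
  \mathrm{Hom}_{\mathrm{sdgcAlg}_{\mathbb{R}}}\big( \mathrm{CE}(b\mathfrak{tfold}_n), \mathrm{CE}(\mathfrak{g}) \big)\,.
\]
Now $\mathrm{CE}(b\mathfrak{tfold}_n)$, as presented in \eqref{btfoldnAlgebra}, is \emph{semifree}: its underlying graded-commutative algebra is the free one on the generators $\widevec{\mathrm{dd}}^A$, $\widevec{\mathrm{dd}}^B$ (each a $k$-tuple in degree $2n$) and $h$ (degree $4n-1$). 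Hence a morphism of graded-commutative algebras $\mathrm{CE}(b\mathfrak{tfold}_n) \to \mathrm{CE}(\mathfrak{g})$ is specified freely and uniquely by the images of these generators, i.e.\ by a triple $(\widevec{\alpha}, \widevec{\beta}, \eta)$ with $\widevec{\alpha}, \widevec{\beta} \in \big(\mathrm{CE}^{2n}(\mathfrak{g})\big)^{\oplus k}$ and $\eta \in \mathrm{CE}^{4n-1}(\mathfrak{g})$.

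Next I would observe that such an assignment is a homomorphism of \emph{differential} graded algebras precisely when it intertwines the two differentials, and that, since both differentials are graded derivations and $\mathrm{CE}(b\mathfrak{tfold}_n)$ is generated by the above elements, it suffices to check this on generators. Reading off the defining relations in \eqref{btfoldnAlgebra} and using multiplicativity together with the pairing $\langle -, -\rangle$ on the span of the generators, the intertwining condition becomes exactly
\[
  d\widevec{\alpha} = 0
  \,,\qquad
  d\widevec{\beta} = 0
  \,,\qquad
  d\eta = -\big\langle \widevec{\alpha} \wedge \widevec{\beta} \big\rangle
  \,.
\]
But this is verbatim the classifying data for a higher T-duality correspondence over $\mathfrak{g}$ identified in Prop.~\ref{ClassifyingDataHigherTDuality} (with $\widevec{\alpha} = \widevec{\mathrm{dd}}^A_{\mathfrak{g}}$, $\widevec{\beta} = \widevec{\mathrm{dd}}^B_{\mathfrak{g}}$, $\eta = h_{\mathfrak{g}}$). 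Composing the two bijections — and noting that $\mathrm{TCorr}_n(\mathfrak{g})$ as defined is a set of honest correspondences carrying the chosen Poincaré form, which matches the strict $\mathrm{Hom}$-set rather than homotopy classes of maps — yields the asserted bijection.

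Finally, for naturality I would check that for a super $L_\infty$-morphism $\mathfrak{g}' \to \mathfrak{g}$ the induced map $\mathrm{Hom}(\mathfrak{g}, b\mathfrak{tfold}_n) \to \mathrm{Hom}(\mathfrak{g}', b\mathfrak{tfold}_n)$, which is precomposition with $\mathrm{CE}(\mathfrak{g}) \to \mathrm{CE}(\mathfrak{g}')$, sends $(\widevec{\alpha}, \widevec{\beta}, \eta)$ to its pullback, and that this agrees with the contravariant functoriality of $\mathrm{TCorr}_n$ recorded after its definition (which is itself inherited from the functoriality of $\mathrm{hofib}$ in Prop.~\ref{hofibofSuperLInfinityAlgebras} used to build the fiber products (\ref{FiberProductForPoincareForm})); this is a routine diagram chase. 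I do not expect a genuine obstacle here: the only care needed anywhere is bookkeeping — keeping track of the $k$-tuple indices, and making sure the sign and the (anti)symmetry convention on $\langle -, -\rangle$ used in \eqref{btfoldnAlgebra} are exactly those entering the Poincaré form \eqref{HigherPoincarForm} and Prop.~\ref{ClassifyingDataHigherTDuality}, so that the two descriptions of the trivialization condition coincide on the nose.
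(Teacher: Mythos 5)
Your argument is correct and is essentially the paper's own proof: the paper likewise composes the bijection coming from the semifree presentation \eqref{btfoldnAlgebra} (morphisms into $b\mathfrak{tfold}_n$ are exactly triples $(\widevec{\mathrm{dd}}^A_{\mathfrak{g}},\widevec{\mathrm{dd}}^B_{\mathfrak{g}},h_{\mathfrak{g}})$ satisfying the closedness and trivialization conditions) with the bijection of Prop.~\ref{ClassifyingDataHigherTDuality}. Your additional remarks on generator-wise checking of the differential, strict versus homotopy Hom-sets, and naturality only make explicit what the paper leaves implicit.
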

\begin{proof}
This is the composite of natural bijections
\begin{align*}
  \mathrm{Hom}( \mathfrak{g}, b \mathfrak{tfold}_n )
  \; &\simeq\;
  \Big\{
    \hspace{-2mm}\underset{\mathrm{deg} = 4n-1}{\underbrace{h_{\mathfrak{g}}}}
    \hspace{-2mm}, \;
    \underset{  2n }{\underbrace{ \widevec{\mathrm{dd}}^{A/B}_{\mathfrak{g}}}}
    \in \mathrm{CE}(\mathfrak{g})
    \;\vert\;
    d \big(\widevec{\mathrm{dd}}^{A/B}\big) = 0,
    d h_{\mathfrak{g}}
      =
    - \left\langle \widevec{\mathrm{dd}}^A_{\mathfrak{g}} \wedge \widevec{\mathrm{dd}}^B_{\mathfrak{g}} \right\rangle
  \Big\}
  \\
  \; &\simeq\;
  \mathrm{TCorr}_n(\mathfrak{g})
  \,,
\end{align*}
where the first one is given by expression \eqref{btfoldnAlgebra}
and the second is given by Prop. \ref{ClassifyingDataHigherTDuality}.
\end{proof}

\subsection{Higher T-duality transformations}
\label{HigherTDualityTransformations}

Finally we discuss how every correspondence of cocycles (Def. \ref{Correspondences}) induces a pull-push transformation on cochains (Def. \ref{def-pullpush} below),
which is an isomorphism on the corresponding higher twisted cohomology groups (Theorem \ref{HigherTDuality} below).
Applied to the case of higher T-duality correspondences (Def. \ref{HigherTDualityCorrespondences}) this yields the genuine
higher topological T-duality (Corollary \ref{HigherTopologicalTDualityInTwistedCohomology} below).

\begin{defn}[Pull-push through correspondences]
\label{def-pullpush}
  Consider a correspondence (Def. \ref{Correspondences})
  $$
    \xymatrix@R=1.5em{
      &&
      \hat{\mathfrak{g}}^A \times_{\mathfrak{g}} \hat{\mathfrak{g}}^b
      \ar[dl]_{p_A}
      \ar[dr]^{p_B}
      \\
      b^{4n+2}\mathbb{R}
      &
      \hat{\mathfrak{g}}^A
      \ar[l]_-{\;\;\;h^A}
      &&
      {\hat{\mathfrak{g}}}^B
      \ar[r]^-{h^B}
      &
      b^{4n+2}\mathbb{R}\;.
    }
  $$
  Then we say that the \emph{pull-push transform through the correspondence} is
  the linear map $(p_A)_\ast \circ e^{\mathcal{P}} \circ (p_B)^\ast $ from the cochains on $\hat{\mathfrak{g}}^B$ to those of $\hat{\mathfrak{g}}^A$
  which is the composite of
  \begin{enumerate}
    \item pullback $(p_B)^\ast$ to the fiber product,
    \item multiplication with the exponential $e^{\mathcal{P}} := 1 + \mathcal{P} + \tfrac{1}{2} \mathcal{P} \wedge \mathcal{P} + \cdots$  of the Poincar{\'e} form (\ref{HigherPoincarForm}), and
    \item fiber integration $(p_A)_\ast$ (Def. \ref{FiberIntegration}).
  \end{enumerate}
\end{defn}

\begin{theorem}[Pull-push through correspondences is isomorphism on twisted cohomology]
  \label{HigherTDuality}
 {\bf (i)} The pull-push through correspondences of Def. \ref{def-pullpush}
  induces an isomorphism between the corresponding twisted
  cohomology groups (Def. \ref{TwistedCohomology}):
  $$
    \xymatrix{
      H^{(\bullet + 2n+1) + h^A}(\hat{\mathfrak{g}}^A)
      \ar@{<-}[rrrr]^-{{\cal T} := (p_A)_\ast e^{ \mathcal{P} }  (p_B)^\ast}_-{\simeq}
      &&&&
      H^{\bullet+h^B}(\hat{\mathfrak{g}}^B)
    }
    \,.
  $$
 \item {\bf (ii)} Moreover, if the base super $L_\infty$-algebra $\mathfrak{g}$ is equipped with
 an action by a group $K$ (Def. \ref{SuperLInfinityAlgebraWithGroupAction}),
  and if the cocycles $\vec{\mathrm{dd}}^{A/B} \in \mathrm{CE}(\mathfrak{g})^K$ are $K$-invariant (\ref{KInvariantCEComplex}),
  so that also the extensions $\widehat{\mathfrak{g}}^{A/B}$ are canonically equipped with a $K$-actions (by Prop. \ref{hofibofSuperLInfinityAlgebras})
  and if finally the twisting cocycles $h^{A/B} \in \mathrm{CE}(\widehat{\mathfrak{g}}^{A/B})^K$ are $K$-invariant, then
  this isomorphism restricts to an isomorphism of $K$-invariant twisted cohomology groups
  $$
    \xymatrix{
      H^{(\bullet + 2n+1) + h^A}(\hat{\mathfrak{g}}^A/K)
      \ar@{<-}[rrrr]^-{{\cal T} := (p_A)_\ast e^{ \mathcal{P} }  (p_B)^\ast}_-{\simeq}
      &&&&
      H^{\bullet+h^B}(\hat{\mathfrak{g}}^B/K)
    }
    \,.
  $$
\end{theorem}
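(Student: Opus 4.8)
The plan is to factor the pull--push $\mathcal T = (p_A)_\ast\circ(e^{\mathcal P}\wedge(-))\circ(p_B)^\ast$ into three pieces, check that each piece intertwines the relevant twisted differentials, and then produce the reverse pull--push as a two-sided inverse; part (ii) will then follow because every operation in sight is built from $K$-invariant data. For the chain-map part I would first note that $\mathcal P = \langle\vec b^A\wedge\vec b^B\rangle$ has even total degree, so $e^{\mathcal P}\wedge(-)$ is an invertible operator on $\mathrm{CE}(\hat{\mathfrak g}^A\times_{\mathfrak g}\hat{\mathfrak g}^B)$ (inverse $e^{-\mathcal P}\wedge(-)$) and $d\,e^{\mathcal P} = (d\mathcal P)\wedge e^{\mathcal P}$. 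Substituting the defining relation $d\mathcal P = (p_B)^\ast h^B - (p_A)^\ast h^A$ of the correspondence (Def.~\ref{Correspondences}) gives in one line $(d + (p_A)^\ast h^A\wedge)(e^{\mathcal P}\wedge\alpha) = e^{\mathcal P}\wedge(d + (p_B)^\ast h^B\wedge)\alpha$, i.e.\ $e^{\mathcal P}\wedge(-)$ carries the $(p_B)^\ast h^B$-twisted differential to the $(p_A)^\ast h^A$-twisted one. Pullback $(p_B)^\ast$ is visibly compatible with the twisted differentials, and partial fiber integration $(p_A)_\ast$ along the $\vec b^B$-directions is compatible up to the sign of \eqref{FiberIntegrationRespectsDifferentials}, using in addition the projection formula $(p_A)_\ast((p_A)^\ast\eta\wedge\zeta) = \pm\,\eta\wedge(p_A)_\ast\zeta$, which is immediate from the wrapping-coefficient description in Def.~\ref{FiberIntegration}. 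Composing, $\mathcal T$ descends to twisted cohomology with the degree shift stated.

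For the isomorphism, I would exploit that the T-duality correspondence is symmetric under exchanging $A$ and $B$: by Prop.~\ref{ClassifyingDataHigherTDuality} the data is $(\widevec{\mathrm{dd}}^{A/B}, h_{\mathfrak g})$ with $d h_{\mathfrak g} = -\langle\widevec{\mathrm{dd}}^A\wedge\widevec{\mathrm{dd}}^B\rangle$, which is manifestly symmetric, and the corresponding Poincar\'e form is $\mathcal P' = \langle\vec b^B\wedge\vec b^A\rangle = \pm\mathcal P$. Hence there is a reverse pull--push $\mathcal T' := (p_B)_\ast\circ(e^{\mathcal P'}\wedge(-))\circ(p_A)^\ast$, a chain map in the opposite direction by the previous paragraph. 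I would then verify by direct computation that $\mathcal T'\circ\mathcal T$ and $\mathcal T\circ\mathcal T'$ each equal a nonzero scalar multiple of the identity (up to the evident shift): writing cochains in the basis of $\vec b$-monomials and expanding $e^{\pm\mathcal P}$ into its homogeneous components $\tfrac1{m!}\langle\vec b^A\wedge\vec b^B\rangle^m$, the two successive partial integrations together with the projection formula collapse the composite into a cofactor-type sum over the matrix of the pairing $\langle-,-\rangle$, and this sum telescopes to the identity exactly because $\langle-,-\rangle$ is non-degenerate (Def.~\ref{HigherCupProducts}). This establishes (i). An equivalent, more structural route is to filter $\mathrm{CE}(\hat{\mathfrak g}^{A/B})$ and $\mathrm{CE}(\hat{\mathfrak g}^A\times_{\mathfrak g}\hat{\mathfrak g}^B)$ by polynomial degree in the $\vec b$'s --- a filtration respected by the twisted differentials thanks to the explicit shape \eqref{DecompositonOfHInTDualityCorrespondence} of $h^{A/B}$ --- and to compare the resulting spectral sequences, where on the first page the claim again reduces to non-degeneracy of the cup pairing.

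For the equivariant refinement (ii): by Prop.~\ref{hofibofSuperLInfinityAlgebras}(iii) the $K$-invariance of $\widevec{\mathrm{dd}}^{A/B}$ equips $\hat{\mathfrak g}^{A/B}$, and hence the fiber product $\hat{\mathfrak g}^A\times_{\mathfrak g}\hat{\mathfrak g}^B$, with $K$-actions under which the adjoined generators $\vec b^{A/B}$ are fixed; thus $\mathcal P$ is $K$-invariant, and with the assumed $K$-invariance of $h^{A/B}$ all of $(p_{A/B})^\ast$, $(p_{A/B})_\ast$ and $e^{\pm\mathcal P}\wedge(-)$ commute with the $K$-action and therefore restrict to the invariant subcomplexes $\mathrm{CE}(-)^K$ of \eqref{KInvariantCEComplex}. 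Consequently $\mathcal T$ and its inverse $\mathcal T'$ restrict to these subcomplexes --- as do, if one only obtains an inverse up to homotopy, the intervening chain homotopies, since these too are polynomial in the $K$-fixed generators $\vec b^{A/B}$ --- and induce the asserted isomorphism of $K$-invariant twisted cohomology groups.

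The main obstacle is the isomorphism in (i): neither $(p_B)^\ast$ nor $(p_A)_\ast$ is individually a quasi-isomorphism --- already the untwisted Gysin sequence of a higher central extension shows this --- so the statement is a genuine feature of the \emph{twisted} complexes and necessarily uses both the precise decomposition \eqref{DecompositonOfHInTDualityCorrespondence} of the twisting cocycles and the non-degeneracy of the cup pairing. Concretely, the effort goes into controlling the signs and the combinatorics of the cofactor expansion in the $(\mathbb Z\times\mathbb Z/2)$-bigraded setting (or, in the spectral-sequence variant, into convergence and the vanishing of the higher differentials).
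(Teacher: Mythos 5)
Your proposal is correct in substance, but it reaches the key bijectivity statement by a different route than the paper. The chain-map part is the same in both: you use $d\,e^{\mathcal P}=(d\mathcal P)\wedge e^{\mathcal P}$ together with $d\mathcal P=(p_B)^\ast h^B-(p_A)^\ast h^A$ and the compatibility of pullback and fiber integration with the differentials, which is exactly the one-line computation \eqref{PullPushIntertwinedTwistedDifferentials}. For the isomorphism, however, the paper never constructs a reverse pull-push or a spectral sequence: it simply evaluates $\mathcal T$ on the unique decomposition $\omega=\omega_{\mathrm{nw}}+b^B\wedge\omega_{\mathrm{w}}$ (for $k=1$, and on $b^B$-monomials with their Hodge-dual $b^A$-monomials for general $k$) and observes that $\mathcal T(\omega)=-\omega_{\mathrm{w}}+b^A\wedge\omega_{\mathrm{nw}}$, i.e.\ $\mathcal T$ is already a linear \emph{isomorphism of cochain complexes}, so no inverse needs to be exhibited separately. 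Your route via $\mathcal T':=(p_B)_\ast e^{\mathcal P'}(p_A)^\ast$ does work --- the $A/B$-symmetry of the classifying data and $\mathcal P'=-\mathcal P$ make $\mathcal T'$ a chain map the other way --- but be aware that to compute $\mathcal T'\circ\mathcal T$ you must in effect carry out the same monomial analysis the paper uses directly, and the composite is in general not a single scalar multiple of the identity but a diagonal operator whose signs depend on the wrapping degree of the monomial; this is still invertible, which is all you need, but the ``telescoping to the identity'' phrasing should be replaced by ``Hodge duality squares to $\pm 1$ on each monomial''. The spectral-sequence variant is likewise valid (the $b$-degree filtration has finite length), but it is heavier machinery than the statement requires. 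Your treatment of part (ii) --- all operators are built from $K$-invariant data, the adjoined generators are $K$-fixed, so everything restricts to the invariant subcomplexes --- matches what the paper leaves implicit.
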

\begin{proof}
First consider the case that $k = 1$ in Def. \ref{higherPoincareForm}. This means that there is a single generator $b^B$ and a single generator $b^A$.
Hence in this case the Poincar{\'e} form $\mathcal{P}$ given in expression
(\ref{HigherPoincarForm}) is a linear multiple of $b^A \wedge b^B$. Without
restriction of generality we may take the linear multiple to be one. Then
$$
  e^{\mathcal{P}} = 1 + b^A \wedge b^B
  \,.
$$
Now,  let (see Remark \ref{wrap-unwrap})
\begin{equation}
  \label{DecompositionOfCochainIntoWrappingAndNonWrappingComponent}
  \omega
  \;=\; \omega_{\mathrm{nw}} + b^B \wedge \omega_{\mathrm{w}}
  \;\in \mathrm{CE}( \hat{\mathfrak{g}}^B)
\end{equation}
be any, possibly inhomogeneous, cochain on $\hat{\mathfrak{g}}^B$,
where on the right we are displaying its unique coefficients $\omega_{\mathrm{nw}}, \omega_{\mathrm{w}} \in \mathrm{CE}(\mathfrak{g})$
with respect to the fibration $\pi^B$ as in expression (\ref{CochainCoefficientsWrapping}).

We directly compute the action of the transformation on this cochain in terms of generators:
\begin{equation}
  \label{PullTensorPushActionOnGeneralElement}
  \begin{aligned}
    (p_A)_\ast e^{ \mathcal{P} }  (p_B)^\ast
    (\omega)
    & =
    (p_A)_\ast e^{ \mathcal{P} }  (p_B)^\ast
    \big(
      \omega_{\mathrm{nw}}
      +
      b^B \wedge \omega_{\mathrm{w}}
    \big)
    \\
    & =
    (\pi^B)_\ast
    \big(\;
      \underset{e^{\mathcal{P}}}{\underbrace{(1 + b^A \wedge b^B)}}
      \wedge
      (
        \omega_{\mathrm{nw}}
        +
        b^B \wedge \omega_{\mathrm{w}}
      )
    \big)
    \\
    & =
    - \omega_{\mathrm{w}} +   b^A \wedge \omega_{\mathrm{nw}}
    \;\;\;
    \in \mathrm{CE}(\hat{\mathfrak{g}}^A)
    \,.
  \end{aligned}
\end{equation}
This says that the transform just swaps the wrapping/non-wrapping components of cochains, up to a sign.
Hence it is manifestly a linear isomorphism on cochains.

For general $k$ the argument is directly analogous: Generally, each cochain is expanded in coefficients of
monomials $b^{B}_{j_1} \wedge \cdots \wedge b^{B}_{j_r}$, and the operation
$(\pi^B)_\ast e^{\mathcal{P}}$ amounts to re-interpreting these as coefficients of the corresponding
Hodge dual powers of the $b^A$'s. Since every monomial in the $b^B$'s has a unique Hodge dual
monomial in the $b^A$'s, this is still a linear isomorphism on cochains.

Therefore, to conclude it is sufficient to see that
the transform operation intertwines the twisted differentials
$$
  (d + h^A)
    \circ
  (\pi_A)_\ast e^{\mathcal{P}} (\pi_B)^\ast
  =
  -
  (\pi_A)_\ast e^{\mathcal{P}} (\pi_B)^\ast
    \circ
  (d + h^B)
  \,.
$$
We may check this as follows:
\begin{equation}
  \label{PullPushIntertwinedTwistedDifferentials}
  \begin{aligned}
    (d + h^A)
    \;
    (p_A)_\ast e^{\mathcal{P}} (p_B)^\ast (\omega)
    &
    =
    -
    (p_A)_\ast
    (d + h^A)
    e^{\mathcal{P}}
    (p_B)^\ast (\omega)
    \\
    & =
    -
    (p_A)_\ast
    e^{\mathcal{P}}
    (d + \underset{h^B}{\underbrace{h^A + d \mathcal{P}}})
    (p_B)^\ast (\omega)
    \\
    & =
    -
    (p_A)_\ast
    e^{\mathcal{P}}
    (p_B)^\ast
    \;
    (d + h^B)
   (\omega)
   \,.
  \end{aligned}
\end{equation}
Here in the first step we used that the plain differential graded-commutes with fiber integration by (\ref{FiberIntegrationRespectsDifferentials}),
as does multiplication by $h^A$, trivially. Then
under the brace we applied the defining condition (\ref{PoincareFormTrivializesDifferenceBetweenhBAndhA}) for a correspondence.
\end{proof}

\begin{cor}[Higher topological T-duality in twisted cohomology]
  \label{HigherTopologicalTDualityInTwistedCohomology}
Since every higher T-duality correspondence (Def. \ref{HigherTDualityCorrespondences}) is in particular
  a correspondence of cocycles in the sense of Def. \ref{Correspondences}, its induced
  pull-push transform (Def. \ref{def-pullpush}) is an isomorphism in higher twisted cohomology, by Theorem \ref{HigherTDuality}.
    We call this the actual \emph{higher topological T-duality} induced by the T-duality correspondence:
  $$
    \xymatrix@R=5pt{
      H^{(\bullet + 2n+1) + h^A}(\hat{\mathfrak{g}}^A, \mathbb{R})
      \ar@{<-}[rrrr]^-{ {\cal T} := (p_A)_\ast e^{ \mathcal{P} }  (p_B)^\ast}_-{\simeq}
      &&&&
      H^{\bullet+h^B}(\hat{\mathfrak{g}}^B, \mathbb{R})
      \\
      [- \omega_{\mathrm{w}} + b^A \wedge \omega_{\mathrm{nw}})]
      \;
      \ar@{<-|}[rrrr]
      &&&&
      \;[\omega_{\mathrm{nw}} + b^B \wedge \omega_{\mathrm{w}}]\;.
    }
  $$
  \end{cor}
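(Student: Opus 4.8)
The plan is to derive the Corollary as an essentially immediate consequence of Theorem \ref{HigherTDuality}, together with the explicit form of the twisting cocycles supplied by Prop. \ref{ClassifyingDataHigherTDuality}. First I would note that a higher T-duality correspondence in the sense of Def. \ref{HigherTDualityCorrespondences} is, by its very definition, a correspondence of cocycles in the sense of Def. \ref{Correspondences}: the data already contains the Poincar\'e form $\mathcal{P}$ trivializing the difference $(p_B)^\ast(h^B) - (p_A)^\ast(h^A)$, while the pairing condition \eqref{HigherTDualityPairingCondition} is an \emph{additional} constraint, not a relaxation. Hence Theorem \ref{HigherTDuality}(i) applies with no modification and yields the stated isomorphism $\mathcal{T} = (p_A)_\ast\, e^{\mathcal{P}}\, (p_B)^\ast$ in twisted cohomology, with the degree shift exactly as recorded there. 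In the $K$-equivariant situation one invokes Theorem \ref{HigherTDuality}(ii) instead, the required invariance hypotheses on $\widevec{\mathrm{dd}}^{A/B}$ and on $h^{A/B}$ being precisely the equivariant refinement of the correspondence data.

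It remains to make the pull-push transform explicit in the form displayed in the Corollary. By Prop. \ref{ClassifyingDataHigherTDuality} the correspondence is classified by cocycles $\widevec{\mathrm{dd}}^{A/B}$ on $\mathfrak{g}$ together with a primitive $h_{\mathfrak{g}}$ of $-\big\langle \widevec{\mathrm{dd}}^A \wedge \widevec{\mathrm{dd}}^B\big\rangle$, so that the twisting cocycles take the form \eqref{DecompositonOfHInTDualityCorrespondence}. Writing an arbitrary cochain on $\widehat{\mathfrak{g}}^B$ in its unique wrapping/non-wrapping decomposition $\omega = \omega_{\mathrm{nw}} + b^B \wedge \omega_{\mathrm{w}}$ as in Remark \ref{wrap-unwrap}, the computation \eqref{PullTensorPushActionOnGeneralElement} already carried out inside the proof of Theorem \ref{HigherTDuality} gives, for $k = 1$, that $\mathcal{T}(\omega) = -\omega_{\mathrm{w}} + b^A \wedge \omega_{\mathrm{nw}}$, which is exactly the formula in the statement. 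For general $k$ one expands in monomials in the $b^B_j$ and uses that $e^{\mathcal{P}}$ followed by partial fiber integration sends each such monomial to its complementary monomial in the $b^A_j$ — again as recorded in the proof of Theorem \ref{HigherTDuality}; the displayed formula is the $k=1$ shadow of this.

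The only point demanding genuine care — and hence the mild obstacle I anticipate — is the sign and degree bookkeeping in the general $k$ case: one must verify that the passage from a monomial $b^B_{j_1}\wedge\cdots\wedge b^B_{j_r}$ to its complementary monomial in the $b^A_j$'s, effected by $(p_A)_\ast\circ e^{\mathcal{P}}\circ (p_B)^\ast$, produces Koszul signs that are consistent across all monomials, so that the result is a genuine chain isomorphism rather than merely a degreewise bijection up to sign. This is tracked exactly as in \eqref{PullTensorPushActionOnGeneralElement}, and non-degeneracy of the pairing $\big\langle -,-\big\rangle$ (Def. \ref{HigherCupProducts}) guarantees that the complementary monomial is unique, so no ambiguity survives. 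The remaining assertion, that $\mathcal{T}$ intertwines the twisted differentials, is already contained verbatim in \eqref{PullPushIntertwinedTwistedDifferentials}, so nothing further is needed.
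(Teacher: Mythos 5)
Your proposal is correct and follows essentially the same route as the paper: the corollary is exactly the observation that a higher T-duality correspondence (Def.~\ref{HigherTDualityCorrespondences}) is a special case of a correspondence of cocycles (Def.~\ref{Correspondences}), so Theorem~\ref{HigherTDuality} applies verbatim, with the displayed formula read off from the computation \eqref{PullTensorPushActionOnGeneralElement} (and its general-$k$ Hodge-dual-monomial extension) inside that theorem's proof. Your extra remarks on Prop.~\ref{ClassifyingDataHigherTDuality} and the component-wise sign bookkeeping correspond to what the paper defers to Remark~\ref{ComponentWiseAnalysisOfHigherTDuality}, so nothing is missing.
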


  \begin{remark}[component-wise analysis of higher T-duality]
  \label{ComponentWiseAnalysisOfHigherTDuality}
  While the slick computation in (\ref{PullPushIntertwinedTwistedDifferentials}) implies that this map on cochains
  indeed respects the twisted differentials, it is instructive to check this alternatively
  in components, in terms of the characteristic condition (\ref{HigherTDualityPairingCondition})
  on a higher T-duality correspondence. This will pave the way for the generalization of higher
  T-duality to higher T-duality of decomposed form fields in Theorem
  \ref{HigherTDualityForDecomposedFields} below:

  Condition (\ref{HigherTDualityPairingCondition}) implies that the cocycles $h^{A/B}$ decompose as in (\ref{DecompositonOfHInTDualityCorrespondence}).
  Using this and collecting coefficients of $b^A$ and $b^B$, one obtains the respect for the twisted
  differentials under $\mathcal{T}$ as follows:
  \begin{equation}
  \label{HigherTDualityTransformationInComponents}
  \hspace{-4mm}
   \begin{aligned}
     (d + h^A) {\cal T}(\omega)
    & =
    (d + h^A)
    (- \omega_{\mathrm{w}} + b^A \wedge \omega_{\mathrm{nw}})
    \\
    & =
    -
    \Big(
      -
      \underset{
        = \left((d + h^B)\omega\right)_{\mathrm{w}}
      }{
      \underbrace{
      \big(
        -
        d \omega_{\mathrm{w}}
        -
        h_{\mathfrak{g}} \wedge \omega_{\mathrm{w}}
        +
        \mathrm{dd}^A \wedge \omega_{\mathrm{nw}}
      \big)
      }
      }
      +
      b^A
      \wedge
      \underset{
        = \left((d + h^B)\omega\right)_{\mathrm{nw}}
      }{
      \underbrace{
      \big(
        d \omega_{\mathrm{nw}}
        +
        h_{\mathfrak{g}} \wedge \omega_{\mathrm{nw}}
        +
        \mathrm{dd}^B \wedge \omega_{\mathrm{w}}
      \big)
      }}\;
    \Big)
    \\
    & =
    -{\cal T}\big( (d + h^B)(\omega) \big)
    \,.
  \end{aligned}
\end{equation}
\end{remark}

\subsection{Higher T-duality for decomposed form fields}
\label{HigherTDualityForDecomposedFormFields}

Corollary \ref{HigherTopologicalTDualityInTwistedCohomology} shows that every higher topological T-duality correspondence (Def. \ref{HigherTDualityCorrespondences}) induces an isomorphism in higher twisted super $L_\infty$-cohomology (Def. \ref{TwistedCohomology}), obtained there as an example of a general class of pull-push transforms through correspondences of cocycles
(Theorem \ref{HigherTDuality}). But the computation in \eqref{HigherTDualityTransformationInComponents} shows that this T-duality isomorphism
may alternatively be understood without reference to either the correspondence space or the Poincar{\'e} form on it (Def. \ref{higherPoincareForm}).
Instead, a brief reflection on (\ref{HigherTDualityTransformationInComponents}) reveals that this alternative proof relies, apart from the T-duality condition \eqref{HigherTDualityPairingCondition} itself,
only on the fact that in the Chevalley-Eilenberg algebras of $\widehat{g}^{A/B}$ (see \eqref{CEAlgebraOfddABExtensions}) every cochain
has a \emph{unique} decomposition of the form $\omega = \omega_{\mathrm{nw}} + b^{A/B} \wedge \omega_{\mathrm{w}}$ with
\emph{unique} coefficients $\omega_{\mathrm{nw}}, \omega_{\mathrm{w}}$; see expression  \eqref{DecompositionOfCochainIntoWrappingAndNonWrappingComponent}.

\medskip
While such a unique decomposition is of course immediate for generators $b^{A/B}$ in a free graded-commutative
algebra (re-amplified as Example \ref{TrivialHCohomologyOfGeneratorInHigherCentralExtension} below)
it is not restricted to this situation. Indeed, the same happens equivalently (Prop. \ref{VanishingHCohomologyEquivalentToUniqueExpansionInH} below) in algebras on which the \emph{C-cohomology} (Def. \ref{CCohomology} below) of the given element vanishes. Hence in this situation higher topological T-duality generalizes; this is Theorem \ref{HigherTDualityForDecomposedFields} below. Below in Section \ref{SphericalTDualityOnExceptionalMTheorySpacetime} we discover a curious example of this
theorem in M-brane physics.

\begin{defn}[C-cohomology]
  \label{CCohomology}
  Let $\mathcal{A}$ be a graded-commutative algebra in characteristic zero, and let
  $
    C \in \mathcal{A}
  $
  be an element in odd degree, hence multiplicatively nilpotent:
  $$
    C^2 := C\cdot C = 0 \;\in\;\mathcal{A}
    \,.
  $$
  Then the cochain cohomology of the resulting complex with differential $C \cdot (-)$
  we call the \emph{C-cohomology} $H(\mathcal{A},C)$ of $C$:
  $$
    H(\mathcal{A},C)
    :=
    \frac{
       \mathrm{ker}(C\cdot(-))
    }
    {
      \mathrm{im}(C\cdot(-))
    }
    \,.
  $$
\end{defn}
For closed elements of degree-3 in a de Rham algebra of differential forms, this cohomology is called `H-cohomology'' in \cite[p. 19]{Cav05}, but of course the concept as such is elementary and appears elsewhere under different names or under no special name, e.g. \cite[p. 1]{Sev05}. Since the letter ``$H$'' is
alluding to the NS-NS field strength for the string, which is closed, for emphasis we use ``$C$'' to allude instead to the supergravity C-field,
which is not, in general, closed. So our formulation is more general than previous ones.
With the decomposition into wrapping and non-wrapping modes (Remark \ref{wrap-unwrap}), we have
the following.

\begin{prop}[Vanishing C-cohomology equivalent to unique expansions in $C$]
  \label{VanishingHCohomologyEquivalentToUniqueExpansionInH}
  Let $\mathcal{A}$ be a graded-commutative algebra in characteristic zero, and let
  $
    C \in \mathcal{A}
  $
  be an element in odd degree.
  Then the following are equivalent:
    \item {\bf (i)} There exists a linear subspace $\mathcal{A}_0{\hookrightarrow} \mathcal{A}$ such that
     every element $\omega \in \mathcal{A}$ has an expansion
     $$
       \omega = \omega_{\mathrm{nw}} + C \cdot \omega_{\mathrm{w}}
     $$
     for \emph{unique} $\omega_{\mathrm{w}}, \omega_{\mathrm{nw}} \in \mathcal{A}_0$.
    \item {\bf (ii)}
      The C-cohomology of $C$ (Def. \ref{CCohomology}) vanishes.
\end{prop}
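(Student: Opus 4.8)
The plan is to phrase everything in terms of the single linear operator $\mu_C := C\cdot(-)\colon \mathcal{A}\to\mathcal{A}$. Since $C$ has odd degree, $C^2=0$ (as recorded in Def.~\ref{CCohomology}), so $\mu_C\circ\mu_C = 0$ and hence $\mathrm{im}\,\mu_C\subseteq\mathrm{ker}\,\mu_C$ automatically; thus condition~(ii), vanishing of the $C$-cohomology of $C$, is precisely the equality $\mathrm{im}\,\mu_C=\mathrm{ker}\,\mu_C$. I would then prove the two implications separately, using only elementary linear algebra of $\mu_C$ over the ground field, with the multiplicative and graded structure of $\mathcal{A}$ playing no role beyond ensuring $C^2=0$.

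For (ii)$\Rightarrow$(i): choose a linear complement $V$ of $\mathrm{ker}\,\mu_C$ in $\mathcal{A}$, so that $\mathcal{A}=\mathrm{ker}\,\mu_C\oplus V$. Then $\mu_C|_V\colon V\to\mathrm{im}\,\mu_C$ is a linear isomorphism (surjective because $\mu_C(\mathcal{A})=\mu_C(V)$, injective because $V\cap\mathrm{ker}\,\mu_C=0$). Under the hypothesis $\mathrm{im}\,\mu_C=\mathrm{ker}\,\mu_C$ this yields $\mathcal{A}=V\oplus\mu_C(V)$, and I would set $\mathcal{A}_0:=V$. The desired expansion $\omega=\omega_{\mathrm{nw}}+C\cdot\omega_{\mathrm{w}}$ with $\omega_{\mathrm{nw}},\omega_{\mathrm{w}}\in V$ then exists by this direct-sum decomposition together with surjectivity of $\mu_C|_V$, and its uniqueness follows from the directness of $V\oplus\mu_C(V)$ together with injectivity of $\mu_C|_V$.

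For (i)$\Rightarrow$(ii): from the existence-and-uniqueness of the expansion I would first note that $\mu_C$ is injective on $\mathcal{A}_0$, since a relation $C\cdot b=0$ with $b\in\mathcal{A}_0$ would give the element $0$ the two expansions $0+C\cdot b$ and $0+C\cdot 0$, forcing $b=0$. Now take any $\omega\in\mathrm{ker}\,\mu_C$ and write $\omega=\omega_{\mathrm{nw}}+C\cdot\omega_{\mathrm{w}}$ with $\omega_{\mathrm{nw}},\omega_{\mathrm{w}}\in\mathcal{A}_0$; applying $\mu_C$ and using $\mu_C^2=0$ gives $C\cdot\omega_{\mathrm{nw}}=0$, whence $\omega_{\mathrm{nw}}=0$ by the injectivity just noted, so $\omega=C\cdot\omega_{\mathrm{w}}\in\mathrm{im}\,\mu_C$. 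This shows $\mathrm{ker}\,\mu_C\subseteq\mathrm{im}\,\mu_C$, hence equality, which is the vanishing of the $C$-cohomology. I do not expect any genuine obstacle here — the statement is essentially the splitting lemma for the two-term complex $(\mathcal{A},\mu_C)$ — the only point requiring a little care being to extract from the word ``unique'' the injectivity of $\mu_C$ on $\mathcal{A}_0$ and the relevant direct-sum decomposition.
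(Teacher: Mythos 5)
Your proof is correct and follows essentially the same route as the paper's: both directions are handled by elementary linear algebra of the operator $C\cdot(-)$, with (i) recast as a direct-sum decomposition $\mathcal{A}\simeq\mathcal{A}_0\oplus C\cdot\mathcal{A}_0$ (your complement of $\ker(C\cdot)$ coincides, under hypothesis (ii), with the paper's complement of $\mathrm{im}(C\cdot)$), and the uniqueness clause used exactly as in the paper to force $\omega_{\mathrm{nw}}=0$ for a cocycle. The paper additionally records a more abstract variant via splitting the long exact sequence, but that is only an alternative to the argument you gave.
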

\begin{proof}
  Having a unique expansion of the form $\omega_{\mathrm{nw}} + C \cdot \omega_{\mathrm{w}}$ with $\omega_{\mathrm{w}}, \omega_{\mathrm{nw}} \in \mathcal{A}_0$ for every $\omega$ in $\mathcal{A}$
  is equivalent to saying that the linear map
  \begin{align*}
  \phi_C\colon \mathcal{A}_0\oplus\mathcal{A}_0&\longrightarrow \mathcal{A}\\
  ( \omega_{\mathrm{nw}},\omega_{\mathrm{w}})&\longmapsto \omega_{\mathrm{nw}} + C \cdot \omega_{\mathrm{w}}
  \end{align*}
  is an isomorphism.
  In one direction, assume that $\phi_C$ is an isomorphism, and
  consider an element $\omega \in \mathcal{A}$ which is a C-cocycle, i.e., such that $C \cdot \omega = 0$.
  By the nilpotency of $C$ this implies $C \cdot \omega_{\mathrm{nw}} = 0$, and so $\phi_C(0,\omega_{\mathrm{nw}})=\phi_C(0,0)$. As we are assuming $\phi_C$ is an isomorphism, we get
  $\omega_{\mathrm{nw}} = 0$. This in turn means that $\omega = C \cdot \omega_{\mathrm{w}}$,
  hence that $\omega$ is a C-coboundary. This shows that the
  C-cohomology vanishes.

  Conversely, assume that the C-cohomology vanishes, i.e., $\mathrm{ker}(C\cdot(-))=\mathrm{im}(C\cdot(-))$, and let $\mathcal{A}_0\subseteq \mathcal{A}$ be a linear complement of $\mathrm{im}(C\cdot(-))$, so that we have a linear direct sum decomposition $\mathcal{A}=\mathcal{A}_0\oplus \mathrm{im}(C\cdot(-))$. As $C\cdot C=0$, this implies that $C\cdot \mathcal{A}=C\cdot\mathcal{A}_0$, and so the map $\phi_C\colon \mathcal{A}_0\oplus \mathcal{A}_0\to \mathcal{A}$ is surjiective. If $( \omega_{\mathrm{nw}},\omega_{\mathrm{w}})\in \ker\phi_C$, then $\omega_{\mathrm{nw}}+C\cdot\omega_{\mathrm{w}}=0$ with $\omega_{\mathrm{nw}}\in \mathcal{A}_0$ and $C\cdot\omega_{\mathrm{w}}\in \mathrm{im}(C\cdot(-))$. Since $\mathcal{A}_0$ is a linear complement of $\mathrm{im}(C\cdot(-))$ in $\mathcal{A}$, this gives
  $\omega_{\mathrm{nw}}=0$ and $C\cdot\omega_{\mathrm{w}}=0$. The second equation gives $\omega_{\mathrm{w}}\in \mathcal{A}_0\cap \mathrm{ker}(C\cdot(-))= \mathcal{A}_0\cap\mathrm{im}(C\cdot(-))=0$. So $\phi_C$ is injective and therefore an isomorphism.
  This finishes the proof.

  Alternatively, we may argue for the converse direction more abstractly as follows.
  That the C-cohomology vanishes means that
  we have a long exact sequence of vector spaces
  \[
  \cdots\longrightarrow \mathcal{A}\xrightarrow{\;\;C\cdot\;\;} \mathcal{A}\xrightarrow{\;\;C\cdot\;\;}
   \mathcal{A}\xrightarrow{\;\;C\cdot\;\;}\mathcal{A}\longrightarrow \cdots.
  \]
   Since exact sequences of vector spaces split, this fits into the commutative diagram
  $$
    \hspace{-1mm}
    \xymatrix@R=2em{
      \cdots
      \ar[r]
      &
      \mathcal{A}
      \ar[rr]^{C \cdot}
      \ar@{->>}[dr]_{C\cdot}
      &&
      \mathcal{A}
      \ar[rr]^{C \cdot}
      \ar@{->>}[dr]_{C\cdot}
      &&
      \mathcal{A}
        \ar[r]
      &
      \cdots
      \\
      &&
      (C)
      \ar@{^{(}->}[ur]^{\scriptscriptstyle{~~\iota\,\,}}
      \ar@{^{(}->}[dr]^{\scriptscriptstyle{(\mathrm{id},0)\,\,}}
      \ar@{_{(}->}[dl]_-{\scriptscriptstyle{~~(0,(C\cdot)^{-1})\,\,}}
      &&
      (C)
      \ar@{^{(}->}[dr]^{\scriptscriptstyle{(\mathrm{id},0)\,\,}}
      \ar@{^{(}->}[ur]^{\scriptscriptstyle{~~\iota\,\,}}
      \ar@{_{(}->}[dl]_-{\scriptscriptstyle{~~(0,(C\cdot)^{-1})\,\,}}
      \\
      \cdots
      \ar[r]
      &
      (C)
      \oplus
      \mathcal{A}/(C)
      \ar[uu]^{\iota+\sigma}_{\simeq}
      \ar[rr]_{\footnotesize
        \begin{pmatrix} 0 & 0 \\ C\cdot & 0 \end{pmatrix}      }
      &&
      (C)
      \oplus
      \mathcal{A}/(C)
      \ar[uu]^{\iota+\sigma}_{\simeq}
      \ar[rr]_{\footnotesize
          \begin{pmatrix} 0 & 0 \\ C\cdot & 0 \end{pmatrix}
            }
      &&
      (C) \oplus \mathcal{A}/(C)
      \ar[uu]^{\iota+\sigma}_{\simeq}
      \ar[r]
      &
      \cdots
    }
  $$
  where $C=\mathrm{im}(C\cdot)$ is the ideal of $\mathcal{A}$ generated by the element $C$, the morphism $\iota\colon(C)\hookrightarrow \mathcal{A}$ is the inclusion, and $\sigma\colon \mathcal{A}/(C)\hookrightarrow \mathcal{A}$ is a linear section of the natural projection $\mathcal{A}\to \mathcal{A}/(C)$. The isomorphism $(C\cdot)^{-1}\colon (C)\xrightarrow{\sim} \mathcal{A}/(C)$ is the inverse of the natural isomorphism
   $$
      \xymatrix{
      \mathcal{A}/(C)= \mathcal{A}/\mathrm{im}(C\cdot)=\mathcal{A}/\ker(C\cdot)
      \ar[r]^-{C\cdot }_-{\simeq}&
       \mathrm{im}(C\cdot)=(C)
       }
  $$
  induced by the vanishing of C-cohomology. Under this identification between $(C)$ and $\mathcal{A}/(C)$, the isomorphisms in the above diagram become
  \begin{equation}
    \label{VanishingCCohomologyIsomorphism}
    \xymatrix{
      \mathcal{A}/(C) \oplus \mathcal{A}/(C)
      \ar[rr]^-{ (C\cdot)+ \sigma }_-\simeq
      &&
      \mathcal{A}
    }
    \,.
  \end{equation}
 As a result, setting $\mathcal{A}_0=\sigma(\mathcal{A}/(C))$, one sees that there is a unique decomposition of elements of $\mathcal{A}$ as claimed.
 \end{proof}

The following example
highlights the case in which C-cohomology of an element $C$ vanishes
for the trivial reason that the element $C$ is a free generator.
The purpose of the concept of C-cohomology here is to allow for generalization away from this example.

\begin{example}[Trivial C-cohomology of fiber generator in higher central extension]
  \label{TrivialHCohomologyOfGeneratorInHigherCentralExtension}
  Let
  $$
    \xymatrix{
      b^{2n}\mathfrak{u}_1
      \ar@{^{(}->}[r]
      & \widehat{\mathfrak{g}}
      \ar[d]_-{\mathrm{hofib}(\mathrm{dd})}
      \\
      &
      \mathfrak{g}
      \ar[rr]^-{\mathrm{dd}}
      &&
      b^{2n+1} \mathfrak{u}_1
    }
  $$
  be a higher central extension, classified by a cocycle $\mathrm{dd}$ in even degree. Then the
  element
  $
    b \in \mathrm{CE}(\widehat{\mathfrak{g}})
    $
  of degree $2n+1$, from Prop. \ref{hofibofSuperLInfinityAlgebras}, has vanishing C-cohomology (Def. \ref{CCohomology}).
This is because $b$ is a free generator in the underlying graded-commutative algebra of
  $\mathrm{CE}(\widehat{\mathfrak{g}}) = \mathrm{CE}(\mathfrak{g})[b]/(d b = \mathrm{dd})$, which
  by definition means that every element $\omega \in \mathrm{CE}(\widehat{\mathfrak{g}})$ has a unique expansion
  $$
    \omega =\omega_{\mathrm{nw}} + b \wedge \omega_{\mathrm{w}}
  $$
  for unique
  $$
  \xymatrix{
    \omega_{\mathrm{nw}}, \; \omega_{\mathrm{w}}
    \;\in\;
    \mathrm{CE}( \mathfrak{g} )
      \;\ar@{^{(}->}[r] &
    \mathrm{CE}( \mathfrak{g} )[b]
    }\,,
  $$
  as required by Prop. \ref{VanishingHCohomologyEquivalentToUniqueExpansionInH}.
 This unique expansion is used notably in the definition of fiber integration along higher central extensions in Def. \ref{FiberIntegration}.
\end{example}

\medskip

With appeal to vanishing C-cohomology, we may thus generalize the concept of higher T-duality:

\begin{theorem}[Higher T-duality for decomposed fields]
  \label{HigherTDualityForDecomposedFields}
 Consider a higher self-T-duality correspondence according to Def. \ref{HigherTDualityCorrespondences}
$$
  \xymatrix@R=2.5em{
    b^{4n+2}\mathfrak{u}_1
    &
    \hat{\mathfrak{g}}
    \ar[dr]_{\pi := \mathrm{hofib}({\mathrm{dd}})}
    \ar[l]_-{h}
    &&
    {\hat{\mathfrak{g}}}
    \ar[dl]^{\pi := \mathrm{hofib}( {\mathrm{dd}}) }
    \ar[r]^-{h}
    &
    b^{4n+2}\mathfrak{u}_1
    \\
    &
    & \mathfrak{g}
    \ar[dl]_{ {\mathrm{dd}} }
    \ar[dr]^{ {\mathrm{dd}} }
    \\
    &
    b^{2n+1} \mathfrak{u}_1 && b^{2n+1}  \mathfrak{u}_1
  }
$$
possibly equipped with the action of a group $K$ (Def. \ref{SuperLInfinityAlgebraWithGroupAction}) such that all
cocycles
are $K$-invariant (Example \ref{InvariantCocycle})
Let
$
  \xymatrix{
    \mathfrak{e}
    \ar[r]
    &
    \mathfrak{g}
  }
$
be a fibration over the base, equipped with a $K$-invariant transgression element (Def. \ref{TransgressionElements})
$$
  \mathrm{cs} \;\in\; \mathrm{CE}(\mathfrak{e})^K
$$
for $\mathrm{dd}$, where such that
  \item {\bf (a)} the C-cohomology (Def. \ref{CCohomology}) of the transgression element $\mathrm{cs}$ vanishes in $\mathrm{CE}(\mathfrak{e})^K$;
  \item {\bf (b)} the  inclusion $\iota$ from Prop. \ref{VanishingHCohomologyEquivalentToUniqueExpansionInH} may be found
    such as to contain the CE-algebra of the base
    $$
      \xymatrix{
        \mathrm{CE}(\mathfrak{e})^G/(\mathrm{cs})
       \; \ar@{^{(}->}[rr]^-{\iota}
        &&
        \mathrm{CE}(\mathfrak{e})^{K}\;.
        \\
        &
        \mathrm{CE}(\mathfrak{g})^K
        \ar@{_{(}->}[ul]
        \ar@{^{(}->}[ur]
      }
    $$
Then there is an isomorphism of $(\phi^\ast(h)$-twisted cohomology groups (Def. \ref{TwistedCohomology}) which covers the higher T-duality isomorphism from Example \ref{HigherTopologicalTDualityInTwistedCohomology},
in that we have a diagram:
$$
  \xymatrix@R=5pt{
    H^{ (\bullet + 2n +1) + h^A}(\widehat{\mathfrak{g}}^A)^K
    \ar[ddd]_-{(\phi^A)^\ast}
    \ar@{<-}[rrr]^-{ {\cal T} }_-{\simeq}
    &&&
    H^{\bullet+h^B}(\widehat{\mathfrak{g}})^K
    \ar[ddd]^-{(\phi^B)^\ast}
    \\
    \\
    \\
    H^{ (\bullet + 2n + 1)+ h}( \mathfrak{e})^K
    \ar@{<-}[rrr]^-{ {\cal T}_{\mathrm{comp}} }_-{\simeq}
    &&&
    H^{\bullet+h^B}(\mathfrak{e}^{B})^K
    \\
    [- \omega_{\mathrm{w}} + (\phi)^\ast(b) \wedge \omega_{\mathrm{nw}})]
    \;
    \ar@{<-|}[rrr]
    &&&
    \;[ \omega_{\mathrm{nw}} + (\phi)^\ast(b) \wedge \omega_{\mathrm{w}} ]\;.
  }
$$
Here the vertical morphisms come from pullback along the classifying morphisms (\ref{ClassifyingMolrphismsForTransgressionElements})
$$
  \xymatrix{
    \mathfrak{e}
      \ar[rr]^-{\phi}
    &&
      \widehat{\mathfrak{g}}
  }
$$
according to Example \ref{HigherCentralExtensionTransgressionElement}
\end{theorem}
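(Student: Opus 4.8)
The plan is to transplant the component-wise argument for higher topological T-duality (Remark~\ref{ComponentWiseAnalysisOfHigherTDuality}, computation~\eqref{HigherTDualityTransformationInComponents}) from $\mathrm{CE}(\widehat{\mathfrak{g}})^K$, where the fiber generator $b$ is free and hence every cochain has a \emph{unique} wrapping/non-wrapping expansion (Example~\ref{TrivialHCohomologyOfGeneratorInHigherCentralExtension}), to $\mathrm{CE}(\mathfrak{e})^K$, where the transgression element $\mathrm{cs}=\phi^\ast(b)$ is no longer free but --- by hypothesis (a) together with Prop.~\ref{VanishingHCohomologyEquivalentToUniqueExpansionInH} --- still admits a unique expansion $\omega=\omega_{\mathrm{nw}}+\mathrm{cs}\wedge\omega_{\mathrm{w}}$ with coefficients in a chosen complement $\mathcal{A}_0\subseteq\mathrm{CE}(\mathfrak{e})^K$ of the ideal $(\mathrm{cs})$. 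Hypothesis (b) will be used to arrange that this $\mathcal{A}_0$ is a $d$-stable subalgebra containing $q^\ast\mathrm{CE}(\mathfrak{g})^K$, where $q:=\pi\circ\phi\colon\mathfrak{e}\to\mathfrak{g}$ is the structure fibration.

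First I would fix the maps and structure equations. By Example~\ref{HigherCentralExtensionTransgressionElement} the transgression element $\mathrm{cs}$ is classified by a morphism of fibrations $\phi\colon\mathfrak{e}\to\widehat{\mathfrak{g}}$ over $\mathfrak{g}$ with $\phi^\ast(b)=\mathrm{cs}$; $K$-invariance of $\mathrm{cs}$ makes $\phi$ $K$-equivariant, so $\phi^\ast$ restricts to a dgc-algebra morphism $\mathrm{CE}(\widehat{\mathfrak{g}})^K\to\mathrm{CE}(\mathfrak{e})^K$ which automatically intertwines $d+h\wedge(-)$ with $d+\phi^\ast(h)\wedge(-)$; this is the vertical map $(\phi)^\ast$ in the claimed square. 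Since the correspondence is self-T-dual with a single twisting class ($k=1$, with $\widevec{\mathrm{dd}}^A=\widevec{\mathrm{dd}}^B=:\mathrm{dd}$), Prop.~\ref{ClassifyingDataHigherTDuality} gives $h=\pi^\ast(h_{\mathfrak{g}})+b\wedge\pi^\ast(\mathrm{dd})$ with $d\,h_{\mathfrak{g}}=-\langle\mathrm{dd}\wedge\mathrm{dd}\rangle$; pulling back along $\phi$ and using the transgression identity~\eqref{TransgressionElementCondition} in the form $d\,\mathrm{cs}=q^\ast(\mathrm{dd})$ yields
\[
  \phi^\ast(h)\;=\;q^\ast(h_{\mathfrak{g}})+\mathrm{cs}\wedge q^\ast(\mathrm{dd}),
  \qquad
  d\,\mathrm{cs}=q^\ast(\mathrm{dd}),
  \qquad
  d\,q^\ast(h_{\mathfrak{g}})=-\big\langle q^\ast(\mathrm{dd})\wedge q^\ast(\mathrm{dd})\big\rangle .
\]
Setting $h_{\mathfrak{e}}:=q^\ast(h_{\mathfrak{g}})$ and $\mathrm{dd}_{\mathfrak{e}}:=q^\ast(\mathrm{dd})$, both lying in $\mathcal{A}_0$ by~(b), these are exactly the relations played by $b$, $\mathrm{dd}$ and $h_{\mathfrak{g}}$ in~\eqref{HigherTDualityTransformationInComponents}, now holding inside $\mathrm{CE}(\mathfrak{e})^K$.

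I would then define $\mathcal{T}_{\mathrm{comp}}\colon\mathrm{CE}(\mathfrak{e})^K\to\mathrm{CE}(\mathfrak{e})^K$, $\omega=\omega_{\mathrm{nw}}+\mathrm{cs}\wedge\omega_{\mathrm{w}}\mapsto -\omega_{\mathrm{w}}+\mathrm{cs}\wedge\omega_{\mathrm{nw}}$ (coefficients in $\mathcal{A}_0$); this is a linear bijection of degree $2n+1$ with $\mathcal{T}_{\mathrm{comp}}^2=-\mathrm{id}$, formally identical to the map of Cor.~\ref{HigherTopologicalTDualityInTwistedCohomology}. The key step is to rerun the computation~\eqref{HigherTDualityTransformationInComponents} verbatim in $\mathrm{CE}(\mathfrak{e})^K$: expanding $(d+\phi^\ast(h))\,\mathcal{T}_{\mathrm{comp}}(\omega)$ with the help of $\mathrm{cs}^2=0$, $d\,\mathrm{cs}=\mathrm{dd}_{\mathfrak{e}}$ and $d\,h_{\mathfrak{e}}=-\langle\mathrm{dd}_{\mathfrak{e}}\wedge\mathrm{dd}_{\mathfrak{e}}\rangle$, and collecting the $\mathcal{A}_0$- and $\mathrm{cs}\wedge\mathcal{A}_0$-components, one reads off $(d+\phi^\ast(h))\,\mathcal{T}_{\mathrm{comp}}=-\,\mathcal{T}_{\mathrm{comp}}\,(d+\phi^\ast(h))$, so $\mathcal{T}_{\mathrm{comp}}$ descends to an isomorphism on $\phi^\ast(h)$-twisted cohomology (Def.~\ref{TwistedCohomology}). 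Commutativity of the square then follows at once: for $\omega=\omega_{\mathrm{nw}}+b\wedge\omega_{\mathrm{w}}\in\mathrm{CE}(\widehat{\mathfrak{g}})^K$ with $\omega_{\mathrm{nw}},\omega_{\mathrm{w}}\in\mathrm{CE}(\mathfrak{g})^K$, the pullback $\phi^\ast(\omega)=q^\ast(\omega_{\mathrm{nw}})+\mathrm{cs}\wedge q^\ast(\omega_{\mathrm{w}})$ is already in normal form with coefficients in $q^\ast\mathrm{CE}(\mathfrak{g})^K\subseteq\mathcal{A}_0$, whence $\mathcal{T}_{\mathrm{comp}}(\phi^\ast\omega)=-q^\ast(\omega_{\mathrm{w}})+\mathrm{cs}\wedge q^\ast(\omega_{\mathrm{nw}})=\phi^\ast(\mathcal{T}\omega)$.

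The step I expect to be the main obstacle is ensuring that the intermediate expressions produced by the transplanted computation --- the analogues of $((d+h)\omega)_{\mathrm{nw}}$ and $((d+h)\omega)_{\mathrm{w}}$ in~\eqref{HigherTDualityTransformationInComponents}, which are built from $d\omega_{\mathrm{nw}}$, $d\omega_{\mathrm{w}}$ and wedge products with $\mathrm{dd}_{\mathfrak{e}}$ and $h_{\mathfrak{e}}$ --- again lie in $\mathcal{A}_0$, so that $\mathcal{T}_{\mathrm{comp}}$, defined via the \emph{unique} $\mathcal{A}_0$-decomposition, may legitimately be applied to $(d+\phi^\ast(h))\omega$ and the components matched term by term. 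In the free case of Example~\ref{TrivialHCohomologyOfGeneratorInHigherCentralExtension} this is automatic, since $\mathcal{A}_0=\mathrm{CE}(\mathfrak{g})$ is a $d$-closed subalgebra; in general it is an extra demand on the complement, and it is precisely this that hypothesis (b) is there to supply (while hypothesis (a), via Prop.~\ref{VanishingHCohomologyEquivalentToUniqueExpansionInH}, guarantees that a complement with the unique-expansion property exists at all). Once this is secured, the remaining content is the sign bookkeeping for the odd degree-$(2n+1)$ element $\mathrm{cs}$, which is word-for-word the bookkeeping already carried out in~\eqref{HigherTDualityTransformationInComponents}.
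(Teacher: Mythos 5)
Your proposal is correct and follows essentially the same route as the paper's own proof: vanishing C-cohomology of $\mathrm{cs}$ (via Prop.\ \ref{VanishingHCohomologyEquivalentToUniqueExpansionInH}) gives the unique wrapping/non-wrapping expansion, the duality map is the component swap, and the intertwining with the twisted differentials is the computation \eqref{HigherTDualityTransformationInComponents} rerun with $b$ replaced by $\mathrm{cs}$ and $d\,\mathrm{cs}=\mathrm{dd}$. The only difference is that you make explicit the point the paper absorbs into ``follows verbatim,'' namely that the coefficients produced by the twisted differential must again lie in the chosen complement $\mathcal{A}_0$, which is exactly the role of hypothesis (b).
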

\begin{proof}
  By Prop. \ref{VanishingHCohomologyEquivalentToUniqueExpansionInH}, the vanishing of the
  C-cohomology
  implies that the decompositions $\omega_{\mathrm{nw}} + \mathrm{cs}^{A/B} \wedge \omega_{\mathrm{w}}$
  are unique. Consequently, the linear map on cochains
  $$
    \xymatrix@R=5pt{
       - \omega_{\mathrm{w}} + \mathrm{cs}^A \wedge \omega_{\mathrm{nw}}
      \;
      \ar@{<-|}[rr]
      &&
      \;\omega_{\mathrm{nw}} + \mathrm{cs}^B \wedge \omega_{\mathrm{w}}
    }
  $$
  is a well-defined linear isomorphism.
  With this it is now sufficient to see that this linear isomorphism on cochains intertwines the
  twisted differentials. But
  $$
    d \mathrm{cs}
    \;=\;
    \mathrm{dd}
    \,,
  $$
  by the defining condition on transgression elements,
  this follows verbatim by the same computation as in expression (\ref{HigherTDualityTransformationInComponents}).
\end{proof}


\section{Higher T-duality of M-Branes }
\label{Examples}

We now present and discuss examples of the super $L_\infty$-algebraic higher T-duality that we introduced in Section \ref{sphericaltduality}.

First we observe in Section \ref{OrdinaryTypeIITDuality} that the super-topological T-duality of F1/D$p$-branes on 10d type II super-Minkowski spacetimes
established in \cite{FSS16} is an example. This serves to put the generalization to higher T-dualities in the following examples into perspective.

To illustrate that there are further examples even of ordinary (i.e. non-higher) super-topological T-duality we observe in Section \ref{SelfTDualityIn6d} that there is
super-topological T-self-duality for superstrings on 6d super spacetime.

After this warmup, we pass attention to the higher T-duality of genuine interest here:

\vspace{1mm}
\noindent {\bf (i)} First we observe in Section \ref{SphericalTDualityOfM5BranesOnM2ExtendedSpacetime} that the fact that the joint supercocycle for the M2/M5 brane takes values in the rational 4-sphere \cite{FSS15} implies, by Prop. \ref{ClassifyingDataHigherTDuality}, spherical self-T-duality of the M5-brane on the
M2-brane extended superspacetime.

\vspace{1mm}
\noindent {\bf (ii)} In order to gain a deeper understanding of what this means, we turn attention in Section \ref{ToroidalTDualityOnExceptionalMTheorySpacetime} to 11d exceptional superspacetime and show that it exhibits 528-toroidal T-duality over the superpoint, and 517-toroidal T-duality over ordinary 11d superspacetime.

\vspace{1mm}
\noindent {\bf (iii)} Then in Section \ref{SphericalTDualityOnExceptionalMTheorySpacetime} we first recall the decomposition of the C-field over the 11d exceptional tangent superspacetime due to \cite{DF,BAIPV04}. Then we compute the C-cohomology of the decomposed C-field in Prop. \ref{HCohomologyOfDecomposedCFieldVanishesInSubalgebra}
and thus establish that the spherical T-duality of M5-branes passes to the exceptional tangent superspacetime (Prop. \ref{SphericalTDualityOnExtendedSuperspacetime} below).

\vspace{1mm}
\noindent {\bf (iv)} To conclude the role of exceptional super-spacetime in spherical T-duality, we explain
in section \ref{ExceptionalGenralisedGeometry} how the decomposed C-field on the exceptional super tangent spacetime
realizes the proposal of \cite{Hull07, Wes03} (see Remark \ref{ExceptionalSuperTangentSpacetime} below) that M-theoretic field configurations should have a moduli space in exceptional generalized geometry. We observe that spherical T-duality implements a duality relation on these moduli spaces which renders
duality-equivalent the decomposition of the C-field at different values of the parameter $s$.

\vspace{1mm}
\noindent {\bf (v)} In Section \ref{SphericalTDualityOver7dSpacetime} we observe that the mechanism of spherical T-duality immediately passes to various Kaluza-Klein (KK) compactifications of 11d superspacetime, notably it passes to minimal 7d superspacetime with its decomposition of the C-field due to \cite{ADR16}.

\vspace{1mm}
\noindent {\bf (vi).}
Finally we prove in Section \ref{ParitySymmetry} that, different from but akin to spherical T-duality, also the parity symmetry of the 11d supergravity action functional
lifts to isomorphism on M5-brane-charge twisted cohomology on the exceptional super tangent spacetime. Since they thus act on the same
spaces of brane charges, we may think of spherical T-duality and parity symmetry to jointly constitute a new system of M-theoretic duality relations.

\subsection{Ordinary T-Duality of super F1/D$p$-Branes on 10d type II superspacetime}
\label{OrdinaryTypeIITDuality}

Ordinary \emph{T-duality} is a fundamental symmetry in string theory (see \cite{GPR}\cite{AAL95} for standard review),
important both for the inner structure of the theory as well as for its phenomenology.
Previously the formalization of ``topological T-duality'' -- which is meant to be the restriction of T-duality
to the brane charges, disregarding metric information --
had been inspired by, but not directly derived from string theory:
for circle bundles in \cite{BouwknegtEvslinMathai04, BunkeSchick05}
and, more generally for torus bundles in \cite{BHM03, BunkeRumpfSchick06}.

\begin{itemize}
\item  For $n =0$ our definition of higher T-duality correspondences in
Def. \ref{HigherTDualityCorrespondences} structurally reduces to this formulation of
ordinary topological T-duality (see Remark \ref{TDualityAxiom})
and our Corollary \ref{HigherTopologicalTDualityInTwistedCohomology}
reduces to the corresponding isomorphism in degree-3 twisted cohomology, rationally.

\item   A key difference is that even for $n =0$ our formulation captures also the super-geometry,
where the super-WZW terms of the super $p$-branes take values. This had allowed us in \cite{FSS16}
to show that the super WZW-terms of the type II F1/D$p$ branes constitute the archetypical example
of ``super-topological T-duality'' (\cite[Thm. 5.3, Rem. 5.4]{FSS16}):
\end{itemize}

First, the T-duality correspondence from Def. \ref{HigherTDualityCorrespondences}
in this case is that of \cite[Prop. 6.2]{FSS16}:
\begin{equation}
  \label{CorrespondenceTDualityTypeII}
  \xymatrix{
    && & \mathbb{R}^{8 + (1,1),1\vert 32}
      \ar[dl]_{p_A}
      \ar[dr]^{p_B}
    \\
    b^2 \mathfrak{u}_1
    &&
    \mathbb{R}^{9,1\vert \mathbf{16} + \overline{\mathbf{16}}}
    \ar[dr]_{\mathrm{hofib}( c_2^{\mathrm{IIA}} )}
    \ar[ll]_-{\mu_{{}_{F1}}^{\mathrm{IIA}}}
    &&
    \mathbb{R}^{9,1\vert \mathbf{16} + {\mathbf{16}}}
    \ar[dl]^{\mathrm{hofib}( c_2^{\mathrm{IIB}} )}
    \ar[rr]^-{\mu_{{}_{F1}}^{\mathrm{IIB}} }
    &&
    b \mathfrak{u}_1
    \\
    &
    &
    &
    \mathbb{R}^{8,1\vert \mathbf{16} + \mathbf{16}}
    \ar[dl]|{ c_2^{\mathrm{IIB}} }
    \ar[dr]|{ c_2^{\mathrm{IIA}} }
    \\
    &&
    b \mathfrak{u}_1 && b \mathfrak{u}_1
  }
\end{equation}
Here $\mathbb{R}^{9,1\vert \mathbf{16} + \overline{\mathbf{16}}}$ and $\mathbb{R}^{9,1\vert \mathbf{16} + {\mathbf{16}}}$ denote the 10d Type IIA/B
super-Minkowski spacetimes (Example \ref{superMinkowskiSuperLieAlgebra}) carrying the 3-cocycles
\begin{equation}
  \label{SuperstringCocycleTypeII}
  \begin{aligned}
    \mu_{{}_{F1}}^{{}^{\mathrm{IIA/B}}}
    & =
    i \overline{\psi} \Gamma^{{}^{\mathrm{IIA/B}}}_a \Gamma_{10} \psi \wedge e^a_{{}_{A/B}}
    \\
    & = \mu_{{}_{F1}}\vert_{8+1} \,+\, e^9_{{}_{A/B}} \wedge c_2^{{}^{\mathrm{IIB/A}}}
  \end{aligned}
\end{equation}
corresponding to the type IIA/B superstring super-WZW terms, respectively, i.e. the type II version of (\ref{BSTString}).
Both the super-spacetime are super Lie algebra extensions (via Prop. \ref{hofibofSuperLInfinityAlgebras}, hence rational circle fibrations) over the 9d type II super-Minkowski spacetime
$\mathbb{R}^{8,1\vert \mathbf{16} + \mathbf{15}}$ (\cite[Prop. 2.14]{FSS16}), and their (homotopy-)fiber product as such is the ``doubled'' superspacetime
$\mathbb{R}^{8+(1,1),1\vert 32}$ (\cite[Section 6]{FSS16}).

\medskip
Moreover, the induced isomorphism on twisted cohomology from Corollary \ref{HigherTopologicalTDualityInTwistedCohomology} via Theorem \ref{HigherTDuality}, in this case is that of \cite[Prop. 6.4]{FSS16}
\begin{equation}
  \label{TypeIIT}
  \xymatrix@R=4pt{
    H^{0 + \mu_{{}_{F1}}^{\mathrm{IIA}}}\big( \mathbb{R}^{9,1 \vert \mathbf{16} + \overline{\mathbf{16}}}/\mathrm{Spin}(9,1) \big)
    \ar@{<-}[rrr]_-{\simeq}^{  (p_A)_\ast \circ  e^{\mathcal{P}} \circ p_B^\ast }
    &&&
    H^{1 + \mu_{{}_{F1}}^{\mathrm{IIB}}}\big( \mathbb{R}^{9,1 \vert \mathbf{16} + {\mathbf{16}}}/\mathrm{Spin}(9,1) \big)
    \\
    \left(
      {\begin{array}{c}
        \mu_{{}_{D0}}, \;\mu_{{}_{D2}}
        \\
        \mu_{{}_{D4}}, \;\mu_{{}_{D6}}
        \\
        \mu_{{}_{D8}}, \;\mu_{{}_{D10}}
      \end{array}}
  \right)
    \ar@{<-|}[rrr]
    &&&
     \left(
      {\begin{array}{c}
        \mu_{{}_{D1}}, \; \mu_{{}_{D3}}
        \\
        \mu_{{}_{D5}}, \; \mu_{{}_{D7}}
        \\
        \mu_{{}_{D9}}
      \end{array}}
    \right)
  }
\end{equation}
which takes the cocycles in even/odd-degree $\mathrm{Spin}(9,1)$-invariant $\mu_{F1}^{IIA/B}$-twisted cohomology (Def. \ref{TwistedCohomology})
corresponding to the super-WZW terms for the super D$p$-branes (\cite[Section 4]{FSS16}) into each other, as indicated.
Structurally, by pull-tensor-push through the doubled super-spacetime, this
is \emph{Hori's formula for the Buscher rules for RR-fields} \cite[(1.1)]{Hori99}
refined to the superspace components of the RR-fields. Both generalize and globalize the original
rules \cite{Bu1}\cite{Bu2}.

\medskip
Since in following we will be lifting this ordinary T-duality to exceptional spaces and involving higher gerbes,
We comment
on related literature, which in a certain sense this generalizes:
Global topological structures in T-duality
had also been considered in \cite{AABL, Hu}. A geometric description of T-duality may be given by
identifying the cotangent bundles of the original and the dual manifold, exhibiting the duality
as a symplectomorphism of the string phase spaces \cite{KS}\cite{AL}. This
has been extended to cotangent bundles of the corresponding loop spaces \cite{BHM}.
T-duality is also described in generalized geometry \cite{Per}\cite{Linds}\cite{CG}, in
non-geometry \cite{GMPW}, in doubled geometry \cite{Hu2}.
Topological T-duality is extended to include automorphisms determined by the twists,
which can be viewed as a topological approximation to a gerby gauge transformation of spacetime \cite{Pan}.
The effect of the gauging on the B-field terms in the sigma model lead to restrictions
on the corresponding curvature \cite{HS1, HS2}.
Treating the 2-form $B$ as a gerbe connection captures the gauging obstructions
and the global constraints on the T-duality \cite{BHM}.

\subsection{Self T-duality on 6d super-spacetime}
\label{SelfTDualityIn6d}

As an other example of an ordinary (i.e. not higher) super-topological T-duality
we observe that there is an example of a cyclic topological T-duality for superstrings on
a D0-brane extension of 5d super-Minkowski spacetime
(Prop. \ref{TDualityCorrespondenceOver5dSuperMinkowskiSpacetime} below).
To put this in perspective, we first recall some background.

\medskip
$D = 5$ simple supergravity can be obtained directly as a Calabi-Yau compactification
of D = 11 supergravity \cite{CCDF}\cite{FKM}\cite{FMS} on a  threefold $Y$ with
Hodge number $h_{1,1} = 1$, together with the truncation of scalar
multiplets.  Therefore, the two objects in $D=5$, namely the string and the dual
D$0$-brane, have 11-dimensional origins. The first may be viewed as an
M5-brane wrapped around the unique 4-cycle of $Y$, while the latter
 is an M2-brane wrapped around the unique 2-cycle dual to the 4-cycle.
The fact that we are getting T-duality for the lower cocycles is perhaps then
not surprising and can be naturally explained by the above direct relation
between the branes and by uniqueness of the cycles on which they wrap.

\medskip
Note that this theory resembles $D=11$ supergravity in many respects. This is illustrated by
using extended symmetries in \cite{MO} and higher gauge symmetries in \cite{tcu}.
Hence this simpler model might give an insight into the unsolved interesting problems of
M-theory.

\medskip
In 5d there is a direct analog of what in 11d is the $S^4$-valued supercocycle of M2/M5-branes (\ref{M2M5CocyclesOn11dSuperMinkowski})
from example \ref{HigherTDualCorrespondenceForM2Brane}:
\begin{prop}[Two-sphere valued supercocycle in 5d]
  \label{2SphereValuedSupercocycleIn5d}
  In $\mathrm{CE}\left( \mathbb{R}^{4,1\vert \mathbf{8}+\mathbf{8}} \right) $
  consider the cochains
  $$
    \mu_{{}_{D0}}^{5d}
    :=
    \overline{\psi}_A \psi_A
   \qquad
   \text{and}
   \qquad
    \mu_{\mathrm{string}}^{5d}
    :=
    \overline{\psi}_A \Gamma_a \psi_A \wedge e^a
    \,.
  $$
  \item {\bf (i)} Then we have the relations
  $$
    d \mu_{{}_{D0}}^{5d} = 0
    \qquad
    \text{and}
    \qquad
    d \mu_{\mathrm{string}}^{5d} = \mu_{{}_{D0}}^{5d} \wedge \mu_{{}_{D0}}^{5d}
    \,.
  $$
  \item {\bf (ii)} Hence jointly these constitute
  a rational 2-sphere valued coycle.
  $$
    \xymatrix{
      \mathbb{R}^{4,1\vert \mathbf{8} + \mathbf{8} }
        \ar[rrr]^-{ (\mu_{{}_{D0}}^{5d}, \; \; \mu_{\mathrm{string}}^{5d}) }
        &&&
      \mathfrak{l}(S^2)
    }
  $$
\end{prop}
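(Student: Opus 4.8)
The plan is to verify part~(i) by a direct Chevalley--Eilenberg computation, isolating its only nontrivial input as a Fierz-type identity for $\mathrm{Spin}(4,1)$-spinors, and then to obtain part~(ii) immediately from the definition of the rational $2$-sphere. Recall from Example~\ref{superMinkowskiSuperLieAlgebra} that the differential on $\mathrm{CE}\big(\mathbb{R}^{4,1\vert\mathbf{8}+\mathbf{8}}\big)$ is determined by $d\psi^\alpha = 0$ together with the torsion constraint \eqref{VielbeinDifferential}, which here reads $d e^a = \overline{\psi}_A\Gamma^a\psi_A$. Since $d$ annihilates every polynomial in the $\psi$'s alone, the first relation $d\mu_{{}_{D0}}^{5d} = d\big(\overline{\psi}_A\psi_A\big) = 0$ is immediate; here $\overline{\psi}_A\psi_A$ is built from the canonical invariant scalar spinor pairing and is a nonzero element of bidegree $(2,\mathrm{even})$. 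For the second relation, the graded Leibniz rule together with $d\psi = 0$ gives
$$
  d\mu_{\mathrm{string}}^{5d}
  \;=\;
  \overline{\psi}_A\Gamma_a\psi_A \wedge d e^a
  \;=\;
  \overline{\psi}_A\Gamma_a\psi_A \wedge \overline{\psi}_B\Gamma^a\psi_B
  \,,
$$
so the claim reduces to the identity $\overline{\psi}_A\Gamma_a\psi_A \wedge \overline{\psi}_B\Gamma^a\psi_B = \overline{\psi}_A\psi_A \wedge \overline{\psi}_B\psi_B$ of quartic fermionic expressions.

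This last identity is the five-dimensional counterpart of the Fierz identities $\overline{\psi}\Gamma_a\psi\wedge\overline{\psi}\Gamma^a\psi = 0$ (and their form-degree descendants) that underlie the ``old brane scan'' in dimensions $3,4,6,10$, and it is the only genuine computation in the proof. Since $\psi^\alpha$ has bidegree $(1,\mathrm{odd})$, the generators $\psi^\alpha$ graded-commute among themselves, so that quartic monomials in the $\psi$'s are totally symmetric in their spinor indices; thus both sides of the identity are $\mathrm{Spin}(4,1)$-invariant elements of $\mathrm{Sym}^4$ of the spinor module $\mathbf{8}+\mathbf{8}$, and the identity can be verified by a standard Fierz rearrangement. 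Structurally it may also be seen as a dimensional reduction of the closedness of the superstring supercocycle in $D=6$: splitting the six-dimensional relation into the five retained directions and the reduction direction, and using that the extra gamma matrix acts on the reduced spinor as a multiple of the identity --- so that the associated bilinear is proportional to $\overline{\psi}\psi$ --- reproduces precisely the displayed relation and fixes the constant to be $1$ with the normalizations adopted in the statement. Hence $d\mu_{\mathrm{string}}^{5d} = \mu_{{}_{D0}}^{5d}\wedge\mu_{{}_{D0}}^{5d}$, completing part~(i). I expect the sign-and-normalization bookkeeping here, rather than anything conceptual, to be the main nuisance.

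For part~(ii), recall that $\mathfrak{l}(S^2) \in sL_\infty\mathrm{Alg}^{\mathrm{fin}}_{\mathbb{R}}$ denotes the super $L_\infty$-algebra whose Chevalley--Eilenberg algebra is the minimal Sullivan model of the $2$-sphere: the free graded-commutative algebra on one generator $\omega_2$ of bidegree $(2,\mathrm{even})$ and one generator $\omega_3$ of bidegree $(3,\mathrm{even})$, with $d\omega_2 = 0$ and $d\omega_3 = \omega_2\wedge\omega_2$; this is the low-degree instance of the very same pattern as the rational $4$-sphere $\mathfrak{l}(S^4)$ governing the joint M2/M5 supercocycle recalled in Section~\ref{Introduction}. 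Exactly as for super $L_\infty$-cocycles in Def.~\ref{SuperLInfinityCocycles}, a homomorphism $\mathbb{R}^{4,1\vert\mathbf{8}+\mathbf{8}}\to\mathfrak{l}(S^2)$ is the same datum as a homomorphism of super dgc-algebras $\mathrm{CE}(\mathfrak{l}(S^2))\to\mathrm{CE}\big(\mathbb{R}^{4,1\vert\mathbf{8}+\mathbf{8}}\big)$, i.e.\ a pair consisting of a closed element of bidegree $(2,\mathrm{even})$ and an element of bidegree $(3,\mathrm{even})$ whose differential is the wedge square of the first. By part~(i) the pair $\big(\mu_{{}_{D0}}^{5d},\,\mu_{\mathrm{string}}^{5d}\big)$ is of precisely this form, and therefore defines the asserted morphism $\mathbb{R}^{4,1\vert\mathbf{8}+\mathbf{8}}\to\mathfrak{l}(S^2)$.
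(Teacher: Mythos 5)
Your proposal is correct and takes essentially the same route as the paper: using $d\psi^\alpha=0$ and $d e^a=\overline{\psi}_A\Gamma^a\psi_A$ together with the Leibniz rule, part (i) reduces to the quartic Fierz identity $\overline{\psi}_A\Gamma_a\psi_A\wedge\overline{\psi}_B\Gamma^a\psi_B=\overline{\psi}_A\psi_A\wedge\overline{\psi}_B\psi_B$, and part (ii) is then immediate from the Sullivan model of $S^2$, exactly as in the paper's proof. The only divergence is how that Fierz identity is discharged: the paper simply cites it as \cite[(III.5.50.a)]{CDF}, whereas you assert a ``standard Fierz rearrangement'' and sketch a dimensional-reduction heuristic from $6d$ -- the latter is the weaker link, since the reduced bilinear involves the doublet index structure and a signature-dependent sign from the compactified direction, so pinning the coefficient to $+1$ really requires either the explicit $\mathrm{Sym}^4$ computation or the citation the paper uses.
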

\begin{proof}
  Due to the relation $d e^a = \overline{\psi}_A \Gamma^a \psi_A$, the
  condition to be shown is equivalently
  $$
    \overline{\psi}_A \Gamma_a \psi_A
    \wedge
    \overline{\psi}_B \Gamma^a \psi_B
    =
    \overline{\psi}_A \psi_A \wedge \overline{\psi}_B \psi_B
    \,.
  $$
  But this is precisely the Fierz identity that holds from \cite[(III.5.50.a)]{CDF}.
\end{proof}

Consequently, we can discuss the appropriate T-duality in this setting.

\begin{prop}[Cyclic T-duality on D0-brane extension of 5d superspacetime]
  \label{TDualityCorrespondenceOver5dSuperMinkowskiSpacetime}
  There is a T-duality self-correspondence (Def. \ref{HigherTDualityCorrespondences}) on the D0-brane extension of 5d super-Minkowski spacetime
  of the form
$$
  \xymatrix@R=1.7em{
    b^{2}\mathfrak{u}_1
    &
    \widehat{\mathbb{R}}^{4,1\vert \mathbf{8} + \mathbf{8}}
    \ar[drr]_-{\hspace{-2mm}\mathrm{hofib}( \overline{\psi}_A \psi_A) }
    \ar[l]_-{ h_3  }
    &&&&
    \widehat{\mathbb{R}}^{4,1\vert \mathbf{8} + \mathbf{8}}
    \ar[dll]^-{\;\;\mathrm{hofib}( \overline{\psi}_A \psi_A) }
    \ar[r]^-{ h_3 }
    &
    b^{2}\mathfrak{u}_1
    \\
    &
    &&
    \mathbb{R}^{4,1 \vert \mathbf{8} + \mathbf{8} }
    \ar[dll]^-{ \overline{\psi}_A \psi_A }
    \ar[drr]_-{ \overline{\psi}_A \psi_A }
    \\
    &
    b \mathfrak{u}_1
    &&&&
    b \mathfrak{u}_1\;.
  }
$$
\end{prop}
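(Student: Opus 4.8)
The plan is to exhibit this as a direct application of the classifying-data characterization of higher T-duality correspondences, Prop.~\ref{ClassifyingDataHigherTDuality}, in the special case $n=1$ and $k=1$. By that proposition, a self-T-duality correspondence on $\mathfrak{g} := \mathbb{R}^{4,1\vert \mathbf{8}+\mathbf{8}}$ with $\widevec{\mathrm{dd}}^A = \widevec{\mathrm{dd}}^B = \mathrm{dd}$ a $2$-cocycle is determined by a pair $(\mathrm{dd}, h_{\mathfrak{g}})$ with $\mathrm{dd}\in\mathrm{CE}^2(\mathfrak{g})$ closed and $h_{\mathfrak{g}}\in\mathrm{CE}^3(\mathfrak{g})$ satisfying $d h_{\mathfrak{g}} = - \mathrm{dd}\wedge\mathrm{dd}$ (up to the choice of pairing, which we normalize so that $\langle\mathrm{dd}\wedge\mathrm{dd}\rangle = \mathrm{dd}\wedge\mathrm{dd}$; note that since $n+1=2$ is even the pairing is symmetric, so a single generator with the standard pairing is legitimate). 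So the entire content of the statement reduces to producing such a pair.

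First I would set $\mathrm{dd} := \mu_{{}_{D0}}^{5d} = \overline{\psi}_A\psi_A \in \mathrm{CE}^2(\mathbb{R}^{4,1\vert\mathbf{8}+\mathbf{8}})$, which is closed by Prop.~\ref{2SphereValuedSupercocycleIn5d}(i), and $h_{\mathfrak{g}} := -\mu_{\mathrm{string}}^{5d} = -\,\overline{\psi}_A\Gamma_a\psi_A\wedge e^a$. Then Prop.~\ref{2SphereValuedSupercocycleIn5d}(i) gives exactly $d h_{\mathfrak{g}} = -\,d\mu_{\mathrm{string}}^{5d} = -\,\mu_{{}_{D0}}^{5d}\wedge\mu_{{}_{D0}}^{5d} = -\,\mathrm{dd}\wedge\mathrm{dd}$, which is precisely the condition in Prop.~\ref{ClassifyingDataHigherTDuality}. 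The homotopy fiber of $\mathrm{dd}$, by Prop.~\ref{hofibofSuperLInfinityAlgebras}, is the D0-brane extension $\widehat{\mathbb{R}}^{4,1\vert\mathbf{8}+\mathbf{8}}$ obtained by adjoining a degree-$1$ generator (here the $2n-1 = 1$ case, consistent with the even degree $2n=2$ of the cocycle) trivializing $\mu_{{}_{D0}}^{5d}$; this matches the name ``D0-brane extension'' and the maps $\mathrm{hofib}(\overline{\psi}_A\psi_A)$ in the asserted diagram. The twisting $3$-cocycles $h^{A/B} = h_3$ on the extension are then recovered from Prop.~\ref{ClassifyingDataHigherTDuality} by the formula $h_3 = (\pi)^\ast(h_{\mathfrak{g}}) + b\wedge(\pi)^\ast(\mathrm{dd})$, i.e.\ $h_3 = -\,\pi^\ast\mu_{\mathrm{string}}^{5d} + b\wedge\pi^\ast\mu_{{}_{D0}}^{5d}$ in $\mathrm{CE}^3(\widehat{\mathbb{R}}^{4,1\vert\mathbf{8}+\mathbf{8}})$; I would note $b$ has degree $2n-1 = 1$ so that $b\wedge\mathrm{dd}$ and $\mu_{\mathrm{string}}^{5d}$ are both in degree $3 = 4n-1$, as required for the Poincar\'e form $\mathcal{P} = b^A\wedge b^B$ to have degree $2 = 2(2n-1)$ and trivialize $p_B^\ast h^B - p_A^\ast h^A$.

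The only genuine input beyond bookkeeping is the Fierz identity $\overline{\psi}_A\Gamma_a\psi_A\wedge\overline{\psi}_B\Gamma^a\psi_B = \overline{\psi}_A\psi_A\wedge\overline{\psi}_B\psi_B$ in five dimensions, but this is already established in Prop.~\ref{2SphereValuedSupercocycleIn5d} (citing \cite[(III.5.50.a)]{CDF}), so for the present proof it may simply be invoked. Thus I expect \textbf{no substantive obstacle}: the proof is essentially the sentence ``apply Prop.~\ref{ClassifyingDataHigherTDuality} with $(\widevec{\mathrm{dd}}^{A/B}, h_{\mathfrak{g}}) = (\mu_{{}_{D0}}^{5d}, -\mu_{\mathrm{string}}^{5d})$ and invoke Prop.~\ref{2SphereValuedSupercocycleIn5d}.'' The one point requiring a moment's care is the normalization of the symmetric pairing $\langle-,-\rangle$ on the single degree-$2$ generator and the resulting sign/coefficient in $d h_{\mathfrak{g}} = -\langle\mathrm{dd}\wedge\mathrm{dd}\rangle$ versus the relation $d\mu_{\mathrm{string}}^{5d} = +\mu_{{}_{D0}}^{5d}\wedge\mu_{{}_{D0}}^{5d}$; choosing $h_{\mathfrak{g}} = -\mu_{\mathrm{string}}^{5d}$ and the identity pairing makes the two agree, and (exactly as in the remark after Theorem~\ref{HigherTDuality}) rescaling the pairing by a unit is harmless, so the correspondence exists as stated. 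I would also remark, as the paper does in the analogous M5 case, that because the two legs of the correspondence and the two extensions coincide, this is genuinely a \emph{self}-duality.
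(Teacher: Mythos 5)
Your proposal is correct and is essentially the paper's own argument: the paper's proof is the single sentence that Prop.~\ref{2SphereValuedSupercocycleIn5d} applied in Prop.~\ref{ClassifyingDataHigherTDuality} establishes the correspondence, which is exactly your route. Your extra care with the sign (taking $h_{\mathfrak{g}} = -\mu_{\mathrm{string}}^{5d}$, or equivalently rescaling the pairing, to match $d h_{\mathfrak{g}} = -\langle \mathrm{dd}\wedge\mathrm{dd}\rangle$ against $d\mu_{\mathrm{string}}^{5d} = +\mu_{{}_{D0}}^{5d}\wedge\mu_{{}_{D0}}^{5d}$) is a detail the paper glosses over, and you handle it correctly.
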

\begin{proof}
  By Prop. \ref{2SphereValuedSupercocycleIn5d} used in Prop. \ref{ClassifyingDataHigherTDuality}
  the result is established.
\end{proof}

\subsection{Spherical T-duality of M5-branes on M2-extended M-theory spacetime}
\label{SphericalTDualityOfM5BranesOnM2ExtendedSpacetime}

We discuss spherical topological T-duality in the sense of Section \ref{sphericaltduality} realized for M5-branes
on the M2-brane extended 11d super-spacetime.
We first recall that the M2/M5-brane geometry is governed by the following data.

\begin{example}[M2/M5-Brane cocycle]
\label{HigherTDualCorrespondenceForM2Brane}
Consider the
the super-Minkowski spacetime
$\mathbb{R}^{10,1\vert \mathbf{32}}$ (Example \ref{superMinkowskiSuperLieAlgebra}) underlying 11-dimensional supergravity. By \cite[(3.27a)]{DF} \cite[Prop. 4.3]{FSS16}
the elements
\begin{eqnarray}
  \label{M2BraneCocycle}
  \mu_{{}_{M2}} &:=& \tfrac{i}{2}\overline{\psi} \Gamma_{a_1 a_2} \psi \wedge e^{a_1} \wedge e^{a_2}
\\
  \mu_{{}_{M5}} &:=& \tfrac{1}{5!}\overline{\psi} \Gamma_{a_1 \cdots a_5} \psi \wedge e^{a_1} \wedge \cdots \wedge e^{a_5}
 \nonumber
\end{eqnarray}
in $\mathrm{CE}(\mathbb{R}^{10,1\vert \mathbf{32}})$
satisfy the relations
\begin{equation}
  \label{M2M5CocyclesOn11dSuperMinkowski}
  d \mu_{{}_{M2}} = 0
  \qquad
  \text{and}
  \qquad
  d \mu_{{}_{M5}} = -\tfrac{1}{2} \mu_{{}_{M2}} \wedge \mu_{{}_{M2}}
  \,.
\end{equation}
By \cite{FSS13, FSS15} $\mu_{{}_{M2}}$ defines the WZW-term for the Green-Schwarz
sigma-model of the M2-brane, and the following combination defines the WZW term for
the GS sigma-model of the M5-brane:
\begin{equation}
  \label{M5cocycleInTDualityPair}
  \widetilde{\mu}_{{}_{M5}}
    :=
  2 \mu_{{}_{M5}} + c_3 \wedge \mu_{{}_{M2}}
  \;\in\;
  \mathrm{CE}(\mathfrak{m}2\mathfrak{brane})
  \,,
\end{equation}
where
$$
  \mathfrak{m}2\mathfrak{brane}
  :=
  \mathrm{hofib}(\mu_{{}_{M2}})
$$
is the M2-brane extension of 11d super-Minkowski spacetime (\cite[Section 4.1]{FSS13}), which by Prop. \ref{hofibofSuperLInfinityAlgebras}
is given by
$$
  \mathrm{CE}(\mathfrak{m}2\mathfrak{brane})
  \;=\;
  \mathrm{CE}(\mathbb{R}^{10,1\vert \mathbf{32}})[ c_3 ]/
  \left(
    d c_3 = \mu_{{}_{M2}}
  \right)
  \,.
$$
\end{example}

\medskip
We now indeed uncover that T-self-duality associated with the M-brane cocycle.

\begin{prop}[M5-brane cocycle is spherical T-dual to itself]
 \label{M5CocycleIsSPhericalTDualToItself}
The M5-brane cocycle (\ref{M5cocycleInTDualityPair}) on the M2-brane extension of 11-dimensional super-Minkowski spacetime (Example \ref{HigherTDualCorrespondenceForM2Brane}) is
spherical T-dual (Def. \ref{HigherTDualityCorrespondences}) to itself, as exhibited in the following diagram
\begin{equation}
  \label{CorrespondenceSphericalTDualityM5}
  \xymatrix@C=-10pt{
    &
    {\phantom{AAAAAAA}}
    &&
    \mathfrak{m}2\mathfrak{brane} \times_{\mathbb{R}^{10,1\vert \mathbf{32}}} \mathfrak{m}2\mathfrak{brane}
    ~\ar[dr]^{p_B}
    ~\ar[dl]_{p_A}
    &&
    {\phantom{AAAAAAA}}
    \\
    b^{6} \mathfrak{u}_1
    &&
    \mathfrak{m}2\mathfrak{brane}
    \ar[dr]_{\pi^A := \mathrm{hofib}(\mu_{{}_{M2}}) \phantom{AAAA}}
    \ar[ll]_-{\widetilde{\mu}_{{}_{M5}}}
    &&
    \mathfrak{m}2\mathfrak{brane}
    \ar[dl]^{\phantom{AAA}\pi^B := \mathrm{hofib}(\mu_{{}_{M2}})}
    \ar[rr]^-{\widetilde{\mu}_{{}_{M5}}}
    &&
    b^{6} \mathfrak{u}_1
    \\
    &&
    & \mathbb{R}^{10,1\vert \mathbf{32}}
    \ar[dl]^{\mu_{{}_{M2}}}
    \ar[dr]_{\mu_{{}_{M2}}}
    \\
    &&
    b^{3} \mathfrak{u}_1 && b^{3} \mathfrak{u}_1
  }
\end{equation}
$$
\xymatrix@C=6em{
  (\mathfrak{m}2\mathfrak{brane},\widetilde{\mu}_{{}_{M5}} )
  \;\;\ar@{<->}[rr]^{\cal T} &&
  \;\; (\mathfrak{m}2\mathfrak{brane},\widetilde{\mu}_{{}_{M5}} )\;.
  }
$$
\end{prop}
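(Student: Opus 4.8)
The plan is to deduce this directly from the classification of higher T-duality correspondences in Prop. \ref{ClassifyingDataHigherTDuality}. That proposition asserts that a higher T-duality correspondence over $\mathfrak{g} := \mathbb{R}^{10,1\vert\mathbf{32}}$ (with $k = 1$ and $n = 1$, so that the base classifying objects are $b^3\mathfrak{u}_1$ and the twist classifying object is $b^6\mathfrak{u}_1$) is nothing but a choice of a single closed $4$-cocycle $\widevec{\mathrm{dd}}^A$, a single closed $4$-cocycle $\widevec{\mathrm{dd}}^B$, and a degree-$7$ element $h_{\mathfrak{g}}$ on $\mathfrak{g}$ with $d h_{\mathfrak{g}} = -\langle \widevec{\mathrm{dd}}^A \wedge \widevec{\mathrm{dd}}^B\rangle$. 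So the proof amounts to producing such data and then checking that the correspondence it classifies is the one in \eqref{CorrespondenceSphericalTDualityM5}.

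First I would set $\widevec{\mathrm{dd}}^A = \widevec{\mathrm{dd}}^B := \mu_{{}_{M2}}$ --- this is exactly what forces the two homotopy fibers to coincide with the single algebra $\mathfrak{m}2\mathfrak{brane}$ of Example \ref{HigherTDualCorrespondenceForM2Brane}, hence makes the correspondence a self-duality --- and take the (symmetric, $1$-dimensional) cup-product pairing of Def. \ref{HigherCupProducts} normalized so that $\langle \mathrm{dd}\cup\mathrm{dd}\rangle = \mathrm{dd}\wedge\mathrm{dd}$. That $\mu_{{}_{M2}}$ is closed is the first relation of \eqref{M2M5CocyclesOn11dSuperMinkowski}. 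For the trivialization I would use the second relation, $d\mu_{{}_{M5}} = -\tfrac{1}{2}\mu_{{}_{M2}}\wedge\mu_{{}_{M2}}$, and put $h_{\mathfrak{g}} := 2\mu_{{}_{M5}}$, whence $d h_{\mathfrak{g}} = -\mu_{{}_{M2}}\wedge\mu_{{}_{M2}} = -\langle\widevec{\mathrm{dd}}^A\wedge\widevec{\mathrm{dd}}^B\rangle$, as required. This is the whole arithmetic content: the M2/M5 ``equations of motion'' relation \eqref{M2M5CocyclesOn11dSuperMinkowski} is literally the defining relation of a higher T-duality correspondence.

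It then remains to identify the correspondence produced by Prop. \ref{ClassifyingDataHigherTDuality} from this data with the diagram \eqref{CorrespondenceSphericalTDualityM5}. By formula \eqref{DecompositonOfHInTDualityCorrespondence} the twisting cocycle on either leg is $h^{A/B} = (\pi^{A/B})^\ast(h_{\mathfrak{g}}) + \langle \vec b^{A/B}\wedge(\pi^{A/B})^\ast(\widevec{\mathrm{dd}}^{B/A})\rangle$; since here $\vec b^{A/B}$ is the fiber generator $c_3$ (with $d c_3 = \mu_{{}_{M2}}$, by Prop. \ref{hofibofSuperLInfinityAlgebras}) this is $2\mu_{{}_{M5}} + c_3 \wedge \mu_{{}_{M2}} = \widetilde{\mu}_{{}_{M5}}$, precisely the genuine M5-brane supercocycle \eqref{M5cocycleInTDualityPair}, on both sides. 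The Poincar{\'e} form on $\mathfrak{m}2\mathfrak{brane}\times_{\mathbb{R}^{10,1\vert\mathbf{32}}}\mathfrak{m}2\mathfrak{brane}$ is $\mathcal{P} = c_3^A\wedge c_3^B$, and the two defining identities of a higher T-duality correspondence, $d\mathcal{P} = p_B^\ast\widetilde{\mu}_{{}_{M5}} - p_A^\ast\widetilde{\mu}_{{}_{M5}}$ and $\vec\pi^{A}_\ast\widetilde{\mu}_{{}_{M5}} = -\mu_{{}_{M2}} = \vec\pi^{B}_\ast\widetilde{\mu}_{{}_{M5}}$, hold by the construction in Prop. \ref{ClassifyingDataHigherTDuality}.

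There is no genuinely hard step; the only care needed is bookkeeping. I would double-check that the degrees line up with Def. \ref{HigherTDualityCorrespondences} ($\mu_{{}_{M2}}$ in degree $4$, i.e. classified by $b^3\mathfrak{u}_1$, and $\widetilde{\mu}_{{}_{M5}}$ in degree $7$, i.e. classified by $b^6\mathfrak{u}_1$, corresponding to $n = 1$, $k = 1$), and note that $\mu_{{}_{M2}}$, $\mu_{{}_{M5}}$ and $c_3$ are all $\mathrm{Spin}(10,1)$-invariant, so that by Prop. \ref{hofibofSuperLInfinityAlgebras}(iii) the entire correspondence is also a correspondence of $\mathrm{Spin}(10,1)$-equivariant data --- the form in which it is needed for the twisted-cohomology isomorphism of Cor. \ref{HigherTopologicalTDualityInTwistedCohomology}.
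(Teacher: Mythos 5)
Your proposal is correct and follows essentially the same route as the paper: the paper's own proof also invokes Prop.~\ref{ClassifyingDataHigherTDuality} to reduce the claim to the two relations $d\mu_{{}_{M2}}=0$ and $d(2\mu_{{}_{M5}})=-\mu_{{}_{M2}}\wedge\mu_{{}_{M2}}$, which are exactly \eqref{M2M5CocyclesOn11dSuperMinkowski}. Your additional bookkeeping (identifying $h^{A/B}=2\mu_{{}_{M5}}+c_3\wedge\mu_{{}_{M2}}=\widetilde{\mu}_{{}_{M5}}$ via \eqref{DecompositonOfHInTDualityCorrespondence} and checking degrees and $\mathrm{Spin}(10,1)$-invariance) is left implicit in the paper but is consistent with it.
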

\begin{proof}
Establishing the diagram amounts to checking that the cocycles
are compatible.
With the equivalent reformulation from Prop. \ref{ClassifyingDataHigherTDuality}
we need to show that
$$
  d \mu_{{}_{M2}} = 0
$$
and
$$
  \begin{aligned}
    d (2 \mu_{{}_{M5}})
    &=
    - \mu_{{}_{M2}} \wedge \mu_{{}_{M2}}
    \,.
  \end{aligned}
$$
But this is precisely the M2/M5-cocycle presented in equation (\ref{M2M5CocyclesOn11dSuperMinkowski}).
\end{proof}

\subsection{Toroidal T-duality on exceptional M-theory spacetime}
\label{ToroidalTDualityOnExceptionalMTheorySpacetime}


\begin{defn}[Maximal central extension of the $N=32$ superpoint]
\label{MaximalCentralExtensionOfNIs32Superpoint}
Consider the $N=32$ superpoint, hence the super $L_\infty$-algebra
$$
  \mathbb{R}^{0\vert 32}
  \;\in\;
  sL_\infty \mathrm{Alg}^{\mathrm{fin}}_{\mathbb{R}}
$$
defined dually by (cf. expression \eqref{SupermMinkowskiCE})
$$
  \mathrm{CE}( \mathbb{R}^{0\vert 32} )
  \;=\;
  \mathbb{R}[
    \hspace{-4mm}
      \underset{
        \mathrm{deg} = (1, \mathrm{odd)}
      }{\;
      \underbrace{(
        \psi^\alpha)
      }
      }
    \hspace{-3mm}_{\alpha \in \{1, 2, \cdots, 32\}}
  ]/( d \psi^\alpha = 0 )
  \,.
$$
We write
$$
  \xymatrix{
    \mathbb{R}^{10,1\vert \mathbf{32}}_{\mathrm{exc}}
    \ar[d]_-{ \star }
    \\
    \mathbb{R}^{0\vert \mathbf{32}}
  }
$$
for its maximal invariant central extension according to \cite[Def. 6]{HuertaSchreiber}.
We will call this the \emph{exceptional tangent superspacetime}.
\end{defn}
\begin{prop}[Spin-action on exceptional tangent superspacetime]
\label{SpinActionOnExceptionalTangentSuperspacetime}
{\bf (i)} The exceptional tangent superspacetime super Lie algebra $\mathbb{R}^{10,1\vert \mathbf{32}}_{\mathrm{exc}}$
from Def. \ref{MaximalCentralExtensionOfNIs32Superpoint} is, up to isomorphism, dually given by the
following Chevalley-Eilenberg algebra:
\begin{equation}
  \label{CEAlgebraOfExceptionalTangentSuperspacetime}
  \mathrm{CE}( \mathbb{R}^{10,1\vert \mathbf{32}}_{\mathrm{exc}} )
  \;=\;
  \mathbb{R}\big[
    \hspace{-3mm}\underset{ \mathrm{deg} = (1,\mathrm{even}) }{\underbrace{(e^a)}}
    \hspace{-3mm}, \;\;
    \underset{(1,\mathrm{even}) }{ \underbrace{ (B_{a_1 a_2}) }} \;,\;
    \underset{(1,\mathrm{even}) }{ \underbrace{ (B_{a_1 \cdots a_5}) }}\;,
    \underset{ (1,\mathrm{odd}) }{ \underbrace{ (\psi^\alpha) } }
     \big]/
  \left(
    \begin{array}{l}
      d \psi^\alpha = 0, \;\; \; \;\;\; \;\; d B_{a_1 a_2} = \tfrac{i}{2}\overline{\psi} \Gamma_{a_1 a_2} \psi,
      \\
      d e^a = \overline{\psi}\Gamma^a \psi, \; \;
       d B_{a_1 \cdots a_5} = \tfrac{1}{5!}\overline{\psi} \Gamma_{a_1 \cdots a_5} \psi
    \end{array}
  \right)
  \,,
\end{equation}
where $\Gamma$ denotes a Clifford algebra representation on the 32-dimensional vector
space spanned by the $\psi^\alpha$, corresponding to the irreducible real representation $\mathbf{32}$ of $\mathrm{Spin}(10,1)$.

\item {\bf (ii)} Hence the bosonic body of the exceptional super tangent spacetime is
\begin{equation}
  \label{ExceptionalTangentSpacetimeBosonicPart}
  \big(
    \mathbb{R}^{10,1\vert \mathbf{32}}_{\mathrm{exc}}
  \big)_{\mathrm{bos}}
  \simeq
  \mathbb{R}^{10,1}
  \oplus
  \Exterior^2\left( \mathbb{R}^{10,1} \right)^\ast
  \oplus
  \Exterior^5\left( \mathbb{R}^{10,1} \right)^\ast\;.
\end{equation}

\item {\bf (iii)} This induces on $\mathbb{R}^{10,1\vert \mathbf{32}}_{\mathrm{exc}}$ a $\mathrm{Spin}(10,1)$-action in the sense of
Def. \ref{SuperLInfinityAlgebraWithGroupAction}, and the projection to the superpoint factors $\mathrm{Spin}(10,1)$-equivariantly over
the 11d super-Minkowski spacetime (Example \ref{superMinkowskiSuperLieAlgebra}) as
$$
  \xymatrix{
    \mathbb{R}^{10,1\vert \mathbf{32}}_{\mathrm{exc}}
    \ar[dr]^{ \mathrm{hofib}( \mu_{\mathrm{exc}} ) }
    \ar[dd]_{ \mathrm{hofib}(\psi \wedge \overline{\psi}) }
    \\
    & \mathbb{R}^{10,1\vert \mathbf{32}}\;.
      \ar[dl]^{\; \mathrm{hofib}(\overline{\psi} \Gamma_a \psi) }
    \\
    \mathbb{R}^{0\vert \mathbf{32}}
  }
$$
\end{prop}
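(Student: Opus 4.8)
The plan is to establish the three claims by first writing down the claimed CE-algebra explicitly and verifying that it is a well-defined super $L_\infty$-algebra (equivalently, a super Lie algebra, since all generators sit in degree $1$), then identifying it with the maximal invariant central extension of $\mathbb{R}^{0\vert\mathbf{32}}$ from \cite[Def.~6]{HuertaSchreiber}, and finally reading off the bosonic body and the equivariant factorization. For part (i), I would begin by recalling that a central extension of $\mathbb{R}^{0\vert\mathbf{32}}$ by a vector space $V$ of even bosonic generators is classified by a linear map $V^\ast \to \mathrm{Sym}^2(\mathbf{32})^\ast \cong \bigwedge^2(\mathbb{R}^{0\vert\mathbf{32}})^\ast$ (the space of $d$-closed $2$-cochains on the superpoint), i.e.\ by a subspace of $\mathrm{Sym}^2(\mathbf{32})$. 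The $\mathrm{Spin}(10,1)$-invariant such extension is the \emph{maximal} one whose defining cocycles span a $\mathrm{Spin}(10,1)$-subrepresentation of $\mathrm{Sym}^2(\mathbf{32})$; the key representation-theoretic input is the standard decomposition
$$
  \mathrm{Sym}^2(\mathbf{32})
  \;\cong\;
  \mathbb{R}^{10,1}
  \,\oplus\,
  \Exterior^2(\mathbb{R}^{10,1})^\ast
  \,\oplus\,
  \Exterior^5(\mathbb{R}^{10,1})^\ast
  \,,
$$
of real dimension $11 + 55 + 462 = 528 = \binom{33}{2}$, realized concretely by the symmetric bispinor pairings $\overline{\psi}\Gamma^a\psi$, $\overline{\psi}\Gamma_{a_1a_2}\psi$, $\overline{\psi}\Gamma_{a_1\cdots a_5}\psi$. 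Thus the maximal invariant central extension has exactly one bosonic generator dual to each of these, with differentials as in \eqref{CEAlgebraOfExceptionalTangentSuperspacetime}; I would cite that this exhausts the invariant cocycles (no invariant vector in $\mathrm{Sym}^2(\mathbf{32})$ beyond these three irreducibles, and each irreducible contributes) and that the resulting Jacobi identity is automatic because the target is abelian. This gives the presentation up to isomorphism.

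For part (ii), the bosonic body is obtained by setting $\psi^\alpha = 0$: the surviving generators $e^a$, $B_{a_1a_2}$, $B_{a_1\cdots a_5}$ all have vanishing differential on the body, so the bosonic part is the abelian Lie algebra on $\mathbb{R}^{10,1}\oplus \Exterior^2(\mathbb{R}^{10,1})^\ast \oplus \Exterior^5(\mathbb{R}^{10,1})^\ast$, which is immediate from \eqref{CEAlgebraOfExceptionalTangentSuperspacetime}. For part (iii), the $\mathrm{Spin}(10,1)$-action is defined on generators by the tensor-representation action on the indices $a, a_1a_2, \ldots$ together with the spinor action on $\psi^\alpha$; one checks compatibility with $d_{\mathrm{CE}}$ exactly because the pairings $\overline{\psi}\Gamma_{a_1\cdots a_k}\psi$ are $\mathrm{Spin}(10,1)$-equivariant by construction, which is the content of Def.~\ref{SuperLInfinityAlgebraWithGroupAction}. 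The factorization diagram then follows: the quotient CE-map that forgets $B_{a_1a_2}$ and $B_{a_1\cdots a_5}$ lands in $\mathrm{CE}(\mathbb{R}^{10,1\vert\mathbf{32}})$ (with $d e^a = \overline{\psi}\Gamma^a\psi$, $d\psi^\alpha=0$), exhibiting $\mathrm{hofib}(\mu_{\mathrm{exc}})$ via Prop.~\ref{hofibofSuperLInfinityAlgebras}, and forgetting all bosonic generators lands in $\mathrm{CE}(\mathbb{R}^{0\vert\mathbf{32}})$, exhibiting $\mathrm{hofib}(\psi\wedge\overline{\psi})$; both maps are manifestly $\mathrm{Spin}(10,1)$-equivariant since they are index-preserving projections, and the triangle commutes by inspection.

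The main obstacle is part (i), specifically pinning down that the CE-algebra written in \eqref{CEAlgebraOfExceptionalTangentSuperspacetime} really is the \emph{maximal} invariant central extension in the precise sense of \cite[Def.~6]{HuertaSchreiber}, rather than merely \emph{an} invariant central extension. This requires knowing that $\mathrm{Sym}^2(\mathbf{32})$ as a $\mathrm{Spin}(10,1)$-representation decomposes into exactly the three listed irreducibles (so nothing is missed) and that each of the three carries a nonzero, hence injective, map into the closed $2$-cochains — i.e.\ none of $\overline{\psi}\Gamma^a\psi$, $\overline{\psi}\Gamma_{a_1a_2}\psi$, $\overline{\psi}\Gamma_{a_1\cdots a_5}\psi$ vanishes identically, which follows from nondegeneracy of the Majorana pairing and the symmetry properties of $\Gamma_{a_1\cdots a_k}$ in $11$ dimensions (symmetric precisely for $k \equiv 1, 2, 5 \pmod 4$ in the relevant conventions). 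I expect this to be a short appeal to standard $11$d spinor combinatorics plus the cited definition, so the proof in the paper will likely be brief, with most of the work being the bookkeeping of indices in the equivariance check and the routine identification of the two factorization maps.
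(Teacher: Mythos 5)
Your proposal is correct and follows essentially the same route as the paper: identify the $2$-cocycles on $\mathbb{R}^{0\vert\mathbf{32}}$ with symmetric $32\times 32$ bispinors, decompose this $528$-dimensional space via $\Gamma_a$, $\Gamma_{a_1a_2}$, $\Gamma_{a_1\cdots a_5}$ (i.e.\ $\mathbb{R}^{10,1}\oplus\Exterior^2\oplus\Exterior^5$), and then read off the CE-presentation from Prop.~\ref{hofibofSuperLInfinityAlgebras}. The paper's own proof is in fact terser, proving only this identification and treating (ii) and (iii) as immediate consequences, exactly as your routine checks indicate.
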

\begin{proof}
The maximal invariant central extension, in the sense of \cite{HuertaSchreiber}, of $\mathbb{R}^{0\vert \mathbf{32}}$
is classified by the maximal tuple of linearly independent 2-cocycles. Since the $\psi^\alpha$ commute with each other,
there is one 2-cocycle for every symmetric $32 \times 32$-matrix. These may be identified with the
elements $\Gamma_a$, $\Gamma_{a_1 a_2}$, $\Gamma_{a_1 \cdots a_5}$ of a Clifford algebra representation of the
$\mathbf{32}$ of $\mathrm{Spin}(10,1)$.

Under this identification and with notation from Def. \ref{HigherCupProducts} the cocycle classifying the maximal central extension is,
up to isomorphism, given by
\begin{equation}
  \label{528TorusCocycle}
  \Big(
    (\overline{\psi}\Gamma_a \psi), \;
    \underset{
      \mu_{\mathrm{exc}}
    }{
      \underbrace{
        \big(
          ( \tfrac{i}{2} \overline{\psi} \Gamma_{a_1 a_2} \psi ),
          ( \tfrac{1}{5!} \overline{\psi} \Gamma_{a_1 \cdots a_5} \psi )
        \big)
      }
    }
  \Big)
  \;:\;
  \mathbb{R}^{0\vert \mathbf{32}}
    \longrightarrow
  b (\mathfrak{u}_1)^{528}\;.
\end{equation}
This implies the claim by  Prop. \ref{hofibofSuperLInfinityAlgebras}.
\end{proof}

\begin{remark}[The 11d exceptional super tangent spacetime]
  \label{ExceptionalSuperTangentSpacetime}
  That ``exceptional generalized tangent spaces'' of the form
  $\mathbb{R}^{d-1,1} \oplus \Exterior^2(\mathbb{R}^{d-1,1})^\ast \oplus \Exterior^5(\mathbb{R}^{d-1,1})^\ast$
  should play a role in M-theory compactified on a $d$-dimensional fiber was proposed in
  \cite{Hull07, PachecoWaldram08}. The full compactification with $d = 11$
  $$
    \mathbb{R}^{10,1} \oplus \Exterior^2(\mathbb{R}^{10,1})^\ast \oplus \Exterior^5(\mathbb{R}^{10,1})^\ast
  $$
  that appears in (\ref{ExceptionalTangentSpacetimeBosonicPart})
  has been the basis of various conjectures in \cite{Wes03}, see also \cite[Sec. 2.2]{Bar12}.
  That this 11d exceptional tangent bundle (\ref{ExceptionalTangentSpacetimeBosonicPart})
  happens to be the bosonic body of the D'Auria-Fr{\'e} (DF) algebra
  $\mathbb{R}^{10,1\vert \mathbf{32}}_{\mathrm{exc},s}$ (Def. \ref{FermionicExtensionOfExceptionalTangentSuperspacetime}, already due to \cite{DF})
  has been highlighted in \cite{Vaula07}.
However, it seems that no relation between the DF-algebra $\mathbb{R}^{10,1\vert \mathbf{32}}_{\mathrm{exc},s}$ and the  actual idea of M-theoretic generalized geometry promoted in \cite{Hull07, PachecoWaldram08, Bar12} has been observed before.
 This is indeed what we discuss below in Section \ref{ExceptionalGenralisedGeometry}.
\end{remark}

\begin{defn}[528-Bein on exceptional super tangent spacetime]
\label{528BeinOnExceptionalTangentSuperspacetime}
It is useful to abbreviate the CE-generators in expression \eqref{CEAlgebraOfExceptionalTangentSuperspacetime} as
$$
  \begin{aligned}
    \vec{\mathcal{E}}
    & := (\mathcal{E}_A)
    \\
    & := \big( (e_a), (B_{a_1 a_2}), (B_{a_1 \cdots a_5}) \big)\;.
  \end{aligned}
$$
This $(\mathcal{E}_A)$ may be thought of as the 528-bein on the exceptional tangent superspacetime.
\end{defn}

The following basic example will play a key role in the proof of Prop. \ref{HCohomologyOfDecomposedCFieldVanishesInSubalgebra} below:

\begin{example}
The \emph{exceptional volume form} on the exceptional tangent superspacetime is
  \begin{equation}
    \label{TheExceptionalVolumeForm}
    \begin{aligned}
    \mathrm{vol}_{528}
    & :=
    \underoverset{A = 1}{528}{\Exterior} \mathcal{E}_A
    \\
    & :=
    \Big(
      \underset{0 \leq a \leq 10}{\bigwedge} e^a
    \Big)
    \wedge
    \Big(
      \underset{0 \leq a_1, \lt a_2 \leq 10}{\bigwedge} B^{a_1 a_2}
    \Big)
    \wedge
    \Big(
      \underset{0 \leq a_1, \lt \cdots \lt a_5 \leq 10}{\bigwedge} B^{a_1 \cdots a_5}
    \Big)\;.
    \end{aligned}
  \end{equation}
\end{example}

\begin{defn}[Inner product on exceptional tangent superspacetime]
  \label{UniversalCupProductForTExc}
  On $b (\mathfrak{u}_1)^{528}$ consider, for $s \in \mathbb{R} \setminus \{0\}$,
   the following universal cup product (Def. \ref{HigherCupProducts})
  $$
    \begin{aligned}
    \big\langle
      \mathrm{pr}_1^\ast(\vec{\mathrm{dd}}),\;
      \mathrm{pr}_2^\ast(\vec{\mathrm{dd}})
    \big\rangle_{\mathrm{exc},s}
    & =
    (s+1)
    \mathrm{pr}_1^\ast(\mathrm{dd}^a)
    \wedge
    \mathrm{pr}_2^\ast(\mathrm{dd}_a)
    -
    \mathrm{pr}_1^\ast(\mathrm{dd}^{a_1 a_2})
    \wedge
    \mathrm{pr}_2^\ast(\mathrm{dd}_{a_1 a_2})
    \\
    & \phantom{=}
    \;\;\;+
    (1 + \tfrac{s}{6})
    \mathrm{pr}_1^\ast(\mathrm{dd}^{a_1 \cdots a_5})
    \wedge
    \mathrm{pr}_2^\ast(\mathrm{dd}_{a_1 \cdots a_5})\;.
    \end{aligned}
  $$
\end{defn}

\medskip
\begin{prop}[528-Toroidal T-duality correspondence on exceptional tangent super-spacetime]
  \label{ExceptionalToroidalTDuality}
  {\bf (i)} On the exceptional tangent superspacetime $\mathbb{R}^{10,1\vert\mathbf{32}}_{\mathrm{exc}}$ (Def. \ref{MaximalCentralExtensionOfNIs32Superpoint})
  the following CE-element is a $\mathrm{Spin}(10,1)$-invariant 3-cocycle (Example \ref{InvariantCocycle}) for each $s \in \mathbb{R} \setminus \{0\}$:
  \begin{equation}
    \label{StringCocycleOnExceptionalSpacetime}
    \hspace{-5mm}
    \begin{aligned}
      \big\langle
        \vec{\mathcal{E}}  \wedge \overline{\psi} \vec \Gamma \psi
      \big\rangle_{\mathrm{exc},s}
     & =
     (s+1)
     e_a \wedge \overline{\psi} \Gamma^a \psi
     -
     B_{a_1 a_2} \wedge \overline{\psi} \Gamma^{a_1 a_2} \psi
     +
     (1 + \tfrac{s}{6})
     B_{a_1 \cdots a_5} \wedge \overline{\psi} \Gamma^{a_1 \cdots a_5} \psi
         \\
    & \phantom{=}
     \;\in\;
     \mathrm{CE}\big(
        \mathbb{R}^{10,1\vert \mathbf{32}}_{\mathrm{exc}}
     \big)^{\mathrm{Spin}(10,1)}
     \,,
    \end{aligned}
  \end{equation}
  where on the left we used the 528-bein from Def. \ref{528BeinOnExceptionalTangentSuperspacetime} and the
  inner product from Def. \ref{UniversalCupProductForTExc}.

  \item {\bf (ii)} The cocycle (\ref{StringCocycleOnExceptionalSpacetime}) is in 528-toroidal T-duality correspondence with itself (Def. \ref{HigherTDualityCorrespondences})
  with respect to the exceptional tangent superspacetime regarded as a 528-torus fibration over the superpoint via Def. \ref{UniversalCupProductForTExc}:
  $$
    \xymatrix@C=3em{
      b^2 \mathfrak{u}_1
      &
      &
      \mathbb{R}^{10,1 \vert \mathbf{32}}_{\mathrm{exc}}
      \ar[dr]_{ \mathrm{hofib}(\psi \wedge \overline{\psi}) }
      \ar[ll]_{ \left\langle \vec{\mathcal{E}} \wedge \overline{\psi} \vec \Gamma \psi \right\rangle_{\mathrm{exc},s} }
      &&
      \mathbb{R}^{10,1 \vert \mathbf{32}}_{\mathrm{exc}}
      \ar[dl]^{ \mathrm{hofib}(\psi \wedge \overline{\psi}) }
      \ar[rr]^-{ \left\langle \vec{\mathcal{E}} \wedge \overline{\psi} \vec \Gamma \psi \right\rangle_{\mathrm{exc},s} }
      &&
      b^2 \mathfrak{u}_1
      \\
      &
      &
      &
      \mathbb{R}^{0 \vert 32}
    }
  $$
  \item {\bf (iii)}
  The cocycle \ref{StringCocycleOnExceptionalSpacetime} is also in 517-toroidal T-duality
  with itself (Def. \ref{HigherTDualityCorrespondences}) factoring via Prop.\ref{SpinActionOnExceptionalTangentSuperspacetime}
  over the 11d super-Minkowski spacetime $\mathbb{R}^{10,1\vert \mathbf{32}}$
  $$
    \xymatrix@C=3em{
      b^2 \mathfrak{u}_1
      &
      &
      \mathbb{R}^{10,1 \vert \mathbf{32}}_{\mathrm{exc}}
      \ar[dr]_{  }
      \ar[ll]_{ \left\langle \vec{\mathcal{E}} \wedge \overline{\psi} \vec \Gamma \psi \right\rangle_{\mathrm{exc},s} }
      &&
      \mathbb{R}^{10,1 \vert \mathbf{32}}_{\mathrm{exc}}
      \ar[dl]^{  }
      \ar[rr]^-{ \left\langle \vec{\mathcal{E}} \wedge \overline{\psi} \vec \Gamma \psi \right\rangle_{\mathrm{exc},s} }
      &&
      b^2 \mathfrak{u}_1
      \\
      &
      &
      &
      \mathbb{R}^{10,1\vert\mathbf{32}}
    }
  $$
  $$
  \xymatrix{
    \big(
      \mathbb{R}^{10,1\vert \mathbf{32}}_{\mathrm{exc}}
      \,,\,
      \big\langle
        \vec{\mathcal{E}}  \wedge \overline{\psi} \vec \Gamma \psi
      \big\rangle_{\mathrm{exc},s}
    \big)
    \ar@{<->}[rr]^-{{\cal T}_{\rm self}} &&
    \big(
      \mathbb{R}^{10,1\vert \mathbf{32}}_{\mathrm{exc}}
      \,,\,
      \big\langle
        \vec{\mathcal{E}}  \wedge \overline{\psi} \vec \Gamma \psi
      \big\rangle_{\mathrm{exc},s}
    \big)\;.
    }
  $$
\end{prop}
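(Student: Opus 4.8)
The plan is to deduce all three assertions, via the classification of T\nobreakdash-duality correspondences in Prop.~\ref{ClassifyingDataHigherTDuality}, from a single Fierz-type identity on the $N=32$ superpoint: for every $s$ as in the statement,
$$
  \big\langle\, \overline{\psi}\vec{\Gamma}\psi \,\wedge\, \overline{\psi}\vec{\Gamma}\psi \,\big\rangle_{\mathrm{exc},s} \;=\; 0 \quad\text{in } \mathrm{CE}\big(\mathbb{R}^{0\vert 32}\big)\,.
$$
Unwinding Def.~\ref{UniversalCupProductForTExc}, this says that the three $\mathrm{Spin}(10,1)$-invariant quartic spinor $4$-forms $A_1 := \overline{\psi}\Gamma^a\psi\wedge\overline{\psi}\Gamma_a\psi$, $A_2 := \overline{\psi}\Gamma^{a_1 a_2}\psi\wedge\overline{\psi}\Gamma_{a_1 a_2}\psi$ and $A_5 := \overline{\psi}\Gamma^{a_1\cdots a_5}\psi\wedge\overline{\psi}\Gamma_{a_1\cdots a_5}\psi$ satisfy $(s+1)A_1 - A_2 + (1+\tfrac{s}{6})A_5 = 0$ identically in $s$, equivalently (this being polynomial in $s$) the two relations $A_1 - A_2 + A_5 = 0$ and $6A_1 + A_5 = 0$. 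I would prove this by observing that the space $\big(\Exterior^4(\mathbf{32})^\ast\big)^{\mathrm{Spin}(10,1)}$ of invariant quartic spinor forms in $D=11$ is one-dimensional, so $A_1,A_2,A_5$ are automatically pairwise proportional, and then pinning down the two proportionality constants by a direct $\Gamma$-matrix computation (equivalently, by quoting the $D=11$ ``hidden superalgebra'' Fierz identities recalled in \cite{DF, BAIPV04, CDF}); the coefficients $(s+1)$, $-1$, $1+\tfrac{s}{6}$ in Def.~\ref{UniversalCupProductForTExc} are exactly those making the weighted sum vanish for all $s$. This identity is the one genuinely nontrivial ingredient; the only other computational content is routine bookkeeping of the normalisation factors $\tfrac{i}{2}$, $\tfrac{1}{5!}$ occurring in \eqref{528TorusCocycle} and \eqref{CEAlgebraOfExceptionalTangentSuperspacetime}, which merely rescale the $A_i$ and do not affect any conclusion, so I will not track them.

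Granting the Fierz identity, part (i) is immediate: the classifying cocycles $\overline{\psi}\vec{\Gamma}\psi$ of the extension $\mathbb{R}^{10,1\vert\mathbf{32}}_{\mathrm{exc}}$ (Def.~\ref{MaximalCentralExtensionOfNIs32Superpoint}, Prop.~\ref{SpinActionOnExceptionalTangentSuperspacetime}) are $d$-closed, so by the Leibniz rule together with $d\mathcal{E}_A = (\overline{\psi}\vec{\Gamma}\psi)_A$ from \eqref{CEAlgebraOfExceptionalTangentSuperspacetime} one gets $d\big\langle\vec{\mathcal{E}}\wedge\overline{\psi}\vec{\Gamma}\psi\big\rangle_{\mathrm{exc},s} = \big\langle\overline{\psi}\vec{\Gamma}\psi\wedge\overline{\psi}\vec{\Gamma}\psi\big\rangle_{\mathrm{exc},s} = 0$; and the expression is $\mathrm{Spin}(10,1)$-invariant since it is a complete $\mathrm{SO}(10,1)$-index contraction of $\mathrm{Spin}(10,1)$-equivariant generators, hence lies in $\mathrm{CE}\big(\mathbb{R}^{10,1\vert\mathbf{32}}_{\mathrm{exc}}\big)^{\mathrm{Spin}(10,1)}$, as claimed.

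For part (ii) I would apply Prop.~\ref{ClassifyingDataHigherTDuality} over the base $\mathfrak{g}=\mathbb{R}^{0\vert\mathbf{32}}$ (with its $\mathrm{Spin}(10,1)$-action), taking $\widevec{\mathrm{dd}}^A=\widevec{\mathrm{dd}}^B=\overline{\psi}\vec{\Gamma}\psi$ — the $528$-tuple of closed $\mathrm{Spin}(10,1)$-invariant $2$-cocycles which by Prop.~\ref{SpinActionOnExceptionalTangentSuperspacetime} classifies $\mathbb{R}^{10,1\vert\mathbf{32}}_{\mathrm{exc}}$ — with the cup product $\langle-,-\rangle_{\mathrm{exc},s}$ of Def.~\ref{UniversalCupProductForTExc} and the trivialisation $h_{\mathbb{R}^{0\vert\mathbf{32}}}:=0$, which is legitimate precisely because the Fierz identity gives $-\big\langle\overline{\psi}\vec{\Gamma}\psi\wedge\overline{\psi}\vec{\Gamma}\psi\big\rangle_{\mathrm{exc},s}=0$. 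Then formula \eqref{DecompositonOfHInTDualityCorrespondence} of Prop.~\ref{ClassifyingDataHigherTDuality} reconstructs on each of the two copies of $\mathbb{R}^{10,1\vert\mathbf{32}}_{\mathrm{exc}}$ exactly the twisting $3$-cocycle $h=\big\langle\vec{\mathcal{E}}\wedge\overline{\psi}\vec{\Gamma}\psi\big\rangle_{\mathrm{exc},s}$, giving the self-correspondence displayed in (ii); the pairing is non-degenerate for the relevant $s$, so the partial fiber integrations in Def.~\ref{HigherTDualityCorrespondences} are defined, and all data are $\mathrm{Spin}(10,1)$-invariant. Part (iii) is the same argument with base $\mathbb{R}^{10,1\vert\mathbf{32}}$ and $\widevec{\mathrm{dd}}^A=\widevec{\mathrm{dd}}^B=\mu_{\mathrm{exc}}$ the $517$-tuple $\big(\tfrac{i}{2}\overline{\psi}\Gamma_{a_1a_2}\psi,\ \tfrac{1}{5!}\overline{\psi}\Gamma_{a_1\cdots a_5}\psi\big)$ of Prop.~\ref{SpinActionOnExceptionalTangentSuperspacetime}(iii) and the restricted pairing; the only change is that $\langle\mu_{\mathrm{exc}}\wedge\mu_{\mathrm{exc}}\rangle_{\mathrm{exc},s}$ is now nonzero but, by the very same Fierz identity, equals $-(s+1)\,\overline{\psi}\Gamma^a\psi\wedge\overline{\psi}\Gamma_a\psi = -(s+1)\,d\big(e_a\wedge\overline{\psi}\Gamma^a\psi\big)$, so one takes $h_{\mathbb{R}^{10,1\vert\mathbf{32}}}:=(s+1)\,e_a\wedge\overline{\psi}\Gamma^a\psi$, and \eqref{DecompositonOfHInTDualityCorrespondence} again rebuilds $\big\langle\vec{\mathcal{E}}\wedge\overline{\psi}\vec{\Gamma}\psi\big\rangle_{\mathrm{exc},s}$ on both legs, factoring the correspondence $\mathrm{Spin}(10,1)$-equivariantly over $\mathbb{R}^{10,1\vert\mathbf{32}}$. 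The main obstacle throughout is the Fierz identity of the first paragraph; it — and the calibration of coefficients it forces in Def.~\ref{UniversalCupProductForTExc} — is the whole content, everything else being formal manipulation in the framework of Section~\ref{sphericaltduality}.
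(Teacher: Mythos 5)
Your proposal follows essentially the same route as the paper's proof: everything is reduced to the $s$-parameterized Fierz identity of D'Auria--Fr\'e/Bandos et al., and parts (ii) and (iii) are then read off from the reconstruction of Prop.~\ref{ClassifyingDataHigherTDuality}; your explicit choices $h_{\mathbb{R}^{0\vert\mathbf{32}}}=0$ and $h_{\mathbb{R}^{10,1\vert\mathbf{32}}}=(s+1)\,e_a\wedge\overline{\psi}\Gamma^a\psi$ are exactly what the paper's ``with this it is straightforward to verify'' amounts to, and that part of your argument is correct. The one point where you genuinely deviate is the proposed derivation of the Fierz identity from one-dimensionality of the space of invariant quartic spinor forms (note also that in the CE-algebra the $\psi^\alpha$ commute, so quartic expressions live in $\mathrm{Sym}^4(\mathbf{32})^\ast$, not in $\Exterior^4(\mathbf{32})^\ast$): since $A_1,A_2,A_5$ are the images of the three invariants of $\mathrm{Sym}^2\big(\mathrm{Sym}^2\mathbf{32}\big)$ and hence span the quartic invariants, the claim that this space is one-dimensional is \emph{equivalent} to the existence of the two relations you are trying to prove, so as stated this step is circular unless you actually compute $\dim\big(\mathrm{Sym}^4\mathbf{32}\big)^{\mathrm{Spin}(10,1)}=1$ by independent means (a character/plethysm computation you do not supply). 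Your fallback of quoting \cite{DF, BAIPV04} is precisely what the paper does -- its proof invokes the stronger trilinear identity \eqref{528FierzIdentity}, of which your quartic identity is the consequence needed here -- so the proof goes through on that branch. Finally, the normalisations $\tfrac{i}{2}$ and $\tfrac{1}{5!}$ that you decline to track are exactly where the calibration of the coefficients $(s+1)$, $-1$, $1+\tfrac{s}{6}$ resides (they enter through $d B_{a_1 a_2}$, $d B_{a_1\cdots a_5}$ and reappear in the reconstruction $\big\langle \vec b\wedge\widevec{\mathrm{dd}}\big\rangle$), so while dropping them is structurally harmless, it hides the only numerical content of the statement.
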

\begin{proof}
  The cocycle property to be checked requires that
  \begin{equation}
    \label{CocycleConditionOnExceptional}
    \begin{aligned}
      d \big\langle \vec{\mathcal{E}} \wedge \overline{\psi} \vec{\Gamma} \psi \big\rangle_{\mathrm{exc},s}
       = &
      \big\langle d \vec{\mathcal{E}} \wedge \overline{\psi} \vec{\Gamma} \psi \big\rangle_{\mathrm{exc},s}
      \\
       = &
      (s + 1) \left( \overline{\psi} \Gamma_a \psi \right) \wedge \left( \overline{\psi} \Gamma^a \psi \right)
      -
      \tfrac{i}{2}
      \left( \overline{\psi} \Gamma_{a_1 a_2} \psi \right) \wedge \left( \overline{\psi} \Gamma^{a_1 a_2} \psi \right)
      \\
      & \;\;+
      (1+ \tfrac{s}{6})
      \tfrac{1}{5!}
      \left( \overline{\psi} \Gamma_{a_1 \cdots a_5} \psi \right) \wedge \left( \overline{\psi} \Gamma^{a_1 \cdots a_5} \psi \right)
      \\
      = & 0\;.
    \end{aligned}
  \end{equation}
  This does indeed vanish by a trilinear Fierz identity for $\mathbf{32}$ that was originally observed in \cite[(6.4)]{DF};
  our parameterization by $s$ follows \cite[(23)]{BAIPV04}:
  \begin{equation}
    \label{528FierzIdentity}
    (s + 1) \left( \overline{\psi} \Gamma_a \psi \right) \wedge \overline{\psi} \Gamma^a
    -
    \tfrac{i}{2}
    \left( \overline{\psi} \Gamma_{a_1 a_2} \psi \right) \wedge \overline{\psi} \Gamma^{a_1 a_2}
    +
    (1+ \tfrac{s}{6})
    \tfrac{1}{5!}
    \left( \overline{\psi} \Gamma_{a_1 \cdots a_5} \psi \right) \wedge \overline{\psi} \Gamma^{a_1 \cdots a_5}
    \;=\;
    0
    \,.
  \end{equation}
  With this it is straightforward to verify that we have a T-duality self-correspondence
  according to Def. \ref{HigherTDualityCorrespondences}.
\end{proof}
\begin{remark}[Fermionic 2-cocycles on exceptional tangent superspacetime]
 \label{Fermionic2CocyclesOnExceptionalTangentSuperspacetime}
 The Fierz identity (\ref{528FierzIdentity}) is stronger than the cocycle condition
  \eqref{CocycleConditionOnExceptional} that it implies,
 since it says that already the cubic fermion term inside the quartic term vanishes by itself.
 This means that the single bosonic 3-cocycle (\ref{StringCocycleOnExceptionalSpacetime}) is in fact a
 linear combination of the following 32 fermionic 2-cocycles:
 \begin{equation}
   \label{Fermionic2Cocycles}
   \big\langle \vec{\mathcal{E}} \wedge \overline{\psi} \vec \Gamma \big\rangle
   \;:\;
   \mathbb{R}^{10,1\vert \mathbf{32}}_{\mathrm{exc},s}
   \longrightarrow
   b (\mathbb{R}^{0\vert 1})^{32}\;.
 \end{equation}
\end{remark}

\medskip

\subsection{Spherical T-duality for M5-branes on exceptional M-theory spacetime}
\label{SphericalTDualityOnExceptionalMTheorySpacetime}

We have seen two different extensions of M-theory super-spacetime above: In Section \ref{SphericalTDualityOfM5BranesOnM2ExtendedSpacetime}
we considered the M2-brane extension by the higher degree M2-cocycle and found spherical T-duality for M5-branes on it,
while in Section \ref{ToroidalTDualityOnExceptionalMTheorySpacetime} we discussed the exceptional tangent space extension by 517 cocycles of degree 2 on which we found toroidal topological T-duality.
Here we discuss how the spherical T-duality on the M2-brane extended super-spacetime
passes to a spherical T-duality on the exceptionally extended super-spacetime.

\begin{defn}[exceptional generalised super spacetime]
  \label{FermionicExtensionOfExceptionalTangentSuperspacetime}
  For $s \in \mathbb{R} \setminus \{0\}$, write
  $$
    \mathbb{R}^{10,1\vert \mathbf{32}}_{\mathrm{exc},s}
    \;\in\;
    sL_\infty \mathrm{Alg}^{\mathrm{fin}}_{\mathbb{R}}
  $$
  for the fermionic central extension of the exceptional tangent superspacetime $\mathbb{R}^{10,1\vert \mathbf{32}}$ (Def. \ref{MaximalCentralExtensionOfNIs32Superpoint})
  which is classified by the 32 fermionic 2-cocycles
  $$
    \mu_{\mathrm{exc},s}
    :=
    \big\langle \vec{\mathcal{E}} \wedge \overline{\psi}\vec \Gamma \big\rangle
  $$
  (\ref{Fermionic2Cocycles}) from Remark \ref{Fermionic2CocyclesOnExceptionalTangentSuperspacetime}:
  $$
    \xymatrix{
      \mathbb{R}^{10,1\vert \mathbf{32}}_{\mathrm{exc},s}
      \ar[d]_{ \mathrm{hofib}( \mu_{\mathrm{exc},s} ) }
      \\
      \mathbb{R}^{10,1\vert \mathbf{32}}_{\mathrm{exc}}
      \ar[rr]^{ \mu_{\mathrm{exc},s} }
      &&
      b(\mathbb{R}^{0 \vert 1})^{32}\;.
    }
  $$
  By Prop. \ref{hofibofSuperLInfinityAlgebras} this means that we may take
  \begin{equation}
    \label{ExceptionalFermionicGenerator}
    \mathrm{CE}\big(
      \mathbb{R}^{10,1\vert \mathbf{32}}_{\mathrm{exc},s}
    \big)
    \;=\;
    \mathrm{CE}\big(
      \mathbb{R}^{10,1\vert \mathbf{32}}_{\mathrm{exc}}
    \big)
   [(\eta^\alpha)]/\big( d \overline{\eta} =  \big\langle \vec{\mathcal{E}} \wedge \overline{\psi}\vec \Gamma \big\rangle \big)\;.
  \end{equation}
\end{defn}

\begin{prop}[Transgression of M2-cocycle via decomposed C-field]
  \label{TransgressionElementForM2Cocycle}
  For $s \in \mathbb{R} \setminus \{0\}$, the fermionic extension of exceptional tangent superspacetime
  $\mathbb{R}^{10,1\vert \mathbf{32}}_{\mathrm{exc},s}$ (Def. \ref{FermionicExtensionOfExceptionalTangentSuperspacetime})
  regarded a fibered over 11d super-Minkowski spacetime
  $\mathbb{R}^{10,1\vert \mathbf{32}}$
  $$
   \xymatrix@R=1.5em{
      \Exterior^2 (\mathbb{R}^{10,1})^\ast \oplus \Exterior^5 (\mathbb{R}^{10,1})^\ast \oplus \mathbf{32}
      \; \ar@{^{(}->}[rr]^-{\iota}
      &&
      \mathbb{R}^{10,1\vert \mathbf{32}}_{\mathrm{exc},s}
     \ar@{->>}[dd]_{}^{\pi_{\mathrm{exc},s}}
      \\
    \\
    &&
    \mathbb{R}^{10,1\vert \mathbf{32}}\;,
    }
  $$
  carries a transgression element (Def. \ref{TransgressionElements})
  $c_{\mathrm{exc},s} \;\in\; \mathrm{CE}(\mathbb{R}^{10,1\vert \mathbf{32}})$
  for the M2-brane 4-cocycle (\ref{M2BraneCocycle}):
  \begin{equation}
    \label{TransgressionOfM2CocycleViaDecomposedCField}
     \xymatrix{
        c_{\mathrm{fib},s}
        &&&
        \mathrm{CE}
        \big(
          \Exterior^2 (\mathbb{R}^{10,1})^\ast
          \oplus
          \Exterior^5 (\mathbb{R}^{10,1})^\ast
          \oplus
          \mathbf{32}
        \big)^{\mathrm{Spin}(10,1)}
        \\
        c_{\mathrm{exc},s}
          \ar@{|->}[r]^{d}
          \ar@{|->}[u]_{\iota^\ast}
        & d c_{\mathrm{exc},s}
        &&
        \mathrm{CE}
          \big(
            \mathbb{R}^{10,1\vert \mathbf{32}}_{\mathrm{exc},s}
          \big)^{\mathrm{Spin}(10,1)}
        \ar[u]_{\iota^\ast}
        \\
        &
        \mu_{{}_{M2}}
        \ar@{|->}[u]_{\pi_{\mathrm{exc},s}^\ast}
        &&
        \mathrm{CE}\big(\mathbb{R}^{10,1\vert \mathbf{32}}\big)^{\mathrm{Spin}(10,1)}
        \ar[u]_{\pi^\ast}
      }
    \end{equation}
\end{prop}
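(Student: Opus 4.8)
The plan is to exhibit an explicit $\mathrm{Spin}(10,1)$-invariant element $c_{\mathrm{exc},s}\in\mathrm{CE}\big(\mathbb{R}^{10,1\vert\mathbf{32}}_{\mathrm{exc},s}\big)^{\mathrm{Spin}(10,1)}$ of degree $(3,\mathrm{even})$, written as a cubic polynomial in the degree-$1$ generators $e^a$, $B_{a_1 a_2}$, $B_{a_1\cdots a_5}$, $\psi^\alpha$, $\eta^\alpha$ of the Chevalley--Eilenberg algebra \eqref{ExceptionalFermionicGenerator}, and then to verify directly the two conditions that make it a transgression element in the sense of Def. \ref{TransgressionElements}: that $d\,c_{\mathrm{exc},s} = \pi_{\mathrm{exc},s}^\ast(\mu_{{}_{M2}})$, with $\mu_{{}_{M2}}$ as in \eqref{M2BraneCocycle}, and that $\iota^\ast(c_{\mathrm{exc},s}) =: c_{\mathrm{fib},s}$ is a cocycle on the fiber. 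The element $c_{\mathrm{exc},s}$ is, up to normalization, the $3$-form of the ``hidden'' D'Auria--Fr{\'e} algebra \cite{DF} in the $s$-parameterization of \cite[(23)]{BAIPV04}; producing it is precisely what realizes the present statement as the super $L_\infty$-algebraic avatar of the decomposition of the supergravity $C$-field over the DF-algebra.

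First I would write the general $\mathrm{Spin}(10,1)$-invariant degree-$3$ ansatz --- a linear combination of monomials such as $B_{a_1 a_2}\wedge e^{a_1}\wedge e^{a_2}$, a suitably contracted $B_{a_1\cdots a_5}$-times-five-$e$'s term, $\overline{\eta}\,\Gamma_a\psi\wedge e^a$, $\overline{\eta}\,\Gamma_{a_1 a_2}\psi\wedge B^{a_1 a_2}$ and its degree-$5$ analogue --- with undetermined, a priori $s$-dependent, coefficients, and impose $d\,c_{\mathrm{exc},s} = \pi_{\mathrm{exc},s}^\ast(\mu_{{}_{M2}})$. Expanding the differential with $de^a = \overline{\psi}\Gamma^a\psi$, $dB_{a_1 a_2} = \tfrac{i}{2}\overline{\psi}\Gamma_{a_1 a_2}\psi$, $dB_{a_1\cdots a_5} = \tfrac{1}{5!}\overline{\psi}\Gamma_{a_1\cdots a_5}\psi$, $d\psi^\alpha = 0$ and $d\overline{\eta} = \langle\vec{\mathcal{E}}\wedge\overline{\psi}\vec{\Gamma}\rangle$ (see \eqref{StringCocycleOnExceptionalSpacetime}, \eqref{Fermionic2Cocycles}), the output separates into (i) terms linear in one $B$- or $\eta$-generator and bilinear in $\psi$, which are matched against $\tfrac{i}{2}\overline{\psi}\Gamma_{a_1 a_2}\psi\wedge e^{a_1}\wedge e^{a_2}$ and cancelled among themselves by elementary $\Gamma$-matrix bookkeeping (antisymmetry of the $B$'s, symmetry of the spinor-to-vector pairing), which fixes the bosonic coefficients; and (ii) terms quartic in $\psi$, each a wedge of two spinor bilinears $\overline{\psi}\vec{\Gamma}\psi$, which must vanish identically. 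The cancellation in (ii) is exactly the trilinear Fierz identity \eqref{528FierzIdentity}, and it is the weighting $(s+1,\,-1,\,1+\tfrac{s}{6})$ occurring there that both forces and is consistent with the $s$-dependence of the coefficients of $c_{\mathrm{exc},s}$.

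Granting this, the two transgression conditions follow at once. Invariance of $c_{\mathrm{exc},s}$ is manifest because all Lorentz indices are contracted, so it lies in the invariant subcomplex and the diagram \eqref{TransgressionOfM2CocycleViaDecomposedCField} commutes by construction. For the fiber leg: under $\iota$ the generators $e^a$ and $\psi^\alpha$ restrict to $0$, so the Chevalley--Eilenberg differential on $\mathrm{CE}\big(\Exterior^2(\mathbb{R}^{10,1})^\ast\oplus\Exterior^5(\mathbb{R}^{10,1})^\ast\oplus\mathbf{32}\big)$ is identically zero; hence $c_{\mathrm{fib},s} := \iota^\ast(c_{\mathrm{exc},s})$ --- the part of $c_{\mathrm{exc},s}$ built only from the $B$- and $\eta$-generators --- is automatically closed, which one can also see from $d\,\iota^\ast(c_{\mathrm{exc},s}) = (\pi_{\mathrm{exc},s}\circ\iota)^\ast(\mu_{{}_{M2}}) = 0$ since $\pi_{\mathrm{exc},s}\circ\iota$ is the zero morphism. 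By Example \ref{HigherCentralExtensionTransgressionElement} this is equivalently the statement that $c_{\mathrm{exc},s}$ determines a fibration morphism $\mathbb{R}^{10,1\vert\mathbf{32}}_{\mathrm{exc},s}\to\mathfrak{m}2\mathfrak{brane}$ over $\mathbb{R}^{10,1\vert\mathbf{32}}$ sending $c_3\mapsto c_{\mathrm{exc},s}$.

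The hard part is the Fierz bookkeeping in step (ii) --- checking that \emph{every} quartic-$\psi$ contribution cancels. The structural point behind it, and the reason the extra fermionic generator $\eta$ is indispensable (so that the purely bosonic exceptional tangent space of Prop. \ref{SpinActionOnExceptionalTangentSuperspacetime} does not suffice), is that $d\overline{\eta}$ is designed so that $d\big(\overline{\eta}\wedge(\cdots)\big)$ reproduces the Fierz combination \eqref{528FierzIdentity} wedged with $e^a$, i.e.\ precisely the cocycle condition \eqref{CocycleConditionOnExceptional} that makes $\mathbb{R}^{10,1\vert\mathbf{32}}_{\mathrm{exc},s}$ well-defined; this is what absorbs the quartic obstruction that would otherwise survive. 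This is the computation of \cite{DF, BAIPV04}, reorganized here $L_\infty$-algebraically, and the resulting explicit $c_{\mathrm{exc},s}$ is the one whose $C$-cohomology is computed in Prop. \ref{HCohomologyOfDecomposedCFieldVanishesInSubalgebra} below.
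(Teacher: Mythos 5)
Your route is in essence the paper's: the proof there consists of nothing more than citing \cite{DF} and \cite{BAIPV04} for the solution of $d\,c_{\mathrm{exc},s}=\pi_{\mathrm{exc},s}^\ast(\mu_{{}_{M2}})$ and then displaying the explicit element, split as $c_{\mathrm{exc},s}=(c_{\mathrm{exc},s})_{\mathrm{bos}}+(c_{\mathrm{exc},s})_{\mathrm{ferm}}$ with $s$-dependent coefficients and the list of their zeros \eqref{CoefficientZerosInDecomposedCField}. Your plan --- general $\mathrm{Spin}(10,1)$-invariant cubic ansatz, impose the transgression condition, defer the Fierz bookkeeping to the D'Auria--Fr\'e/BAIPV computation, and note that the fiber restriction is automatically closed since $e^a,\psi^\alpha\mapsto 0$ kills the fiber differential --- is therefore acceptable and matches the paper, with the caveat that the explicit coefficients (whose zeros are used later, in Prop.~\ref{HCohomologyOfDecomposedCFieldVanishesInSubalgebra}) are only obtained by actually quoting or redoing that computation.

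Two details of your sketch are off and would trip you up if you carried the computation out rather than citing it. First, the ansatz examples: all CE generators sit in degree $1$, so a ``$B_{a_1\cdots a_5}$ times five $e$'s'' monomial has degree $6$, not $3$; and the actual solution \eqref{DecomposedCFieldBosonicContribution} contains, besides $B_{ab}\wedge e^a\wedge e^b$, the purely-$B$ cubic terms $B^a{}_b\wedge B^b{}_c\wedge B^c{}_a$, $B_{b_1 a_1\cdots a_4}\wedge B^{b_1}{}_{b_2}\wedge B^{b_2 a_1\cdots a_4}$ and the two $\epsilon$-contracted $B_5 B_5 B_5$ and $B_5 B_5 e$ terms, which carry the coefficients $\alpha_1,\dots,\alpha_4$ your list omits. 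Second, the term structure of $d\,c_{\mathrm{exc},s}$: since $c_{\mathrm{exc},s}$ is cubic and $d$ is a derivation replacing one generator at a time, no quartic-$\psi$ terms (wedges of two bilinears) ever appear. What actually has to be controlled is (a) the terms with two bosonic generators and one bilinear $\overline{\psi}\vec\Gamma\psi$, coming both from $d$ of the bosonic part and from the $d\overline{\eta}$-contribution of the fermionic part, which must reproduce exactly $\tfrac{i}{2}\overline{\psi}\Gamma_{ab}\psi\wedge e^a\wedge e^b$ while all $B$-containing ones cancel (this is where $\eta$ is indispensable), and (b) the terms of the form $\overline{\eta}\wedge(\text{cubic in }\psi)$ from $\overline{\eta}\wedge\psi\wedge d(\mathcal{E})$, whose vanishing is precisely the trilinear Fierz identity \eqref{528FierzIdentity}. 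The quartic identity \eqref{CocycleConditionOnExceptional} is the cocycle condition on $\mathbb{R}^{10,1\vert\mathbf{32}}_{\mathrm{exc}}$ and is not what enters here. With these corrections your argument is the intended one.
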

\begin{proof}
  The condition
  $$
    d c_{\mathrm{exc},s} = \pi_{\mathrm{exc},s}^\ast( \mu_{{}_{M2}} )
  $$
  had been solved for two values of $s$ in \cite[Section 6]{DF},
  for the other values of $s$ in \cite[Section 3]{BAIPV04}. Explicitly, in terms of the CE-generators
  from  (\ref{CEAlgebraOfExceptionalTangentSuperspacetime}) and (\ref{ExceptionalFermionicGenerator}),
   the transgression element reads
\begin{equation}
  \label{DecomposedCFieldAsSumOfBosonicAndFermionicContribution}
  c_{\mathrm{exc},s}
  \;=\;
  (c_{\mathrm{exc},s})_{\mathrm{bos}}
  +
  (c_{\mathrm{exc},s})_{\mathrm{ferm}}
\end{equation}
with
\begin{equation}
  \label{DecomposedCFieldBosonicContribution}
\begin{aligned}
   (c_{\mathrm{exc},s})_{\mathrm{bos}}
   &  =
   \raisebox{-12pt}{
   $
   \left.
   \begin{array}{l}
     \phantom{+} \alpha_0(s)  B_{a b} \wedge e^a \wedge e^b
     +
     \alpha_1(s) B^{a_1}{}_{a_2} \wedge B^{a_2}{}_{a_3} \wedge B^{a_3}{}_{a_1}
    \\
    +
    \alpha_2(s) B_{b_1 a_1 \cdots a_4} \wedge B^{b_1}{}_{b_2} \wedge B^{b_2 a_1 \cdots a_4}
    \\
    +
    \alpha_4(s)
    \epsilon_{\alpha_1 \cdots \alpha_6 b_1 \cdots b_5} B^{a_1 a_2 a_3}{}_{c_1 c_2} \wedge B^{a_4 a_5 a_6 c_1 c_2} \wedge B^{b_1 \cdots b_5}
   \end{array}
   \right\} c_{\mathrm{fib},s}
   $
   }
   \\
   & \phantom{=} \;\;\;\;\;
   -  \alpha_3(s) \epsilon_{\alpha_1 \cdots \alpha_5 b_1 \cdots b_5 c} B^{a_1 \cdots a_5} \wedge B^{b_1 \cdots b_5} \wedge e^c
  \end{aligned}
\end{equation}
and
\begin{equation}
  \label{DecomposedCFieldFermionicContribution}
  (c_{\mathrm{exc},s})_{\mathrm{ferm}}
  =
  -
  \tfrac{1}{2}
  \overline{\eta}_\alpha \wedge \psi^{\beta}
  \wedge
  \Big(
    \beta_1(s) (\Gamma_a)^\alpha{}_\beta \; e^a
    +
    \beta_2(s) (\Gamma_{a b})^\alpha{}_\beta \; B^{a b}
    +
    \beta_3(s) (\Gamma_{a_1 \cdots a_5})^\alpha{}_{\beta} \; B^{a_1 \cdots a_5}
  \Big)\;,
\end{equation}
for analytic functions $\alpha_i, \beta_j$ of the parameter $s \in \mathbb{R} \setminus \{0\}$ with the following zeros
\begin{equation}
  \label{CoefficientZerosInDecomposedCField}
  \begin{array}{lcl}
    \alpha_0(s) \neq 0
    \\
    \alpha_1(s) = 0 &\Leftrightarrow& s = -3
    \\
    \alpha_2(s) = 0 &\Leftrightarrow& s = -6
    \\
    \alpha_3(s) = 0 &\Leftrightarrow& s = -6
    \\
    \alpha_4(s) = 0 &\Leftrightarrow& s = -6
    \\
    \beta_1(s) = 0 &\Leftrightarrow& s = -3/2
    \\
    \beta_2(s) = 0 &\Leftrightarrow& s = -3
    \\
    \beta_3(s) = 0 &\Leftrightarrow& s = -6.
  \end{array}
\end{equation}

\vspace{-6mm}
\end{proof}

\medskip
Now by Prop. \ref{TransgressionElementForM2Cocycle} we may ask whether the spherical T-duality
correspondence of the M5-brane cocycle from Example \ref{HigherTDualCorrespondenceForM2Brane}
transfers along the comparison map
$$
  \xymatrix{
    \mathbb{R}^{10,1\vert \mathbf{32}}_{\mathrm{exc},s}
    \ar[rr]^{ \mathrm{comp}_s }
    \ar[dr]_{\pi_{\mathrm{exc},s}}
    &&
    \mathfrak{m}2\mathfrak{brane}
    \ar[dl]^{\;\; \mathrm{hofib}( \mu_{{}_{M2}} ) }
    \\
    & \mathbb{R}^{10,1\vert \mathbf{32}}
  }
$$
to an isomorphism on the M5-brane twisted cohomology of the super exceptional super tangent spacetime of Def. \ref{tildemuM5TwistedCohomologyOfTangentSuperspacetime}.
By Theorem \ref{HigherTDualityForDecomposedFields} this requires analysis of the C-cohomology of the decomposed C-field:

\begin{prop}[C-cohomology of decomposed supergravity C-field]
  \label{HCohomologyOfDecomposedCFieldVanishesInSubalgebra}
  For $s \in \mathbb{R} \setminus \{ 0,-3/2, -3, -6 \}$,
  the C-cohomology (Def. \ref{CCohomology}) of the decomposed C-field
  $c_{\mathrm{exc},s} \in \mathrm{CE}( \mathbb{R}^{10,1\vert\mathbf{32}}_{\mathrm{exc},s} )^{\mathrm{Spin}(10,1)}$
  (Prop. \ref{TransgressionElementForM2Cocycle}) is spanned by
  elements which are wedge products $f(\psi) \wedge \mathrm{vol}_{528}$ of
  elements generated from $\overline{\psi} \wedge \psi$ with the volume form $\mathrm{vol}_{528}$ (expression \eqref{TheExceptionalVolumeForm}) of the 528-dimensional exceptional superspacetime.
\end{prop}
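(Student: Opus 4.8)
The plan is to compute the cohomology of the complex $\big(\mathrm{CE}(\mathbb{R}^{10,1\vert\mathbf{32}}_{\mathrm{exc},s})^{\mathrm{Spin}(10,1)},\; c_{\mathrm{exc},s}\wedge(-)\big)$ by exploiting the bigrading of this free graded-commutative algebra by the number of bosonic factors $\mathcal{E}_A$ (ranging over $0,\dots,528$) and the number of fermionic $\eta$-factors. One inclusion is immediate: by the explicit formulas \eqref{DecomposedCFieldBosonicContribution}--\eqref{DecomposedCFieldFermionicContribution} every monomial of $c_{\mathrm{exc},s}$ contains at least one factor $\mathcal{E}_A$, whereas $\mathrm{vol}_{528}=\bigwedge_A\mathcal{E}_A$ already exhausts all $528$ of them and these generators anticommute; hence $c_{\mathrm{exc},s}\wedge\big(f(\psi)\wedge\mathrm{vol}_{528}\big)=0$ for every polynomial $f(\psi)$ in the spinor bilinears. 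The content of the proposition is therefore that for $s\notin\{0,-3/2,-3,-6\}$ these are \emph{all} the cohomology classes: every $c_{\mathrm{exc},s}$-cocycle is cohomologous to such an element, and no nonzero such element is a $c_{\mathrm{exc},s}$-coboundary.

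I would prove this by filtering by the number of $\eta$-factors. In the splitting $c_{\mathrm{exc},s}=(c_{\mathrm{exc},s})_{\mathrm{bos}}+(c_{\mathrm{exc},s})_{\mathrm{ferm}}$ of \eqref{DecomposedCFieldAsSumOfBosonicAndFermionicContribution}, the bosonic summand preserves $\eta$-degree while the fermionic one raises it by one, so on the associated graded the differential is multiplication by the fixed cubic $(c_{\mathrm{exc},s})_{\mathrm{bos}}$ in the $528$ generators $\mathcal{E}_A$, whose coefficients $\alpha_i(s)$ are all nonzero precisely for $s\notin\{-3,-6\}$. The key algebraic input --- essentially a repackaging of the Fierz identities underlying the C-field decomposition of \cite{DF,BAIPV04} --- is that under this condition $(c_{\mathrm{exc},s})_{\mathrm{bos}}$ is ``non-degenerate'', in the sense that multiplication by it on $\Lambda^\bullet(\mathcal{E})$ admits a contracting homotopy (built from the contractions $\partial/\partial\mathcal{E}_A$ composed with inverses of the bilinear pairings visible in \eqref{DecomposedCFieldBosonicContribution}) on the subspace not annihilated by all $\mathcal{E}_A$, whose complement is generated by $\mathrm{vol}_{528}$. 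The spectral sequence then leaves possible surviving classes supported on $\mathrm{vol}_{528}$ times $\mathrm{Spin}(10,1)$-invariant polynomials in $\psi$ and $\eta$; running the entirely parallel argument for the induced differentials coming from the fermionic piece \eqref{DecomposedCFieldFermionicContribution} --- which is ``non-degenerate'' exactly when all $\beta_j(s)\neq 0$, i.e. $s\notin\{-3/2,-3,-6\}$ --- eliminates the remaining $\mathcal{E}$- and $\eta$-dependence apart from the factor $\mathrm{vol}_{528}$, leaving $\mathrm{vol}_{528}$ times invariant polynomials in $\overline\psi\wedge\psi$. Finally, that these classes are genuinely nontrivial comes out of the same bookkeeping: $\mathrm{vol}_{528}$ is a coboundary for $(c_{\mathrm{exc},s})_{\mathrm{bos}}$ but not for $c_{\mathrm{exc},s}$, because correcting for $(c_{\mathrm{exc},s})_{\mathrm{ferm}}$ pushes the representative to the top of the $\eta$-filtration, where it cannot be hit for degree reasons (every $\mathcal{E}_A$ and every $\overline\eta$ has $\mathbb{Z}$-degree $1$).

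The main obstacle is precisely this non-degeneracy statement for $(c_{\mathrm{exc},s})_{\mathrm{bos}}$ and its fermionic analogue: it is a genuine computation in the $528$-generator exterior algebra, complicated by the need to stay inside the $\mathrm{Spin}(10,1)$-invariant subalgebra, which does not split as a tensor product of invariants in the $\mathcal{E}_A$ and invariants in $\psi$. It is here, and only here, that the exclusion of $s\in\{0,-3/2,-3,-6\}$ enters: at those values one of the coefficients $\alpha_i(s),\beta_j(s)$ vanishes by \eqref{CoefficientZerosInDecomposedCField}, the corresponding contraction fails to be invertible, and extra cohomology classes --- not of the form $f(\psi)\wedge\mathrm{vol}_{528}$ --- appear, which is why the statement restricts $s$ in this way.
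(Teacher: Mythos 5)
Your easy direction is fine, but the core of your argument rests on a lemma that is both unproven and almost certainly false, and even granting it your scheme would not produce the stated answer. You filter by $\eta$-degree so that the associated-graded differential is multiplication by the bosonic cubic $(c_{\mathrm{exc},s})_{\mathrm{bos}}$, and you assert as the key input that this operator admits a contracting homotopy away from the span of $\mathrm{vol}_{528}$ whenever all $\alpha_i(s)\neq 0$. Nothing of that sort follows from the Fierz identities (those only guarantee $d\,c_{\mathrm{exc},s}=\mu_{{}_{M2}}$), and for degree-$3$ elements of an exterior algebra such concentration is not a ``nondegeneracy'' phenomenon at all: already for a decomposable cubic $x\wedge y\wedge z$ in $\Lambda[x,y,z]$ the multiplication cohomology is $(x,y,z)/\langle x\wedge y\wedge z\rangle$, i.e.\ almost everything, and even for a genuinely nondegenerate symplectic $2$-form the multiplication cohomology is the large space of primitive middle-degree forms. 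Worse, even if your lemma held, your second step collapses: every monomial of $(c_{\mathrm{exc},s})_{\mathrm{ferm}}$ in \eqref{DecomposedCFieldFermionicContribution} contains one bosonic generator, so it annihilates every multiple of $\mathrm{vol}_{528}$ identically; hence in your spectral sequence all induced differentials on the surviving $\mathrm{vol}_{528}$-classes vanish and you would be left with $\mathrm{vol}_{528}$ times \emph{all} $\mathrm{Spin}(10,1)$-invariants in $\psi$ \emph{and} $\eta$ --- strictly more than the asserted $f(\psi)\wedge\mathrm{vol}_{528}$. Your filtration puts the two summands of $c_{\mathrm{exc},s}$ in the wrong order, leaving no mechanism to eliminate the $\eta$-dependence.

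The actual mechanism is that the fermionic component, not the bosonic one, does the work. In the $\mathrm{Spin}(10,1)$-invariant algebra one has $(c_{\mathrm{exc},s})_{\mathrm{ferm}}=\mathrm{dp}_A\wedge\mathrm{dx}^A$, where $(\mathrm{dx}^A)=(e^a,B^{a_1 a_2},B^{a_1\cdots a_5})$ are the $528$ bosonic generators (total degree odd) and $(\mathrm{dp}_A)=(\overline{\eta}\Gamma_a\psi,\overline{\eta}\Gamma_{a_1 a_2}\psi,\overline{\eta}\Gamma_{a_1\cdots a_5}\psi)$ are the $528$ $\eta\psi$-bilinears (total degree even), all nonvanishing exactly when $\beta_1,\beta_2,\beta_3\neq 0$, i.e.\ $s\notin\{-3/2,-3,-6\}$. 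This is an odd-symplectic (Koszul-type) element, whose C-cohomology is known (\cite{Sev05}) to be spanned precisely by $f(\psi)\wedge\mathrm{vol}_{528}$, with the manifestly $\mathrm{Spin}(10,1)$-equivariant homotopy operator $\partial_{\mathrm{dp}_A}\partial_{\mathrm{dx}^A}$ handling the restriction to invariants --- the point you flagged as an obstacle but did not resolve. One then introduces the bidegree $(\mathrm{deg}_1,\mathrm{deg}_0)=(n_{\mathrm{bos}}-\tfrac12 n_{\mathrm{ferm}},\,\tfrac12 n_{\mathrm{ferm}})$, in which $(c_{\mathrm{exc},s})_{\mathrm{bos}}$ and $(c_{\mathrm{exc},s})_{\mathrm{ferm}}$ have bidegrees $(3,0)$ and $(0,1)$, and runs the double-complex spectral sequence whose first page is this fermionic C-cohomology; inspection of the bidegrees of the surviving classes shows that no higher differential can be nonzero, so the sequence collapses. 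In particular no statement about the bosonic cubic is ever needed and the coefficients $\alpha_i(s)$ play no role --- only the $\beta_j(s)$ enter, which is why the excluded set is $\{0,-3/2,-3,-6\}$.
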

\begin{proof}
  Write $n_{\mathrm{bos}}$ and $n_{\mathrm{ferm}}$ for the numbers of bosonic or fermionic generators,
  respectively, in a wedge product. Then consider the degrees
  $$
    \mathrm{deg}_0 := \tfrac{1}{2}n_{\mathrm{ferm}}
    \qquad \text{and} \qquad
    \mathrm{deg}_1 := n_{\mathrm{bos}} - \tfrac{1}{2}n_{\mathrm{ferm}}
    \,.
  $$
  With respect to this bigrading the two summands of the decomposed C-field (\ref{DecomposedCFieldAsSumOfBosonicAndFermionicContribution})
  have the following bidegrees:
  $$
    c_{\mathrm{exc},s}
    \;=\;
    \underset{
      (\mathrm{deg}_1, \mathrm{deg}_0) = (3,0)
    }{
    \underbrace{
      (c_{\mathrm{exc},s})_{\mathrm{bos}}
    }}
      +
    \underset{
       (0,1)
    }{
    \underbrace{
      (c_{\mathrm{exc},s})_{\mathrm{ferm}}
    }
    }\;.
  $$
  Therefore, the $c_{\mathrm{exc},s}\wedge$-complex is the direct sum of three total complexes
  of the following three double complexes
  $$
    \left(
      \mathrm{CE}( \mathbb{R}^{10,1\vert\mathbf{32}}_{\mathrm{exc},s} )^{\mathrm{Spin}(10,1)}\vert_{\mathrm{deg}_1 \,\mathrm{mod}\, 3 = \epsilon  }
      \; ,
      d_0 := (c_{\mathrm{exc},s})_{\mathrm{ferm}}\wedge
      \,,\,
      d_1 := (c_{\mathrm{exc},s})_{\mathrm{bos}}\wedge
    \right)
  $$
  for the off-set
  $$
    \epsilon \in \{0,1,2\}.
  $$
  Since these bicomplexes are concentrated in a half plane, we may compute the $c_{\mathrm{exc},s}$-cohomology by the
  corresponding double complex spectral sequences (see page \pageref{page} for illustration)
  \begin{equation}
  \label{DoubleComplexSpectralSequencesForCCohomology}
  \hspace{-5mm}
  \begin{aligned}
    E_2^{\bullet, \bullet}
    =
    H^\bullet_{(c_{\mathrm{exc},s})_{\mathrm{bos}}}
    \Big(
      H^\bullet_{(c_{\mathrm{exc},s})_{\mathrm{ferm}}}
      \big(
        \mathrm{CE}( \mathbb{R}^{10,1\vert\mathbf{32}}_{\mathrm{exc},s} )^{\mathrm{Spin}(10,1)}\vert_\epsilon
      \big)
     \Big)
    \;\xymatrix@=2em{\ar@{=>}[r]&}\;
    H^\bullet_{c_{\mathrm{exc},s}}
    \left(
     \mathrm{CE}( \mathbb{R}^{10,1\vert\mathbf{32}}_{\mathrm{exc},s} )^{\mathrm{Spin}(10,1)}\vert_\epsilon
    \right)
    \\
  \end{aligned}
  \end{equation}
  Now in the $\mathrm{Spin}(10,1)$-invariant subalgebra the fermions always appear paired, as a linear combination of the
  528 degree-2 elements which are quadratic in the gravitino field
  \begin{equation}
    \label{QuadraticGravitinoGenerators}
    (\overline{\psi}\wedge \psi):=\Big(
    \overline{\psi}_\alpha \wedge \psi^{\beta}
    (\Gamma_a)^\alpha{}_\beta
    \;,\;
    \overline{\psi}_\alpha \wedge \psi^{\beta}
    (\Gamma_{a_1 a_2})^\alpha{}_\beta
    \;,\;
    \overline{\psi}_\alpha \wedge \psi^{\beta}
    (\Gamma_{a_1 \cdots a_5})^\alpha{}_{\beta}
    \Big)\;,
  \end{equation}
  as well as the 528 degree-2 elements which are products of a gravitino field with the auxiliary fermion $\eta$,
  $$
    (\mathrm{dp}_A)
    :=
    \Big(
    \overline{\eta}_\alpha \wedge \psi^{\beta}
    (\Gamma_a)^\alpha{}_\beta
    \;,\;
    \overline{\eta}_\alpha \wedge \psi^{\beta}
    (\Gamma_{a_1 a_2})^\alpha{}_\beta
    \;,\;
    \overline{\eta}_\alpha \wedge \psi^{\beta}
    (\Gamma_{a_1 \cdots a_5})^\alpha{}_{\beta}
    \Big)\;.
  $$
  Under the assumption on $s$, indeed all these 528 elements are non-vanishing in $(c_{\mathrm{exc},s})_{\mathrm{ferm}}$, by  \eqref{CoefficientZerosInDecomposedCField}.

  In terms of these quadratic fermionic elements, the fermionic part (\ref{DecomposedCFieldFermionicContribution}) of the
  decomposed C-field has the simple form
  $$
    (c_{\mathrm{exc},s})_{\mathrm{ferm}}
    =
    \mathrm{dp}_A \wedge \mathrm{dx}^A
  $$
  where
  \begin{equation}
    \label{ExceptionalVielbein}
    (\mathrm{dx}^A)
    :=
    \left(
      e^a, B^{a_1 a_2}, B^{a_1 \cdots a_5}
    \right)
  \end{equation}
  denotes the collection of all the bosonic generators, which may be thought of as the 528-vielbein on the
  exceptional spacetime.

  The C-cohomology of such ``odd symplectic forms'' $\mathrm{dp}_A \wedge \mathrm{dx}^A$ has been computed in \cite{Sev05} (there called H-cohomology), and in our case it is spanned by
  the terms proportional to
  \begin{equation}
    \label{TheCCohomologyRepresentatives}
    f(\psi)
    \wedge
    \underset{A}{\Exterior}\; \mathrm{dx}^A
    =
    f(\psi)
    \wedge
    \mathrm{vol}_{528}
    \,,
  \end{equation}
  where $f(\psi)$ is any element generated from the elements $\overline{\psi} \wedge \psi$ from expression \eqref{QuadraticGravitinoGenerators} alone
  and where $\mathrm{vol}_{528}$ is the exceptional volume form (\ref{TheExceptionalVolumeForm}).
  A homotopy operator that witnesses the vanishing of the C-cohomology away from these elements is
  $$
    \partial_{\mathrm{dp}_A}\partial_{\mathrm{dx}^A}
    \,.
  $$
  This manifestly commutes with the $\mathrm{Spin}(10,1)$-action in our case, so that the statement passes to the $\mathrm{Spin}(10,1)$-invariant algebra.

  This C-cohomology of $(c_{\mathrm{exc},s})_{\mathrm{ferm}}\wedge$ constitutes the first page of the three spectral sequences,
  respectively. Inspection of the degrees shows that none of the higher differentials can be non-vanishing,
  hence that the spectral sequences already collapse on this page.
  This implies the claim.
\end{proof}
\newpage
\label{page}
\begin{tabular}{cl}
{offset} $\epsilon= 0$
&
\raisebox{100pt}{
\scalebox{.8}{
  $$
    \xymatrix{
      & & & &&&
      \\
      & & & &  &&
      \\
      & & & &&  & &
      \\
      & & &  \ddots &
      \\
      &
      &
      &
      & {}_{\mathllap{(522,6)}}\bullet
       \ar[r]|{d_1}
       \ar[rrd]|{d_2}
       \ar[rrrdd]|{d_3}
       &
      \\
      &
      & &&& {}_{\mathllap{(525,3)}}\bullet &&&&&&
      \\
      \ar[rrrrrrrr]_>{ n_{\mathrm{bos}} - n_{\mathrm{ferm}}/2 }_>>>>>>>>>>>>>>>>>>>>>>>>>>{(528,0)} && &&&& \bullet  &&&  &&
      \\
      &  \ar[uuuuuu]^>{ n_{\mathrm{ferm}}/2 } & &&& &&&
    }
  $$
}}
\\
{offset} $\epsilon = 1$
&
\raisebox{100pt}{
\scalebox{.8}{
  $$
    \xymatrix{
      & & & &&&
      \\
      & & & &&&
      \\
      & & \ddots &
      \\
      &
      &
      & {}_{\mathllap{(521,7)}}\bullet
       \ar[r]|{d_1}
       \ar[rrd]|{d_2}
       \ar[rrrdd]|{d_3}
       &
      \\
      &
      & && {}_{\mathllap{(524,4)}}\bullet &&&&&&
      \\
      &&&&& {}_{\mathllap{(527,1)}}\bullet &&&
      \\
      \ar[rrrrrrrr]_>{ n_{\mathrm{bos}} - n_{\mathrm{ferm}}/2 } && &&&   &&&  &&
      \\
      &  \ar[uuuuuu]^>{ n_{\mathrm{ferm}}/2 } & &&& &&&
    }
  $$
}}
\\
{offset} $\epsilon= 2$
&
\raisebox{100pt}{
\scalebox{.8}{
  $$
    \xymatrix{
      & & & &&&
      \\
      &  \ddots &
      \\
      &
      &
       {}_{\mathllap{(520,8)}}\bullet
       \ar[r]|{d_1}
       \ar[rrd]|{d_2}
       \ar[rrrdd]|{d_3}
       &
      \\
      &
      & & {}_{\mathllap{(523,5)}}\bullet &&&&&&
      \\
      &&&& {}_{\mathllap{(526,2)}}\bullet &&&
      \\
      \\
      \ar[rrrrrrrr]_>{ n_{\mathrm{bos}} - n_{\mathrm{ferm}}/2 } && &&&   &&&  &&
      \\
      &  \ar[uuuuuu]^>{ n_{\mathrm{ferm}}/2 } & &&& &&&
    }
  $$
}}
\end{tabular}

\noindent {\bf The $E_1$-pages of the three spectral sequences (\ref{DoubleComplexSpectralSequencesForCCohomology}) which jointly compute the C-cohomology of the decomposed supergravity C-field $c_{\mathrm{exc},s}$ (Prop. \ref{TransgressionElementForM2Cocycle}).}
The fat dots indicate the non-vanishing C-cohomology classes (\ref{TheCCohomologyRepresentatives}) of the fermionic component $(c_{\mathrm{exc},s})_{\mathrm{ferm}}$; they are all represented by a wedge product of bi-fermions $\psi \wedge \overline{\psi} $ with the bosonic volume form $\mathrm{vol}_{528}$ of the exceptional tangent super-spacetime (\ref{TheExceptionalVolumeForm}). Some higher differentials are shown in order to visualize that the spectral sequences all collapse already on the $E_1$-page.

\newpage

As a consequence we are able to identify spherical T-duality in exceptional geometry:

\begin{defn}[M5-twisted cohomology of super exceptional tangent super spacetime]
  \label{tildemuM5TwistedCohomologyOfTangentSuperspacetime}
  Let $s \in \mathbb{R} \setminus \{0\}$ with $\mathbb{R}^{10,1\vert\mathbf{32}}_{\mathrm{exc},s} \xrightarrow{\mathrm{comp}_s} \mathbb{R}^{10,1\vert \mathbf{32}}$ the corresponding
  super exceptional tangent super spacetime (Def. \ref{FermionicExtensionOfExceptionalTangentSuperspacetime}).
 \item {\bf (i)}  Write
  \begin{equation}
    \label{M5BraneCocycleOnExceptionalSpacetime}
    \widetilde \mu_{{}_{M5,s}}
    :=
    (\mathrm{comp}_s)^\ast( \widetilde \mu_{{}_{M5}} )
    =
    2 \mu_{{}_{M5}} + c_{\mathrm{exc},s} \wedge \mu_{{}_{M2}}
    \;\;
    \in
    \mathrm{CE}\big( \mathbb{R}^{10,1\vert\mathbf{32}}_{\mathrm{exc},s} \big)
  \end{equation}
  for the pullback of the M5-brane cocycle
  $\widetilde{\mu}_{{}_{M5}} := 2 \mu_{{}_{M5}} + c \wedge \mu_{{}_{M2}} \in \mathrm{CE}\left( \mathbb{R}^{10,1\vert \mathbf{32}}\right)$ (expressions \eqref{M5cocycleInTDualityPair}), for which the C-field factor $c$ is replaced by the decomposed C-field $c_{\mathrm{exc},s}$ from Prop. \ref{TransgressionElementForM2Cocycle}.

 \item {\bf (ii)}  This induces the corresponding 6-periodic $\tilde \mu_{{}_{M5}}$-twisted $\mathrm{Spin}(10,1)$-invariant cohomology groups
  $$
    H^{\bullet + \tilde \mu_{{}_{M5,s}}}\big(  \mathbb{R}^{10,1\vert \mathbf{32}}_{\mathrm{exc},s} / \mathrm{Spin}(10,1) \big)
    \;:=\;
    H^\bullet
    \big(
      \mathrm{CE}\big(
        \mathbb{R}^{10,1\vert \mathbf{32}}_{\mathrm{exc},s/s'}
      \big)
      ,
      d + \tilde \mu_{{}_{M5,s}}
    \big)
  $$
  of the super exceptional super spacetime via Def. \ref{TwistedCohomology}.
  We write
  \begin{equation}
    \label{c}
    \tilde H^{\bullet + \tilde \mu_{{}_{M5,s}}}
    \big(  \mathbb{R}^{10,1\vert \mathbf{32}}_{\mathrm{exc},s} / \mathrm{Spin}(10,1) \big)
    \;:=\;
    H^\bullet
    \Big(
      \mathrm{CE}\big(
        \mathbb{R}^{10,1\vert \mathbf{32}}_{\mathrm{exc},s/s'}
      \big)/\big\langle f(\psi)\wedge\mathrm{vol}_{528}, f(\psi)d \mathrm{vol}_{528} \big\rangle
      ,
      d + \tilde \mu_{{}_{M5,s}}
    \Big)
  \end{equation}
  for the corresponding cohomology groups after quotienting out the subcomplex
  spanned by the $c_{\mathrm{exc},s}$-cohomology, hence, by Prop. \ref{HCohomologyOfDecomposedCFieldVanishesInSubalgebra},
  the multiples of wedge products of $\mathrm{Spin}(10,1)$-invariants in the $\psi$-generators
  with the exceptional volume form $\mathrm{vol}_{528}$
  (Def. \ref{TheExceptionalVolumeForm}).
\end{defn}

\begin{prop}
[Spherical T-duality on exceptional super spacetime]
  \label{SphericalTDualityOnExtendedSuperspacetime}
  For $s \in \mathbb{R} \setminus \{0,-3/2, -3, -6\}$ we have a spherical T-duality isomorphism
  for decomposed form fields (Theorem \ref{HigherTDualityForDecomposedFields}) between
  the corresponding exceptional superspacetimes $\mathbb{R}^{10,1\vert \mathbf{32}}_{\mathrm{exc},s}$,
  $\mathbb{R}^{10,1\vert \mathbf{32}}_{\mathrm{exc},s'}$ (Def. \ref{FermionicExtensionOfExceptionalTangentSuperspacetime})
  in that we have an isomorphism ${\cal T}_{\mathrm{exc},s}$ of $\tilde \mu_{{}_{M5,s}}$-twisted
  cohomology groups
  according to Def. \ref{tildemuM5TwistedCohomologyOfTangentSuperspacetime} fitting
  into the diagram below
   $$
  \xymatrix@R=5pt{
    H^{ (\bullet +  3)  + \tilde \mu_{M5}}\big(\mathfrak{m}2\mathfrak{brane} / \mathrm{Spin}(10,1)\big)
    \ar[ddd]_-{ (\mathrm{comp}_{s})^\ast }
    \ar@{<-}[rrrr]^-{ {\cal T} }_-{\simeq}
    &&&&
    H^{\bullet + \tilde \mu_{M5}}\big( \mathfrak{m}2\mathfrak{brane} /\mathrm{Spin}(10,1)\big)
    \ar[ddd]^-{ (\mathrm{comp}_{s})^\ast }
    \\
    \\
    \\
    \tilde H^{ (\bullet + 3) + \tilde \mu_{M5,s}}\big(  \mathbb{R}^{10,1\vert\mathbf{32}}_{\mathrm{exc},s}/\mathrm{Spin}(10,1) \big)
    \ar@{<-}[rrrr]^-{ {\cal T}_{\mathrm{exc},s} }_-{\simeq}
    &&&&
    \tilde H^{ \bullet + \tilde \mu_{M5, s } }\big( \mathbb{R}^{10,1\vert\mathbf{32}}_{\mathrm{exc},s}/\mathrm{Spin}(10,1) \big)
    \\
    [- \omega_{\mathrm{w}} + c_{\mathrm{exc},s} \wedge \omega_{\mathrm{nw}})]
    \;
    \ar@{<-|}[rrrr]
    &&&&
    \;[ \omega_{\mathrm{nw}} + c_{\mathrm{exc},s} \wedge \omega_{\mathrm{w}} ]\;.
  }
$$
\end{prop}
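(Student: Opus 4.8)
## Proof proposal

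The plan is to obtain this as a direct application of Theorem \ref{HigherTDualityForDecomposedFields} to the higher self-T-duality correspondence for the M5-brane established in Prop. \ref{M5CocycleIsSPhericalTDualToItself}, with the fibration $\mathfrak{e} \to \mathfrak{g}$ taken to be $\pi_{\mathrm{exc},s}\colon \mathbb{R}^{10,1\vert\mathbf{32}}_{\mathrm{exc},s} \to \mathbb{R}^{10,1\vert\mathbf{32}}$ and the transgression element taken to be the decomposed C-field $\mathrm{cs} := c_{\mathrm{exc},s}$ supplied by Prop. \ref{TransgressionElementForM2Cocycle}. So first I would set up the dictionary: in the notation of Theorem \ref{HigherTDualityForDecomposedFields} we have $n = 1$, $\mathfrak{g} = \mathbb{R}^{10,1\vert\mathbf{32}}$, $\widehat{\mathfrak{g}} = \mathfrak{m}2\mathfrak{brane}$, $\mathrm{dd} = \mu_{{}_{M2}}$ (degree $2n+1 = 3$), $h = \widetilde{\mu}_{{}_{M5}} = 2\mu_{{}_{M5}} + c_3 \wedge \mu_{{}_{M2}}$ (degree $4n+2 = 6$), the group is $K = \mathrm{Spin}(10,1)$, and the classifying morphism $\phi$ of Example \ref{HigherCentralExtensionTransgressionElement} is precisely $\mathrm{comp}_s$, which sends the fiber generator $c_3$ to $c_{\mathrm{exc},s}$; hence $(\mathrm{comp}_s)^\ast(\widetilde{\mu}_{{}_{M5}}) = \widetilde\mu_{{}_{M5,s}}$ as recorded in \eqref{M5BraneCocycleOnExceptionalSpacetime}. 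All cocycles in sight are $\mathrm{Spin}(10,1)$-invariant: $\mu_{{}_{M2}}, \mu_{{}_{M5}}$ by construction from $\Gamma$-matrix bilinears, and $c_{\mathrm{exc},s}$ by the statement of Prop. \ref{TransgressionElementForM2Cocycle} (its transgression square lives in the invariant subalgebras).

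Next I would verify the two hypotheses (a) and (b) of Theorem \ref{HigherTDualityForDecomposedFields}. For (a), the C-cohomology of $c_{\mathrm{exc},s}$ in $\mathrm{CE}(\mathbb{R}^{10,1\vert\mathbf{32}}_{\mathrm{exc},s})^{\mathrm{Spin}(10,1)}$ must vanish — but by Prop. \ref{HCohomologyOfDecomposedCFieldVanishesInSubalgebra} it is spanned exactly by the classes $f(\psi)\wedge\mathrm{vol}_{528}$ (and, differentiating, $f(\psi)\,d\,\mathrm{vol}_{528}$). This is where the passage to $\widetilde H^{\bullet + \widetilde\mu_{{}_{M5,s}}}$ of Def. \ref{tildemuM5TwistedCohomologyOfTangentSuperspacetime} enters: one works not in $\mathrm{CE}(\mathbb{R}^{10,1\vert\mathbf{32}}_{\mathrm{exc},s})^{\mathrm{Spin}(10,1)}$ itself but in its quotient by the subcomplex $\big\langle f(\psi)\wedge\mathrm{vol}_{528},\, f(\psi)\,d\,\mathrm{vol}_{528}\big\rangle$, on which, by construction, the C-cohomology of $c_{\mathrm{exc},s}$ does vanish. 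I would need to check two small compatibilities here: that this subspace is genuinely a subcomplex for the \emph{twisted} differential $d + \widetilde\mu_{{}_{M5,s}}$ (it is closed under $d$ by inspection, and wedging with $\widetilde\mu_{{}_{M5,s}}$, which contains $c_{\mathrm{exc},s}\wedge\mu_{{}_{M2}}$, lands back in the $c_{\mathrm{exc},s}$-ideal hence in the span of the $c_{\mathrm{exc},s}$-cohomology representatives), and that the homotopy operator $\partial_{\mathrm{dp}_A}\partial_{\mathrm{dx}^A}$ from the proof of Prop. \ref{HCohomologyOfDecomposedCFieldVanishesInSubalgebra} descends to the quotient. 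For (b), I would exhibit the required inclusion $\iota$: setting $\mathcal{A}_0 := \mathcal{A}_0^{(\epsilon)}$ to be the image of the section used in Prop. \ref{HCohomologyOfDecomposedCFieldVanishesInSubalgebra}, one can choose the complement $\mathcal{A}_0$ of $\mathrm{im}(c_{\mathrm{exc},s}\cdot)$ so that it contains $\mathrm{CE}(\mathbb{R}^{10,1\vert\mathbf{32}})^{\mathrm{Spin}(10,1)}$ — concretely, no bosonic-volume/bi-fermion combination appears among the pullbacks of forms from the 11d base, so $\mathrm{CE}(\mathfrak{g})^K$ meets $\mathrm{im}(c_{\mathrm{exc},s}\cdot)$ only in $0$ and can be enlarged to a full complement.

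With (a) and (b) in hand, Theorem \ref{HigherTDualityForDecomposedFields} applies verbatim and produces the isomorphism ${\cal T}_{\mathrm{exc},s}$ on $\widetilde\mu_{{}_{M5,s}}$-twisted $\mathrm{Spin}(10,1)$-invariant cohomology, together with the commuting square whose vertical arrows are $(\mathrm{comp}_s)^\ast$ (these are the pullbacks along the classifying morphism $\phi = \mathrm{comp}_s$ of Example \ref{HigherCentralExtensionTransgressionElement}), and the top arrow is the spherical T-self-duality ${\cal T}$ of Cor. \ref{HigherTopologicalTDualityInTwistedCohomology} associated with the correspondence \eqref{CorrespondenceSphericalTDualityM5} of Prop. \ref{M5CocycleIsSPhericalTDualToItself}. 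The explicit formula $[\omega_{\mathrm{nw}} + c_{\mathrm{exc},s}\wedge\omega_{\mathrm{w}}] \mapsto [-\omega_{\mathrm{w}} + c_{\mathrm{exc},s}\wedge\omega_{\mathrm{nw}}]$ is read off from the statement of Theorem \ref{HigherTDualityForDecomposedFields} upon substituting $\phi^\ast(b) = (\mathrm{comp}_s)^\ast(c_3) = c_{\mathrm{exc},s}$.

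The main obstacle I expect is hypothesis (b) — arranging the complement $\mathcal{A}_0$ to simultaneously (i) be a $c_{\mathrm{exc},s}$-complement realizing the vanishing C-cohomology on the quotient, (ii) contain $\mathrm{CE}(\mathbb{R}^{10,1\vert\mathbf{32}})^{\mathrm{Spin}(10,1)}$, and (iii) be $\mathrm{Spin}(10,1)$-equivariant — and, relatedly, checking carefully that quotienting by the $c_{\mathrm{exc},s}$-cohomology subspace is compatible with the twisted differential (so that $\widetilde H^{\bullet+\widetilde\mu_{{}_{M5,s}}}$ is well-defined as in Def. \ref{tildemuM5TwistedCohomologyOfTangentSuperspacetime}). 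Everything else is bookkeeping of degrees and direct appeal to the already-established Prop. \ref{HCohomologyOfDecomposedCFieldVanishesInSubalgebra}, Prop. \ref{TransgressionElementForM2Cocycle}, Prop. \ref{M5CocycleIsSPhericalTDualToItself} and Theorem \ref{HigherTDualityForDecomposedFields}.
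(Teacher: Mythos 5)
Your proposal is correct and follows essentially the same route as the paper, whose own proof is a one-line appeal to Theorem \ref{HigherTDualityForDecomposedFields} combined with Prop. \ref{HCohomologyOfDecomposedCFieldVanishesInSubalgebra}; your dictionary ($\mathfrak{e}=\mathbb{R}^{10,1\vert\mathbf{32}}_{\mathrm{exc},s}$, $\mathrm{cs}=c_{\mathrm{exc},s}$, $\phi=\mathrm{comp}_s$, $h=\widetilde\mu_{{}_{M5}}$) and your handling of the non-vanishing $f(\psi)\wedge\mathrm{vol}_{528}$ classes via the reduced groups $\tilde H$ of Def. \ref{tildemuM5TwistedCohomologyOfTangentSuperspacetime} just make explicit what the paper leaves implicit.
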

\begin{proof}
  The result is obtained by using the result of Prop. \ref{HCohomologyOfDecomposedCFieldVanishesInSubalgebra} in the statement of Theorem \ref{HigherTDualityForDecomposedFields}.
\end{proof}

\begin{remark}[M5-twisted cocycles from heterotic Green-Schwarz mechanism]
  \label{M5TwistedCocyclesFromHeteroticGreenSchwarz}
Proposition \ref{SphericalTDualityOnExtendedSuperspacetime} is a higher/M-theoretic analog of the string theoretic T-duality isomorphism
of \cite[Prop. 6.4]{FSS16}, recalled in the top line of (\ref{TypeIIT}). It it now desireable to obtain also the corresponding analog of the bottom line in (\ref{TypeIIT}),
namely examples of nontrivial $\tilde \mu_{{}_{M5,s}}$-twisted and $\mathrm{Spin}(10,1)$-invariant cocycles.
Due to the nature of the twisted differential $d + \tilde \mu_{{}_{M5,s}}$
with its ingredients from (\ref{M2BraneCocycle}), (\ref{CEAlgebraOfExceptionalTangentSuperspacetime}), (\ref{DecomposedCFieldFermionicContribution})
and (\ref{M5BraneCocycleOnExceptionalSpacetime}), the cocycle condition for $d + \tilde \mu_{{}_{M5,s}}$
translates to a complicated-looking higher-order condition on spinor identities. Here we will leave its solution
as an open mathematical problem. Nevertheless we
now offer an informal argument that nontrivial such cocycles do exist at least in degree 2 or 3 mod 6.
Namely, in terms of string/M-theory, this open problem is the question for the M-theoretic analog of the
D-brane charges $\mu_{{}_{Dp}}$ in (\ref{TypeIIT}) when the role of the fundamental type II string
$\mu_{{}_{F1}}^{\mathrm{IIA/B}}$ is taken by the M5-brane $\tilde \mu_{{}_{M5,s}}$ (as indicated in
the table at the beginning of section \ref{HigherTDualityCorrespondences}, see \cite{tcu} for proposals and further discussion).
In view of this, there is the following ``physics proof'' of the existence of non-trivial cocycles:

\item {\bf (i)}   Recall that the twisted cocycle condition in Prop. \ref{SphericalTDualityOnExtendedSuperspacetime},
   by Def. \ref{TwistedCohomology}, reads
    \begin{align*}
      d \mu_{\bullet} =& \;0\;,
      \\
      d \mu_{\bullet + 6} =& - \tilde \mu_{{}_{M5}} \wedge \mu_n\;.
    \end{align*}
   By the logic of the torsion constraints of supergravity, reviewed in Section \ref{Introduction}, this
   must correspond to an equation of motion for field strength super differential forms as
 \begin{align*}
      d F_{\bullet} = &\; 0
      \\
      d F_{\bullet + 6} = & - H_7 \wedge F_\bullet
      \,,
 \end{align*}
  where $H_7$ is the 7-form flux to which the 5-brane couples, while the nature of the fluxes
  $F_\bullet$ and $F_{\bullet+6}$ is to be determined.

 \item {\bf (ii)} But for degrees $\bullet =2$ mod 6 it was recognized in \cite[Section 3]{Sati09} that the
 effective background equations of motion  in heterotic string theory, whose Yang-Mills sector may be rewritten as
  \begin{equation}
    \label{TheSatiMechanism}
      d \hspace{-6mm}
      \underset{
        {\begin{array}{c} \mbox{\tiny gauge field} \\ \mbox{\tiny magnetic flux}  \end{array}}
      }{
        \underbrace{F_{2}}
      }
      \hspace{-6mm}
       = 0
      \qquad  \qquad
      \text{and}
      \qquad \qquad
      d \hspace{-5mm}
      \underset{
        {\begin{array}{c} \mbox{\tiny gauge field} \\ \mbox{\tiny electric flux}  \end{array}}
      }{
      \underbrace{
        F_{8}
      }}
      \hspace{-5mm}
      =
      - H_7
      \wedge F_2
      \,,
  \end{equation}
  imply that the heterotic gauge field strength $F_2$ and its Hodge dual $F_8$ jointly
  constitute a single cocycle (in the sense discussed in Section \ref{TwistedSuperLInfinityCohomology})
  $$
    \mathcal{F}_{2\,\mathrm{mod}\, 6}
    :=
    \left(
      F_2,  F_8
    \right)
    \;
    \in H^{2\,\mathrm{mod}\,6 + H_7}_{\mathrm{dR}}(X) \otimes \mathfrak{g}
  $$
  in $H_7$-twisted de Rham cohomology (with coefficients in the gauge Lie algebra), where $H_7$ is the flux form to which the NS5-brane couples.

  \item {\bf (iii)} In \cite{Sati09} it was furthermore observed that this is directly analogous to how the RR-fields in type II string theory,
  i.e. the fluxes corresponding to the cocycles $\mu_{{}_{Dp}}$ (\ref{TypeIIT}) to which the D-branes couple,
   as in Section \ref{OrdinaryTypeIITDuality}, jointly constitute a cocycle in $H_3$-twisted de Rham cohomology.
    But under the lift of heterotic string theory to heterotic M-theory \cite{HoravaWitten96a, HoravaWitten96b}
  the $H_7$ in (\ref{TheSatiMechanism}) is identified with the flux corresponding to $\tilde \mu_{{}_{M5}}$ \cite{FSS15}.

 \item {\bf (iv)}  In conclusion this means that from perturbative string theory we expect
  non-trivial twisted cocycles in Prop. \ref{SphericalTDualityOnExtendedSuperspacetime}
  to exist in degree 2 mod 6 (or in degree 3 mod 6 if they are subject to double dimensional reduction \cite[Section 3]{FSS16}) and to correspond to the  M-theoretic lift of the heterotic gauge field and its magnetic dual.
  Of course this is part of the open question for the M-theoretic origin of the gauge fields -- see Remark \ref{NonAbelianDegreesOfFreedom} -- and certainly deserves to be discussed elsewhere.

  \item {\bf (v)} There may be more twisted cocycles, of course. Indeed, Prop. \ref{SphericalTDualityOnExtendedSuperspacetime} predicts
  that if
  $
    \mathcal{F}_{2 \,\mathrm{mod}\,6} = (F_2, F_8)
  $
  is a non-trivial twisted cocycle in degree 2 mod 6, then there must also be a, presently mysterious, spherical-T-dual twisted cocycle
  $$
    \mathcal{F}_{5 \,\mathrm{mod}\,6}
    :=
    \mathcal{T}(F_2, F_8)
  $$
  in degree 5 mod 6.

   Finally it is curious to note that the derivation in \cite[Section 3]{Sati09} shows that the
   degree-7 twisted cocycle relation (\ref{TheSatiMechanism}) is really a cohomological incarnation
   of the Green-Schwarz mechanism in heterotic string theory, the origin of all string theoretic grand unification.
\end{remark}

\medskip
\begin{remark}[M5-brane twisted cocycles involving the exceptional volume element?]
  \label{InvolvingTheExceptionalVolumeElement}
  The restriction
  to the ``reduced'' twisted cohomology groups $\tilde H^{\bullet + \mu_{{}_{M5,s}}}$ (Prop. \ref{HCohomologyOfDecomposedCFieldVanishesInSubalgebra})
  from Def. \ref{tildemuM5TwistedCohomologyOfTangentSuperspacetime} means that
  the spherical T-duality on exceptional superspace from Prop. \ref{SphericalTDualityOnExtendedSuperspacetime}
  would be somewhat ``blurred'' on those $\mathrm{Spin}(10,1)$-invariant cocycles $\tilde \mu_{{}_{M5}}$-twisted cocycles, if any, which involve summands
  that are multiples $f(\psi)\wedge \mathrm{vol}_{528}$ of the exceptional volume form $\mathrm{vol}_{528}$ and Spin-invariant combinations $f(\psi)$ of the super-vielbein.
  It seems plausible that in fact no such cocycles exist; we will revisit this elsewhere.
\end{remark}

\subsection{Exceptional generalized supergeometry}
\label{ExceptionalGenralisedGeometry}

Having  established spherical T-duality on $\mathbb{R}^{10,1\vert \mathbf{32}}_{\mathrm{exc},s}$ (in Prop. \ref{SphericalTDualityOnExtendedSuperspacetime})
we discuss the relevance of Prop. \ref{TransgressionElementForM2Cocycle} for the exceptional generalized geometry of the supergravity C-field (see Remark \ref{ExceptionalSuperTangentSpacetime}).
In fact, Prop. \ref{SphericalTDualityOnExtendedSuperspacetime} will be seen as establishing
 a duality on the exceptional moduli space of C-field configurations.

\medskip

First notice the ``moduli problem'' for C-field configurations: The (2,2)-component of the C-field strength $G_4$ on super-spacetime is constrained to equal the M2-brane super cocycle (\ref{M2BraneCocycle}) in every super tangent space.
However, under the identification of super tangent spaces with $\mathbb{R}^{10,1\vert \mathbf{32}}$ established by the super vielbein $(E^A) := (e^a, \psi^\alpha)$
\cite[(145)]{BST2}
$$
  (G_4)_{(2,2)}
   \;=\;
  \underset{
    \mu_{{}_{M2}}
  }{
  \underbrace{
    \tfrac{i}{2} (\Gamma_{a b})^\alpha{}_\beta \overline{\psi}_\alpha \wedge \psi^\beta \wedge e^a \wedge e^b
  }}
$$
this still leaves, even in the absence of any bosonic flux, hence even if
$$
  (G_4)_{4,0} = 0
  \,,
$$
a moduli space of possible C-field configurations. That is, the C-field itself is, of course, locally a differential
3-form potential $C$ for $G_4$
$$
  d C = \mu_{{}_{M2}}
  \,.
$$
Here we are identifying via Prop. \ref{LeftInvariantDifferentialFormsOnLieGroupRepresentCEElements}
$$
  \mu_{{}_{M2}}
    \in
  \mathrm{CE}(\mathbb{R}^{10,1\vert\mathbf{32}})
    \simeq
  \Omega^4_{\mathrm{LI}}(\mathbb{R}^{10,1\vert \mathbf{32}})
  \xymatrix{\ar@{^{(}->}[r]&}
  \Omega^4(\mathbb{R}^{10,1\vert \mathbf{32}})
$$
with a super-differential 4-form on $\mathbb{R}^{10,1}$ that happens to be left-invariant (hence invariant under super-translations).

\medskip
Now, of course, the actual C-field $C \in \Omega^3(\mathbb{R}^{10,1\vert \mathbf{32}})$ is never left-invariant,
because if it were then $d C= \mu_{{}_{M2}}$ would mean that $[\mu_{{}_{M2}}] = 0 \in H^\bullet(\mathfrak{g}) \simeq H^\bullet_{\mathrm{dR}, \mathrm{LI}}(\mathbb{R}^{10,1\vert \mathbf{32}})$, which is not the case.
But, in fact, the C-field itself is to be thought of as part of the connection data on a 2-gerbe whose curvature 4-form
is $\mu_{{}_{M2}}$, and the transformation properties for this connection data is more flexible in that it allows
invariance up to higher gauge transformation. According to \cite[Section 2.2.2]{SSS1},\cite[Def. 4.3.6]{FSS12}
such 2-gerbe connections may be encoded by transgression elements as in Def. \ref{TransgressionElements}.
For the M2-cocycle $\mu_{{}_{M2}}$ such a transgression element is precisely what Prop. \ref{TransgressionElementForM2Cocycle} establishes:
The transgression element is the decomposed C-field (\ref{TransgressionOfM2CocycleViaDecomposedCField}) in the D'Auria-Fr{\'e} algebra
\cite[Section 6]{DF}, \cite[Section 3]{BAIPV04}.

\medskip
Consequently,  Example \ref{PrimitivesFromSections} says that
the transgression of $\mu_{{}_{M2}}$ (\ref{M2BraneCocycle}) on
$\mathbb{R}^{10,1\vert \mathbf{32}}$ allows to obtain well-behaved C-field configurations
by pullback along linear sections $\sigma$ of the exceptional super spacetime regarded as a bundle
over 11d super-spacetime:
$$
  \raisebox{20pt}{
  \xymatrix{
    \mbox{\bf Moduli space}
    \ar@{}[d]|{ \mbox{ \footnotesize \bf Classifying map} }
    &
    \mathbb{R}^{10,1\vert \mathbf{32}}_{\mathrm{exc},s}
    \ar[d]^{\pi_{\mathrm{exc},s}}
    \\
    \mbox{\bf Spacetime}
    &
    \mathbb{R}^{10,1\vert \mathbf{32}}
    \ar@/^1pc/[u]^{\sigma}
  }
  }
 \;\;
  \xymatrix{
  \ar@{=>}[r] &}
  \;\;
  d \hspace{-1.5mm}
  \underset{
    {\begin{array}{c} \mbox{\tiny C-field} \\ \mbox{\tiny configuration} \end{array}}
   }{\underbrace{(\sigma^\ast c_{\mathrm{exc},s})}}  \hspace{-2.5mm}=\mu_{{}_{M2}}
  \,.
$$
Therefore:
{\it
\begin{enumerate}
  \item
    Each of the fermionic extensions $\mathbb{R}^{10,1 \vert \mathbf{32}}_{\mathrm{exc},s}$ (Def. \ref{FermionicExtensionOfExceptionalTangentSuperspacetime}) of the
    exceptional super-spacetime $\mathbb{R}^{10,1\vert \mathbf{32}}$ (Def. \ref{MaximalCentralExtensionOfNIs32Superpoint}),
    for each value of $s \in \mathbb{R} \setminus \{0\}$, serves as a moduli space for C-field configurations.
  \item
   The decomposed $C$-field $c_{\mathrm{exc},s} \in \mathrm{CE}( \mathbb{R}^{10,1\vert \mathbf{32}}_{\mathrm{exc},s})$ (Prop. \ref{TransgressionElementForM2Cocycle})
   is the corresponding universal field on the moduli space, whose pullback along classifying maps $\sigma$ yield the
   actual C-field configurations on super-Minkowski spacetime.
\end{enumerate}
}

We illustrate how this works with the following basic examples.

\begin{example}[C-fields via splitting of exceptional tangent bundle]
Let $C \in \Omega^3(\mathbb{R}^{10,1\vert \mathbf{32}})$ be a bosonic differential 3-form, with components
$C = C_{a_1 a_2 a_3} d x^a  \wedge d x^b \wedge d x^c$
then a section is obtained by contraction of vectors in $C$
\begin{equation}
  \label{SectionForPlainC}
  \xymatrix{
    v + \iota_v C & \mathbb{R}^{10,1} \oplus \Exterior^2( \mathbb{R}^{10,1} ) \oplus \Exterior^5( \mathbb{R}^{10,1} )\;.
    \\
    v
    \ar@{|->}[u] & \mathbb{R}^{10,1} \ar[u]_{\sigma_{{}_C}}
  }
\end{equation}
Pullback along this section of the generators
$$
  B_{a b} \in \mathrm{CE}(\mathbb{R}^{10,1\vert \mathbf{32}}_{\mathrm{exc},s})
  \simeq
  \Omega^\bullet_{\mathrm{LI}}(\mathbb{R}^{10,1\vert \mathbf{32}}_{\mathrm{exc},s}\;,
$$
from Prop. \ref{SpinActionOnExceptionalTangentSuperspacetime} yields the corresponding ``wrapping modes'' of $C$, namely
$$
  \sigma_{{}_C}^\ast B_{ab} = C_{a b c} d x^c
  \;,
$$
and hence the pullback of the decomposed C-field (\ref{DecomposedCFieldAsSumOfBosonicAndFermionicContribution})
has bosonic component (\ref{DecomposedCFieldBosonicContribution}) which at $s = -3$ (see (\ref{CoefficientZerosInDecomposedCField})) reproduces the 3-form $C$:
$$
  \begin{aligned}
    \sigma_{{}_C}^\ast
       (c_{\mathrm{exc},-3})_{\mathrm{bos}}
    & =
    \sigma_{{}_C}^\ast(B_{ab} \wedge e^a \wedge e^c)
    \\
    & =
    C\;.
  \end{aligned}
$$
This is the way that C-field configurations have been proposed to be encoded by \emph{exceptional generalized geometry} in \cite{Hull07, PachecoWaldram08, Bar12};
see also Remark \ref{ExceptionalSuperTangentSpacetime}.
\end{example}

However, notice that this looks different at different values of $s$:

\begin{example}[Decomposed bosonic C-field at different values of $s$]
\label{SectionForPlainCAtDifferentParameter}
For other values of $s$, the very same section $\sigma_{{}_C}$ (\ref{SectionForPlainC})
induces \emph{different} C-field configurations via pullback. This is
due to the second summand $\propto B^a{}_b \wedge B^b{}_c \wedge B^c{}_a$
in (\ref{DecomposedCFieldBosonicContribution}), namely a linear combination of $C$ with the 3-form
$$
  \sigma_{{}_C}^\ast( B^a{}_b \wedge B^b{}_c \wedge B^c{}_a )
  \; \propto \;
  \mathrm{CS}(A)\;,
$$
where on the right we have the (flat) Chern-Simons form for $C$ regarded as an $\mathfrak{so}(10,1)$-valued differential 1-form $A$,
$$
  (A_\mu)^a{}_b := C_\mu{}^a{}_b
  \,.
$$
\end{example}

\medskip
\begin{remark}[Non-abelian gauge degrees of freedom in M-theory]
  \label{NonAbelianDegreesOfFreedom}
  It is a famous open problem that the following two facts are superficially incompatible:
  \begin{enumerate}
    \item On the one hand, M-theory must contain avatars of nonabelian gauge fields,
    since these are seen in its string theoretic weak coupling limit in various guises.
    \item On the other hand, its
   low-energy-limit in the form of 11d supergravity theory only contains, with the C-field, an abelian, albeit higher-, gauge field.
  \end{enumerate}
  More specifically, various anomaly-cancellation arguments (see \cite{3stack} for review and homotopy-theoretic discussion
  in line with the present perspective) show that the supergravity C-field ought to
  contain a contribution that is locally given by the Chern-Simons 3-form of non-abelian gauge field,
  which plain 11d supergravity knows nothing about such a Chern-Simons summand.
  However, notice that in the perspective on M-theory via super $L_\infty$-homotopy theory, the equations of motion of
  11d supergravity are the consequence of just one of several super tangent space-wise super $L_\infty$-algebraic
  structures, namely of the torsion constraint (\ref{VielbeinDifferential}), which implies
  the equations of motion of 11d supergravity by \cite{CL, Howe97}.
  But, in addition, there is also the constraint (\ref{BSTMembrane}) on the M2-brane's WZW term.
  If one demands that this constrained be solved by super $L_\infty$-algebraic means, namely by transgression
  (Def. \ref{TransgressionElements}), then Example \ref{SectionForPlainCAtDifferentParameter} shows that this makes the Chern-Simons  term of a nonabelian gauge field appear as a summand of the C-field.
  See \cite{multiple} for details on how the latter arises in the context of multiple M5-brane theories.
\end{remark}


Finally, we observe that the perspective of exceptional generalized geometry allows us to
explain the ``meaning'' of the extra fermionic generators $\eta^\alpha$ from (\ref{ExceptionalFermionicGenerator}), which was the cause of some concern in \cite{ADR16}:

\begin{example}
[Interpreting the extra fermionic generator via heterotic M-theory]
\label{InterpretingTheExtraFermionicGenerator}
A linear section of the exceptional super-tangent bundles with non-vanishing component in the extra fermion generators is given in particular by
a choice of fermion $(\chi^\alpha)$ via
$$
  \xymatrix{
    v + (v^a (\Gamma_a)^\alpha{}_\beta \chi^\beta)
    \\
    v
    \ar@{|->}[u]_{\sigma_\chi}
  }
$$
Pullback along this section of the extra fermionic generators from (\ref{ExceptionalFermionicGenerator})
$$
  \eta^\alpha \in \mathrm{CE}(\mathbb{R}^{10,1\vert \mathbf{32}}_{\mathrm{exc},s})
  \simeq
  \Omega^\bullet_{\mathrm{LI}}(\mathbb{R}^{10,1\vert \mathbf{32}}_{\mathrm{exc},s})
$$
yields
$$
  \sigma_\chi^\ast( \eta^\alpha ) = d x^a (\Gamma_a)^\alpha{}_\beta \chi^\beta
$$
and hence takes the following value on the fermionic component (\ref{DecomposedCFieldFermionicContribution})
of the decomposed C-field:
$$
  \sigma_\chi^\ast( (c_{\mathrm{exc},s})_{\mathrm{ferm}} )
  \; \propto \;
  \overline{\psi}_\alpha (\Gamma_{a b})^\alpha{}_{\beta} \chi^\beta e^a \wedge e^b
  \,.
$$
After an identification of bulk fermions with Ho{\v r}ava-Witten-boundary fermions as in \cite{EvslinSati02},
this is the form in which the dilatino has to appear as a summand in the supergravity C-field in heterotic M-theory,
according to \cite[(4.21)]{ADR86}.
\end{example}

Note that this works because our identification of the bosonic part of the D'Auria-Fr{\'e} (DF) algebra (Def. \ref{FermionicExtensionOfExceptionalTangentSuperspacetime}) with  the exceptional tangent bundle of \cite{Hull07} immediately implies that we have to interpret the extra fermionic component
of the DF algebra as providing the supersymmetrization of the exceptional generalized geometry
proposal for M-theory.

\subsection{Spherical T-duality of M5-branes over 7d super-spacetime}
\label{SphericalTDualityOver7dSpacetime}

Under dimensional reduction, the non-wrapping part of the M2/M5-brane supercocycle remains
an M2/M5-brane supercocycle in lower dimensions with higher supersymmetry. In some lower dimensions
it remains even at minimal supersymmetry. This is the case notably in $d =7$.
(Notice that the $N=2$ supergravity in seven dimensions   \cite{TvN}\cite{SSz} can be obtained from
eleven dimensions in a consistent way \cite{Kan}.)

\medskip
The direct analog of the spherical T-duality for M5-branes from Example
\ref{HigherTDualCorrespondenceForM2Brane} holds in 7d . Instead of
repeating the whole applicable discussion in detail, we encapsulate this in the following.

\begin{remark}
{\bf (i)}  The cochains on minimal 7d super-Minkowski spacetime
  $$
    \mu^{7d}_{{}_{M2}}, \;\;\; \mu^{7d}_{{}_{M5}}
    \;\in\;
    \mathrm{CE}
    \big(
      \mathbb{R}^{ 6,1 \vert \mathbf{16} }
    \big)
  $$
  satisfy the analog of the relation (\ref{M2M5CocyclesOn11dSuperMinkowski}) (see also {\cite[expression (4.36)]{ADR16}}):
\begin{equation}
  d \mu^{7d}_{{}_{M2}} = 0
  \qquad
  \text{and}
  \qquad
  d \mu^{7d}_{{}_{M5}} = -\tfrac{1}{2} \mu^{7d}_{{}_{M2}} \wedge \mu^{7d}_{{}_{M2}}
  \,,
\end{equation}
hence constitute a rational 4-sphere valued supercocycle
$$
  \big(
    \mu^{7d}_{{}_{M2}}, \; \mu^{7d}_{{}_{M5}}
  \big)
  \;:\;
  \mathbb{R}^{6,1\vert \mathbf{16}}
   \xymatrix{\ar[r] &}
  \mathfrak{l}(S^4)\;.
$$
\item {\bf (ii)} Consequently, the discussion in section \ref{SphericalTDualityOfM5BranesOnM2ExtendedSpacetime} applies and provides an example of topological spherical T-duality (Def. \ref{HigherTDualityCorrespondences}) for super M5-branes on the corresponding M2-extended super-spacetimes
just as in Prop. \ref{M5CocycleIsSPhericalTDualToItself}
$$
  \xymatrix{
    b^{6}\mathbb{R}
    &&
    \mathfrak{m}2\mathfrak{brane}^{7d}
    \ar[dr]_{\pi^A := \mathrm{hofib}(\mu^{7d}_{{}_{M2}}) \phantom{AAAA}}
    \ar[ll]_-{\widetilde{\mu}^{7d}_{{}_{M5}}}
    &&
    \mathfrak{m}2\mathfrak{brane}^{7d}
    \ar[dl]^{\phantom{AAA}\pi^B := \mathrm{hofib}(\mu^{7d}_{{}_{M2}})}
    \ar[rr]^-{\widetilde{\mu}^{7d}_{{}_{M5}}}
    &&
    b^{6}\mathbb{R}
    \\
    &&
    & \mathbb{R}^{6,1\vert \mathbf{16}}
    \ar[dl]^{\mu^{7d}_{{}_{M2}}}
    \ar[dr]_{\mu^{7d}_{{}_{M2}}}
    \\
    &&
    b^{3} \mathbb{R} && b^{3} \mathbb{R}
  }
$$
$$
\xymatrix@C=6em{
  \big(\mathfrak{m}2\mathfrak{brane},\widetilde{\mu}^{7d}_{{}_{M5}} \big)
  \;\;\ar@{<->}[rr]^{\cal T} &&
  \;\; \big(\mathfrak{m}2\mathfrak{brane},\widetilde{\mu}^{7d}_{{}_{M5}} \big)\;.
  }
$$
\item {\bf (iii)} Moreover, there is again an exceptional superspacetime $\mathbb{R}^{6,1\vert \mathbf{16}}_{\mathrm{exc},s}$ as in Def. \ref{FermionicExtensionOfExceptionalTangentSuperspacetime} over which the 7d C-field has a transgression element and decomposes
as in Prop. \ref{TransgressionElementForM2Cocycle}; this is due to \cite[Section 4]{ADR16}.
Accordingly there is a 7d analog of spherical T-duality on
exceptional spacetime as in Prop. \ref{SphericalTDualityOnExtendedSuperspacetime}.
\end{remark}

\subsection{Parity isomorphism on brane charges over exceptional M-theory spacetime}
\label{ParitySymmetry}

In this section we prove a ``parity isomorphism'' for $\widetilde{\mu}_{{}_{M5}}$-twisted cohomology
on exceptional superspacetime (Prop. \ref{ParitySymmetryForDecomposedCField} below),
different from but akin to the spherical T-duality isomorphism in Prop. \ref{SphericalTDualityOnExtendedSuperspacetime}. Before we construct the isomorphism in Prop. \ref{ReflectionAutomorphismOnExceptionalTangentSuperSpacetime} below,
first we survey some background on the role that such an isomorphism may be expected to play in M-theory.

\medskip
The parity symmetry appears at the level of supergravity as
the transformation
 $C_3 \mapsto -C_3$ together with an odd number of space or time
reflections \cite{DLP}.
 In M-theory,  parity acts by orientation-reversal, together with $G_4 \to -G_4$.

 \medskip
 The considerations of parity in M-theory affect the description of
 some of its objects. For instance, the usual requirement that the fivebrane
 worldvolume be oriented can be relaxed, by virtue of M-theory conserving
 parity \cite{Wi1}.
For the membrane, the theories describing multiple membranes
should preserve parity, which places constraints on the structure
constants appearing in the Lagrangian \cite{BLMP}.
Chern-Simons-matter theories describing fractional M2-branes \cite{ABJ},
 arising from backgrounds
${\rm AdS}_4 \times S^7/\Z_k$ with torsion class
in $H^4(S^7/\Z_k; \Z)\cong \Z_k$, lead to symmetries
of the form $U(n+ \ell)_k \times U(n)_{-k}$, where $k$ is the
Chern-Simons level. The parity symmetry acting as
$C_3 \mapsto -C_3$ takes $k$ to $-k$ and then
$G_4 \mapsto k - G_4$, so that the rank $\ell$ goes to
$\ell-k$ in structure groups of level $-k$ \cite{ABJ}.

\medskip
From global geometric and topological perspective,
 M-theory is parity invariant, and so should in principle be formulated
 in a way that makes sense on unoriented, and possibly
 orientable manifolds (see \cite{Mo}).  How
 to formulate M-theory on such manifolds?
 The parity problem was discussed in \cite{DFM} \cite{Mo}
 with the purpose of extending
 the $E_8$-model to unoriented manifolds, via orientation
 double covers and Deck
 transformations. The authors point out (see particularly the summary
 discussion in \cite{Mo}) that this problem is still unsatisfactorily
 addressed and hence a proper formulation is still lacking.

\medskip
Now we consider the degree four field to be captured by an $E_8$ bundle
over spacetime, as in \cite{Wi1}\cite{DMW} \cite{DFM}, characterized by
an integral degree four class $a$.  In terms of this, the quantization condition
$a=G_4 + \tfrac{1}{2}\lambda$ holds, where $\lambda=\tfrac{1}{2}p_1$
is the first Spin characteristic class. The parity transformation acts
on the degree four classes as
$$
a \mapsto \lambda - a
\qquad
\text{and}
\qquad
G_4 \mapsto -G_4\;.
$$
 In this global formulation, one has
 \begin{itemize}
\item When the first Spin characteristic class
 $\lambda=\tfrac{1}{2}p_1$ is zero, i.e. on String manifolds,
  then this is simply $a \mapsto -a$. Note that considering such higher structures
in the rational setting has been discussed extensively in \cite{SW}.

  \item When considering M-theory on a
  circle, $Y^{11}=S^1 \times X^{10}$, the parity symmetry acts by reflection
  of the $S^1$ factor together with $a \mapsto \lambda -a$.
 \item If $G_4=0$ in cohomology then the corresponding configuration would
  be parity-invariant.
 \end{itemize}

\medskip
Observe that when we take our spacetimes to be a String manifold
(as done in \cite{loop}\cite{E8}) then
the parity transformation acts simply by a minus sign on both
$a$ and $G_4$.
We could then take the degree four configurations to correspond to two
$E_8$ bundles, related by a parity transformation.
We will aim to
find a home of this transformation in the context of higher rational
T-duality.

\begin{example}
[Parity as rational T-duality of $E_8$ bundles]
Since the homotopy group of $E_8$ are concentrated in degrees $(3, 15, \cdots)$,
the group $E_8$ has the same homotopy type
as $K(\Z, 3)$ up to degree 14, and hence the classifying spaces $BE_8$ and $K(\Z, 4)$
have the same homotopy type up to dimension 15. We now make
the further observation that $E_8$ and $SU(2)$ have the same rational
homotopy and cohomology in the above range. This means that overall
we have the identifications
$$
``E_8 \simeq_{14,\Q} K(\Q, 3) \simeq_\Q SU(2) \simeq_\Q S^3"\;.
$$
Consequently, within equivalence in rational homotopy theory we are free to view our $E_8$
bundle over an 11-dimensional base space as a 3-sphere bundle.  This is then summarized as follows
$$
\xymatrix@R=1.5em{
 S^3_\Q \simeq K(\Q, 3) \simeq_{14,\Q} E_8
\ar[r]
&
E
\ar[dr]^\pi
&&
E'
\ar[dl]_{\pi'}
&
E_8 \simeq_\Q K(\Q, 3) \simeq S^3_\Q
\ar[l]
\\
&& Y &
}
$$
Taking the class of the bundle $E$ to be $a$ and the class of the
bundle $E'$ to be $-a$ then puts the two bundles as a parity dual pair, which
fits into our discussion of T-duality for rational sphere bundle as a special
case.
A parity-invariant formulation of the $E_8$ model is given in \cite{DFM}
 by passing to the orientation double
cover $Y_d$ of spacetime $Y$ and declaring the $C$-field to be the
parity invariant cocycle on $Y_d$. Hence we could use $Y_d$ in place of $Y$.
\end{example}

\medskip
We now return to our supercocycles and study the effect of parity on them.

\begin{prop}[Reflection automorphism on exceptional tangent super-spacetime]
\label{ReflectionAutomorphismOnExceptionalTangentSuperSpacetime}
There is an action of $\mathbb{Z}/2$ on $\mathrm{CE}( \mathbb{R}^{10,1\vert \mathbf{32}}_{\mathrm{exc},s} )$
where the non-trivial element $\rho \in \mathbb{Z}/2$ acts dually by
\footnote{Beware that this is saying that the $B_{a_1 a_2}$-generator picks up a sign precisely if its indices do \emph{not} take the value 10.}
$$
  \rho^\ast
  \;:\;
  \left\{
 \begin{aligned}
    e^a
    & \longmapsto
    \left\{
      \begin{array}{rcl}
        - e^a &\vert& a = 10
        \\
        e^a &\vert& \mbox{otherwise}
      \end{array}
    \right.
    \\
    B_{a_1 a_2}
    &
    \longmapsto
    \left\{
      \begin{array}{rcl}
        - B_{a_1 a_2} &\vert& a_1, a_2 \neq 10
        \\
        B_{a_1 a_2} &\vert& \mbox{otherwise}
      \end{array}
    \right.
    \\
    B_{a_1 \cdots a_5}
    &
    \longmapsto
    \left\{
      \begin{array}{rcl}
        - B_{a_1 \cdots a_5} &\vert& \mbox{one of the} \; a_i = 10
        \\
        B_{a_1 \cdots a_5} &\vert& \mbox{otherwise}
      \end{array}
    \right.
    \\
    \psi & \longmapsto \phantom{-}\Gamma_{10} \psi
    \\
    \eta & \longmapsto - \Gamma_{10} \eta
 \end{aligned}
 \right.
$$
\end{prop}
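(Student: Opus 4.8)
The plan is to verify directly the three things implicit in the statement: that the displayed formulas extend to a well-defined endomorphism $\rho^\ast$ of the bigraded-commutative superalgebra $\mathrm{CE}(\mathbb{R}^{10,1\vert\mathbf{32}}_{\mathrm{exc},s})$, that $\rho^\ast$ commutes with the Chevalley--Eilenberg differential, and that $(\rho^\ast)^2 = \mathrm{id}$; together these exhibit the asserted $\mathbb{Z}/2$-action in the sense of Def.~\ref{SuperLInfinityAlgebraWithGroupAction}. Note that this proposition concerns only the CE-algebra \eqref{CEAlgebraOfExceptionalTangentSuperspacetime}--\eqref{ExceptionalFermionicGenerator} and its differential, not the transgression element of Prop.~\ref{TransgressionElementForM2Cocycle}, so the argument will be uniform in $s\in\mathbb{R}\setminus\{0\}$.

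First I would record that $\mathrm{CE}(\mathbb{R}^{10,1\vert\mathbf{32}}_{\mathrm{exc},s})$ is, as a graded-commutative algebra, \emph{free} on the degree-$1$ generators $e^a,B_{a_1a_2},B_{a_1\cdots a_5}$ (even) and $\psi^\alpha,\eta^\alpha$ (odd). Hence the prescribed images of the generators determine a unique graded-algebra endomorphism, provided the sign rules are compatible with the antisymmetry of the $B$-indices --- which they are, since in each case the sign depends only on the \emph{set} of index values and on whether it contains $10$, hence is invariant under permutations. Since generators go to (constant linear combinations of) generators, $\rho^\ast$ preserves the $(\mathbb{Z},\mathbb{Z}/2)$-bigrading. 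For $(\rho^\ast)^2=\mathrm{id}$ it suffices to check on generators: on $e^a,B_{a_1a_2},B_{a_1\cdots a_5}$ each sign is an involution; on $\psi$ one uses $\Gamma_{10}^{\,2}=\eta^{10,10}\mathbb{1}=\mathbb{1}$ (the $10$-direction being spacelike, which is also why $\rho^\ast\psi=\Gamma_{10}\psi$ is an admissible prescription in the first place), and on $\eta$ likewise $(-\Gamma_{10})^2=\Gamma_{10}^{\,2}=\mathbb{1}$. In particular $\rho^\ast$ is invertible, equal to its own inverse.

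Next, both $\rho^\ast\circ d_{\mathrm{CE}}$ and $d_{\mathrm{CE}}\circ\rho^\ast$ are graded derivations of bidegree $(1,\mathrm{even})$ on this free algebra, so $\rho^\ast d_{\mathrm{CE}}=d_{\mathrm{CE}}\rho^\ast$ will follow once it is checked on each generator, i.e. once $\rho^\ast$ is seen to preserve each defining relation. The relation $d\psi^\alpha=0$ is immediate. For $d\mathcal{E}_A$ with $\mathcal{E}_A$ one of $e^a,B_{a_1a_2},B_{a_1\cdots a_5}$ the right-hand sides are the symmetric bilinears $\overline\psi\Gamma^{(k)}\psi$ with $k=1,2,5$, and the crux is the gamma-matrix identity
\[
  \overline{\Gamma_{10}\psi}\,\Gamma^{(k)}\,\Gamma_{10}\psi \;=\; (-1)^{k-m+1}\,\overline\psi\,\Gamma^{(k)}\,\psi\,,
\]
where $m\in\{0,1\}$ is the number of indices of $\Gamma^{(k)}$ equal to $10$. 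This follows from the elementary Clifford relation $\Gamma_{10}\Gamma^{(k)}\Gamma_{10}^{-1}=(-1)^{k-m}\Gamma^{(k)}$ together with the $D=11$ facts that $C$ is antisymmetric while $C\Gamma^{(k)}$ is symmetric for $k=1,2,5$ (equivalently $\Gamma_{10}^{\,T}C=-C\Gamma_{10}$, the same $528=11+55+462$ structure underlying \eqref{CEAlgebraOfExceptionalTangentSuperspacetime}). Comparing the exponent $(-1)^{k-m+1}$ with the signs prescribed in the statement --- $(-1)^{m}$ for $e^a$ ($k=1$), $(-1)^{m+1}$ for $B_{a_1a_2}$ ($k=2$), $(-1)^{m}$ for $B_{a_1\cdots a_5}$ ($k=5$) --- they match identically, so these three relations are preserved; it is precisely the symmetry of $C\Gamma_{a_1a_2}$ that makes the $2$-form generator pick up the ``opposite-looking'' sign flagged in the footnote.

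Finally, for the relation $d\overline\eta=\big\langle\vec{\mathcal E}\wedge\overline\psi\vec\Gamma\big\rangle$ of \eqref{ExceptionalFermionicGenerator}, I would first compute from $\Gamma_{10}^{\,T}C=-C\Gamma_{10}$ that $\rho^\ast$ sends $\overline\eta\mapsto\overline\eta\,\Gamma_{10}$ --- the explicit minus sign in $\eta\mapsto-\Gamma_{10}\eta$ is there exactly to cancel the sign in $\Gamma_{10}^{\,T}C=-C\Gamma_{10}$. Applying $\rho^\ast$ to both sides, the left-hand side becomes $(d\overline\eta)\Gamma_{10}=\sum_k c_k(s)\,\mathcal{E}^{(k)}\wedge\overline\psi\,\Gamma^{(k)}\Gamma_{10}$, while the right-hand side becomes $-\sum_k c_k(s)\,\epsilon_k\,\mathcal{E}^{(k)}\wedge\overline\psi\,\Gamma_{10}\Gamma^{(k)}$, where $\epsilon_k$ is the sign by which $\mathcal{E}^{(k)}$ transforms and $c_k(s)\in\{s+1,-1,1+\tfrac s6\}$ are the coefficients of the exceptional inner product of Def.~\ref{UniversalCupProductForTExc}. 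Block by block and index-tuple by index-tuple, equality reduces to $\Gamma_{10}\Gamma^{(k)}\Gamma_{10}=-\epsilon_k\Gamma^{(k)}$, i.e. to $\epsilon_k=(-1)^{k-m+1}$, which is exactly the sign assignment already verified in the previous step; the coefficients $c_k(s)$ simply pass through, so the computation holds for all admissible $s$. Assembling, $\rho^\ast$ is a Chevalley--Eilenberg algebra automorphism squaring to the identity, which is the claim. The only genuine difficulty is bookkeeping: keeping the three sources of signs --- Clifford conjugation by $\Gamma_{10}$, the $D=11$ charge-conjugation symmetry type of $\Gamma^{(k)}$, and the wedge/contraction reorderings --- aligned with the conventions fixed in \eqref{CEAlgebraOfExceptionalTangentSuperspacetime} and \eqref{StringCocycleOnExceptionalSpacetime}; once that is done the verification is mechanical.
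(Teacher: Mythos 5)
Your proposal is correct and follows essentially the paper's own route: the paper, too, proves this by noting the algebra map is determined on the free generators and then checking compatibility with the CE-differential generator by generator via $\Gamma_{10}$-conjugation of the bilinears $\overline{\psi}\Gamma^{(k)}\psi$ ($k=1,2,5$) and of $d\overline{\eta}$; your uniform sign identity $(-1)^{k-m+1}$ just condenses into one formula what the paper verifies term by term, and the $s$-dependent coefficients indeed only ride along. One caveat on conventions: the paper's proof explicitly uses $\overline{\psi}=\psi^\dagger\Gamma_0$, $(\Gamma_{10})^\dagger=-\Gamma_{10}$ and $\Gamma_{10}\Gamma_{10}=-1$, whereas you work with $\Gamma_{10}^2=+1$ and the Majorana/charge-conjugation form $\Gamma_{10}^{T}C=-C\Gamma_{10}$. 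For the compatibility with the differential this makes no difference --- the sign from $\overline{\Gamma_{10}\psi}=\pm\overline{\psi}\Gamma_{10}$ compensates the sign of $\Gamma_{10}^2$, and your net signs agree with the statement and with the paper's computation --- but your verification of $(\rho^\ast)^2=\mathrm{id}$ (a point the paper silently omits) genuinely relies on $\Gamma_{10}^2=+1$: in the paper's stated convention one gets $(\rho^\ast)^2=-1$ on $\psi$ and $\eta$, i.e.\ only a $\mathbb{Z}/2$-action up to fermion parity. So your extra involutivity check is a welcome addition, but you should flag that it fixes the signature/conjugation convention, which the paper's own gamma-matrix identities do not share.
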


\begin{proof}
It is clear that $\rho^\ast$ is an isomorphism in the underlying graded algebra. What needs to be checked is that it does
respect the differentials.
Hence first we need to show that under $\rho^\ast$ the bispinorial expressions
$\overline{\psi} \Gamma^a \psi$, $\overline{\psi} \Gamma^{a_1 a_2} \psi$, $\overline{\psi} \Gamma^{a_1 \cdots, a_5} \psi$
pick up the same signs as the corresponding elements $e^a$, $B^{a_1 a_2}$ and $B^{a_1 \cdots a_5}$ do.
This is equivalent to saying that their contractings are preserved by $\rho^\ast$.
Using the identities $\overline{\psi} = \psi^\dagger \Gamma_0$,
$(\Gamma_{10})^\dagger = - \Gamma_{10}$,
and $\Gamma_{10} \Gamma_{10} = -1$ we establish $\rho$-invariance of the supercocycles
as follows:
$$
  \begin{aligned}
    \rho( \overline{\psi} \Gamma_a \psi \wedge e^a )
    & =
    \underset{0 \leq a \leq 9}{\sum} \overline{ \Gamma_{10} \psi } \Gamma_a \Gamma_{10} \psi \wedge e^a
    +
    \overline{ \Gamma_{10} \psi } \Gamma_{10} \Gamma_{10} \psi \wedge (- e^{10})
    \\
    & =
    \underset{0 \leq a \leq 9}{\sum} \psi^\dagger (- \Gamma_{10}) \Gamma_0 \Gamma_a \Gamma_{10} \psi \wedge e^a
    +
    \psi^\dagger (- \Gamma_{10}) \Gamma_0 \Gamma_{10} \Gamma_{10} \psi \wedge (- e^{10})
    \\
    & =
    +
    \overline{\psi} \Gamma_a \psi \wedge e^a\;,
 \end{aligned}
$$
$$
  \begin{aligned}
    \rho( \tfrac{1}{2} \overline{\psi} \Gamma_{a_1 a_2} \psi \wedge B^{a_1 a_2} )
    & =
    \underset{0 \leq a_1 < a_2 \leq 9}{\sum} \overline{ \Gamma_{10} \psi } \Gamma_{a_1 a_2} \Gamma_{10} \psi \wedge (- B^{a_1 a_2})
    +
    \underset{0 \leq a \leq 9}{\sum} \overline{ \Gamma_{10} \psi } \Gamma_{a, 10} \psi \wedge B^{a, 10}
    \\
    & =
    \underset{0 \leq a_1 < a_2 \leq 9}{\sum} \psi^\dagger (-\Gamma_{10}) \Gamma_{0} \Gamma_{a_1 a_2} \Gamma_{10} \psi \wedge (- B^{a_1 a_2})
    +
    \underset{0 \leq a \leq 9}{\sum} \psi^\dagger (- \Gamma_{10}) \Gamma_0 \Gamma_{a, 10} \psi \wedge B^{a, 10}
    \\
    & =
    +
    \tfrac{1}{2} \overline{\psi} \Gamma_{a_1 a_2} \psi \wedge B^{a_1 a_2}\;,
 \end{aligned}
$$
and
$$
\hspace{-2mm}
  \begin{aligned}
    \rho( \tfrac{1}{5!} \overline{\psi} \Gamma_{a_1 \cdots a_5} \psi \wedge B^{a_1 \cdots a_5} )
     =&
   \hspace{-1mm} \underset{{}_{0 \leq a_1 < \cdots < a_5 \leq 9}}{\sum}
   \hspace{-3mm} \overline{ \Gamma_{10} \psi } \Gamma_{a_1 \cdots a_5} \Gamma_{10} \psi \wedge (B^{a_1 \cdots a_5})
    +
    \hspace{-2mm}\underset{{}_{0 \leq a_1 < \cdots < a_4  \leq 9}}{\sum}
    \hspace{-3mm} \overline{ \Gamma_{10} \psi } \Gamma_{a_1 \cdots a_4, 10} \psi \wedge B^{a_1 \cdots a_4, 10}
    \\
     =&
    \underset{{}_{0 \leq a_1 < \cdots < a_5 \leq 9}}{\sum} \psi^\dagger (- \Gamma_{10}) \Gamma_0  \Gamma_{a_1 \cdots a_5} \Gamma_{10} \psi \wedge (B^{a_1 \cdots a_5})
    \\
    &  \phantom{=} +
        \underset{{}_{0 \leq a_1 < \cdots < a_4  \leq 9}}{\sum} \hspace{-2mm}
         \psi^\dagger (- \Gamma_{10}) \Gamma_0 \Gamma_{a_1 \cdots a_4, 10} \psi \wedge B^{a_1 \cdots a_4, 10}
    \\
     =&
    +
     \tfrac{1}{5!} \overline{\psi} \Gamma_{a_1 \cdots a_5} \psi \wedge B^{a_1 \cdots a_5}\;.
 \end{aligned}
$$
Similarly, the following computation shows that $d$ and $\rho^*$ commute
$$
  \begin{aligned}
    d \rho^\ast( \overline{\eta} )
    & =
    d (\overline{ - \Gamma_{10} \eta})
    \\
    & =
    d \eta^\dagger \Gamma_{10} \Gamma_0
    \\
    & =
    - d \overline{\eta} \Gamma_{10}
    \\
    & =
    - \overline{\psi} \Gamma_a  \Gamma_{10} e^a
    - \overline{\psi} \Gamma_{a_1 a_2} \Gamma_{10} B^{a_1 a_2}
    - \overline{\psi} \Gamma_{a_1 \cdots a_5} \Gamma_{10} B^{a_1 \cdots a_5}
    \\
    & =
    - \psi^\dagger \Gamma_0 (- \Gamma_{10}) \Gamma_a   \rho^\ast(e^a)
    - \psi^\dagger \Gamma_0 (- \Gamma_{10}) \Gamma_{a_1 a_2} \rho^\ast(B^{a_1 a_2})
    - \psi^\dagger \Gamma_0 (- \Gamma_{10}) \Gamma_{a_1 \cdots a_5} \rho^\ast(B^{a_1 \cdots a_5})
    \\
    & =
    \overline{ \rho^\ast{\psi} } \Gamma_a \rho^\ast(e^a)
    +
    \overline{ \rho^\ast{\psi} } \Gamma_{a_1 a_2} \rho^\ast(B^{a_1 a_2})
    +
    \overline{ \rho^\ast{\psi} } \Gamma_{a_1 \cdots a_5} \rho^\ast(B^{a_1 \cdots a_5})
    \\
    & = \rho^\ast\left( d \overline{\eta} \right)\;.
  \end{aligned}
$$

\vspace{-5mm}
\end{proof}

We use this result to determine the effect on the M-brane supercocycles.

\medskip
\begin{prop}[Parity symmetry of decomposed supergravity C-field]
  \label{ParitySymmetryForDecomposedCField}
  Under the reflection automorphism $\rho$ from Prop. \ref{ReflectionAutomorphismOnExceptionalTangentSuperSpacetime} we have that

  \item {\bf (i)} the decomposed supergravity C-field (Prop. \ref{TransgressionElementForM2Cocycle}) changes sign:
  $
    \rho^\ast( c_{\mathrm{exc},s} ) = - c_{\mathrm{exc},s}
  $;
  \item {\bf (ii)} the decomposed M5-brane cocycle (\ref{M5BraneCocycleOnExceptionalSpacetime}) is invariant:
  $
    \rho^\ast(\tilde \mu_{{}_{M5,s}}) = \tilde \mu_{{}_{M5,s}}
  $.
\end{prop}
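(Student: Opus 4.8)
The plan is to exploit that, by Prop.~\ref{ReflectionAutomorphismOnExceptionalTangentSuperSpacetime}, $\rho^\ast$ is a well-defined algebra automorphism of $\mathrm{CE}(\mathbb{R}^{10,1\vert \mathbf{32}}_{\mathrm{exc},s})$, so that both (i) and (ii) reduce to computing the sign produced by $\rho^\ast$ on each monomial of $c_{\mathrm{exc},s}$, of $\mu_{{}_{M2}}$ and of $\mu_{{}_{M5}}$. The organizing observation is that $\rho^\ast$ acts diagonally on the natural generators with a uniform rule: a bosonic generator carrying $p$ spacetime vector indices, of which $n_{10}\in\{0,1\}$ equal $10$, is sent to $(-1)^{p+1+n_{10}}$ times itself — this reads off directly from the definition of $\rho^\ast$, since $\rho^\ast e^a=(-1)^{[a=10]}e^a$, $\rho^\ast B_{a_1a_2}=(-1)^{1+[\,10\in\{a_1,a_2\}\,]}B_{a_1a_2}$ and $\rho^\ast B_{a_1\cdots a_5}=(-1)^{[\,10\in\{a_i\}\,]}B_{a_1\cdots a_5}$. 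The matching behaviour of the spinor bilinears is already contained in the proof of Prop.~\ref{ReflectionAutomorphismOnExceptionalTangentSuperSpacetime}: for $k\in\{1,2,5\}$ one has $\rho^\ast\big(\overline{\psi}\Gamma_{a_1\cdots a_k}\psi\big)=(-1)^{k+1+n_{10}}\,\overline{\psi}\Gamma_{a_1\cdots a_k}\psi$, and a short manipulation with $\rho^\ast\eta=-\Gamma_{10}\eta$, $\Gamma_{10}^\dagger=-\Gamma_{10}$, $\Gamma_{10}^2=-1$ (moving $\Gamma_{10}$ through $\Gamma_{a_1\cdots a_k}$, which costs $(-1)^{k-n_{10}}$) gives $\rho^\ast\big(\overline{\eta}\Gamma_{a_1\cdots a_k}\psi\big)=(-1)^{k+n_{10}}\,\overline{\eta}\Gamma_{a_1\cdots a_k}\psi$. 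Granting this, $\rho^\ast$ multiplies $\mu_{{}_{M2}}=\tfrac{i}{2}\overline\psi\Gamma_{a_1a_2}\psi\wedge e^{a_1}\wedge e^{a_2}$ by $(-1)^{1+n_{10}}(-1)^{n_{10}}=-1$ and $\mu_{{}_{M5}}=\tfrac{1}{5!}\overline\psi\Gamma_{a_1\cdots a_5}\psi\wedge e^{a_1}\wedge\cdots\wedge e^{a_5}$ by $(-1)^{n_{10}}(-1)^{n_{10}}=+1$, so that
\[
  \rho^\ast(\mu_{{}_{M2}})=-\mu_{{}_{M2}}\,,\qquad \rho^\ast(\mu_{{}_{M5}})=+\mu_{{}_{M5}}\,.
\]

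For (i) I would then go through the summands of $c_{\mathrm{exc},s}$ in \eqref{DecomposedCFieldBosonicContribution}--\eqref{DecomposedCFieldFermionicContribution}. Each bosonic monomial is a wedge of $r=3$ bosonic generators whose vector indices are either contracted in pairs or contracted against an $\epsilon$-symbol, so by the rule above it picks up the sign $(-1)^{P+r+N}$, with $P$ the total number of vector-index slots on the three generators and $N$ the number of those slots carrying the value $10$. A contracted pair contributes an even amount to both $P$ and $N$; each $\epsilon$-symbol, having all eleven values $0,\dots,10$ exactly once matched bijectively to distinct generator slots, contributes $11$ to $P$ and exactly $1$ to $N$. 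Hence the parity of $N$ is constant along the index summation and pulls out of the sum, and one checks that $P+r+N$ is odd in every case (the terms $\propto\alpha_0,\alpha_1,\alpha_2$: $\mathrm{even}+3+\mathrm{even}$; the two $\epsilon$-terms $\propto\alpha_3,\alpha_4$: $\mathrm{odd}+3+\mathrm{odd}$, where in the $\alpha_4$-term one additionally notes that a contracted index $c_1,c_2$ taking the value $10$ shifts $N$ by two and so preserves its parity). This gives $\rho^\ast\big((c_{\mathrm{exc},s})_{\mathrm{bos}}\big)=-(c_{\mathrm{exc},s})_{\mathrm{bos}}$. For the fermionic part, each monomial in \eqref{DecomposedCFieldFermionicContribution} has the shape $\overline{\eta}\Gamma_{a_1\cdots a_k}\psi\wedge g_k$ with $g_k\in\{e^a,B^{a_1a_2},B^{a_1\cdots a_5}\}$; combining the two bilinear rules above with the generator rule gives the sign $(-1)^{k+n_{10}}\cdot(-1)^{k+1+n_{10}}=-1$, independent of $n_{10}$, so $\rho^\ast\big((c_{\mathrm{exc},s})_{\mathrm{ferm}}\big)=-(c_{\mathrm{exc},s})_{\mathrm{ferm}}$. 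Adding the bosonic and fermionic contributions proves $\rho^\ast(c_{\mathrm{exc},s})=-c_{\mathrm{exc},s}$.

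Claim (ii) then follows at once: since $\rho^\ast$ is an algebra map and using (i) together with $\rho^\ast(\mu_{{}_{M5}})=\mu_{{}_{M5}}$, $\rho^\ast(\mu_{{}_{M2}})=-\mu_{{}_{M2}}$,
\[
  \rho^\ast(\tilde \mu_{{}_{M5,s}})
  =2\,\rho^\ast(\mu_{{}_{M5}})+\rho^\ast(c_{\mathrm{exc},s})\wedge\rho^\ast(\mu_{{}_{M2}})
  =2\mu_{{}_{M5}}+(-c_{\mathrm{exc},s})\wedge(-\mu_{{}_{M2}})
  =\tilde \mu_{{}_{M5,s}}\,.
\]
The main labour, and the only point requiring genuine care, is the index bookkeeping for part (i): establishing that for every monomial of $(c_{\mathrm{exc},s})_{\mathrm{bos}}$ the exponent $P+r+N$ is odd, in particular checking that the extra pairwise-contracted indices in the $\alpha_4$-term cannot spoil the fixed parity of $N$. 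Everything else is a routine Clifford-algebra manipulation patterned on the computations in the proof of Prop.~\ref{ReflectionAutomorphismOnExceptionalTangentSuperSpacetime}.
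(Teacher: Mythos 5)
Your proposal is correct and follows essentially the same route as the paper: a direct computation of the sign with which $\rho^\ast$ acts on each summand of $c_{\mathrm{exc},s}$ (and on $\mu_{{}_{M2}}$, $\mu_{{}_{M5}}$), followed by multiplicativity of $\rho^\ast$ for part (ii). Your uniform sign rule $(-1)^{p+1+n_{10}}$ together with the parity count of $P+r+N$ is just a more systematic bookkeeping of the paper's term-by-term inspection, with the added merit that it explicitly covers the $\epsilon$-contracted summands which the paper dispatches with ``similarly for the other bosonic summands.''
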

\begin{proof}
{\bf (i)} The first statement follows by inspection. For instance, the transformation of the first summand
 $B_{a b} \wedge e^a \wedge e^b$
  of the bosonic component $(c_{\mathrm{exc},s})_{\mathrm{bos}}$
  may be computed as follows:
  $$
    \begin{aligned}
      \rho^\ast(B_{a_1 a_2} \wedge e^{a_1} \wedge e^{a_2})
      & =
      2 \underset{0 \leq a_1 < a_2 \leq 9}{\sum} (- B_{a_1 a_2}) \wedge e^{a_1} \wedge e^{a_2}
      +
      2 \underset{0 \leq a \leq 9}{\sum} B_{a 10} \wedge e^{a} \wedge (- e^{10})
      \\
      & =
      - B_{a_1 a_2} \wedge e^{a_1} \wedge e^{a_2}\;,
    \end{aligned}
  $$
  while the transformation of the second summand
  may be computed as
  $$
    \begin{aligned}
      \rho^\ast\left(
        B^a{}_b \wedge B^b{}_c \wedge B^c{}_a
      \right)
      & =
      \underset{0 \leq a, b \leq 9}{\sum} (-B^{a}{}_b)
      \big(
        \underset{0 \leq c \leq 9}{\sum} (- B^b{}_{c}) \wedge (- B^c{}_a)
        +
        B^b{}_{10} \wedge B^{10}{}_a
      \big)
      +
      \mbox{cyclic}
      \\
      & =
      -
      B^a{}_b \wedge B^b{}_c \wedge B^c{}_a\;,
    \end{aligned}
  $$
  and similarly for the other bosonic summands.

  For the fermionic term $(c_{\mathrm{exc},s})_{\mathrm{ferm}}$ we already checked at the beginning of the proof of Prop. \ref{ReflectionAutomorphismOnExceptionalTangentSuperSpacetime}
  that it becomes invariant under $\rho^\ast$ if we replaced the factor of $\eta$ by $\psi$. But the transformations of
  $\eta$ and $\psi$ under $\rho^\ast$ are the same up to a minus sign.

 \noindent {\bf (ii)}  Regarding the second statement, by
  the same reasoning as in proof of Prop. \ref{ReflectionAutomorphismOnExceptionalTangentSuperSpacetime}, we have
  $$
    \rho^\ast(\mu_{{}_{M5}})
    =
    \rho^\ast( \tfrac{1}{5!}\overline{\psi}\Gamma_{a_1 \cdots a_5} \psi \wedge e^{a_1} \wedge \cdots \wedge e^{a_5} )
    =
    \mu_{{}_{M5}}
  $$
  and
  $$
    \rho^\ast(\mu_{{}_{M2}})
    =
    \rho^\ast( \tfrac{i}{2!}\overline{\psi}\Gamma_{a_1 a_2} \psi \wedge e^{a_1} \wedge e^{a_2} )
    =
    - \mu_{{}_{M2}}
    \,.
  $$
  Hence the statement follows from the previous one:
  $$
    \rho^\ast(\tilde \mu_{{}_{M5,s}})
    =
    \rho^\ast( 2 \mu_{{}_{M5}} +  c_{\mathrm{exc},s} \wedge \mu_{{}_{M2}})
    =
    \tilde \mu_{{}_{M5,s}}
    \,.
  $$

 \vspace{-6mm}
\end{proof}

Indeed, this is compatible with the results of parity
in the topological sector in \cite{DFM}.
As a direct consequence, we can establish the following.

\begin{prop}[Parity as an isomorphism on twisted cohomology]
  For $s \in \mathbb{R} \setminus \{0\}$,
  the $\mathbb{Z}/2$-action (\ref{ReflectionAutomorphismOnExceptionalTangentSuperSpacetime})
  on the exceptional superspacetime $\mathbb{R}^{10,1\vert \mathbf{32}}_{\mathrm{exc},s}$
  induces an isomorphism of its $\tilde \mu_{{}_{M5,s}}$-twisted cohomology (Def. \ref{TwistedCohomology}):
  $$
    \xymatrix{
      H^{\bullet + \tilde \mu_{{}_{M5,s}}}( \mathbb{R}^{10,1\vert\mathbf{32}}_{\mathrm{exc},s} )
      \ar[rr]^{\rho^\ast}_{\simeq}
      &&
      H^{\bullet + \tilde \mu_{{}_{M5,s}}}( \mathbb{R}^{10,1\vert\mathbf{32}}_{\mathrm{exc},s} )
    }\;.
  $$
\end{prop}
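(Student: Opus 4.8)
The plan is to observe that the parity transformation $\rho^\ast$ of Prop.~\ref{ReflectionAutomorphismOnExceptionalTangentSuperSpacetime} is already a DGC-algebra automorphism of $\mathrm{CE}(\mathbb{R}^{10,1\vert\mathbf{32}}_{\mathrm{exc},s})$: it is a linear bijection of the underlying bigraded algebra which preserves the $(\mathbb{Z}\times\mathbb{Z}/2)$-bidegree (it sends each generator to a scalar multiple of a generator of the same bidegree, and $\psi\mapsto\Gamma_{10}\psi$ is degree- and parity-preserving), and which, by the computation in the proof of that proposition, commutes with the Chevalley--Eilenberg differential $d$. So the whole statement reduces to checking that $\rho^\ast$ also commutes with the \emph{twisted} differential $d+\widetilde\mu_{{}_{M5,s}}\wedge(-)$ that defines the cohomology in Def.~\ref{TwistedCohomology}.

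First I would record that, since $\rho^\ast$ is a graded algebra homomorphism, $\rho^\ast(\widetilde\mu_{{}_{M5,s}}\wedge\omega)=\rho^\ast(\widetilde\mu_{{}_{M5,s}})\wedge\rho^\ast(\omega)$ for every cochain $\omega$. Then I would invoke Prop.~\ref{ParitySymmetryForDecomposedCField}~(ii), which gives $\rho^\ast(\widetilde\mu_{{}_{M5,s}})=\widetilde\mu_{{}_{M5,s}}$, so that $\rho^\ast(\widetilde\mu_{{}_{M5,s}}\wedge\omega)=\widetilde\mu_{{}_{M5,s}}\wedge\rho^\ast(\omega)$. Combined with $d\circ\rho^\ast=\rho^\ast\circ d$ this yields
$$
  \rho^\ast\big((d+\widetilde\mu_{{}_{M5,s}}\wedge)(\omega)\big)
  \;=\;
  (d+\widetilde\mu_{{}_{M5,s}}\wedge)\big(\rho^\ast(\omega)\big)
  \,,
$$
i.e. $\rho^\ast$ is an automorphism of the $\widetilde\mu_{{}_{M5,s}}$-twisted cochain complex. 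Being a bijection of cochains that intertwines the twisted differential, it descends to a bijection, hence an isomorphism, on the twisted cohomology; since $\rho^\ast$ preserves the internal bidegree and the twisting degree $\deg\widetilde\mu_{{}_{M5,s}}=7\equiv 1$ mod $6$ is already built into the periodic grading, the induced map is degree-preserving in the relevant $6$-periodic sense.

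I expect no real obstacle here: the substance has already been discharged by Prop.~\ref{ReflectionAutomorphismOnExceptionalTangentSuperSpacetime} (that $\rho^\ast$ is a well-defined DGC-algebra automorphism) together with Prop.~\ref{ParitySymmetryForDecomposedCField}~(ii) (that it fixes the twist). The only bookkeeping point I would spell out is the invertibility of the induced map: $\rho^\ast$ squares to the map which is the identity on all bosonic generators and $-1$ on the $\psi$- and $\eta$-generators, and this latter map is again a DGC-algebra automorphism of $\mathrm{CE}(\mathbb{R}^{10,1\vert\mathbf{32}}_{\mathrm{exc},s})$ (all structure equations are quadratic in the fermions), so $\rho^\ast$ is honestly invertible on the twisted complex and its inverse is again a twisted chain map; hence the map it induces on $H^{\bullet+\widetilde\mu_{{}_{M5,s}}}$ has a two-sided inverse. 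Nothing beyond the two cited propositions is needed.
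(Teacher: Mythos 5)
Your proposal is correct and is exactly the argument the paper intends: it states this proposition as a direct consequence of Prop. \ref{ReflectionAutomorphismOnExceptionalTangentSuperSpacetime} (that $\rho^\ast$ is a CE-algebra automorphism commuting with $d$) and Prop. \ref{ParitySymmetryForDecomposedCField}~(ii) (that $\rho^\ast$ fixes $\tilde\mu_{{}_{M5,s}}$), so that $\rho^\ast$ intertwines the twisted differential and hence induces an isomorphism on twisted cohomology. Your extra remark on invertibility (that $(\rho^\ast)^2$ is the fermion-parity automorphism, so $\rho^\ast$ is honestly invertible as a twisted chain map) is a harmless and correct piece of bookkeeping that the paper leaves implicit.
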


\medskip
Curiously, observe the following interesting effect on the top cocycle.

\begin{prop}
The 528-volume element (\ref{TheExceptionalVolumeForm})
is invariant under the reflection automorphism of Prop. \ref{ReflectionAutomorphismOnExceptionalTangentSuperSpacetime}:
$$
  \rho^\ast(\mathrm{vol}_{528})
  =
  \mathrm{vol}_{528}
$$
\end{prop}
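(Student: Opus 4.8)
The plan is to reduce the statement to a single finite sign count. Since $\rho^\ast$ is a homomorphism of graded-commutative algebras and, by Prop.~\ref{ReflectionAutomorphismOnExceptionalTangentSuperSpacetime}, it acts \emph{diagonally} on the $528$ bosonic generators $\mathcal{E}_A \in \{e^a,\,B_{a_1 a_2},\,B_{a_1\cdots a_5}\}$ — each being sent to $\pm$ itself — the induced action on the top product \eqref{TheExceptionalVolumeForm} is just multiplication by the product of these eigenvalues. Writing $\lambda_A\in\{\pm 1\}$ for the eigenvalue of $\rho^\ast$ on $\mathcal{E}_A$ and $N:=\#\{A:\lambda_A=-1\}$, and noting that the scalars $\lambda_A$ have degree zero so that no reordering signs intervene, one gets
$$
  \rho^\ast(\mathrm{vol}_{528})
  \;=\;
  \Big(\textstyle\prod_{A=1}^{528}\lambda_A\Big)\,\mathrm{vol}_{528}
  \;=\;
  (-1)^{N}\,\mathrm{vol}_{528}\,.
$$
Thus it suffices to show that $N$ is even.

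Next I would count $N$ directly from the three cases in the definition of $\rho^\ast$ in Prop.~\ref{ReflectionAutomorphismOnExceptionalTangentSuperSpacetime}. Among the eleven generators $e^a$, only $e^{10}$ flips sign, contributing $1$. Among the $\binom{11}{2}=55$ generators $B_{a_1 a_2}$, a sign flip occurs precisely when neither index equals $10$, i.e.\ for $\binom{10}{2}=45$ of them. Among the $\binom{11}{5}=462$ generators $B_{a_1\cdots a_5}$, a sign flip occurs precisely when exactly one index equals $10$, i.e.\ for $\binom{10}{4}=210$ of them. Hence $N=1+45+210=256$, which is even, so that $\rho^\ast(\mathrm{vol}_{528})=(-1)^{256}\,\mathrm{vol}_{528}=\mathrm{vol}_{528}$, as claimed.

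There is no genuine obstacle here; the only points requiring care are, first, reading off correctly from the (deliberately counterintuitive, cf.\ the footnote to Prop.~\ref{ReflectionAutomorphismOnExceptionalTangentSuperSpacetime}) sign rule which generators actually change sign — in particular that $B_{a_1 a_2}$ flips exactly when the index $10$ is \emph{absent}, whereas $B_{a_1\cdots a_5}$ flips exactly when the index $10$ is \emph{present} — and, second, checking the binomial arithmetic. A useful consistency check is that the $10+10+252=272$ unflipped generators together with the $1+45+210=256$ flipped ones sum to the expected $528$.
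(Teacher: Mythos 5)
Your proposal is correct and follows essentially the same route as the paper's own proof: both reduce the claim to counting the generators among the $528$ that flip sign (namely $1$ of the $e^a$, $\binom{10}{2}=45$ of the $B_{a_1 a_2}$, and $\binom{10}{4}=210$ of the $B_{a_1\cdots a_5}$) and observing that the total, $256$, is even. Your explicit remark that $\rho^\ast$ acts diagonally so that no reordering signs occur, together with the $272+256=528$ consistency check, is a welcome but minor sharpening of what the paper leaves implicit.
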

\begin{proof}
  The factors in $\mathrm{vol}_{528}$ that change sign under $\rho^\ast$
  are
  1. those $e^a$ for which $a =10$, of which there is one, which is an odd number;
  2. those $B_{a_1 a_2}$ with $a_1 \lt a_2$ for which $a_1 \neq 10$ and $a_2 \neq 10$, of which there are $\left( 10 \atop 2\right) = 45$, which is also an odd number;
  3. those $B_{a_1 \cdots a_5}$ with $a_1 < \cdots < a_5$ for which $a_5 = 10$, of which there are $\left(  10 \atop 4 \right) = 210$, which is an even number.

In total this means that under $\rho^\ast$ the element $\mathrm{vol}_{528}$ picks up an even number of signs, hence is fixed.
\end{proof}


\end{document} /